\par\vspace{4mm}}
\newcommand{\ceil}[1]{\ensuremath{\left\lceil#1\right\rceil}}
\newcommand{\floor}[1]{\ensuremath{\left\lfloor#1\right\rfloor}}
\newcommand{\be}{\begin{enumerate}}
\newcommand{\ee}{\end{enumerate}}
\newcommand{\bd}{\begin{description}}
\newcommand{\ed}{\end{description}}
\newcommand{\bi}{\begin{itemize}}
\newcommand{\ei}{\end{itemize}}
\declaretheorem[numberwithin=section]{theorem}
\declaretheorem[numberlike=theorem]{lemma}
\declaretheorem[numberlike=theorem]{proposition}
\declaretheorem[numberlike=theorem]{corollary}
\declaretheorem[numberlike=theorem]{claim}
\declaretheorem[numberlike=theorem]{Observation}
\newenvironment{proof}{\par \smallskip{\bf Proof:}}{\hfill\stopproof}
\def\stopproof{\square}
\def\square{\vbox{\hrule height.2pt\hbox{\vrule width.2pt height5pt \kern5pt
\vrule width.2pt} \hrule height.2pt}}
\newenvironment{prog}[1]{
\begin{minipage}{5.8 in}
{\sc\bf #1}
\begin{enumerate}}
{
\end{enumerate}
\end{minipage}
}
\renewcommand{\phi}{\varphi}
\newcommand{\N}{\ensuremath{\mathbb N}}
\newcommand{\strike}[1]{\textcolor{red}{{\ifmmode\text{\sout{\ensuremath{#1}}}\else\sout{#1}\fi}}}
\newcommand{\cset}{{\mathcal C}} 
\newcommand{\fset}{{\mathcal F}} 
\newcommand{\tset}{{\mathcal T}} 
\newcommand{\bdot}{\bullet}
\newcommand{\ie}{\textit{i.e. }}
\newcommand{\one}{\mathbf{1}}
\title{A Tight Extremal Bound on the Lov\'{a}sz Cactus Number in Planar Graphs}
\author{Parinya Chalermsook\thanks{Aalto University, Espoo, Finland. {\tt E-mail: parinya.chalermsook@aalto.fi}} \and Andreas Schmid\thanks{Max Planck Institute for Informatics, Saarbr\"ucken, Germany. {\tt E-mail: aschmid@mpi-inf.mpg.de}}  \and Sumedha Uniyal\thanks{Aalto University, Espoo, Finland. {\tt E-mail: sumedha.uniyal@aalto.fi}}}
\begin{document}

\begin{titlepage}
\clearpage\thispagestyle{empty}

\maketitle

\begin{abstract}
A {\em cactus} graph is a graph in which any two cycles are edge-disjoint.
We present a constructive proof of the fact that any plane graph $G$ contains a cactus subgraph $C$ where $C$ contains at least a $\frac{1}{6}$ fraction of the triangular faces of $G$. We also show that this ratio cannot be improved by showing a tight lower bound.
Together with an algorithm for linear matroid parity, our bound implies two approximation algorithms for computing ``dense planar structures'' inside any graph: (i) A $\frac{1}{6}$ approximation algorithm for, given any graph $G$, finding a planar subgraph with a maximum number of triangular faces; this improves upon the previous $\frac{1}{11}$-approximation; (ii) An alternate (and arguably more illustrative) proof of the $\frac{4}{9}$ approximation algorithm for finding a planar subgraph with a maximum number of edges.

Our bound is obtained by analyzing a natural local search strategy and heavily exploiting the exchange arguments. 
Therefore, this suggests the power of local search in handling problems of this kind\footnote{This work appeared in STACS19~\cite{CSU19-stacs}.}. 
\end{abstract}

\end{titlepage}

\clearpage

\section{Introduction} 
{\em Linear matroid parity} (introduced in various equivalent forms~\cite{lovasz1980matroid,lawler1976combinatorial,jenkyns1974matchoids}) is a key concept in combinatorial optimization that includes many important optimization problems as special cases; probably the most well-known example is the {\em maximum matching problem}. 
The polynomial-time computability of linear matroid parity made it a popular choice as an algorithmic tool for handling both theoretical and practical optimization problems. 
An important special case of linear matroid parity, the graphic matroid parity problem, is often explained in the language of {\em cacti}
(see e.g.~\cite{cheung2014algebraic}), a graph in which any two cycles must be edge-disjoint.
In 1980, Lov\'{a}sz~\cite{lovasz1980matroid} initiated the study of $\beta(G)$ (sometimes referred to as the {\em cactus number} of $G$), the maximum value of the number of triangles in a cactus subgraph of $G$, and showed that it generalizes maximum matching and can be reduced to linear matroid parity, therefore implying that $\beta(G)$ is polynomial-time computable\footnote{There are many efficient algorithms for matroid parity (both randomized and deterministic), e.g.~\cite{cheung2014algebraic,lovasz2009matching,orlin2008fast,gabow1986augmenting}}\footnote{When we study $\beta(G)$, notice that a cactus subgraph that achieves the maximum value of $\beta(G)$ would only need to have cycles of length three (triangles). Such cacti are called {\em triangular cacti}.}. 

Cactus graphs arise naturally in many applications\footnote{See for instance the wikipedia page~\url{https://en.wikipedia.org/wiki/Cactus_graph}}; perhaps the most relevant example in the context of approximation algorithms is the Maximum Planar Subgraph (MPS) problem: Given an input graph, find a planar subgraph with a maximum number of edges. 
Notice that, since any planar graph with $n$ vertices has at most $3n - 6$ edges, outputting a spanning tree with $n-1$ edges immediately gives a $\frac{1}{3}$-approximation algorithm.
Generalizing the idea of finding spanning trees, one would like to look for a planar graph $H$, denser than a spanning tree, and at the same time efficiently computable. 
Calinescu et al.~\cite{cualinescu1998better} showed that a cactus subgraph with a maximum number of triangles (which is efficiently computable via matroid parity algorithms) could be used to construct a $\frac{4}{9}$-approximation for MPS.


The $\frac{4}{9}$-approximation for MPS was achieved through an extremal bound of $\beta(G)$ when $G$ is a plane graph. 
In particular, it was proven that $\beta(G) \geq \frac{1}{3}(n - 2 - t(G))$, where $n = |V(G)|$ and $t(G) = (3n - 6) - |E(G)|$ (i.e. the number of edges missing for $G$ to be a triangulated plane graph).

\begin{figure}[h]
  \centering
    \includegraphics[width=0.4\textwidth]{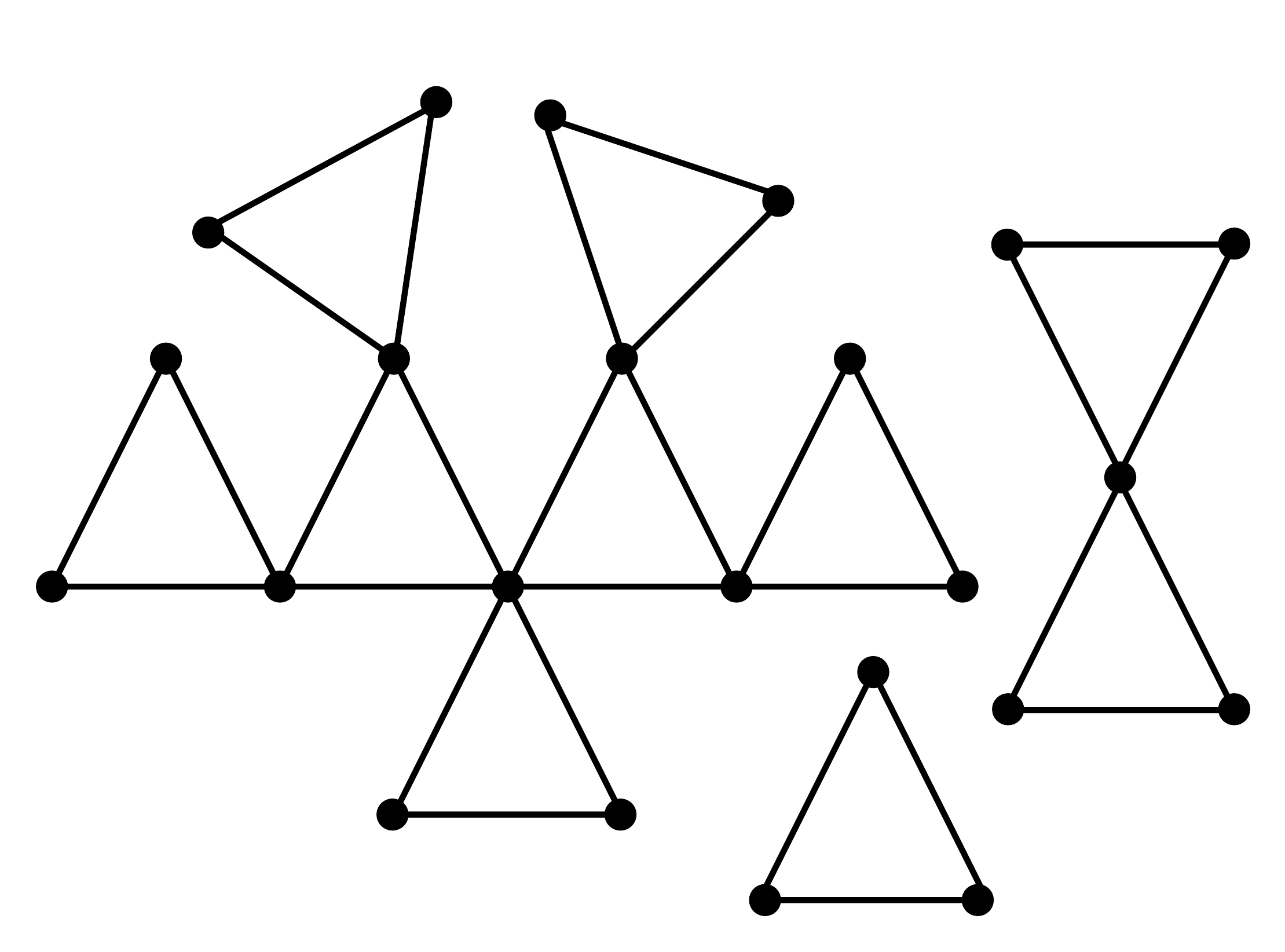}
    \caption{A triangular cactus graph.}
    \label{fig:cactus}
\end{figure}



\subsection{Our Results}

In this work, we are interested in further studying the extremal properties of $\beta(G)$ and exhibit stronger algorithmic implications.
Our main result is summarized in the following theorem. 

\begin{theorem} \label{thm:main}
Let $G$ be a plane graph. Then $\beta(G) \geq \frac{1}{6}f_3(G)$ where $f_3(G)$ denotes the number of triangular faces in $G$. 
Moreover, a natural local search $2$-swap algorithm
achieves this bound. 
\end{theorem}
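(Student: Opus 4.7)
The plan is to show $\beta(G) \geq f_3(G)/6$ by proving that any triangular cactus $C$ in $G$ that is locally optimal under the $2$-swap rule (remove one triangle from $C$ and add two triangular faces whose union with $C \setminus \{t\}$ is still a triangular cactus) already has size at least $f_3(G)/6$. Restricting attention to cacti whose triangles are themselves triangular faces of $G$, and letting $F$ be the set of triangular faces of $G$ not in $C$, the identity $|C|+|F|=f_3(G)$ turns the goal into proving $|F| \leq 5|C|$, which I would do by a charging argument that assigns each $f\in F$ to a triangle of $C$ and caps each triangle's charge at $5$.

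The first step is to understand, for each $f \in F$, why $C \cup \{f\}$ fails to be a cactus. Since $C$ itself is a cactus, the failure must come from one of two reasons: (i) $f$ shares an edge with some triangle already in $C$, creating two cycles on a common edge, or (ii) two of the three vertices of $f$ lie in a common $2$-edge-connected block of $C$, so that $f$ would close a fresh cycle sharing edges with an existing cycle. In either case there is a nonempty set $B(f)\subseteq C$ of \emph{blockers}, i.e.\ triangles whose removal would make $f$ insertable. The local $1$-swap rule guarantees $B(f)\neq\emptyset$; the local $2$-swap rule further tells us that whenever a single $t\in C$ is the unique blocker for both $f$ and $f'\in F$, the faces $f$ and $f'$ must themselves conflict with each other, yielding concrete structural constraints (shared edge, or shared vertex sitting in a common block, etc.).

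The core of the proof is a charging map $\tau:F\to C$: for each $f\in F$ I would pick a canonical blocker $\tau(f)\in B(f)$ and show $|\tau^{-1}(t)|\leq 5$ for every $t\in C$. Planarity does most of the work here: the triangle $t\in C$ has only three edges, each lying on the boundary of two faces of $G$, so the number of triangular faces edge-adjacent to $t$ is at most six, and the $2$-swap constraint cuts this from $6$ down to $5$ by forbidding ``independent'' simultaneous insertions around $t$. The type-(ii) blockers require more care, and this is where induction enters: one identifies a bounded-size reducible sub-configuration (a cactus block together with the missing faces that hang off of it), verifies the $6$-to-$1$ inequality on it by hand, deletes it from $G$, and invokes induction on the smaller plane graph with its restricted locally-optimal cactus.

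The main obstacle I foresee is the type-(ii) case, in which a missing face $f$ is blocked not by an edge-adjacent triangle of $C$ but by a long path through $C$ that would form a forbidden cycle; several such $f$'s could share the same blocking path, and a naive charging would overcount. This is precisely the point where the inductive reduction becomes indispensable: by peeling off locally bounded configurations and checking the $\tfrac{1}{6}$ bound on each, one reduces to a base case in which every $f\in F$ has an edge-adjacent blocker, and then the planar rotation around a fixed triangle $t\in C$, combined with the $2$-swap exchange condition, delivers the desired $5$-to-$1$ charge. Verifying that the peeling preserves $2$-swap local optimality of the residual cactus and accounts correctly for $f_3$ is the technical crux.
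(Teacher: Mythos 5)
Your reformulation ($|F|\le 5|C|$ for a $2$-swap optimal cactus $C$ whose triangles are faces of $G$) is the right target, and your instinct to peel off reducible configurations by induction loosely parallels the paper's first step (an induction that removes \emph{light} triangles, i.e.\ cactus triangles whose deletion leaves only bounded interaction between the resulting split components). But the core of your plan --- a charging map $\tau:F\to C$ with $|\tau^{-1}(t)|\le 5$ for every $t\in C$ --- fails: the bound of $5$ per cactus triangle holds only on average, not triangle by triangle. Concretely, let $t$ be a cactus triangle with an edge $uv$, and suppose there are $k$ non-cactus edges $e_1,\dots,e_k$, each joining the two split components $S^t_u$ and $S^t_v$ and each supporting a triangular face whose third vertex lies in one common other component $L$ (the paper's ``heavy type-$0$'' situation, with $k$ unbounded). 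None of these $k$ faces can be added to $C$ without destroying $t$, since each $e_i$ together with the paths back to $u$ and $v$ closes a cycle through $uv$; and no two of them can be added simultaneously even after deleting $t$, since the second one closes a cycle through the first one's supporting edge and the cactus paths inside $S^t_u$ and $S^t_v$. So $2$-swap optimality is perfectly consistent with arbitrarily large $k$, every one of these $k$ faces has $t$ as an unavoidable blocker, and there is no constant cap on $|\tau^{-1}(t)|$. Nor is this a bounded-size configuration you can peel off and verify by hand: it is precisely the irreducible base case, and several such fans can share the same cactus paths, which is exactly the overcounting you flag but do not resolve.

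The paper resolves it not by local charging but by a global face count. After reducing to the case where all cactus triangles are heavy, it forms a skeleton graph $H$ consisting of the cactus edges plus all supporting (type-$1$/type-$2$) edges, bounds the number of faces of $H$ by $2p-2$ via Euler's formula applied to a contracted graph, and then bounds, for each such super-face $f$, the number of triangular faces of $G$ surviving inside $f$ by $\mu(f)-gain(f)$, where the $gain$ lower bounds come from a $13$-type classification of super-faces together with a set of valid linear inequalities relating the type counts to $p_0,p_1,a_1,a_2$. The amortization is over faces of the skeleton, not over cactus triangles, and the $2$-swap exchange argument enters only to certify extra gain on a few special face types (the ``friend lemma''). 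Two smaller points: since each cactus triangle is itself a face of $G$, each of its edges borders exactly one \emph{other} face, so edge-adjacency gives at most $3$ (not $6$) candidates; and your accounting must also verify, as the paper's Proposition~2.1 does using local optimality, that every triangular face of $G$ has at least two vertices in a common cactus component --- otherwise such a face escapes your blocker analysis entirely.
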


It is not hard to see that $f_3(G) \geq 2n -4 -2t(G)$ where $t(G)$ denotes the number of edges {missing for $G$ to be a triangulated plane graph}. 
Therefore, we obtain the main result of~\cite{cualinescu1998better} immediately.  

\begin{corollary}
$\beta(G) \geq \frac{1}{3}(n-2-t(G))$. Hence, the matroid parity algorithm gives a $\frac{4}{9}$-approximation for MPS. 
\end{corollary}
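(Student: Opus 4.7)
The plan is to derive both parts of the corollary from Theorem~\ref{thm:main}: one small counting lemma handles the inequality, and the Calinescu et al.\ reduction handles the approximation. Neither step requires reopening the local-search analysis, so this is essentially routine bookkeeping on top of the main theorem.

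For the inequality $\beta(G) \geq \frac{1}{3}(n - 2 - t(G))$ the only new content is the claim $f_3(G) \geq 2n - 4 - 2\,t(G)$. I would prove this by fixing the planar embedding of $G$ and completing it, inside the same embedding, to a triangulation $G^+$ by inserting a set $S$ of exactly $t(G)$ chords. By Euler's formula every triangulation on $n$ vertices has exactly $2n - 4$ triangular faces. A triangular face of $G^+$ whose three boundary edges all survive in $G$ is still a triangular face of $G$, and each edge of $S$ borders at most two faces of $G^+$; hence removing $S$ destroys at most $2\,t(G)$ of those $2n-4$ triangles, proving the claim. Substituting into Theorem~\ref{thm:main} yields $\beta(G) \geq \frac{1}{6} f_3(G) \geq \frac{1}{3}(n - 2 - t(G))$.

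For the $\frac{4}{9}$-approximation I would apply the first inequality to an optimal planar subgraph $H^\ast \subseteq G$ with $\opt = |E(H^\ast)|$ and $t(H^\ast) = 3n - 6 - \opt$, obtaining $\beta(H^\ast) \geq \frac{1}{3}(\opt - 2n + 4)$. Since $H^\ast \subseteq G$ we have $\beta(G) \geq \beta(H^\ast)$, so a matroid parity algorithm returns a cactus $C \subseteq G$ with $k := \beta(G)$ triangles. From $C$ I would construct a planar spanning subgraph $H \subseteq G$ via the standard trick of contracting each triangle of $C$ in $G$ to a single vertex, taking a spanning tree (or forest) of the contracted graph, and uncontracting; the resulting subgraph contains $C$ plus $n - 2k - 1$ additional tree edges, for a total of $n - 1 + k$ edges, and remains planar because the tree can be routed around the contracted triangles. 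A short calculation combining $|E(H)| \geq n - 1 + k$, the lower bound on $k$, and Euler's inequality $n \geq (\opt + 6)/3$ then yields $|E(H)| \geq \frac{4}{9}\opt$.

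The main subtlety I anticipate is justifying the planar-extension step: one must argue that the uncontracted spanning tree, adjoined to $C$, does not introduce crossings in the planar embedding of $C$. This uses the standard contract-to-a-point planarity argument from Calinescu et al.~\cite{cualinescu1998better}, and is the only piece of the deduction that is not pure combinatorial bookkeeping.
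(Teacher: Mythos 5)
Your proposal is correct and follows essentially the same route as the paper: the inequality $f_3(G) \geq 2n-4-2t(G)$ is obtained exactly as in Section~\ref{sec:extras} (complete the embedding to a triangulation, which has $2n-4$ triangular faces, and note that each of the $t(G)$ missing edges destroys at most two of them), after which Theorem~\ref{thm:main} gives the first claim. The $\frac{4}{9}$-approximation part is the standard reduction of Calinescu et al., which the paper simply cites; your spelled-out cactus-plus-spanning-forest calculation is that intended argument and checks out.
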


Besides implying the MPS result, we exhibit further implications of our bound.  
Recently in~\cite{CS17}, 
the authors introduced {\em Maximum Planar Triangles (MPT)}, where the goal is to find a plane subgraph with a maximum number of triangular faces.
It was shown that an approximation algorithm for MPT naturally translates into one for MPS, where a $\frac{1}{6}$ approximate MPT solution could be turned into a $\frac{4}{9}$ approximate MPS solution. However, the authors only managed to show a $\frac{1}{11}$ approximation for MPT.

Although the only change from MPS to MPT lies in the objective of maximizing the number of triangular faces instead of edges, the MPT objective seems much harder to handle, for instance, the extremal bound provided in~\cite{cualinescu1998better} is not sufficient to derive any approximation algorithm for MPT.

Theorem \ref{thm:main} therefore implies the following result for MPT. 

\begin{corollary}
A matroid parity algorithm gives a $\frac{1}{6}$ approximation algorithm for MPT. 
\end{corollary}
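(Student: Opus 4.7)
The plan is to derive this corollary directly from Theorem~\ref{thm:main} together with the polynomial-time computability of $\beta(G)$ via matroid parity. The algorithm itself is simply: given the input graph $G$ (possibly non-planar), run the matroid parity algorithm to compute a triangular cactus subgraph $C \subseteq G$ with $\beta(G)$ triangles, and output $C$ as the candidate plane subgraph.

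To argue correctness and the approximation ratio, I would proceed in three short steps. First, I would verify that the output is a valid MPT solution: a triangular cactus is planar, so $C$ embeds in the plane, and in any plane embedding of a triangular cactus each of its $\beta(G)$ triangles bounds a distinct triangular face. Hence the algorithm's value is at least $\beta(G)$. Second, I would let $H$ denote an optimal MPT solution, i.e.\ a plane subgraph of $G$ with $f_3(H) = \mathrm{OPT}$, and apply Theorem~\ref{thm:main} to the \emph{plane} graph $H$ to obtain $\beta(H) \geq \tfrac{1}{6} f_3(H) = \tfrac{1}{6}\mathrm{OPT}$. Third, since any cactus subgraph of $H$ is also a cactus subgraph of $G$, monotonicity gives $\beta(G) \geq \beta(H)$. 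Chaining the inequalities yields $\mathrm{ALG} \geq \beta(G) \geq \beta(H) \geq \tfrac{1}{6}\mathrm{OPT}$, which is the desired ratio.

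There is no real obstacle here: the entire content is packed into Theorem~\ref{thm:main}, and the only thing to be mildly careful about is the distinction between $G$ (arbitrary, the MPT input) and $H$ (the hypothetical optimal plane subgraph, to which the extremal bound is applied). A secondary point worth stating explicitly is that a triangular cactus with $k$ triangles genuinely has $k$ triangular faces in any planar embedding, so that the quantity $\beta(\cdot)$ lower-bounds $f_3(\cdot)$ for the cactus we output; this is immediate from the fact that in a cactus each cycle bounds a face, but it should be mentioned for clarity.
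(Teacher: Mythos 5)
Your proposal is correct and matches the paper's (implicit) argument: the corollary is stated as an immediate consequence of Theorem~\ref{thm:main}, via exactly the chain $\mathrm{ALG} \geq \beta(G) \geq \beta(H) \geq \tfrac{1}{6}f_3(H) = \tfrac{1}{6}\mathrm{OPT}$ applied to an optimal plane subgraph $H$. The only nitpick is your claim that \emph{every} plane embedding of a triangular cactus realizes all its triangles as faces (one component could be drawn inside a triangle of another); it suffices, and is what you need, that \emph{some} embedding does.
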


Our conceptual contributions are the following: 
\begin{enumerate}
    \item Our result further highlights the extremal role of the cactus number in finding a dense planar structure, as illustrated by the fact that our bound on $\beta(G)$ is more ``robust'' to the change of objectives from MPS to MPT. 
    It allows us to reach the limit of approximation algorithms that matroid parity provides 
    for both MPS and MPT. 
    
    \item Our work implies that local search arguments alone are sufficient to ``almost'' reach the best known approximation results for both MPS and MPT in the following sense: Matroid parity admits a PTAS via local search~\cite{lee2013matroid,BlaserJP17}. 
    Therefore, combining this with our bound implies that local search arguments are sufficient to get us to a $\frac{4}{9} + \epsilon$ approximation for MPS and $\frac{1}{6} + \epsilon$ approximation for MPT. 
    Therefore, this suggests that local search might be a promising candidate for such problems.

    \item Finally, in some ways, our work can be seen as an effort to open up all the black boxes used in MPS algorithms with the hope of learning algorithmic insights that are crucial for making progress on this kind of problems. 
    In more detail, there are two main ``black boxes'' hidden in the MPS result: (i) The use of Lov\'{a}sz min-max cactus formula in deriving the bound $\beta(G) \geq \frac{1}{3}(n - 2 - t(G))$, and (ii) the use of a matroid parity algorithm as a blackbox in computing $\beta(G)$.  
    Our bound for $\beta(G)$ is now purely combinatorial (and even constructive) and manages to by-pass (i). 

\end{enumerate}

\paragraph{Open problems and future directions:} From approximation algorithms' perspectives, there is still a large gap of understanding on the approximability of MPS and MPT.
In particular, can we improve over a $\frac{4}{9}$ approximation for MPS? Can we improve the $\frac{1}{6}$-approximation for MPT? From~\cite{CS17}, improving $\frac 1 6$ for MPT would lead to improved MPS result as well. 
As discussed above, it would be interesting to further explore the power of local search in the context of MPT and MPS. 

In particular, we propose the following local search  and conjecture that it breaks $1/6$ approximation for MPT (therefore breaking $4/9$ for MPS): 

\begin{quote}
While it is possible to remove $t$ triangles and add $t+1$ disjoint triangles or diamonds\footnote{A diamond graph is $K_4$ with one edge removed}, do it.    
\end{quote}

We remark that our result gives us the first step towards the analysis: Combined with~\cite{lee2013matroid}, our result implies that the above algorithm (without diamonds) converges to a factor $1/6$ for MPT. 
Therefore, one may say that the only missing component now is to incorporate the analysis of diamonds.

\subparagraph*{Related work:} 
On the hardness of approximation side, MPS is known to be APX-hard~\cite{cualinescu1998better}, while MPT is only known to be NP-hard~\cite{CS17}. 
In combinatorial optimization, there are a number of problems closely related to MPS and MPT. 
For instance, finding a maximum series-parallel subgraph~\cite{cualinescu2012maximum} or a maximum outer-planar graph~\cite{cualinescu1998better}, as well as the weighted variant of these problems~\cite{calinescu2003new}; these are the problems whose objectives are to maximize the number of edges. 

Perhaps the most famous extremal bound in the context of cactus is the min-max formula of Lov\'{a}sz~\cite{lovasz1980matroid} and a follow-up formula that is more illustrative in the context of cactus~\cite{szigeti1998min}.  
All these formulas generalize the Tutte-Berge formula~\cite{berge1958theorie,tutte1947factorization} that has been used extensively both in research and curriculum.  

Another related set of problems has the objectives of maximizing the number of vertices, instead of edges. In particular, in the maximum induced planar subgraph (i.e. given graph $G$, one aims at finding a set of nodes $S \subseteq V(G)$ such that $G[S]$ is planar, while maximizing $|S|$.)
This variant has been studied under a more generic name, called {\em maximum subgraph with hereditary property}~\cite{lund1993approximation,lewis1980node,halldorsson2004approximations}.  
This variant is unfortunately much harder to approximate: $\tilde \Omega(|V(G)|)$\footnote{The term $\tilde \Omega$ hides asymptotically smaller factors.} hard to approximate~\cite{haastad1999clique,khot2006better}; in fact, the problems in this family do not even admit any FPT approximation algorithm~\cite{chalermsook2017gap}, assuming the {\em gap exponential time hypothesis} (Gap-ETH).

\subsection{Overview of Techniques} 
We give a high-level overview of our techniques. The description in this section assumes certain familiarity with how standard local search analysis is often done. 

Our algorithm works as follows. 
Let $G$ be an input plane graph, and let $\cset$ be a cactus subgraph of $G$ whose triangles correspond to triangular faces of $G$. The local search operation, $t$-swap, is done as follows: As long as there is a collection $X \subseteq \cset$ of $\ell: \ell \leq t$ edge-disjoint triangles and $Y$ such that $(\cset\setminus X) \cup Y$ contains more triangular faces of $G$ than $\cset$ and it remains a cactus, we perform such an improvement step. A cactus subgraph is called locally $t$-swap optimal, if it can not be improved by a $t$-swap operation.
Remark that the triangles chosen by our local search are only those which are triangular faces in the input graph $G$ (we assume that the drawing of $G$ is fixed.)

Our analysis is highly technical, although the basic idea is very simple and intuitive. 
We give a high-level overview of the analysis. 
We remark that this description is overly simplified, but it sufficiently captures the crux of our arguments.  
Let $\cset$ be the solution obtained by the local search $2$-swap algorithm. 
We argue that the number of triangles in $\cset$ is at least $f_3(G)/6$. We remark that the $2$-swap is required, as we are aware of a bad example $H$ for which the $1$-swap local search only achieves a bound of $(\frac{1}{7}+o(1))f_3(H)$. 
For simplicity, let us assume that $\cset$ has only one non-singleton component. 
Let $S \subseteq V(G)$ be the vertices in such a connected component. 

Let $t$ be a triangle in $\cset$. Notice that removing the three edges of $t$ from $\cset$ breaks the cactus into at most three components, say $\cset_1 \cup \cset_2 \cup \cset_3$ that are pairwise vertex-disjoint, i.e. sets $S_j = V(\cset_j)$ are pairwise vertex-disjoint. Recall at this point that we would like to upper bound the number of triangles in $G$ by six times $\Delta$, where $\Delta$ is the number of triangles in the cactus $\cset$.
Notice that $f_3(G)$ is comprised of $f_3(G[S_1]) + f_3(G[S_2]) + f_3(G[S_3]) + q'$, where $q'$ is the number of triangles in $G$ ``across'' the components $S_j$ (i.e. those triangles whose vertices intersect with at least two sets $S_i, S_j$, where $i \neq j$. 
Therefore, if we could somehow give a nice upper bound on $q'$, e.g. if $q'\leq 6$, then we could inductively use $f_3(G[S_j]) \leq 6 \Delta_j$ where $\Delta_j$ is the number of triangles in $\cset_j$, and that therefore 
\[f_3(G) \leq 6(\Delta_1+ \Delta_2 + \Delta_3) +6 \leq 6(\Delta-1) + 6 = 6 \Delta\] 
and we would be done. 
However, it is not possible to give a nice upper bound on $q'$ that holds in general for all situations. 
We observe that such a bound can be proven for some suitable choice of $t$: Roughly speaking, removing such a triangle $t$ from $\cset$ would create a small ``interaction'' between components $\cset_j$ (i.e. small $q'$).  
We say that such a triangle $t$ is a {\em light} triangle; otherwise, we say that it is {\em heavy}. 
Let $\cset'$ be the current cactus we are considering. As long as there is a light triangle left in $\cset'$, we would remove it (thus breaking $\cset'$ into $\cset'_1, \cset'_2, \cset'_3$) and inductively use the bound for each $\cset'_j$. 
Therefore, we have reduced the problem to that of analyzing the base case of a cactus in which all triangles are heavy. 
Handling the base case of the inductive proof is the main challenge of our result. 

We sketch here the two key ideas. Let $S = V(\cset)$. 
The first key idea is the way we exploit the locally optimal solution in certain parts of the graph $G[S]$. 
We want to point out; the fact that all triangles in $\cset$ are heavy is exploited crucially in this step. 
Recall that, each heavy triangle is such that its removal creates three components $\cset_1, \cset_2, \cset_3$ with many ``interactions'' (i.e. many triangles across components) between them. 
This large amount of interaction is the main reason why we could not use induction before. 
However, intuitively, these triangles across components could serve as candidates for making local improvements. 
So the fact that there are many interactions would become our advantage in the local search analysis. 

We briefly illustrate how we take advantage of heavy triangles. 
Let $\tset$ be the set of triangular faces in $G$ that are not contained in $\bigcup_i G[S_i]$, so each triangle in $\tset$ has vertices in at least two subsets $S_j, S_i$ where $j \neq i$. 
The local search argument would allow us to say that all triangles in $\tset$ have one vertex in $S_i$, one in $S_j$ and one outside of $S_1 \cup S_2 \cup S_3$. 
This idea is illustrated in Figure~\ref{fig:1-swap}. 

\begin{figure}[H]
  \centering
  \begin{subfigure}[b]{0.45\textwidth}
  \centering
    \includegraphics[width=\textwidth]{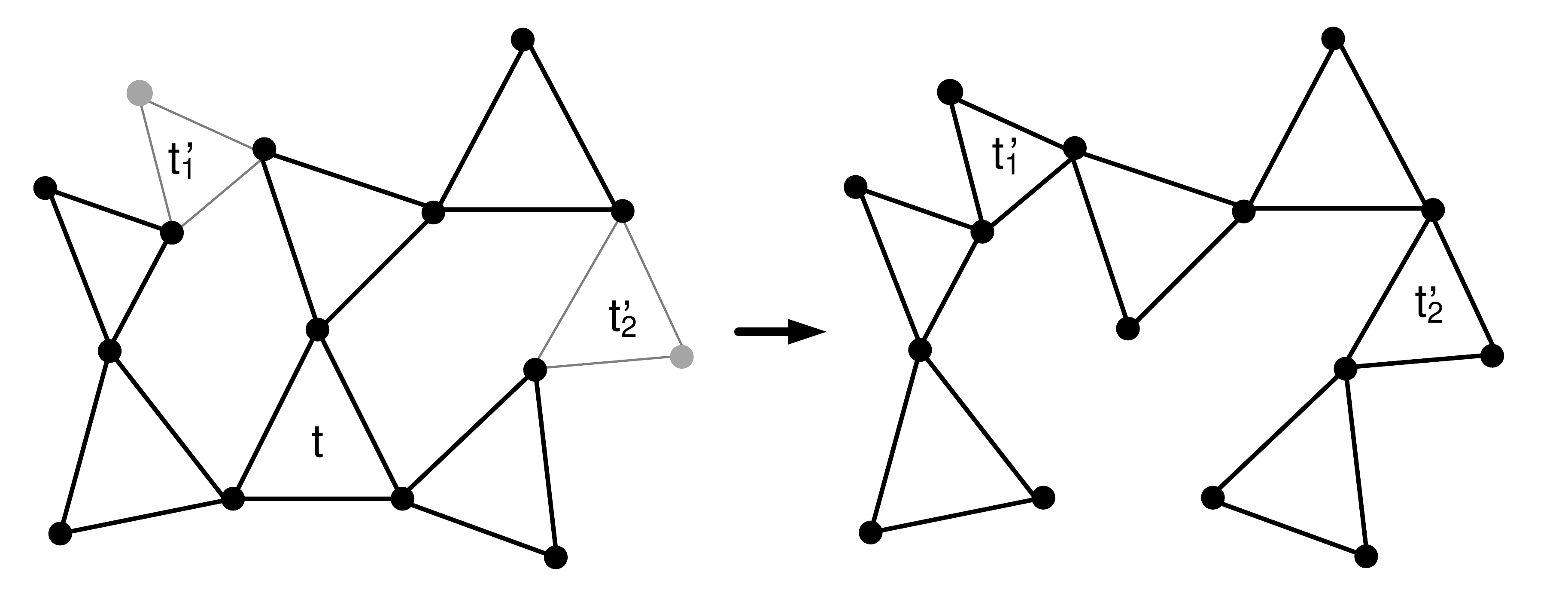}
    \caption{A $1$-swap operation. If there were two triangles $t_1'$, $t_2'$ in $\tset$ between two different pairs of components $S_j, S_i$ (where $j \neq i$), we could remove $t$ from $\cset$ and add $t_1'$, $t_2'$ to get a better cactus.}
    \label{fig:1-swap}
  \end{subfigure}\hspace{0.05\textwidth}%
      \begin{subfigure}[b]{0.5\textwidth} 
      \centering
    \includegraphics[width=\textwidth]{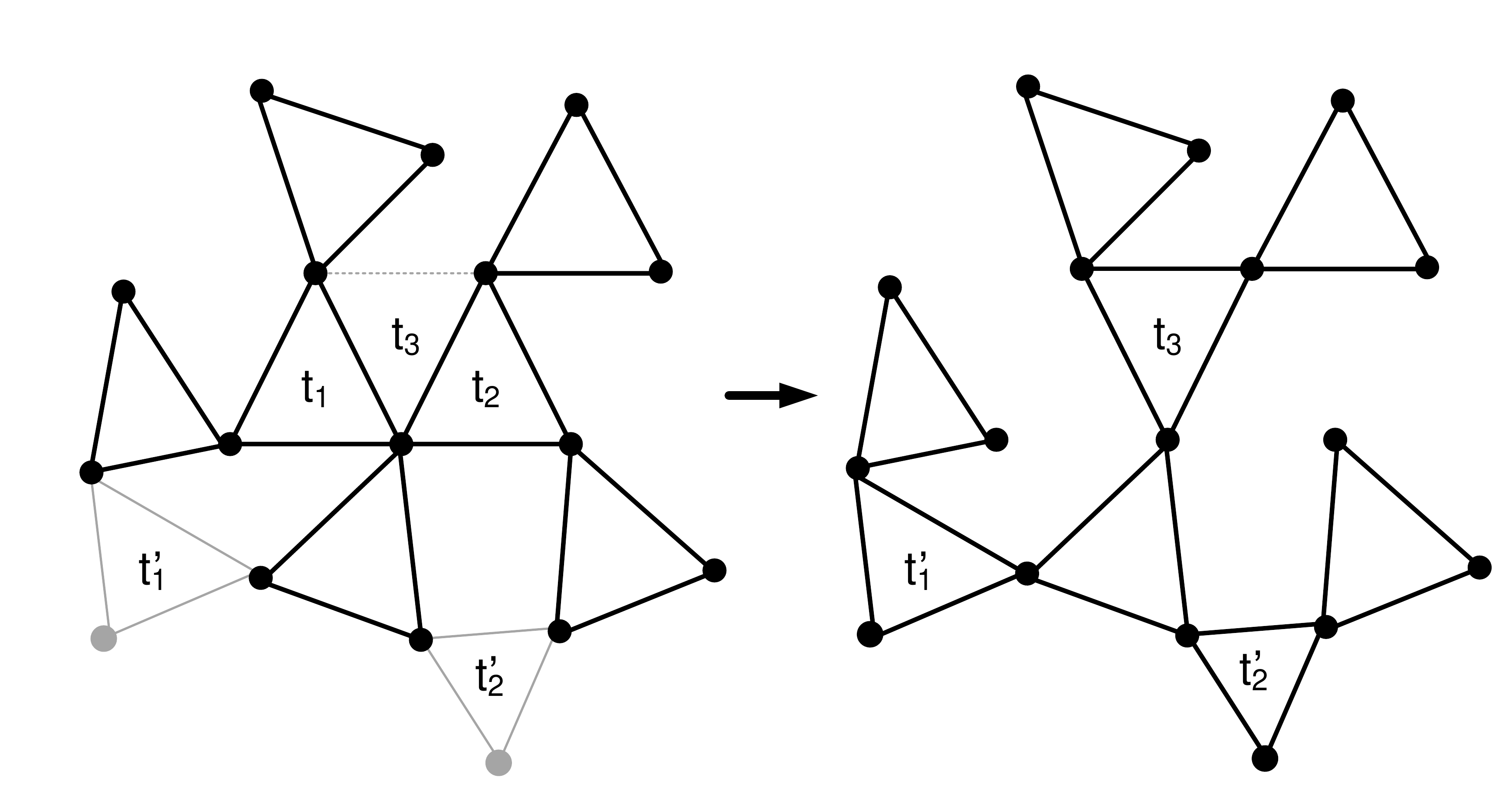}
    \caption{A $2$-swap operation. Let $t_1$ and $t_2$ be two adjacent triangles in our cactus. If there was an edge between $t_1$ and $t_2$, then there would exists a local improvement by removing $t_1$ and $t_2$ from $\cset$ and adding $t_1'$, $t_2'$ and $t_3$.}
    \label{fig:2-swap}
    \end{subfigure}
    \caption{Two examples for the swap operations.}
    \label{fig:swaps}
\end{figure}

Moreover, we will argue that there are not too many triangular faces in $G[S]$, and we give a rough idea of how the exchange argument can be used in Figure~\ref{fig:2-swap}. 

Finally, the ideas illustrated in both figures are only applied locally in a certain ``region'' inside the input planar graph $G$, so globally it is still unclear what would happen. 
Our final ingredient is a way to decompose the regions inside a plane graph into various ``atomic'' types. 
For each such atomic type, the local exchange argument is sufficient to argue optimally about the number of triangles in $G$ in that region compared to that in the cactus. 
Combining the bounds on these atomic types gives us the desired result. 
This is the most technically involved part of the paper, and we present it gradually by first showing the analysis that gives $\beta(G) \geq \frac{1}{7} f_3(G)$.  
For this, we need to classify the regions into five atomic types. 
To prove the main theorem, that $\beta(G) \geq \frac{1}{6} f_3(G)$, we need a more complicated classification into thirteen atomic types. 

\subparagraph*{Organization of the paper:} 
In Section~\ref{sec:main-overview}, we give a  detailed overview of the proof. In Section~\ref{thm:reduction-to-heavy}, we show the inductive argument, reducing the general case to proving the base case. 
In Section~\ref{sec:classification-7}, we show a slightly weaker version of the base case that implies $\beta(G) \geq \frac{1}{7} f_3(G)$, and in Section~\ref{sec:classification-6}, we prove the base case to get our main result. 

In Section~\ref{sec:extras}, we present how to construct a planar graph for which the bound proven in Theorem~\ref{thm:main} is tight. In addition we show how it implies the extremal bound provided in~\cite{cualinescu1998better}.
In Section~\ref{sec:conclusion}, we point out possible directions for future research and extensions of our work.

\section{Overview of the Proof} 
\label{sec:main-overview}

In this section, we give a formal overview of the structure of the proof of Theorem \ref{thm:main}. 
Let our input $G$ be a plane graph (a planar graph with a fixed drawing). 
Let $\cset$ be a locally optimal triangular cactus solution for the natural local search algorithm that uses $2$-swap operations, as described in the previous section. 
Let $\Delta(\cset)$ denote the number of triangular faces of $\cset$ which correspond to the triangular faces of $G$.
We will show $\Delta(\cset) \geq f_3(G)/6$. In general, we will use the function $\Delta: G \rightarrow \N$ to denote the number of triangular faces in any plane graph $G$.

We partition the vertices in $G$ into subsets based on the connected components of $\cset$, \ie $V(G) = \bigcup_i S_i$ where $\cset[S_i]$ is a connected cactus subgraph of $\cset$. For each $i$, where $|S_i| \geq 1$, let $q(S_i)$ denote the number of triangular faces in $G$ with at least two nodes in $S_i$. The following proposition holds by the $2$-swap optimality of $\cset$ which implies $f_3(G) = \sum_i q(S_i)$.

\begin{proposition}
\label{prop:reduction to 1 component} 
If $\Delta(\cset_i) \geq \frac{1}{6} q(S_i)$ for all $i$, then $\Delta(\cset) \geq \frac{1}{6} f_3(G)$. 
\end{proposition}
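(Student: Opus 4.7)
The plan is to reduce the proposition to the single combinatorial identity $f_3(G) = \sum_i q(S_i)$ announced just before the statement. Granting this identity, the proposition follows by a one-line calculation: since the components $\cset_i$ are vertex-disjoint, every triangle of $\cset$ lies entirely in exactly one $\cset_i$, so $\Delta(\cset) = \sum_i \Delta(\cset_i)$, and under the hypothesis $\Delta(\cset_i) \geq q(S_i)/6$ we sum over $i$ to obtain
\[
\Delta(\cset) = \sum_i \Delta(\cset_i) \geq \frac{1}{6}\sum_i q(S_i) = \frac{1}{6} f_3(G).
\]
All of the substance therefore lies in verifying the identity $f_3(G) = \sum_i q(S_i)$, and this is the only place local optimality of $\cset$ is invoked.

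To prove the identity, I would examine a single triangular face $t$ of $G$ with vertices $u,v,w$ and determine the components $S_a, S_b, S_c$ of $\cset$ containing them. If at least two of these coincide, say $u,v \in S_i$, then $t$ contributes $1$ to $q(S_i)$ and $0$ to every other $q(S_j)$, since no other component contains two of the three vertices. Hence any triangular face having some pair of vertices in a common component is counted exactly once on the right-hand side. The only remaining case is when $S_a, S_b, S_c$ are three distinct components, and this case I would rule out via local optimality. Concretely, consider the swap $X = \emptyset$, $Y = \{t\}$, which is a legal instance of the $2$-swap operation. None of the three edges of $t$ already lies in $\cset$, because their endpoints are in different components of $\cset$; adding these edges introduces exactly one new cycle (namely $t$ itself), and this cycle is edge-disjoint from every cycle of $\cset$, since each existing cycle is contained within a single component $\cset_i$ and $t$ meets each component in at most one vertex. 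Therefore $(\cset \setminus X) \cup Y$ is still a cactus and it has one more triangular face of $G$ than $\cset$, contradicting the local optimality of $\cset$. So triangular faces of $G$ spanning three distinct components cannot exist, and every triangular face of $G$ is counted exactly once in $\sum_i q(S_i)$.

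The main step is the local-optimality argument, but it is essentially routine; the only point that requires a moment's care is confirming that adding the hypothetical ``three-component-spanning'' triangle really preserves the cactus property, and this reduces immediately to the vertex-disjointness of distinct components of $\cset$. I do not anticipate any genuine obstacle beyond this bookkeeping, and no property of $\cset$ beyond $0$-remove/$1$-add local optimality (a special case of $2$-swap optimality) is actually used in this step.
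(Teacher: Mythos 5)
Your proof is correct and matches the paper's (largely implicit) argument: the paper also justifies the proposition via the identity $f_3(G)=\sum_i q(S_i)$, obtained from local optimality by observing that a triangular face of $G$ spanning three distinct cactus components could be added as an improving swap. You have simply filled in the routine details that the paper leaves unstated.
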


Therefore, it is sufficient to analyze any arbitrary component $S_i$ where $\cset[S_i]$ contains at least one triangle of $\cset$ (if the component does not contain such a triangle it is just a singleton vertex) and show that $\Delta(\cset_i) \geq \frac{1}{6} q(S_i)$.
Thus, from now on, we fix such an arbitrary component $S_i$ and denote $S_i$ simply by $S$, $q(S_i)$ by $q(S)$, and $\Delta(\cset[S_i])$ by $p$.
We will show that $q \leq 6p$ through several steps.

\subsection*{Step 1: Reduction to Heavy Cactus} 
In the first step, we will show that the general case can be reduced to the case where all triangles in $\cset$ are {\em heavy} (to be defined below).  
We refer to different types of vertices, edges and triangles in the graph $G$ as follows:
\begin{itemize}
    \item {\bf Cactus:} All edges/vertices/triangles in the cactus $\cset[S]$ are called {\em cactus edges/vertices/triangles} respectively.
\item {\bf Cross:} Edges with exactly one end-point in $S$ are called {\em cross edges}. Triangles that use one vertex outside of $S$ are {\em cross triangles}.
Notice that each cross triangle has exactly one edge in $G[S]$, that edge is called a {\em supporting edge} of the cross triangle. Similarly, we say that an edge $e \in E(G[S])$ supports a cross triangle; such a cross triangle $t$ contains exactly one vertex $v$ in some component $S_i \neq S$. The component $S_i$ is called the {\em landing component} of $t$. Similarly the vertex $v$ is called the {\em landing vertex} of $t$.

\item {\bf type-$[i]$ edges:} An edge in $G[S]$ that is not a cactus edge and does not support a cross triangle is called a {\em type-$[0]$ edge}. An edge in $G[S]$ that is not a cactus edge and supports $i$ cross triangle(s) is called a {\em type-$[i]$ edge}.
\end{itemize}
Therefore, each edge in $G[S]$ is a cactus, type-$0$, type-$1$ or type-$2$ edge. The introduced naming convention makes it easier to make important observations like the following (see Figure~\ref{fig:classification-edges-vertices-triangles} for an illustration of our naming convention).

\begin{figure}[H]
    \centering
    \includegraphics[width=0.7\textwidth]{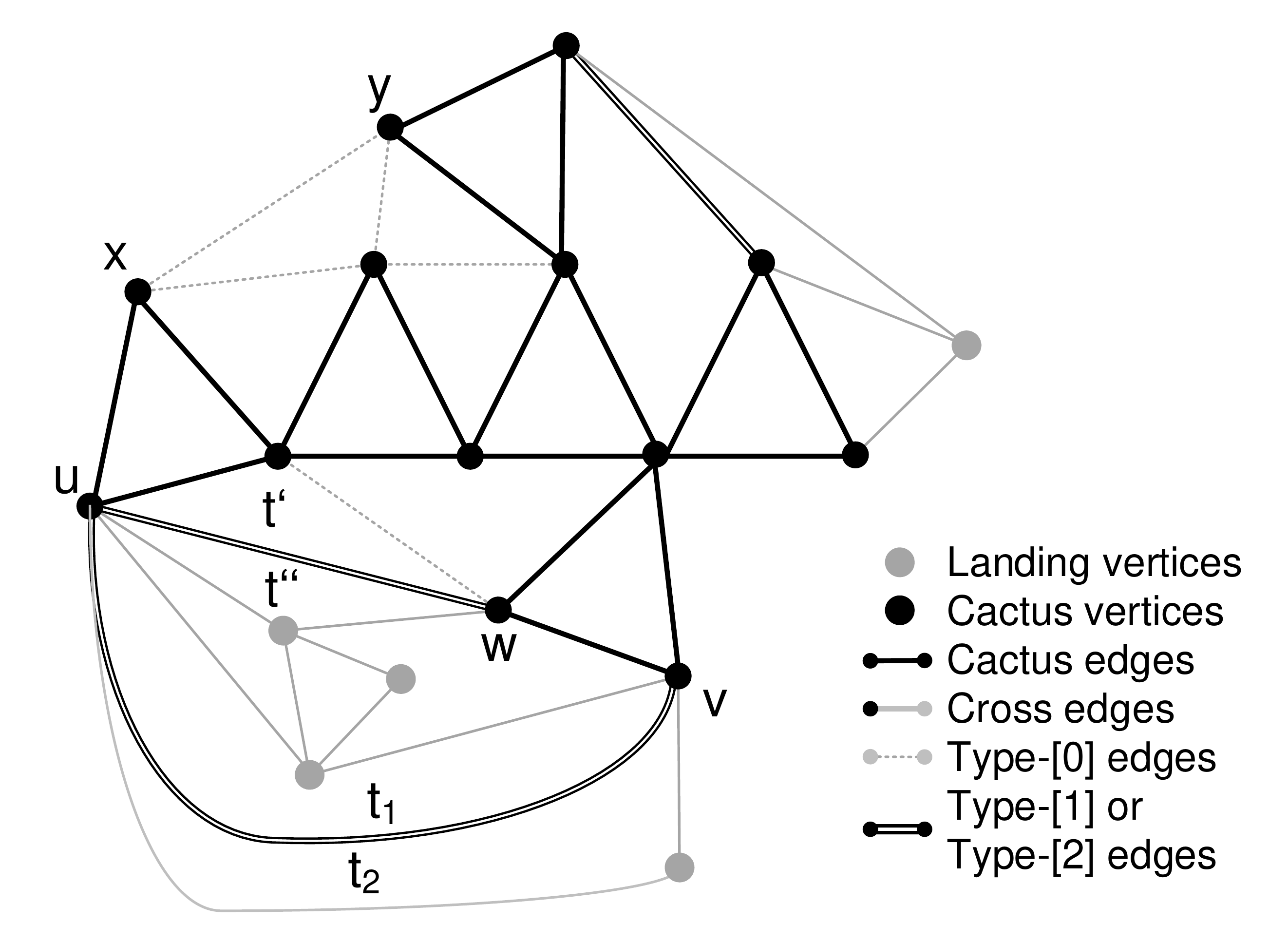}
    \caption{Various types of edges, vertices and triangles. Here the cross triangles $t''$ and $t_1$ have the same landing component.}
    \label{fig:classification-edges-vertices-triangles}
\end{figure}

\begin{Observation}
Triangles that contribute to the value of $q$ are of the following types: (i) the cactus triangles; (ii) the cross triangles; and (iii) the ``remaining'' triangles that connect three cactus vertices using at least one type-$0$, type-$1$ or type-$2$ edge, and do not have a cross triangle drawn inside.
\end{Observation}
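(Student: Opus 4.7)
The plan is to verify the observation by a straightforward case split on how many vertices of the triangular face $T$ (contributing to $q(S)$) lie in $S$. By definition of $q(S)$, the face $T$ has at least two vertices in $S$; since $T$ is a triangle, this leaves only two possibilities: exactly two of its vertices are in $S$, or all three are in $S$.

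First, suppose $T$ has exactly two vertices in $S$ and one vertex outside $S$. Then $T$ matches the very definition of a cross triangle given in the preceding naming convention, landing $T$ in category (ii). Otherwise $T$ has all three vertices in $S$, so all three of its edges lie in $G[S]$. By the classification introduced earlier, every edge of $G[S]$ is either a cactus edge or a non-cactus edge that is type-$[0]$, type-$[1]$, or type-$[2]$. If all three edges of $T$ are cactus edges, then $T$ is a triangle of $\cset[S]$, and since every cactus triangle is by construction a triangular face of $G$ (the local search is only allowed to pick triangular faces of $G$), $T$ is precisely a cactus triangle, category (i). Otherwise at least one edge of $T$ is of type-$[0]$, type-$[1]$, or type-$[2]$, which is exactly the edge condition appearing in category (iii).

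The only non-definitional claim in the statement is that a triangle $T$ in category (iii) does not have a cross triangle drawn inside it. This is where planarity, rather than any combinatorial property of $\cset$, is used: $T$ is a triangular face of $G$ in the fixed drawing, so the open region it bounds contains no vertex and no edge of $G$. A cross triangle, by definition, would have to place its landing vertex (together with its two incident cross edges) strictly inside the region bounded by $T$, which is impossible when that region is empty.

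In summary, the argument is almost entirely definition-unwinding, with one planarity observation. I do not expect any real obstacle: the potentially subtle step is the last paragraph, but it reduces immediately to the fact that a triangular face has empty interior in the fixed embedding. In particular, no appeal to induction, exchange arguments, or the $2$-swap optimality of $\cset$ is required — those heavier tools are reserved for the subsequent bounds on $q$ in terms of $p$.
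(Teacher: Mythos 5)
Your proof is correct and matches the intended reading: the paper states this Observation without proof precisely because it is the definition-unwinding case split you carry out (at least two vertices in $S$ forces either exactly one vertex outside, giving a cross triangle, or all three in $S$, giving a cactus triangle or a ``remaining'' triangle), plus the remark that a triangular face of the fixed plane drawing has empty interior and hence cannot contain a cross triangle. No gaps.
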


\subparagraph*{Types of cactus triangles and Split cacti:}  
Consider a (cactus) triangle $t$ in $\cset$. For $i \in \{0,1,2,3\}$, we say that $t$
is of type-$i$ if exactly $i$ of its edges support a cross triangle. 
Let $p_i$ denote the number of type-$i$ cactus triangles, so we have that $p_0 + p_1 + p_2 + p_3 = p$. 

We denote the operation of deleting the edges of $t$ from a connected cactus $\cset[S]$ by {\em splitting} $\cset[S]$ at $t$. The resulting three smaller triangular cacti (denoted by $\{\cset^t_v\}_{v \in V(t)}$) are referred to as the {\em split cacti} of $t$.
For each $v \in V(t)$, let $S_v^t := V(\cset^t_v)$ be the {\em split component} containing $v$.
Let $u, v \in V(t): u \neq v$. Denote by $B_{uv}^t$ the set of type-$1$ or type-$2$ edges having one endpoint in $S_u^t$ and the other in $S^t_{v}$. Now we are ready to define the concept of heavy and light cactus triangles, which will be crucially used in our analysis. 

\subparagraph*{Heavy and light cactus triangles:}
We say that a cactus triangle $t$ is {\em heavy} if either there are at least four cross triangles supported by $E(t) \cup \bigcup_{uv \in E(t)} B_{uv}^t$ or there are at least three cross triangles supported by the edges in one set $B_{uv}^t \cup uv$ for some $uv \in E(t)$ and no cross triangle supported by the rest of the sets $B_{ww'}^t \cup ww'$for each $ww' \in E(t)$. Otherwise, the triangle is {\em light}.
Intuitively, the notion of a light cactus triangle $t$ captures the fact that, after removing $t$, there is only a small amount of ``interaction'' between the split components. 

We will abuse the notations a bit by using $S$ instead of $V[S]$.
Recall, that we denote by $q(S)$ the total number of triangular faces in $G$ with exactly two vertices in $S$. We denote by $p(S)$ the total number of triangles in the cactus $\cset[S]$.

\subparagraph*{Function $\phi$:} Consider a set $S \subseteq V(G)$ and a drawing of $G[S]$ (since we are talking about a fixed drawing of the plane graph $G$, this is well-defined). Denote by $\ell(S)$ the length of the outer-face $f_S$ of the graph $G[S]$. 
We define $\phi(S)$ as the number of edges on the outer-face that do not support any cross triangle drawn on the outer-face, so we have $0 \leq \phi(S) \leq \ell(S)$.

The main ingredients of Step 1 are encapsulated in the following theorem. 

\begin{theorem}[Reduction to heavy triangles] 
\label{thm:reduction-to-heavy} 
Let $\gamma \geq 6$ be a real number, and $\phi$ be as described above. 
If $q(S) \leq \gamma p(S) - \phi(S)$ for all $S$ for which $\cset[S]$ is a connected cactus that contains no light triangle, then $q(S) \leq \gamma p(S) - \phi(S)$ for all $S$. 
\end{theorem}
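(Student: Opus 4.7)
The argument is by strong induction on $p(S)$, the number of cactus triangles in $\cset[S]$. If $\cset[S]$ contains no light triangle, the desired inequality is exactly the hypothesis of the theorem, so the base case is immediate. Otherwise, pick any light triangle $t \in \cset[S]$ with $V(t) = \{u,v,w\}$ and split $\cset[S]$ at $t$ into three smaller cacti on vertex sets $S_u^t, S_v^t, S_w^t$ that partition $S$. Each split cactus has strictly fewer cactus triangles, so the inductive hypothesis yields $q(S_x^t) \le \gamma\, p(S_x^t) - \phi(S_x^t)$ for every $x \in \{u,v,w\}$.

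Summing these three inequalities and using $p(S) = 1 + \sum_x p(S_x^t)$ reduces the target inequality $q(S) \le \gamma p(S) - \phi(S)$ to the single ``gluing'' inequality
\[
q(S) - \sum_{x \in \{u,v,w\}} q(S_x^t) \;\le\; \gamma + \Big(\sum_{x \in \{u,v,w\}} \phi(S_x^t) - \phi(S)\Big).
\]
The left-hand side counts the triangular faces contributing to $q(S)$ but to no $q(S_x^t)$. Examining where such a triangle's vertices can lie, these ``extras'' fall into four disjoint classes: (A) the cactus triangle $t$ itself; (B) cross triangles whose supporting edge lies in $E(t)$; (C) cross triangles whose supporting edge lies in some $B_{xy}^t$; and (D) triangular faces of $G[S]$ with exactly one vertex in each of $S_u^t, S_v^t, S_w^t$. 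The very definition of a light triangle guarantees $|B| + |C| \le 3$, so classes (A)--(C) together contribute at most $4$. Since $\gamma \ge 6$, it suffices to establish
\[
|D| \;\le\; 2 + \Big(\sum_{x \in \{u,v,w\}} \phi(S_x^t) - \phi(S)\Big).
\]

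The main obstacle will be proving this last inequality, i.e., charging each (D)-triangle to a distinct edge that witnesses an increase of $\sum_x \phi(S_x^t)$ over $\phi(S)$. The plan is a planarity-and-exchange argument: because the drawing of $G$ is fixed, any triangle with one vertex in each $S_x^t$ must sit in a restricted region near $t$, and planarity forces some of its edges to appear on the outer face of an adjacent split component without supporting an outer-face cross triangle there (thereby contributing to $\phi(S_x^t)$ but not to $\phi(S)$). The $2$-swap local optimality of $\cset$ is then used to prevent two distinct (D)-triangles from competing over the same witness edge, since any such conflict would exhibit an improving swap by exchanging $t$ for two triangles drawn from $B \cup C \cup D$. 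Finally, the two clauses in the heaviness definition --- the ``$\ge 4$ total'' clause and the ``$\ge 3$ concentrated on one edge's side with $0$ elsewhere'' clause --- appear tailored precisely so that in every borderline configuration this charging closes. Carrying out the required case analysis on the distribution of (B) and (C) around the three edges of $t$, and on how the three split components are arranged in the plane around $t$, is where the bulk of the technical work will lie.
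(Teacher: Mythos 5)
Your overall strategy is the same as the paper's: induct by splitting at a light triangle, apply the hypothesis to the three split cacti, and bound the ``extra'' triangles, which you correctly classify into $t$ itself, the cross triangles supported by $E(t)\cup\bigcup_{xy}B^t_{xy}$ (at most three by lightness), and the surviving faces with one vertex in each split component. The setup and the gluing inequality $q(S)-\sum_x q(S_x^t)\le \gamma+\bigl(\sum_x\phi(S_x^t)-\phi(S)\bigr)$ are correct.

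The gap is in the final budget split: the single inequality you reduce to, $|D|\le 2+\bigl(\sum_x\phi(S_x^t)-\phi(S)\bigr)$, is false, so no amount of charging will establish it. Take $S=V(t)$ with all three split components singletons and no cross triangles at all (e.g.\ $G$ is a single triangle): then $|D|=0$, each $\phi(S_x^t)=0$, but all three edges of $t$ lie free on the outer face of $G[S]$, so $\phi(S)=3$ and the right-hand side is $-1$. The correction term $\sum_x\phi(S_x^t)-\phi(S)$ can be as low as $-3$ exactly when few edges of $t$ are occupied, which is precisely when $|B|+|C|$ is small; so the two quantities cannot be bounded independently at their respective worst cases. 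The paper resolves this with a case split that you need to adopt: when $|B|+|C|\le 2$ (so $(A)+(B)+(C)\le 3$) it proves only the weaker, correct bound $|D|\le 3+\bigl(\sum_x\phi(S_x^t)-\phi(S)\bigr)$; the improved bound with constant $2$ is proved only when $|B|+|C|=3$, and its proof genuinely uses the extra structure of light triangles carrying three cross triangles (Proposition~\ref{prop:structure-light}: the three cross triangles share a common landing component, which either forces an occupied edge of $\Gamma$ counted in $o^t_{across}(S)$ or enlarges the face containing the landing component, killing one more surviving face). A secondary remark: the heavy lifting you defer (bounding $|D|$ via planarity) is done in the paper not by a direct $2$-swap conflict argument on witness edges but by cutting $G[S]$ into at most three regions $R_1,\dots,R_k$ bounded by the outer cycles of the split components and triangulating each; local optimality enters only through the structural propositions about light triangles, not through the counting of $|D|$ itself.
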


Therefore, if we manage to show the bound $q(S) \leq \gamma p(S) - \phi(S)$ for the heavy cactus, it will follow that $q \leq \gamma p$ {in general} (due to non-negativity of function $\phi$).
In other words, this gives a reduction from the general case to the case when all cactus triangles are heavy. 
We end the description of Step 1 by presenting the description of $\phi$. 

\subsection*{Step 2: Skeleton and Surviving Triangles}

Now, we focus on the case when there are only heavy triangles {in the given cactus}, and we will give a formal overview of the key idea we use to derive the bound $q(S) \leq 6p(S) - \phi(S)$, which in combination with Theorem~\ref{thm:reduction-to-heavy}, gives our main Theorem~\ref{thm:main}.
For convenience, we refer to the terms $p(S)$ and $q(S)$ as simply $p$ and $q$ respectively. 

\subparagraph*{Structures of heavy triangles:} Using local search's swap operations, the light and heavy triangles behave in a very well structured manner. 
The following proposition summarizes these structures for heavy triangles (proof of this proposition in Appendix~\ref{subsec:proof-prop-heavy}). 
\begin{proposition} 
\label{prop:structure-heavy} 
Let $t$ be a cactus triangle in cactus $\cset[S]$.
\begin{itemize}
    \item If $t$ is heavy, then $t$ is either type-$0$ or type-$1$. 
    
    \item If $t$ is a heavy type-$1$ triangle and the edge $uv \in E(t)$ supports the cross triangle supported by $t$, then $B^t_{ww'} = \emptyset$ for all $ww' \in E(t) \setminus \{uv\}$ and the total number of cross triangles supported by edges in $B^t_{uv}$ is greater than or equal to two.
    
    \item If $t$ is a heavy type-$0$ triangle, then there is an edge $uv \in E(t)$ such that $B^t_{ww'} = \emptyset$ for all $ww' \in E(t) \setminus \{uv\}$ and the total number of cross triangles supported by edges in $B^t_{uv}$ is greater than or equal to three.
\end{itemize}
\end{proposition}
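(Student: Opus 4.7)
The plan is to prove each of the three parts by contradicting the $2$-swap optimality of $\cset$. The common template is the following: deleting $t$ from $\cset[S]$ splits it into the three vertex-disjoint subcacti $\cset^t_u, \cset^t_v, \cset^t_w$, and any cross triangle supported by an edge in $E(t) \cup \bigcup_{xy \in E(t)} B^t_{xy}$, when added back, is edge-disjoint from $\cset[S]\setminus\{t\}$ and reconnects one of the pairs $(\cset^t_u, \cset^t_v)$, $(\cset^t_v, \cset^t_w)$, $(\cset^t_u, \cset^t_w)$ through its external landing vertex. Consequently, whenever one can exhibit two such cross triangles whose simultaneous addition still yields a cactus, removing $t$ and adding both nets one extra triangle and contradicts local optimality. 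In each part, the strategy is to argue that the failure of the stated conclusion supplies such a pair.

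For Part 1, if $t$ were of type-$2$ or type-$3$, at least two edges of $t$ already host cross triangles $t_1, t_2$; the swap ``remove $t$, add $t_1, t_2$'' has edge-disjoint supports and is a valid improvement whenever the landing components of $t_1, t_2$ are distinct. When those landings coincide, I appeal to heaviness: clause $(H1)$ or $(H2)$ supplies further cross triangles among the $B^t_{xy}$'s, and one selects an alternative partner that lands elsewhere. For Part 2, a cross triangle supported by $B^t_{vw}$ or $B^t_{uw}$ would, together with the $uv$-supported cross triangle $t_1$, form an edge-disjoint swappable pair yielding an improvement; this forces $B^t_{vw} = B^t_{uw} = \emptyset$. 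Since the cactus edge $uv$ lies on only two faces of $G$, one of them being $t$ itself, it supports at most one cross triangle, and plugging $k_{uv} = 1$ together with the emptiness of $B^t_{vw}, B^t_{uw}$ into the heavy hypothesis gives at least two cross triangles on $B^t_{uv}$. Part 3 follows the same template: if any two among $B^t_{uv}, B^t_{vw}, B^t_{uw}$ contained cross-triangle-supporting edges, the corresponding $1$-swap would improve $\cset$, so all cross triangles must concentrate on a single $B^t_{uv}$; with $k_{uv} = 0$ and the empty contribution of the other two, the quantitative heavy hypothesis yields at least three cross triangles on $B^t_{uv}$.

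The main obstacle throughout is the ``landing collision'' case: when the two candidate swap triangles share their external vertex or land in the same component $S_j$, the new triangles together with a path inside $\cset[S_j]$ close a cycle whose edges overlap with those of a freshly added triangle, violating the cactus property. Heaviness is precisely what resolves this: the abundance of cross triangles supported near $t$ provides enough combinatorial slack either to pick an alternate cross triangle whose landing is elsewhere, or to perform a genuine $2$-swap that additionally deletes one cactus triangle from within $S_j$ so that the parasitic cycle is broken before it can clash with any of the newly added triangles.
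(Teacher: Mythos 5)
Your overall mechanism --- ``remove $t$ and add two cross triangles supported near $t$'' --- is indeed the engine of the paper's proof, but the way you resolve the landing-collision case contains a genuine gap, and unfortunately that case is not an edge case: it is the \emph{only} case. By the $1$-swap argument, any two cross triangles whose supporting edges together meet all three split components of $t$ \emph{must} share a landing component (otherwise the swap you describe would already improve $\cset$). So for a type-$2$ or type-$3$ triangle, and likewise for the cross triangles hypothetically supported by two different sets $\{xy\}\cup B^t_{xy}$ in Parts 2 and 3, there is never ``an alternative partner that lands elsewhere'': local optimality forces \emph{every} such partner to land in the same component. Your other fallback --- a $2$-swap that also deletes a cactus triangle from inside the landing component $S_j$ --- is not worked out and does not obviously net a gain: removing two triangles ($t$ and one from $\cset[S_j]$) requires adding three, and you have only exhibited two; moreover the parasitic cycle runs through the landing vertices and a path in $\cset[S_j]$, and deleting one triangle of $\cset[S_j]$ need not disconnect that path. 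Note also that two cross triangles supported by edges of the \emph{same} set $B^t_{uv}$ can never be added simultaneously regardless of their landings, since their supporting edges close a cycle through $\cset^t_u$ and $\cset^t_v$; so the ``abundance'' guaranteed by heaviness in one $B^t_{uv}$ provides no swappable pair at all.

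What actually closes the argument in the paper is that the forced equality of landing components, combined with planarity, \emph{caps the number} of supported cross triangles instead of producing a swap. Concretely: (i) a circuit made of cactus/type-$1$/type-$2$ edges separates the plane so that cross triangles on opposite sides cannot share a landing component; this forces every edge of $B^t_{xy}$ to be type-$1$, forces $|B^t_{xy}|\le 1$, and forces each set $\{xy\}\cup B^t_{xy}$ to support at most one cross triangle whenever two different such sets are active; (ii) when two cross triangles supported by $E(t)$ share a cactus vertex $u$ and a landing component, the planar rotation at $u$ forces $B^t_{uv}=B^t_{uw}=\emptyset$. Together these give at most three supported cross triangles spread over at least two distinct support sets, which contradicts \emph{both} clauses of the heaviness definition and yields all three parts. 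Your proposal is missing exactly this counting/planarity step; without it, Parts 1--3 do not go through.
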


By Proposition \ref{prop:structure-heavy} we can only have type-$0$ and type-$1$ cactus triangles in $\cset$.
Moreover, for each such heavy triangle $t$, the type-$1$ or type-$2$ edges in $G[S]$ only connect vertices of two split components of $t$.

Let $a_i$ be the number of edges of type-$i$. Notice that the number of non-cactus edges in $G[S]$ is $\sum_{i} a_i = |E(G[S])| - 3p$. 

\subparagraph*{Skeleton graph $H$:} Let $A$ be the set of all type-$0$ edges in $G[S]$ and $H := H[S] := G[S] \setminus A$.
Thus $H[S]$ contains only cactus or type-$1$ or type-$2$ edges. 

Each face $f$ of $H$ possibly contains several faces of $G$, so we will refer to such a face as a {\em super-face}.
At high-level, our plan is to analyze each super-face $f$, providing an upper bound on the number of triangular faces of $G$ drawn inside $f$, and then sum over all such $f$ to retrieve the final result. 
We call $H$ a {\em skeleton graph} of $G$, whose goal is to provide a decomposition of the faces of $G$ into structured super-faces.  
Denote by $\fset$ the set of all super-faces (except for the $p$ faces corresponding to cactus triangles). 

Let $f$ be a super-face. Denote by $survive(f)$ the number of triangular faces of $G$ drawn inside $f$ that do not contain any cross triangles.
Now we do a simple counting argument for $q$ using the skeleton $H$ as follows:
(i) There are $p$ cactus triangles in $H$, (ii) There are $p_1 + a_1 + 2a_2$ cross triangles supported by edges in $G[S]$, and (iii) There are $\sum_{f \in \fset} survive(f)$ triangular faces in $G$ that were not counted in (i) or (ii).
Combining this, we obtain: 
\begin{equation}
\label{eq:main-upper-bound-on-q} 
q \leq p + (p_1 + a_1 + 2a_2) + \sum_{f \in \fset} survive(f)
\end{equation} 
The first and second terms are expressed nicely as functions of $p$'s and $a$'s, so the key is to achieve the best upper bound on the third term in terms of the same parameters.
Roughly speaking, the intuition is the following: 
When $a_2$ or $a_1$ is high (there are many edges in $G[S]$ supporting cross triangles), the second term becomes higher. However, each cross triangle would need to be drawn inside some face in $G[S]$, therefore decreasing the value of the term $\sum_{f \in \fset} survive(f)$. Similar arguments can be made for $p_1$. 
Therefore, the key to a tight analysis is to understand this trade-off. 

\subparagraph*{The structure of super-faces:} Let $f \in \fset$ be a super-face. 
Recall that an edge in the boundary of $f$ is either a type-$1$ or type-$2$ edge, or a cactus edge. 
We aim for a better understanding of the value of $survive(f)$. 
In general, this value can be as high as $|E(f)| - 2$, e.g. if $G[V(f)]$ is a triangulation of the region bounded by the super-face $f$ using type-$0$ edges.
However, if some edge in the boundary of $f$ supports a cross triangle whose landing component is drawn inside of $f$ in $G$, this would decrease the value of $survive(f)$, by {\em killing} the triangular face adjacent to it, hence the term $survive$. 

The following observation is crucial in our analysis: 

\begin{Observation}
Consider each edge $e \in E(f)$. There are two possible cases: 
\begin{itemize}
    \item Edge $e$ is a type-$1$ or type-$2$ or cactus edge and supports a cross triangle drawn in $f$.
    \item Edge $e$ is a type-$1$ or type-$2$ or cactus edge and does not support any cross triangle drawn in $f$.
\end{itemize}
Edges lying in the first case are called {\em occupied} edges (the set of such edges in $E(f)$ is denoted by $Occ(f)$), while the others are called {\em free} edges in $f$ (the set of free edges in $E(f)$ is denoted by $Free(f)$).
The length of $f$ can be written as $|E(f)| = |Occ(f)| + |Free(f)|$.
\end{Observation}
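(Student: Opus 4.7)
The plan: this observation is essentially a bookkeeping statement, and I would derive it directly from the construction of the skeleton $H$ together with the fact that $G$ has a fixed planar drawing. The argument splits into three short steps.

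First, I would establish the type restriction common to both bullets. By construction $H = G[S] \setminus A$, where $A$ is exactly the set of type-$0$ edges of $G[S]$. Every edge of $G[S]$ is, by the earlier classification, a cactus edge or a type-$0$, type-$1$, or type-$2$ edge; removing $A$ therefore leaves only cactus, type-$1$, and type-$2$ edges. Since $f \in \fset$ is a super-face of $H$, every $e \in E(f) \subseteq E(H)$ lies in one of these three allowed categories, proving the type claim that appears in both cases of the observation.

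Second, I would pin down the phrase ``cross triangle drawn in $f$''. The super-face $f$ corresponds to a specific open region of the plane bounded by the edges in $E(f)$, and any cross triangle $t' = uvw$ supported by some $e = uv \in E(f)$ has its third vertex $w \notin S = V(G[S])$, so $w$ cannot lie on the boundary of $f$. Because $t'$ is itself a triangular face of the planar drawing of $G$, its interior—and in particular $w$—lies entirely in a single face of $H$; it is then either the region $f$ or not. I would then simply declare $e \in Occ(f)$ if at least one cross triangle supported by $e$ lies in the region $f$, and $e \in Free(f)$ otherwise.

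Third, these two conditions are manifestly exhaustive and mutually exclusive, so $\{Occ(f), Free(f)\}$ partitions $E(f)$, giving $|E(f)| = |Occ(f)| + |Free(f)|$ by cardinality. I do not anticipate any real obstacle here; the only point that genuinely uses a hypothesis beyond the definitions is the planarity argument in the second step, namely that a cross triangle supported by a boundary edge of $f$ lies wholly on one side of that edge and hence in a uniquely determined face of $H$. This is the standard consequence of working with a fixed planar embedding and of $w$ not being a vertex of $H$.
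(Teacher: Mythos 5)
Your argument is correct and matches the (implicit) reasoning in the paper, which states this Observation without proof precisely because it is a definitional partition: once type-$0$ edges are deleted to form $H$, every boundary edge of a super-face is cactus, type-$1$, or type-$2$, and the occupied/free dichotomy is exhaustive and exclusive, giving $|E(f)| = |Occ(f)| + |Free(f)|$. Your second step, noting that planarity and the fact that the landing vertex lies outside $S$ make ``drawn in $f$'' well-defined, is the only point with any content, and you handle it correctly.
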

A very important quantity for our analysis is $\mu(f) = \frac{1}{2} \cdot |Occ(f)| + |Free(f)|$, roughly bounding the value of $survive(f)$ (within some small constant additives terms.) 

We will assume without loss of generality that $survive(f)$ is the maximum possible value of surviving triangles that can be obtained by drawing type-$0$ edges in $f$, so $\mu(f)$ is a function that depends only on the bounding edges in $f$.  
We define $gain(f) = \mu(f) - survive(f)$, which is again a function that only depends on bounding edges of $f$. 
Intuitively, the higher the term $gain(f)$, the better for us (since this would lower the value of $survive(f)$), and in fact, it will later become clear that $gain(f)$ roughly captures the ``effectiveness'' of a local exchange argument on the super-face $f$. 
Hence, it suffices to show that $\sum_{f \in \fset} gain(f)$ is sufficiently large. The following proposition makes this precise: 

\begin{proposition}
$\sum_{f \in \fset} survive(f) = (3p - 0.5 p_1 +1.5 a_1 + a_2) - \sum_{f \in \fset} gain(f) $
\end{proposition}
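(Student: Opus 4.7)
The plan is to start from the defining identity $gain(f) = \mu(f) - survive(f)$, which rearranges to $survive(f) = \mu(f) - gain(f)$. Summing over $f \in \fset$ gives $\sum_{f \in \fset} survive(f) = \sum_{f \in \fset} \mu(f) - \sum_{f \in \fset} gain(f)$, so the proposition reduces to establishing the closed-form identity $\sum_{f \in \fset} \mu(f) = 3p - 0.5\, p_1 + 1.5\, a_1 + a_2$. This is a purely combinatorial statement, and I would prove it by a double-counting argument: for each edge $e$ of the skeleton $H$, compute the total amount it contributes to $\sum_{f \in \fset} \mu(f)$ across the super-faces on both of its sides, then sum those contributions over all edge types.

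To set this up, I would first invoke Proposition \ref{prop:structure-heavy}, which in the heavy-cactus case forces $p_2 = p_3 = 0$, hence $p = p_0 + p_1$ and the number of cactus edges supporting a cross triangle is exactly $p_1$ (out of $3p$ cactus edges in total). By the definition of $\mu$, an edge $e$ contributes $\tfrac12$ to $\mu(f)$ if $e \in Occ(f)$ and $1$ if $e \in Free(f)$. The three edge types of $H$ are then handled as follows. A cactus edge bounds exactly one face in $\fset$, since its other side is a cactus triangular face of $G$ (excluded from $\fset$); the $p_1$ supporting cactus edges must have their cross triangle drawn on the $\fset$-side (because the cactus triangular face of $G$ is empty of interior vertices in the fixed drawing), each contributing $\tfrac12$, while the remaining $3p - p_1$ cactus edges are free on the $\fset$-side and contribute $1$ each, for a total of $3p - 0.5\, p_1$. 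Each type-$1$ edge bounds two super-faces in $\fset$ with its unique cross triangle drawn on one side only, contributing $\tfrac12 + 1 = \tfrac32$ and a total of $1.5\, a_1$. Each type-$2$ edge bounds two super-faces in $\fset$, and by planarity the two cross triangles it supports must occupy the two different half-planes, so both sides are occupied, contributing $\tfrac12 + \tfrac12 = 1$ per edge and a total of $a_2$. Summing the three contributions gives the desired identity.

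The main subtlety will be the careful treatment of bridge edges of $H$, where the two sides of the edge coincide with the same super-face $f$; there the edge's two "side contributions" simply add within a single term $\mu(f)$, which is consistent as long as we count boundary edges of a face with multiplicity in $|E(f)|$, $|Occ(f)|$, and $|Free(f)|$. The other delicate point is the geometric claim that a cross triangle supported by a cactus edge is always drawn on the non-cactus-triangle side, which I would justify by noting that the cactus triangle is itself a triangular face of $G$ (by the choice of $\cset$) and thus contains no other vertex of $G$ in its interior, so the landing vertex of any such cross triangle must lie on the opposite side. Beyond these points the proof is essentially mechanical bookkeeping, since the conceptual content has already been absorbed into the definitions of the skeleton $H$ and of $\mu$.
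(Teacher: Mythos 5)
Your proposal is correct and follows essentially the same route as the paper: both reduce the claim to the identity $\sum_{f \in \fset} \mu(f) = 3p - 0.5\,p_1 + 1.5\,a_1 + a_2$ via $survive(f) = \mu(f) - gain(f)$ and then establish it by counting each edge's contribution per incident super-face (cactus edges contributing to one super-face, with the $p_1$ supporting edges giving only $\tfrac12$; type-$1$ edges giving $\tfrac32$; type-$2$ edges giving $1$). Your extra remarks on bridges and on why a supported cross triangle lies on the non-cactus-triangle side only make explicit details the paper leaves implicit.
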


\begin{proof}
Notice that $\sum_{f \in \fset} \mu(f)$ can be analyzed as follows: 
\begin{itemize}
    \item Each cactus triangle is counted three times (once for each of its edges), and for a type-$1$ triangle, one of the three edges contribute only one half. Therefore, this accounts for the term $3p - 0.5 p_1$. 
    
    \item Each type-$1$ or type-$2$ edge is counted two times (once per super-face containing it in its boundary). 
    For a type-$2$ edge, the contribution is always half (since it always is accounted in $Occ(f)$). For a type-$1$ edge, the contribution is half on the occupied case, and full on the free case. 
    Therefore, this accounts for the term $1.5 a_1 + a_2$. 
\end{itemize}
Overall we get, $\sum_{f \in \fset} \mu(f) = 3p - 0.5 p_1+ 1.5a_1 + a_2$, which finishes the proof.
\end{proof}
 
Combining this proposition with Equation~\ref{eq:main-upper-bound-on-q}, we get: 
\begin{equation}
\label{eq:gain-highlighted} 
q \leq 4p + 0.5p_1 + 2.5a_1 + 3 a_2 - \sum_{f \in \fset} gain(f)     
\end{equation}

\subparagraph*{A warm-up: Using the gains to prove a weaker bound:} To recap, after Step 1 and Step 2, we have reduced the analysis to the question of lower bounding $\sum_{f \in \fset} gain(f)$. 
We first illustrate that we could get a weaker (but non-trivial) result compared to our main result by using a generic upper bound on the gains. 
In Step 3, we will show how to substantially improve this bound, achieving the ratio of our main Theorem~\ref{thm:main} which is tight.

\begin{lemma}
For any super-face (except for the outer-face) in $\fset$, we have $gain(f) \geq 1.5$. 
\end{lemma}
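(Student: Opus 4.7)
The goal is $survive(f) \leq \mu(f) - 1.5$. Set $\ell = |E(f)|$, $o = |Occ(f)|$, $r = |Free(f)| = \ell - o$, and let $a$ be the number of type-$0$ chords drawn inside $f$ in the triangulation realizing $s = survive(f)$. I would split on $\ell$.

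For $\ell = 3$ the argument is essentially direct: the only candidate triangular face of $G$ with all three vertices in $S$ drawn inside $f$ is the triangle $f$ itself. If some $o \geq 1$ boundary edge of $f$ supports a cross triangle, its landing vertex sits in the interior of $f$, so $f$ is not a face of $G$ and $s = 0$; then $gain(f) = \mu(f) = 3 - 0.5o \geq 1.5$ for $o \in \{1,2,3\}$, with equality at $o = 3$. If $o = 0$, $f$ itself is a survive triangle and $gain(f) = 2$. The tight case of the lemma already appears here.

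For $\ell \geq 4$ I would establish the stronger bound $gain(f) \geq 2$ via Euler's formula on the planar subgraph $G_s$ consisting of the $\ell$ boundary vertices of $f$, the $\ell$ boundary edges, and the $a$ chord edges. Then $G_s$ has $1 + a$ interior faces, partitioned into $s$ survive triangles and $n' = 1 + a - s$ non-survive faces (each containing at least one landing vertex in its interior). The key claim is $L_{F'} \geq o_{F'} + 1$ for every non-survive face $F'$, where $L_{F'}$ is its boundary length and $o_{F'}$ is the number of occupied edges along $\partial F'$. The reason is that in a simple graph the occupied edge supporting a cross triangle cannot be paralleled by a chord on the interior side, so unless $\partial F'$ is made entirely of occupied edges, at least one extra non-occupied edge must appear on $\partial F'$. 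The only way $\partial F'$ could be all occupied is $r = a = 0$ (all boundary edges occupied and no chord drawn), which is a degenerate configuration handled separately: there $s = 0$ and $\mu(f) = 0.5\ell \geq 2$.

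Summing $L_{F'} \geq o_{F'} + 1$ over non-survive faces, using $\sum_{F'} o_{F'} = o$ (each occupied edge bounds exactly one interior non-survive face in $G_s$) together with the identity $\sum_{F'} L_{F'} = \ell + 2a - 3s$ obtained from the standard edge-face double count $\sum_{\text{interior}} L = 2|E(G_s)| - \ell$, yields $2s \leq r + a - 1$. Combining with the polygon-triangulation bound $a \leq \ell - 3$ and $\ell = o + r$ gives $gain(f) \geq 2$. The main obstacle is justifying $L_{F'} \geq o_{F'} + 1$ cleanly and correctly carving out the lone degenerate case where $\partial F'$ is entirely occupied; the remaining steps are Euler-formula bookkeeping together with the classical maximum on non-crossing chords in a polygon.
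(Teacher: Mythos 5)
Your proof is correct in substance and reaches the same case split as the paper (length-$3$ faces, where the tight value $1.5$ occurs, versus longer faces, where $gain(f)\geq 2$), but the mechanism for the main case is genuinely different. The paper first \emph{decouples} shared landing components, then triangulates $f$ with type-$0$ edges to get $|E(f)|-2$ candidate triangles, and counts killed triangles by observing that each occupied edge kills its adjacent triangle and at most two occupied edges can bound the same triangle, giving $survive(f)\leq |Free(f)|+\lfloor|Occ(f)|/2\rfloor-2$ (Lemmas~\ref{lem:sf-trivial-longface} and~\ref{lem:sf-trivial-shortface}). You instead run an Euler-formula discharging argument on the chord-subdivided polygon: the inequality $L_{F'}\geq o_{F'}+1$ per non-surviving face, summed against the edge--face double count, gives $2\,survive(f)\leq |Free(f)|+a-1$, and $a\leq |E(f)|-3$ then yields $survive(f)\leq\mu(f)-2$. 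Your route avoids the decoupling step entirely (you never need a full triangulation, only an upper bound on the number of non-crossing chords), at the cost of a slightly weaker bound for odd $|Occ(f)|$ ($o/2$ instead of $\lfloor o/2\rfloor$), which is still enough for this lemma.

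Two loose ends you should tighten. First, the boundary of a super-face of $H[S]$ is a closed walk that can repeat vertices (at cut vertices of the cactus), so it is not in general a simple polygon; your Euler counts ($V=\ell$, interior faces $=1+a$) need the standard ``cut open'' normalization that the paper applies explicitly in the proof of Lemma~\ref{lem:survive_light}. Second, your parenthetical that every non-surviving face contains a landing vertex is false (a quadrilateral face with only free edges is non-surviving yet may contain nothing), but it is never used; and the justification of $L_{F'}\geq o_{F'}+1$ is cleaner if stated as: chords are type-$0$ edges and hence never occupied, and every face of the chord subdivision other than the undivided polygon has at least one chord on its boundary, leaving only the $a=r=0$ case you already handle separately.
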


As the outer (super-)face $f_0$ of $H[S]$ is special, we can achieve a lower bound on the quantity $gain(f_0)$ that depends on $\phi(S)$. This is captured by the following lemma.

\begin{lemma}
\label{lem:outer-face} 
For the outer-face $f_0$, we have that $gain(f) \geq \phi(S) - 1$. 
\end{lemma}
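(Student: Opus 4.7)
The plan is to analyze $gain(f_0) = \mu(f_0) - survive(f_0)$ by classifying the edges on the outer boundary $\partial f_S$ of $G[S]$ and tracking how each contributes to the outer super-face $f_0$ of the skeleton $H[S]$.

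First, I would partition $\partial f_S$ into three disjoint classes: $E_a$, the non-type-$0$ edges that support a cross triangle drawn on the outer side of $f_S$ (these are \emph{not} counted by $\phi(S)$); $E_b$, the non-type-$0$ edges on $\partial f_S$ that do not support an outer-side cross triangle; and $E_c$, the type-$0$ edges on $\partial f_S$. Since type-$0$ edges support no cross triangles at all, $\phi(S) = |E_b| + |E_c|$.

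Next I would derive a lower bound on $\mu(f_0)$. Every edge in $E_a \cup E_b$ survives the deletion of type-$0$ edges and lies on $\partial f_0$, contributing $\tfrac{1}{2}$ (if in $E_a$, occupied on its $f_0$-side) or $1$ (if in $E_b$, free on its $f_0$-side). The remaining edges of $\partial f_0$ form a set $E_M$ of non-type-$0$ internal edges of $G[S]$ whose other side has been absorbed into $f_0$; each of these contributes at least $\tfrac{1}{2}$. Hence
\[
\mu(f_0) \;\geq\; \frac{|E_a|}{2} + |E_b| + \frac{|E_M|}{2},
\]
and clearly $survive(f_0) \leq T(f_0)$, where $T(f_0)$ counts the non-cactus triangular faces of $G[S]$ absorbed into $f_0$.

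The key planar estimate I would then establish is
\[
|E_a| + |E_M| \;\geq\; 2\,T(f_0) + 2|E_c| - 2,
\]
through an Euler-formula computation on the absorbed region (the union of $f_S$ and the absorbed inner faces). Summing face-degrees over the absorbed faces (each triangular one contributing exactly $3$) and matching the sum against the boundary contributions from $E_c$, $E_a$, $E_M$, together with the interior type-$0$ edges (each shared by two absorbed faces) produces the inequality. Combining the three estimates yields
\[
gain(f_0) \;\geq\; \frac{|E_a|}{2} + |E_b| + \frac{|E_M|}{2} - T(f_0) \;\geq\; |E_b| + |E_c| - 1 \;=\; \phi(S) - 1.
\]

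The main obstacle will be handling bridge edges in $H[S]$: non-type-$0$ edges whose two adjacent $G[S]$-faces are both absorbed into $f_0$. Such edges contribute to $\partial f_0$ from both sides and demand a more careful double-counting than the naive face-degree argument; the Euler computation must be adapted so that these bridges do not distort either the face-degree sum or the $T(f_0)$ accounting. The $-2$ constant in the planar estimate (and the resulting $-1$ slack in the lemma) is tight in degenerate configurations such as when $G[S]$ itself is a single triangle, so it cannot be eliminated.
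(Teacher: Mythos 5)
Your reduction of the lemma to the planar estimate $|E_a| + |E_M| \geq 2T(f_0) + 2|E_c| - 2$ is where the argument breaks: that inequality is false. The face-degree sum over the absorbed faces is indeed at least $3T(f_0)$, but it is paid for predominantly by the \emph{interior} type-$0$ edges of the absorbed region (each contributing $2$), and you have no upper bound on their number --- there can be up to roughly $1.5$ of them per absorbed triangle. Concretely, take a pocket of $f_S$ attached to the outer region through a single type-$0$ edge of $E_c$ and fan-triangulated by type-$0$ chords into $k$ surviving triangles: then $|E_M| \leq k+1$, $|E_a|=0$, $|E_c|=1$, $T(f_0)\geq k$, and your estimate would require $k+1 \geq 2k$, which fails for $k \geq 2$. (An explicit small instance: two cactus triangles sharing a vertex, with the outer pockets triangulated by three type-$0$ chords, has $|E_a|+|E_M| = 5$ against a required $10$; the lemma itself still holds there, with equality, but your chain only certifies $gain(f_0) \geq -0.5$.) The deeper reason the accounting cannot work is that it simultaneously takes the pessimistic value $\tfrac{1}{2}$ for every edge of $E_M$ in $\mu(f_0)$ and the full count $T(f_0)$ for $survive(f_0)$, discarding the trade-off (an occupied boundary edge kills an adjacent triangle) on which the paper's generic bound (Lemma~\ref{lem:sf-trivial-longface}) rests. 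A secondary issue: an edge of your class $E_b$ can still be occupied in $f_0$ when its supported cross triangle is drawn inside an absorbed inner pocket rather than on the outer side, so even the lower bound $\mu(f_0) \geq |E_a|/2 + |E_b| + |E_M|/2$ is not safe as stated.

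The paper avoids all of this by going in the opposite direction: instead of charging the existing absorbed triangles to boundary edges, it \emph{adds} structure. After decoupling shared landing components, it triangulates $f_S$ with a star of type-$0$ edges rooted at a degree-$2$ vertex of the outer-planar graph $G[V(f_S)]$, shows that this creates at least $\phi(S)-2$ new surviving triangles while leaving $|Occ(f_0)|$ and $|Free(f_0)|$ unchanged (Lemma~\ref{lem:gs_triangulation}), and then applies the already-established bound $survive(\widetilde{f}_0) \leq \mu(\widetilde{f}_0) - 2$ to the augmented graph; the cases $\phi(S)\leq 3$ and $|E(f_0)|=3$ are dispatched by direct enumeration. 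If you want to salvage a direct counting argument, you would have to build the killed-triangle trade-off for occupied edges into the face-degree computation and obtain an upper bound on $T(f_0)$ in terms of the boundary length of the absorbed region --- which is essentially what the triangulation device accomplishes in one step.
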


\begin{equation}
\label{eq:trivial-gains} 
\sum_{f \in \fset} gain(f) \geq \phi(S) -1 +1.5 (|\fset| - 1) = \phi(S) + 1.5|\fset| -0.5    
\end{equation}
 
The following lemma upper bounds the number of skeleton faces (i.e. super-faces of the skeleton.)  

\begin{lemma}
\label{lem:skeleton-faces} 
$|\fset| = a_1+ a_2 +1 \leq 2p - 2$.
\end{lemma}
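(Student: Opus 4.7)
The equality $|\fset| = a_1 + a_2 + 1$ follows from Euler's formula applied to the planar connected skeleton $H[S]$. A connected triangular cactus with $p$ triangles has $|V(H)| = 2p + 1$ vertices (one triangle contributes $3$, and each additional triangle shares at least one vertex with the existing cactus and contributes $2$ new ones), while $|E(H)| = 3p + a_1 + a_2$ (cactus plus type-$1$ and type-$2$ edges). Euler's formula then yields $|F(H)| = p + a_1 + a_2 + 1$, and excluding the $p$ cactus triangular faces (each of which is a face of $G$ by assumption) gives $|\fset| = a_1 + a_2 + 1$.

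For the inequality $a_1 + a_2 + 1 \leq 2p - 2$, planarity alone ($|E(H)| \leq 3|V(H)| - 6$) yields only the weaker bound $a_1 + a_2 \leq 3p - 3$, so the heavy structure from Proposition~\ref{prop:structure-heavy} must be exploited. My plan is to view the non-cactus edges as non-crossing chords drawn in the outer face of the cactus; this is valid because each cactus triangular face is a face of $G$ and therefore contains no other vertices or edges of $G$, forcing all type-$1$ and type-$2$ edges into the outer face. Each such chord $e = xy$ corresponds to a path in the block-cut tree of $\cset$, and Proposition~\ref{prop:structure-heavy} requires $e \in B^t_{e_t}$ for every cactus triangle $t$ on this path, imposing a consistent global orientation of the heavy edges across all spanned triangles.

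To obtain the quantitative bound $a_1 + a_2 \leq 2p - 3$, I would combine this compatibility restriction with planarity. Concretely, in extremal ``fan'' configurations (where $p$ triangles share a single cut vertex with all heavy edges aligned), the compatible chords reduce to edges among $p$ ``end'' vertices arranged cyclically, admitting at most $2p - 3$ non-crossing edges (matching a triangulation of $p$ convex points). For general heavy cacti, I would perform a combinatorial analysis over the block-cut tree of $\cset$, associating each chord with a pair of heavy-edge entry points on its path and bounding how many such pairs can be realized simultaneously as a planar non-crossing arrangement. The resulting bound then follows by summing contributions over the tree.

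The main obstacle is converting the qualitative heavy-edge compatibility constraints into the tight quantitative bound uniformly across all heavy cacti. A natural inductive attempt via leaf-triangle removal proves insufficient: in fan-like configurations the number of non-cactus edges incident to a leaf triangle's leaf vertices can grow with $p$ (rather than being bounded by $2$), so a purely local reduction cannot match the global bound. The proof must therefore be global, jointly tracking heavy-edge propagation across the entire block-cut tree together with planarity of the chord arrangement in the outer face.
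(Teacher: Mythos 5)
Your derivation of the equality $|\fset| = a_1 + a_2 + 1$ is fine and matches the paper's bookkeeping. The problem is the inequality $a_1 + a_2 \leq 2p-3$: you correctly observe that raw planarity of $H[S]$ only gives $a_1 + a_2 \leq 3p-3$ and that the heavy structure must be used, but you never actually close the gap. What you offer is a plan (``I would combine\ldots'', ``I would perform a combinatorial analysis over the block-cut tree\ldots'') together with an explicit admission that the quantitative step is unresolved (``The main obstacle is converting the qualitative heavy-edge compatibility constraints into the tight quantitative bound uniformly across all heavy cacti''). The extremal fan example and the observation that leaf-triangle induction fails are reasonable sanity checks, but they do not constitute a proof, and the proposed global analysis of ``heavy-edge entry points on paths in the block-cut tree'' is exactly the part that is missing. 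As written, the lemma is not established.

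The paper avoids all of this with a short contraction argument that you should be able to reconstruct from Proposition~\ref{prop:structure-heavy}: for each heavy triangle $t$ there is an edge $uw \in E(t)$ with $B^t_{uw} = \emptyset$ (in fact two of the three edges have empty $B$-sets). Contract that edge $uw$ in $H$ to a single vertex, which turns the triangle into a pair of parallel edges to the third vertex $v$; delete one copy. Emptiness of $B^t_{uw}$ guarantees no other multi-edges arise, and contraction preserves planarity and simplicity. Doing this for every triangle collapses the cactus to a tree-like structure on $p+1$ vertices with exactly $p$ contracted-triangle edges, while all $a_1 + a_2$ non-cactus edges survive. Euler's formula on this simple planar graph gives $p + a_1 + a_2 \leq 3(p+1) - 6$, i.e.\ $a_1 + a_2 \leq 2p - 3$. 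This is where Proposition~\ref{prop:structure-heavy} enters quantitatively --- not as a compatibility constraint on chord orientations, but as the guarantee that one edge per triangle can be contracted without creating parallel edges.
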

\begin{proof} 
Proposition~\ref{prop:structure-heavy} allows us to modify the graph $H$ into another simple planar graph $\widetilde{H}$ such that the claimed upper bound on $|\fset|$ will follow simply from Euler's formula.

\begin{figure}[H]
    \centering
    \includegraphics[width=0.7\textwidth]{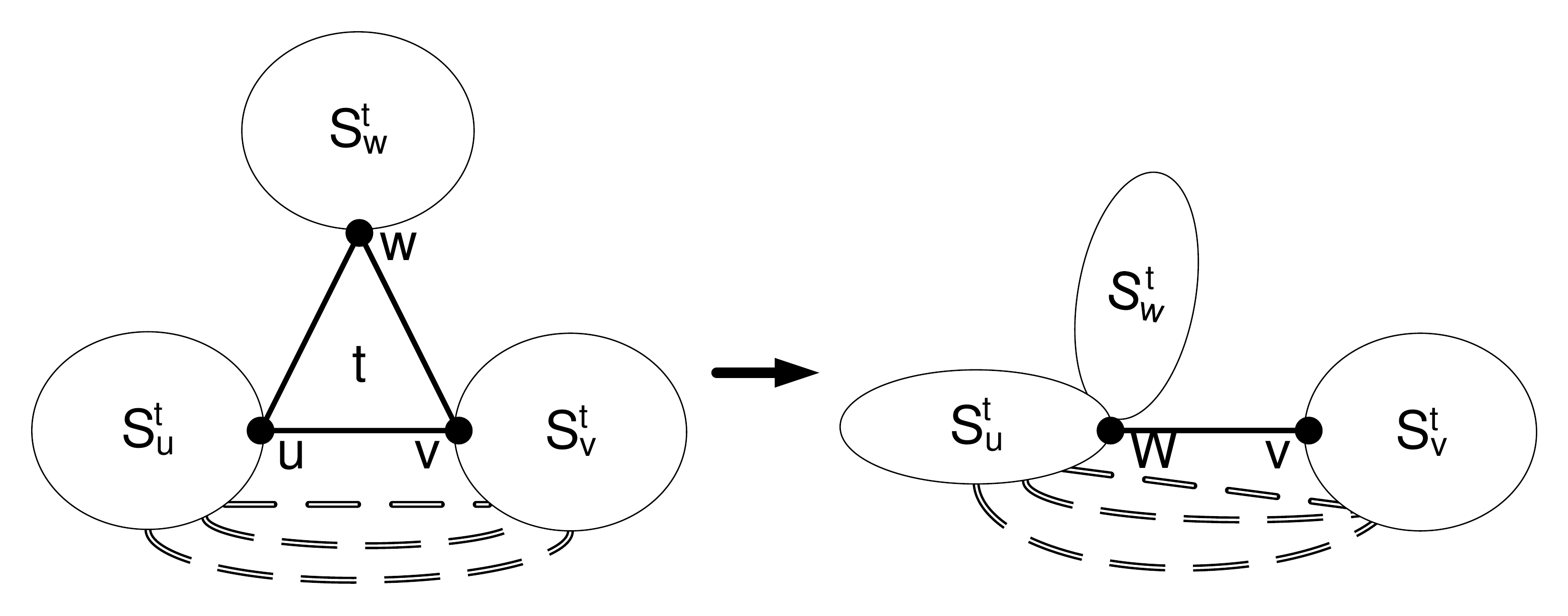}
    \caption{An example of the contraction transformation.}
    \label{fig:eta_2p}
\end{figure}

Let $t$ be a cactus triangle where $V(t) = \{u, v, w\}$ and $uw \in E(t)$ be such that the edge set $B^t_{uw}$ is empty, as guaranteed in Proposition~\ref{prop:structure-heavy}. 
For every cactus triangle $t$ we contract the edge $uw$ into one new vertex $W$. Note that this operation creates two parallel edges with endpoints $W$ and $v$ in the resulting graph. To avoid multi-edges in the resulting graph $\widetilde{H}$ we remove one of them (see Figure~\ref{fig:eta_2p} for an illustration of this operation).
Since $B_{uw}^t$ is empty this operation cannot create any other multi-edges in $\widetilde{H}$. In addition the contraction of an edge maintains planarity, hence after each such transformation the graph remains simple and planar.
As a result of applying the above operation to all cactus triangles, the graph $\widetilde{H}$ has $p+1$ vertices and $p$ edges corresponding to the contracted triangles. By Euler's formula the number of edges in $\widetilde{H}$ is at most $3(p+1) - 6 = 3p - 3$, which implies that $a_1 + a_2 \leq 2p - 3$, and as $|\fset|=a_1+a_2+1$ we get that $|\fset| \leq 2p - 2$.
\end{proof}

Combining the trivial gains (i.e. Inequality~\ref{eq:trivial-gains}) with Inequality~\ref{eq:gain-highlighted}, we get 
\[q \leq (4p+0.5p_1 + 2.5a_1+3a_2) - (\phi(S) + 1.5 (a_1+a_2+1) - 2.5) = 4p +0.5p_1 +a_1 +1.5 a_2 - \phi(S) +1\] 
Now, using Lemma~\ref{lem:skeleton-faces} and the trivial bound that $p_1 \leq p$, we get $q(S) \leq 4.5p + 1.5 (a_1+ a_2) - \phi(S) +1 \leq 7.5 p(S) - \phi(S)$, therefore implying a factor $7.5$ upper bound.

\subsection*{Step 3: Upper Bounding Gains via Super-Face Classification}

In this final step, we show another crucial idea that allows us to reach a factor $6$. 
Intuitively, the most difficult part of lower bounding the total gain is the fact that the value of $gain(f)$ is different for each type of super-face, and one cannot expect a strong ``universal'' upper bound that holds for all of them. For instance, Figure~\ref{fig:tight-1.5-face} shows a super-face with $gain(f) = 1.5$, so strictly speaking, we cannot improve the generic bound of $1.5$.  

\begin{figure}[H]
    \centering
    \includegraphics[width=0.2\textwidth]{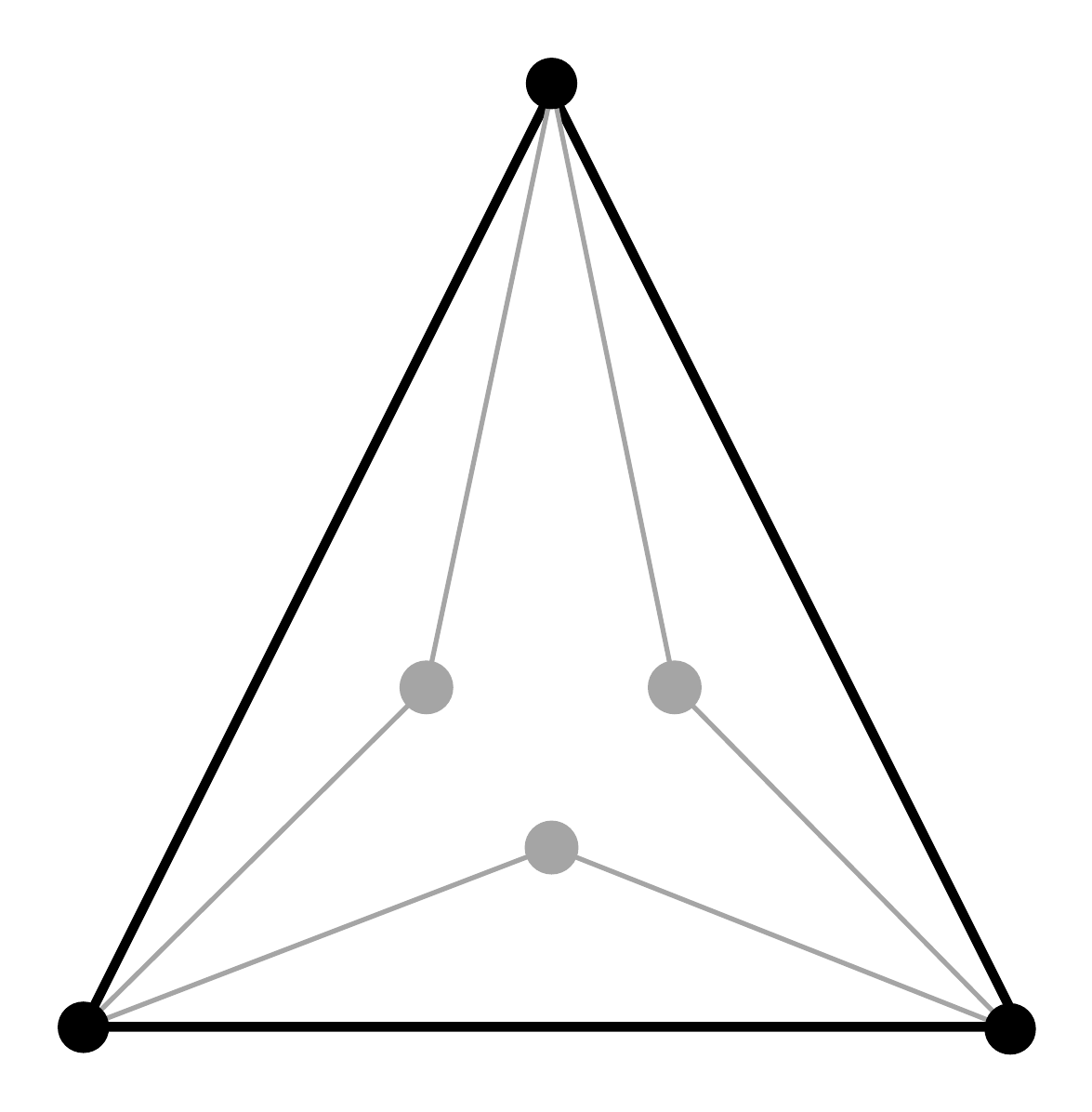}
    \caption{A super-face $f\in \fset$ having $gain(f) = 1.5$; $\mu(f)=1.5$ and $survive(f)=0$.}
    \label{fig:tight-1.5-face}
\end{figure}

This is where we introduce our final ingredient, that we call {\em classification scheme}. 
Roughly, we would like to ``classify'' the super-faces in $\fset$ into several types, each of which has the same gain. 
Analyzing super-faces with similar gains together allows us to achieve a better result.

\subparagraph*{Super-face classification scheme:} We are interested in coming up with a set of rules $\Phi$ that classifies $\fset$ into several types. 
We say that the rule $\Phi$ is a $d$-type classification if the rules classifies $\fset$ into $d$ sets $\fset = \bigcup_{j=1}^d \fset[j]$. Let $\vec{\chi}$ be a vector such that $\vec{\chi}[i] = |\fset[i]|$.
We would like to prove a good lower bound on the gain for each such set.  
We define the gain vector by $\overrightarrow{gain}$ where $\overrightarrow{gain}[i] = \min_{f \in \fset[i]} gain(f)$. 
The total gain can be rewritten as: 
\[\sum_{f \in \fset} gain(f) = \overrightarrow{gain} \cdot \vec{\chi} \] 
Notice that, the total gain value $\overrightarrow{gain} \cdot \vec{\chi}$ would be written in terms of the $\vec{\chi}[j]$ variables, so we would need another ingredient to lower bound this in terms of variables $p$'s and $a$'s. 
Therefore, another component of the classification scheme is a set of {\em valid linear inequalities} $\Psi$ of the form $\sum_{j=1}^d C_j \vec{\chi}[j] \leq \sum_{j \in \{0,1\}} d_j p_j + \sum_{j \in \{1,2\}} d'_j a_j$. 
This set of inequalities will allow us to map the formula in terms of $\vec{\chi}[j]$ into one in terms of only $p$'s and $a$'s. 

A classification scheme is defined as a pair $(\Phi, \Psi)$. We say that such a scheme certifies the proof of factor $\gamma$ if it can be used to derive $q(S) \leq \gamma p(S) - \phi(S)$. 
Given a fixed classification scheme and a gain vector, we can check whether it certifies a factor $\gamma$ by using an LP solver (although in our proof, we would show this derivation.)

Our main result is a scheme that certifies a factor $6$. Since the proof is complicated, we also provide a simpler, more intuitive proof that certifies a factor $7$ first. 

\begin{theorem}
There is a $5$-type classification scheme that gives a factor $7$. 
\end{theorem}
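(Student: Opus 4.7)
The plan is to partition $\fset$ into five classes $\fset[1],\ldots,\fset[5]$ chosen so that per-class lower bounds on $gain(f)$, combined with a handful of counting inequalities tying $\vec{\chi} = (|\fset[1]|,\ldots,|\fset[5]|)$ to the parameters $p, p_1, a_1, a_2$, upgrade the warm-up factor $7.5$ to a clean factor of $7$. The construction is forced by the observation that only one shape of super-face is truly tight at $gain(f) = 1.5$, namely the triangular super-face of Figure~\ref{fig:tight-1.5-face} whose three boundary edges are all occupied. This is placed alone in $\fset[1]$ with $\overrightarrow{gain}[1] = 1.5$; the remaining inner super-faces are sorted into $\fset[2], \fset[3], \fset[4]$ according to boundary length and the number of occupied boundary edges, with the expectation that each of them admits $gain \ge 2$ (and strictly more for the longer non-triangular ones), while the outer super-face is isolated as $\fset[5]$ and inherits $gain(f_0) \ge \phi(S) - 1$ from Lemma~\ref{lem:outer-face}.

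First I would establish the refined per-class gain bounds by a local exchange argument that exploits the $2$-swap optimality of $\cset$. If $survive(f)$ inside a super-face $f \notin \fset[1]$ were large enough to push $gain(f)$ below the class threshold, one could identify a small collection of surviving triangles drawn inside $f$ together with cross triangles supported on its boundary, remove a matching collection of cactus triangles from $\cset$, and produce a strictly larger triangular cactus -- contradicting local optimality, as in Figure~\ref{fig:2-swap}. The only configuration robust against every such exchange is the one sitting in $\fset[1]$, which is precisely why the classification must single it out.

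Next I would record the valid inequalities $\Psi$. The identity $\sum_j |\fset[j]| = |\fset| = a_1 + a_2 + 1$ from Lemma~\ref{lem:skeleton-faces} is immediate, and the key new ingredient is the double-counting bound
\[
3\,|\fset[1]| \;\le\; p_1 + a_1 + 2 a_2,
\]
obtained because each face in $\fset[1]$ accounts for three occupied edge-sides while the total number of occupied edge-sides across $\fset$ is exactly $p_1 + a_1 + 2 a_2$. A few analogous size inequalities for $\fset[2], \fset[3], \fset[4]$ will round out $\Psi$. Feeding these inequalities, the per-class gains, and the skeleton bound $a_1 + a_2 \le 2p - 3$ into Inequality~(\ref{eq:gain-highlighted}), a short linear manipulation should yield $q(S) \le 7 p(S) - \phi(S)$ whenever every cactus triangle in $\cset[S]$ is heavy, and Theorem~\ref{thm:reduction-to-heavy} then lifts the bound to arbitrary $S$, giving $\beta(G) \ge f_3(G)/7$.

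The main obstacle is the detailed case analysis behind the improved gain bounds: super-face boundaries can mix cactus, type-$1$, and type-$2$ edges in many patterns, and Proposition~\ref{prop:structure-heavy} constrains but does not fully determine the local picture, so each non-trivial class requires its own exchange argument to certify the claimed threshold. The classification $\Phi$ must simultaneously be coarse enough that the inequalities in $\Psi$ follow from straightforward double counting, yet fine enough that the associated linear program is feasible at $\gamma = 7$; calibrating these competing requirements and verifying that the trade-offs between $\overrightarrow{gain}$ and $\vec{\chi}$ balance in the worst case is the core of the work.
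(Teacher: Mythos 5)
There is a genuine gap: the proposed scheme, as specified, does not certify factor $7$. Your only quantitative inequality beyond $|\fset|=a_1+a_2+1\le 2p-2$ is $3|\fset[1]|\le p_1+a_1+2a_2$, and your gain vector is $1.5$ on $\fset[1]$ and $2$ on every other inner class. Feeding exactly these into Inequality~(\ref{eq:gain-highlighted}) gives
\[
q \;\le\; 4p+0.5p_1+0.5a_1+a_2+1+0.5|\fset[1]|-\phi(S)\;\le\; 4p+\tfrac{2}{3}p_1+\tfrac{2}{3}a_1+\tfrac{4}{3}a_2+1-\phi(S)\;\le\;\tfrac{22}{3}p-\phi(S)-3,
\]
i.e.\ roughly $7.33p$, not $7p$; and the paper's tight example for factor $7$ (Appendix~\ref{subsec:tight-7}, with $a_1=0$, $a_2=2p-3$, $p_1=p$) already realizes about $7.25p$ under your accounting. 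The missing idea is the improved bound $gain(f)\ge 2.5$ for super-faces bounded by exactly one ``chargeable'' edge and no free side of a type-$1$ edge (the paper's type-$[1,0]$: $p_1^{base}(f)+a_2(f)+a_1(f)=1$, $a_1^{free}(f)=0$, so $|Occ(f)|=1$ and $|E(f)|>3$, whence $survive(f)\le|Free(f)|+\lfloor|Occ(f)|/2\rfloor-2=\mu(f)-2.5$). That extra $0.5$ on these faces, combined with a charging inequality that bounds how many faces can carry two or more such edges ($\eta[2,\bdot]+2\eta[\ge 3,\bdot]\le p_1+|\fset|-2$), is precisely what closes the gap from $7.5$ to $7$; isolating only the all-occupied triangle does not.

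Two secondary points. First, your class $\fset[1]$ is correctly identified: Lemmas~\ref{lem:sf-trivial-longface} and~\ref{lem:sf-trivial-shortface} show the all-occupied triangular super-face is the unique configuration where $gain(f)=1.5$ can be tight, and every other inner face has $gain(f)\ge 2$; the paper nevertheless lumps it into type-$[\ge 3,\bdot]$ because the charging inequality, not a finer gain there, is what controls its multiplicity. Second, the per-class gain bounds for factor $7$ are not exchange arguments against $2$-swap optimality: they are pure counting bounds obtained by triangulating each super-face and noting that occupied edges kill adjacent triangles (the paper explicitly remarks that $1$-swap optimality suffices for factor $7$; the $2$-swap exchange machinery only enters in the factor-$6$ analysis via the friend lemma).
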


We remark that the analysis of factor $7$ only requires a cactus that is locally optimal for $1$-swap. 

\begin{theorem}
There is a $13$-type classification scheme that gives a factor $6$. 
\end{theorem}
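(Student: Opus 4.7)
The plan is to sharpen the analysis of the previous subsection, where the generic bound $gain(f) \geq 1.5$ for every non-outer super-face only gave factor $7$. To push the ratio down to $6$, equation~(\ref{eq:gain-highlighted}) demands proving
\[\sum_{f \in \fset} gain(f) \geq \phi(S) - 2p + 0.5 p_1 + 2.5 a_1 + 3 a_2,\]
which by Lemma~\ref{lem:skeleton-faces} cannot possibly follow from any uniform per-face gain lower bound — Figure~\ref{fig:tight-1.5-face} witnesses super-faces whose gain is exactly $1.5$. The idea is therefore to distinguish super-faces whose boundaries admit strictly stronger exchange arguments from the few that are genuinely tight, and to charge the tight ones against $p_1$, $a_1$, $a_2$ via a carefully engineered family of valid inequalities $\Psi$.

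First, I would partition $\fset$ into $13$ types $\fset[1], \dots, \fset[13]$ using three axes: (a) the length and shape of the boundary (short $3$- or $4$-gons versus longer super-faces), (b) the multiset of edge labels on the boundary (cactus, type-$1$, type-$2$) together with their occupied/free split inherited from $Occ(f)$ and $Free(f)$, and (c) for super-faces bounded by cactus edges, whether the incident cactus triangle is type-$0$ or type-$1$ and from which side of it the nonempty set $B^t_{uv}$ sits, as guaranteed by Proposition~\ref{prop:structure-heavy}. For each type $\fset[j]$ I would compute $\mu(f)$ exactly from the boundary data and then prove a sharp gain lower bound $\overrightarrow{gain}[j] = \min_{f \in \fset[j]} gain(f)$ by a $2$-swap exchange: assuming for contradiction that $survive(f)$ exceeds $\mu(f) - \overrightarrow{gain}[j]$, I would exhibit one or two cactus triangles incident to $f$ that can be removed and two to three candidate triangles (cross triangles from the associated $B^t_{uv}$ sets together with surviving triangles drawn inside $f$) that can be added, while certifying that the resulting graph remains a cactus, contradicting the $2$-swap optimality of $\cset$.

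Next, I would assemble $\Psi$ by combining three sources of valid constraints on $\vec{\chi}$: the global skeleton face count $\sum_{j} \vec{\chi}[j] = a_1 + a_2 + 1$ of Lemma~\ref{lem:skeleton-faces}; the double-counting identities for cactus, type-$1$, and type-$2$ edges over the super-faces they bound, refined by the occupancy pattern forced by Proposition~\ref{prop:structure-heavy}; and charging lemmas bounding, for each cactus triangle $t$ or each supporting edge, how many super-faces of the hardest types can simultaneously touch $t$. Taking the inner product $\overrightarrow{gain} \cdot \vec{\chi}$ and eliminating the $\vec{\chi}[j]$ variables by a non-negative combination of the inequalities in $\Psi$ should then yield the target lower bound on $\sum_{f \in \fset} gain(f)$; feeding this back into Inequality~(\ref{eq:gain-highlighted}) gives $q(S) \leq 6 p(S) - \phi(S)$, with the outer-face contribution of $\phi(S)$ handled exactly as in Lemma~\ref{lem:outer-face}.

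The hard part is the reason the classification grows from $5$ to $13$ types: the coarse partition used for factor $7$ conflates structurally different super-faces whose tight gain values coincide but whose exchange gadgets and associated valid inequalities do not. Pushing every gain strictly above $1.5$ uniformly is impossible, so the work is to identify precisely which short, cactus-heavy boundary configurations (for instance, $3$- and $4$-gons bounded by cactus edges of type-$1$ triangles paired with occupied type-$1$ or type-$2$ edges in a specific pattern) force the tight value, and to design for each of them an individual $2$-swap gadget together with an individual inequality in $\Psi$ that together close the LP at exactly factor $6$. Verifying that each proposed $2$-swap actually preserves the cactus property — in particular, that no two added triangles share an edge and that no vertex is forced to lie on two distinct cycles — uniformly across all $13$ types is where the bulk of the case analysis, and the main technical obstacle, lies.
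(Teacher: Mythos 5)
Your proposal reproduces the framework that the paper already lays out in its overview (classify super-faces, prove a per-type gain lower bound $\overrightarrow{gain}[j]$, assemble valid inequalities $\Psi$ on $\vec{\chi}$, and close the LP), and your target inequality $\sum_{f}gain(f)\ge \phi(S)-2p+0.5p_1+2.5a_1+3a_2$ is the right one. But the actual content of the theorem is precisely the part you defer to ``the hard part'': which sub-classes admit improved gains, by what mechanism, and which valid inequality absorbs them. The paper's answer is not the one you sketch. You propose to prove the improved gains directly: assume $survive(f)$ is too large and exhibit a $2$-swap. That step would not go through as stated, because a single super-face with many surviving triangles does not by itself produce a legal swap --- the surviving triangles inside $f$ share edges with one another and with the cactus, and certifying that a replacement remains a cactus requires global information about landing components, not just the boundary data of $f$.

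What the paper does instead is introduce two combinatorial objects you do not have. First, \emph{friends}: pairs of strongly adjacent cactus triangles $t_1,t_2$ whose free vertices are joined by an edge $w_1w_2$ in $G$. The $2$-swap optimality is spent entirely on the Friend Lemma (no type-$1$ heavy triangle has a friend; friendly type-$0$ pairs have their base sides on a common super-face), via explicit swaps that remove $t_1,t_2$ and add the empty triangle $(v,w_1,w_2)$ plus two cross triangles with distinct landing components. Second, \emph{cactus fences}: the Fence Lemmas show that the critical faces (types $[1,0,0]$, $[1,1,0]$, $[2,0,0]$ with no friendly pair) are bounded by a fence plus one or two special edges; then any triangulation of such a polygon must contain a friends edge, which is absent from $G$ by assumption, so the triangulation overcounts $survive(f)$ by at least two, yielding gains of $4.5$, $4$, $3$ respectively. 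The classification therefore splits on $p_0^{base}(f)$ and on friendliness --- not on boundary length or on which side of a triangle $B^t_{uv}$ sits --- and the decisive new valid inequality, $\eta_{fri}+\eta[1,0,\ge 1]+\eta[1,1,\ge 1]+\eta[2,0,\ge 1]\le p_0$, charges the faces that do \emph{not} get improved gains against the type-$0$ triangles. None of the friends/fence machinery, nor this $p_0$-charging inequality, appears in your plan, so the proposal as written does not close the gap between factor $7$ and factor $6$.
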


\subparagraph*{Intuition:} The classification scheme would intuitively set the rules to separate the super-faces that would benefit from local search's exchange argument from those that would not. 
Therefore, for the good cases, we would obtain a much better gain, e.g., in one of our classification type, $gain(f)$ is as high as $4.5$. 
In the bad cases that there is no such benefit, we would still use the lower bound of $1.5$ that holds in general for any super-face.

\section{Reduction to Heavy Cacti (Proof of Theorem~\ref{thm:reduction-to-heavy})} 
\label{sec:reduction-to-heavy}

Let $t$ be a light triangle.  Assume that the bound $q(S) \leq \gamma p(S) - \phi(S)$ holds for all $S$ where {$G[S]$} contains only heavy triangles. 
Our goal is to prove that it holds for all $S$.
We will prove this by induction on the number of light triangles {$G[S]$} contains. 
The base case (when all triangles are heavy) follows from the precondition {and the trivial base case when $|S|=1$ is clearly true}.
Now assume that there is a light triangle $t$ in the {in graph $G[S]$}. 
Our plan is to apply the induction hypothesis on the {subgraphs $\{G[S^t_v]\}_{v \in V(t)}$ since each $G[S^t_v]$ contains less light triangles than $G[S]$}. 

Since we will be dealing with light triangle $t$, the following proposition (proof in Appendix~\ref{subsec:proof-prop-light}) gives some important structural properties of such a triangle: 

\begin{proposition}[Structure of light triangles]
\label{prop:structure-light}
Let $t$ be a light triangle in $\cset[S]$. The following statements hold: 

\begin{itemize}
    \item If $t$ is a light type-$0$ triangle and $uv \in E(t)$, such that $B^t_{ww'} = \emptyset$ for all $ww' \in E(t) \setminus \{uv\}$, then the total number of cross triangles supported by edges in $B^t_{uv}$ is at most two. 
    
    \item If $t$ is a light type-$1$ triangle and the edge $uv \in E(t)$ supports the cross triangle supported by $t$ and $B^t_{ww'} = \emptyset$ for all $ww' \in E(t) \setminus \{uv\}$, then the total number of cross triangles supported by edges in $B^t_{uv}$ is at most one. 
    
    \item If $t$ is a light triangle where edges in $\bigcup_{uv \in E(t)} B_{uv}^t${$\cup E(t)$} support {either two or} three cross triangles {such that at least two different set of edges $\{uv\} \cup B_{uv}^t$ for $uv \in E[t]$ supports a cross triangle each, then each set of edges $\{uv\} \cup B_{uv}^t$ supports at most one cross triangle and all the supported cross triangles have the same landing component.}
\end{itemize}
\end{proposition}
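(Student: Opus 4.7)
The plan is to prove each of the three claims by contradiction, leveraging the definition of a heavy triangle for parts 1 and 2, and the $1$-swap/$2$-swap local optimality of $\cset$ for part 3.

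First I would dispatch parts 1 and 2, which are purely definitional. In part 1, suppose for contradiction that $B^t_{uv}$ supports at least three cross triangles. Since $t$ is type-$0$, no edge of $E(t)$ supports a cross triangle, and by hypothesis $B^t_{ww'} = \emptyset$ for $ww' \in E(t) \setminus \{uv\}$. Hence $\{uv\} \cup B^t_{uv}$ concentrates at least three cross triangles while the other two sides each support zero, which is exactly the second condition in the definition of a heavy triangle, contradicting lightness. Part 2 is analogous: the one cross triangle from $uv$ combined with at least two from $B^t_{uv}$ yields three on one side and zero on the other two sides, again heavy.

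For part 3, I would invoke local optimality. The precondition gives two or three cross triangles distributed among at least two sides. To establish the landing-component claim (the second conclusion), suppose there exist two cross triangles $c_1, c_2$ supported by edges of two different sides of $t$ whose landing components differ. Removing $t$ from $\cset$ breaks the relevant connected piece into the three split cacti $\cset^t_u, \cset^t_v, \cset^t_w$. I would then check by a careful cycle-count that reinserting $c_1$ followed by $c_2$ produces exactly one new cycle per triangle (namely the triangles themselves): because $c_1$ and $c_2$ lie on different sides one endpoint of $c_2$ belongs to a split component not yet merged by $c_1$, and because their landing components differ $c_2$'s outside vertex also lies in a component untouched by $c_1$. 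Hence the modified subgraph is a valid cactus with one more triangle than $\cset$, contradicting $1$-swap optimality.

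To prove the ``at most one per side'' claim, suppose some side $\{uv\} \cup B^t_{uv}$ supports two cross triangles $c_1, c_2$ while another side supports $c_3$. By the landing-component conclusion just established, all three cross triangles land in the same external component. The $1$-swap that removes $t$ and adds $\{c_i, c_3\}$ for some $i \in \{1,2\}$ would succeed if the landing components differed; since they do not, I would fall back to a $2$-swap pairing $t$ with a neighboring cactus triangle and using the abundance of available cross triangles, in the spirit of Figure~\ref{fig:2-swap}. In either case a strictly larger cactus is produced, contradicting local optimality.

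The hardest step is the cycle bookkeeping that underlies every swap. Verifying that the modified graph remains a cactus reduces to checking that each inserted triangle introduces only itself as a new cycle, which fails precisely when two of the triangle's vertices already lie in the same component after the previous insertions. The proof therefore hinges on tracking which split components and external landing components have been merged at each stage; the $2$-swap fallback in the last case is the most delicate, since a common landing component forces careful selection of the neighboring cactus triangle to ensure the cycle-sharing obstruction is avoided.
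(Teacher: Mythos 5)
Your treatment of the first two bullets is correct and identical to the paper's: both are immediate from the definition of heavy triangles. Your argument for the landing-component half of the third bullet is also essentially the paper's (it is property (i) of Lemma~\ref{lem:green-triangles-supporting}: removing $t$ and inserting two cross triangles that sit on different sides and have distinct landing components yields a larger cactus, contradicting local optimality).

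The genuine gap is in the ``at most one cross triangle per side'' half of the third bullet. You correctly observe that the direct $1$-swap fails because the common landing component forces a second cycle, but your fallback --- ``a $2$-swap pairing $t$ with a neighboring cactus triangle'' --- is not an argument: you do not say which triangle $t^*$ to remove, which three triangles to add, or why a suitable neighboring cactus triangle even exists ($t$ could be the only triangle in its cactus component). The paper does not use a swap here at all. Instead it argues by planarity: if the side $\{uv\}\cup B_{uv}^t$ supported two cross triangles, then by properties (ii) and (iii) of Lemma~\ref{lem:green-triangles-supporting} one of them is supported by $uv$ and the other by the unique (type-$1$) edge $e'\in B_{uv}^t$; the cycle formed by $e'$ together with a cactus path between its endpoints separates the plane so that one of these cross triangles lies inside it while the split component $S_w^t$ (and hence the cross triangle on the other side) lies outside. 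Observation~\ref{Observation:different-components-green-blue-cycle} then forbids them from sharing a landing component, contradicting property (i), which requires that they do. In other words, the paper turns the very fact that blocked your swap --- the forced common landing component --- into the contradiction, rather than searching for a more elaborate swap. As written, your proof of this half does not go through.
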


We will also need the following observation. 

\begin{Observation}
\label{Observation:different-components-green-blue-cycle}
Any circuit $C$ in $G$, which comprises of only cactus, type-$0$, type-$1$ and type-$2$ edges and cactus vertices, divides the plane into several regions (two if $C$ is a cycle) such that any cross triangle which is drawn in one of the regions cannot share its landing component with any other cross triangle drawn in some different region.
\end{Observation}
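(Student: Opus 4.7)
The plan is to argue by contradiction using the Jordan curve theorem applied to the planar embedding of $C$. Suppose two cross triangles $t_1, t_2$, drawn respectively in distinct regions $R_1, R_2$ of the plane minus $C$, share the same landing component $S_j$ (with $S_j \neq S$). Let $v_1 \in V(t_1) \cap S_j$ and $v_2 \in V(t_2) \cap S_j$ be their landing vertices. The key setup observation is that $v_1, v_2$ lie strictly inside $R_1, R_2$ respectively: by definition each landing vertex lies outside $S$, whereas by hypothesis every vertex of $C$ lies in $S$, so $v_i \notin V(C)$, and since the triangle $t_i$ is drawn inside $R_i$ its landing vertex must be there too.

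Next I would produce a path $P$ in $\cset[S_j]$ connecting $v_1$ to $v_2$. Such a path exists because $\cset[S_j]$ is a connected cactus, being a single component of the global cactus solution $\cset$. Since the components of $\cset$ induce a vertex partition of $V(G)$, every vertex of $P$ lies in $S_j$, hence $V(P) \cap S = \emptyset$, and in particular $V(P) \cap V(C) = \emptyset$. Because $G$ is a plane graph, its edges meet only at shared endpoints, so the planar realization of $P$ does not intersect the image of $C$ at any point.

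Finally I would invoke the Jordan curve theorem. The image of $C$ in the plane is a closed $1$-complex whose complement is the disjoint union of the regions $R_1, R_2, \dots$, so any continuous arc from a point of $R_1$ to a point of $R_2$ must meet it. The planar realization of $P$ is such a continuous arc (from $v_1 \in R_1$ to $v_2 \in R_2$) yet, by the previous paragraph, avoids $C$ entirely — a contradiction.

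The only subtlety, and the main obstacle if any, is that $C$ may be a general circuit rather than a simple cycle, so strictly speaking one must apply Jordan separation to a closed $1$-complex rather than a single Jordan curve. I would handle this by decomposing $C$ into edge-disjoint simple closed subcurves and noting that if $v_1$ and $v_2$ lie in different connected components of the plane minus the image of $C$, then at least one of these simple cycles must already separate them, reducing matters to the classical Jordan curve theorem.
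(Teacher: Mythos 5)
Your proof is correct. The paper states this as an Observation with no proof at all, and your argument is exactly the implicit one the authors rely on (they run the same reasoning — a path inside the landing component, disjoint from $S$ and hence from the separating circuit — explicitly in the proofs of Lemmas~\ref{lem:type-2-edge-support} and~\ref{lem:green-triangles-supporting}). One minor simplification: since the ``regions'' are by definition the connected components of the complement of the drawing of $C$, the fact that any arc from $R_1$ to $R_2$ must meet $C$ is immediate from the definition of connected components, so the decomposition into simple closed subcurves and the appeal to the Jordan curve theorem in your last paragraph are not actually needed.
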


\paragraph{Free and occupied edges:} We call the edges in the outer-face $f_S$ of $G[S]$ that contribute to $\phi(S)$ {\em free} and every other edge in $f_S$ that is not free is called {\em occupied}. Let ${o(S)}$ be the total number of occupied edges. It follows that ${\phi(S)} = \ell(S) - {o(S)}$.

\subsection{Inductive proof} 
Now we proceed with the proof. 
Consider a cactus triangle $t \in G[S]$ with $V(t)=\{u,v,w\}$ which is light.
To upper bound $q(S)$, we break it further into two distinct terms $q'+q''$:

The term $q'$ counts all triangles with all the three vertices in the same split component and the cross triangles supported by edges or triangles in $G[S_x^t]$ for some $x \in \{u,v,w\}$.
As each split component of $t$ is also a cactus subgraph, by induction we have for $G[S_x^t]$ for all $x \in \{u,v,w\}$: $q(S_x^t) \leq \gamma p(S_x^t) - {\phi(S_x^t)}$.
and as $q'$ is equal to the sum over $q(S_x^t)$ for all $x \in \{u,v,w\}$ we get 
\[q' \leq \gamma (p-1) - ({\phi(S_u^t) + \phi(S_v^t) + \phi(S_w^t)})  =  \gamma p - ({\phi(S_u^t) + \phi(S_v^t) + \phi(S_w^t)}) - \gamma\]  

The term $q'' $ counts all remaining triangles in $q(S)$, \ie the triangles whose vertices belong to at least two different split components of $t$.
We will proceed to show that
\[q'' \leq 6 + {\phi(S_u^t) + \phi(S_v^t) + \phi(S_w^t) - \phi(S)}\] 
hence, upper bounding $q'+q''$ by the desired quantity for any $\gamma \geq 6$.     

To this end, we upper bound the contributions to $q''$ from two separate terms: The first term, $q''_1$, is the number of cross triangles supported by the edges in $B_{uv}^t \cup B_{uw}^t \cup B_{vw}^t$ plus the cross triangles supported by $t$ {plus one for $t$ itself}, and (ii) The second term, $q''_2$, is the number of ``surviving'' triangular faces in $G[S] \setminus (\bigcup_{x \in V(t)} G[S^t_x])$ without any cross triangle drawn inside it.

Note that by definition of light triangles, there are at most three cross triangles supported by the edges in $B_{uv}^t \cup B_{uv}^t \cup B_{vw}^t$ and $t$ itself. 
Now we consider two cases, based on the value of $q''_1$. 

\begin{itemize}
    \item(At most two supported cross triangles): In this case $q''_1 \leq 3$, \ie $t$ itself and the supported cross triangles. Hence if we can show that $q''_2 \leq 3 + {\phi(S_u^t) + \phi(S_v^t) + \phi(S_w^t) - \phi(S)}$, then we are done.
    
    \item(Exactly three supported cross triangles): Similarly in this case $q''_1 = 4$, \ie $t$ itself and the supported cross triangles. Hence showing that $q''_2 \leq 2 + {\phi(S_u^t) + \phi(S_v^t) + \phi(S_w^t) - \phi(S)}$ gives us the entire reduction. 
    
\end{itemize}

In particular, the following lemma (which we spend the rest of this section proving) will complete the proof of Theorem~\ref{thm:reduction-to-heavy}.

\begin{lemma}
\label{lem:survive_light}
 For any light triangle $t$, the number of surviving triangles $q''_2$ is at most $3 + \phi(S_u^t) + \phi(S_v^t) + \phi(S_w^t) - \phi(S)$. Moreover, if there are three cross triangles supported by the edges in $B_{uv}^t \cup B_{uw}^t \cup B_{vw}^t$ and $t$ itself, then $q''_2$ is at most $2 + \phi(S_u^t) + \phi(S_v^t) + \phi(S_w^t) - \phi(S)$.
\end{lemma}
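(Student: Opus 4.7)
The plan is to bound $q''_2$ by charging each surviving triangle to a boundary edge that is free on the outer face of some split component but no longer free on the outer face of $G[S]$, with a bounded number of ``free'' triangles allowed per pocket created by $t$. Fix the drawing. A triangle counted in $q''_2$ is a triangular face of $G[S]$ with vertices in at least two distinct $S_x^t$, not equal to $t$, and not a cross triangle supported on $E(t) \cup B^t_{uv} \cup B^t_{uw} \cup B^t_{vw}$. Each of its three edges is therefore either an edge of $t$, an edge of some $B^t_{xy}$, or a type-$0$ edge that crosses between split components. By Observation~\ref{Observation:different-components-green-blue-cycle}, any cycle built from such edges and cactus vertices partitions the plane in a way that keeps cross triangles of distinct regions landing in distinct components; this will be used repeatedly to pin down the topology.

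First I would relate $\phi(S_u^t)+\phi(S_v^t)+\phi(S_w^t)-\phi(S)$ to boundary transitions. When the three split components are glued back together at $t$, each edge that was free on some $\partial G[S_x^t]$ but is no longer on $\partial G[S]$ (i.e.\ it has become interior to the reassembled drawing) contributes $+1$ to the difference, while each new free edge of $\partial G[S]$ contributes $-1$. The three edges of $t$ itself may appear on $\partial G[S]$ (contributing to $\phi(S)$) or lie interior; a small constant deviation from this counting (absorbed into the $3$ or $2$) will come precisely from the $t$-edges and from free-edge status changes around the three ``corner'' vertices $u,v,w$.

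Next I would apply Proposition~\ref{prop:structure-light} to reduce to the three listed cases for a light $t$. In each case, only one ``side'' of $t$ (the side of the unique $B^t_{uv}$ with nonempty activity, or, in the third case, the unique region containing the shared landing component) hosts cross triangles; on the other sides the $B^t_{\cdot\cdot}$ sets are empty. I would then describe the interaction region as the union of the (at most three) pockets bounded by an edge of $t$ and a portion of the outer walks of two split components. For each pocket, I would show: (i)~every surviving triangular face drawn in the pocket uses at least one edge of the split-component outer walk that is not on $\partial G[S]$ (hence contributes to the $\phi$-budget), and (ii)~at most one surviving triangle per pocket can escape paying this charge (the ``base'' triangle incident to the edge of $t$ bounding the pocket), giving the additive $3$. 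In the stronger case when $q''_1=4$, the extra cross triangle saturates one pocket so that only two ``free'' surviving triangles remain, yielding the $2$ constant.

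The main obstacle I expect is the third case of Proposition~\ref{prop:structure-light}: with three supported cross triangles distributed over two or three edges of $t$ but sharing a landing component, planarity forces the landing component to sit in a specific pocket, and the corridor of surviving triangles snakes around it. Making the charge-to-boundary-edge argument go through there requires showing that the shared landing component cannot ``shortcut'' the corridor to create untracked surviving faces, which is where Observation~\ref{Observation:different-components-green-blue-cycle} and a careful swap argument (using the local optimality of $\cset$ to forbid certain configurations of surviving triangles adjacent to occupied boundary edges) will be used. Once these configurations are ruled out, the per-pocket charging gives the stated bound directly.
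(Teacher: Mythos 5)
Your overall strategy---cutting the interaction zone of $t$ into at most three regions, bounding the surviving faces in each region by a boundary count, translating boundary edges into the $\phi$-budget, and handling the three-cross-triangle case by locating the shared landing component---is the same skeleton the paper uses. However, step (i) of your per-pocket argument has a genuine gap. A surviving triangular face drawn inside a pocket need \emph{not} use any edge of a split component's outer walk: all three of its edges can be type-$0$ chords joining vertices of different split components (for instance a triangle with one vertex in each of $S_u^t$, $S_v^t$, $S_w^t$, or a ``central'' triangle of a long pocket whose boundary alternates between $\Gamma_u$ and $\Gamma_v$). Such faces cannot be charged to any $\phi$-contributing edge, and there can be several of them in one pocket, so ``at most one escapee per pocket'' fails. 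The paper avoids charging face-by-face altogether: after decoupling shared landing components it triangulates each region $R_i$, which yields at most $\ell(R_i)-2$ triangular faces no matter how many are bounded only by chords, and then subtracts one face per occupied boundary edge, giving the bound $\ell(R_i)-o(R_i)-2$ of Claim~\ref{claim:st-each-region}.

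A second issue is the additive constant. In the paper the $+3$ does not come from one free triangle per pocket; it comes from the at most three edges of $t$ appearing on region boundaries, while the $-2$ per region exactly absorbs the single edge of $\Gamma\setminus(E(t)\cup\Gamma_u\cup\Gamma_v\cup\Gamma_w)$ that each region contains together with a $+k$ correction in the length bookkeeping. That each region contains \emph{exactly one} such edge of $\Gamma$ is itself a nontrivial structural statement (Claim~\ref{claim:regions-to-triangulate}), and your pockets, described as each ``bounded by an edge of $t$'', match only the $k=3$ picture---for $k=1$ a single region contains all three edges of $t$ but still only one edge of $\Gamma$. With your accounting of one escapee per pocket the constant does not obviously come out to $3$. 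Finally, for the improvement to $2$ you would still need the explicit dichotomy the paper uses: either the common landing component lies outside $\Gamma$, in which case the supporting edges are occupied edges of $\Gamma$ belonging to no $\Gamma_x$ and enter the $\ell/o$ bookkeeping through the term $o^t_{across}(S)$, or it lies inside some $R_i$, in which case the face of the triangulation containing it has length at least four and one further face is lost.
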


\subsection{Proof of Lemma~\ref{lem:survive_light}} 

To facilitate the counting arguments that we will use, we will be working with an auxiliary graph $\widetilde{G}$ instead of $G[S]$.
Let $\Gamma_x$ be the cycle (in particular, the set of edges on the cycle) bounding the outer-face of $G[S_x^t]$ for $x\in \{u,v,w\}$ and let $\Gamma$ be the cycle bounding the outer-face of $G[S]$ (so $\Gamma$ contains exactly all the outer-edges). Because $\cset[S]$ is a connected triangular cactus, there cannot be any repeated edge in these faces, hence $\Gamma$, $\Gamma_i$'s are circuits; the vertices can occur multiple times in $\Gamma_x$. Now we {\em cut open} each of the circuits $\Gamma$,$\Gamma_x$, for each $x \in \{u, v, w\}$ to convert them to simple cycles. 
The idea is to make copies (equal to the number of times it appears in the corresponding circuit) of each vertex contained in the circuit and joining the edges incident to the original vertex to one of the copies, such that the structure of the drawing is preserved. We also make sure that there exists a triangular face corresponding to $t$ containing some copy of each of the vertex in $\{u, v, w\}$. 
After cut opened, $\Gamma_x$, for each $x \in \{u, v, w\}$ will be empty cycle in $\widetilde{G}$. 
Notice that the values of $\phi$ as well as the types of edges on these cut-opened cycles are preserved. 



Note that the surviving triangles that contribute to $q''_2$ correspond exactly to the triangles drawn in the regions of $G$ exterior of $\Gamma_x$ for all $x \in \{u,v,w\}$ but in the interior of $\Gamma$. Also, $t$ is drawn inside of $\Gamma$.
In order to bound $q''_2$ we construct an auxiliary graph $\widetilde{G}$ as follows. 
For each $x \in \{u,v,w\}$, we remove all edges and vertices drawn in the interior of cycle $\Gamma_i$ from $G[S]$. The resulting graph after such a removal is our $\widetilde{G}$, such that $V(\widetilde{G}) = V(\Gamma) \cup V(\Gamma_u \cup \Gamma_v \cup \Gamma_w) = V(\Gamma_u \cup \Gamma_v \cup \Gamma_w)$.
Any triangle that contribute to the term $q''_2$ also exist as triangular faces in $\widetilde{G}$, so we only need to upper bound $f_3(\widetilde{G})$. 

\begin{claim}
If $E(\Gamma) \setminus (E(t) \cup E(\Gamma_u \cup \Gamma_v \cup \Gamma_w)) = \emptyset$, then the bound for $q''$ holds. 
\end{claim}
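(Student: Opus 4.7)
The plan is to leverage the structural hypothesis to count directly. Under the assumption, every edge of $\Gamma$ lies either in $E(t)$ or on one of the inner cycles $\Gamma_u,\Gamma_v,\Gamma_w$, so walking along $\Gamma$ one encounters maximal arcs of the three inner cycles joined by at most three edges of $t$. Let $\tau \in \{0,1,2,3\}$ denote the number of edges of $t$ that appear on $\Gamma$, and let $\tau_{\mathrm{occ}}$ denote the number of those that support a cross triangle drawn on the outer side.

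The first step would be to account for $\phi$. Each free edge of $\Gamma$ is either a free edge on some $\Gamma_x$ (contributing one unit to $\phi(S_x^t)$) or one of the $\tau - \tau_{\mathrm{occ}}$ edges of $t$ that lie on $\Gamma$ and support no cross triangle on the outer side. Since cutting open the $\Gamma_x$'s preserves the notion of free/occupied, this gives
\[
\phi(S_u^t) + \phi(S_v^t) + \phi(S_w^t) - \phi(S) \;\geq\; \tau_{\mathrm{occ}} - \tau.
\]

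The second step would be to bound $q''_2$ structurally. Every triangle contributing to $q''_2$ spans at least two split components of $t$, hence uses at least one edge from $E(t) \cup B^t_{uv} \cup B^t_{uw} \cup B^t_{vw}$. Because $V(\widetilde{G}) \subseteq V(\Gamma_u \cup \Gamma_v \cup \Gamma_w)$, the region of $\widetilde{G}$ between $\Gamma$ and the inner cycles contains no interior vertex, so by a planar Euler-type count the triangular faces of $\widetilde{G}$ in that region are controlled by the three edges of $t$ together with the inter-component edges in $\bigcup_{xy} B^t_{xy}$. Proposition~\ref{prop:structure-light} caps how many of these edges support cross triangles, and each occupied $t$-edge on $\Gamma$ kills its adjacent surviving face on the outer side. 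Combining these observations with the $\phi$-accounting above will deliver $q''_2 \leq 3 + \phi(S_u^t) + \phi(S_v^t) + \phi(S_w^t) - \phi(S)$.

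The sharper bound when there are exactly three cross triangles supported by $E(t) \cup \bigcup_{xy} B^t_{xy}$ is the main obstacle, and this is where I expect the most delicate case analysis. Here I would invoke the refined part of Proposition~\ref{prop:structure-light}: if the three supports are distributed across two or more of the sets $\{xy\} \cup B^t_{xy}$, they all share a common landing component, and Observation~\ref{Observation:different-components-green-blue-cycle} forces their supports to lie together on one side of $\Gamma$ in $\widetilde{G}$. This coincidence rules out one of the potential surviving triangles that the generic counting above would otherwise admit, giving the desired saving of one and yielding $q''_2 \leq 2 + \phi(S_u^t) + \phi(S_v^t) + \phi(S_w^t) - \phi(S)$. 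The case where all three supports cluster on a single $\{xy\} \cup B^t_{xy}$ needs to be handled separately by inspecting the faces adjacent to the two inner cycles $\Gamma_x,\Gamma_y$ bounding that side.
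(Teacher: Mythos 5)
There is a genuine gap. The whole point of the hypothesis $E(\Gamma) \setminus (E(t) \cup E(\Gamma_u \cup \Gamma_v \cup \Gamma_w)) = \emptyset$ is that it forces $q''_2 = 0$: every region of $\widetilde{G}'$ lying inside $\Gamma$ and outside all the $\Gamma_x$ (other than the face $t$) must contain an edge of $\Gamma$ from exactly that set (this is Claim~\ref{claim:regions-to-triangulate}), so when the set is empty there are no such regions at all --- there is literally nowhere inside $\Gamma$ for a surviving triangle, or even for an edge of $B^t_{uv} \cup B^t_{uw} \cup B^t_{vw}$, to be drawn. Your Step~2 does not observe this; it gestures at ``a planar Euler-type count'' of the triangular faces in a region that is in fact empty, and never produces an actual bound. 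Once $q''_2 = 0$ is in hand, the claim reduces entirely to the $\phi$ bookkeeping, and your Step~1 inequality $\sum_x \phi(S_x^t) - \phi(S) \geq \tau_{\mathrm{occ}} - \tau \geq -3$ does settle the general case (it is essentially the paper's observation that the only edges of $\Gamma$ outside the $\Gamma_x$ are the at most three edges of $t$).

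Your treatment of the three-cross-triangle refinement then aims at the wrong quantity: you try to ``rule out one of the potential surviving triangles,'' but there are no surviving triangles to rule out. The extra unit must come from the $\phi$ term, i.e.\ from showing $\tau_{\mathrm{occ}} \geq \tau - 2$. The paper gets this (with room to spare) by noting that under the emptiness hypothesis no edge of any $B^t_{xy}$ can exist or support a cross triangle, so all three cross triangles must be supported by $E(t)$ itself; hence $t$ is type-$3$, every edge of $t$ on $\Gamma$ is occupied, and $\phi(S) \leq \phi(S_u^t) + \phi(S_v^t) + \phi(S_w^t)$. Your proposed route through Proposition~\ref{prop:structure-light} and the common-landing-component observation does not supply this step, so as written the sharper bound $q''_2 \leq 2 + \sum_x \phi(S_x^t) - \phi(S)$ is not established.
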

\begin{proof}
If the set is empty, then $q''_2=0$ and ${\phi(S) \leq \phi(S_u^t) + \phi(S_v^t) + \phi(S_w^t)} + 3$ in general. In the three cross triangles case, having no such edge implies that $t$ is a type-$3$ triangle, {because all the three cross triangles has to be supported by $E(t)$} and hence ${\phi(S) = \phi(S_u^t) + \phi(S_v^t) + \phi(S_w^t)}$.
\end{proof}

Now we continue with the case where there exists at least on edge {in $E(\Gamma) \setminus (E(t) \cup E(\Gamma_u \cup \Gamma_v \cup \Gamma_w))$}.
Clearly, $\widetilde{G}$ is a subgraph of $G[S]$ and any surviving triangle in $G$ must be drawn in a region of $\widetilde{G}$.
In order to bound the number of surviving triangles corresponding to $q''_2$, we will first identify these regions and then make a region-wise analysis to get the full bound. 
For this purpose, we remove any non-cactus edge from $\widetilde{G}$ {that is drawn in the interior of $\Gamma$ and does not belong to one of $\Gamma_u,\Gamma_v$ or $\Gamma_w$} to form another auxiliary graph $\widetilde{G}'$. The faces in the graph $\widetilde{G}'$ which are drawn inside the cycle $\Gamma$ and outside every cycle $\Gamma_x$ (except the triangular face $t$), will correspond to the regions in $\widetilde{G}$ which we would analyze later. First we prove the following claim which quantifies the structure of these regions (see Fig.~\ref{fig:region-R} which illustrates all possible structures for these regions).

\begin{figure}
    \centering
    \includegraphics[width=\textwidth]{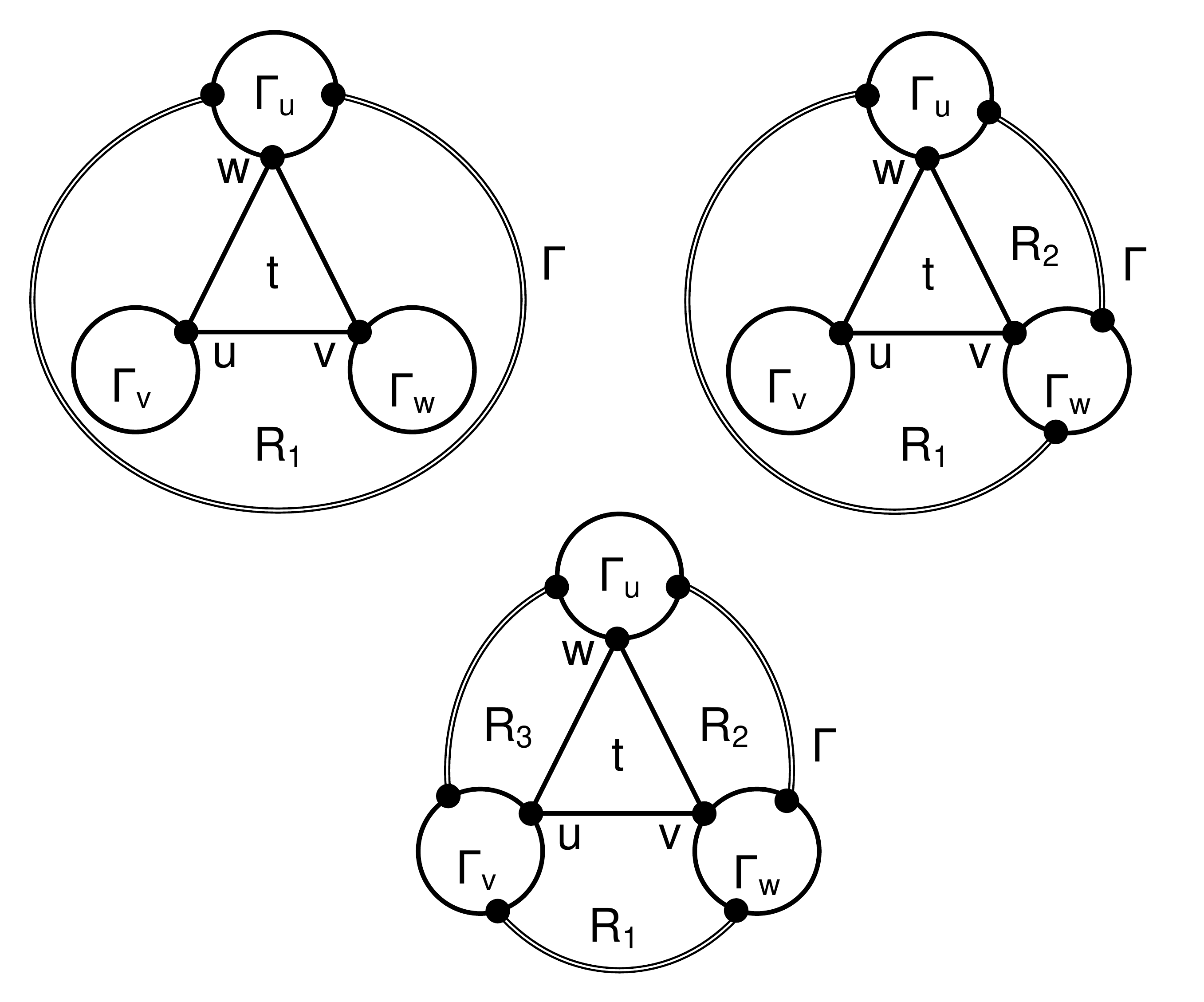}
    \caption{The three structures corresponding to each $k \in \{1, 2, 3\}$ for the faces $R_1,\ldots R_k$ of $\widetilde{G}'$. These corresponds to the regions of $\widetilde{G}$ which we analyze separately to get our bound on $q''_2$.}
    \label{fig:region-R}
\end{figure}

\begin{claim} \label{claim:regions-to-triangulate}
If $R_1,\ldots,R_k$ (except the triangular face $t$) are the faces in $\widetilde{G}'$ which are drawn inside $\Gamma$ and outside every cycle $\Gamma_x$ for each $x \in \{u, v, w\}$, then $1 \leq k \leq 3$. Moreover, every such face contains exactly one edge of $\Gamma$.
\end{claim}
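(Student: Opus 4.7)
The statement is topological, and I would prove it using a contraction argument combined with a planar face analysis.

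\emph{Structure of $\widetilde{G}'$ and contraction.} I first note that within the outer cycle $\Gamma$, the graph $\widetilde{G}'$ consists only of the triangle $t$ and the three cycles $\Gamma_u, \Gamma_v, \Gamma_w$ (each attached to $t$ at exactly one vertex), since the constructions of $\widetilde{G}$ and $\widetilde{G}'$ delete every other edge drawn inside $\Gamma$. I would then contract each $\Gamma_x$ to its attachment vertex $x \in V(t)$, yielding a planar multigraph $H^{\ast}$ on $\{u,v,w\}$ whose edges are the three edges of $t$ together with one multi-edge per $e \in E^{\ast} := E(\Gamma) \setminus (E(t) \cup E(\Gamma_u \cup \Gamma_v \cup \Gamma_w))$. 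Under this contraction, the internal faces of $H^{\ast}$ other than the face of $t$ correspond bijectively to the regions $R_1, \ldots, R_k$.

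\emph{Bounding $|E^{\ast}|$ and $k$.} Since every vertex of $\Gamma$ lies in $V(\Gamma_u \cup \Gamma_v \cup \Gamma_w)$, the cycle $\Gamma$ visits only bubble vertices; moreover, each bubble $\Gamma_x$ contributes at most one contiguous arc to $\Gamma$ because the sub-cactus $G[S^t_x]$ is drawn as a single connected region inside $\Gamma_x$. Hence the cyclic sequence of bubble visits on $\Gamma$ has length at most three, and the at-most-three ``switches'' between consecutive bubble arcs are each either an edge of $t$ or of $E^{\ast}$. This gives $|E^{\ast}| \leq 3$. By Euler's formula applied to $H^{\ast}$, each $E^{\ast}$-edge creates one ``lens''-shaped face in $H^{\ast}$ together with a parallel $t$-edge, yielding $k = |E^{\ast}| \leq 3$. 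The lower bound $k \geq 1$ is immediate from the case assumption $|E^{\ast}| \geq 1$.

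\emph{Unique $\Gamma$-edge per $R_i$ and main obstacle.} For the ``exactly one edge of $\Gamma$'' part, each lens face of $H^{\ast}$ lifts back to a region $R_i$ of $\widetilde{G}'$ whose boundary consists of one edge of $t$, one $E^{\ast}$-edge (the unique $\Gamma$-edge on the boundary of $R_i$), and the inner arcs of the two incident bubbles; the inner arcs are interior to $\Gamma$ and hence do not contribute further $\Gamma$-edges. The main technical hurdle will be the careful case analysis handling configurations where some bubble $\Gamma_x$ is drawn inside $t$ (so that the ``triangular face $t$'' is replaced by a non-triangular inner face bounded by $t$ and $\Gamma_x$); here one must verify that the contraction still preserves the face-to-region correspondence and that each resulting $R_i$ still touches $\Gamma$ via exactly one edge.
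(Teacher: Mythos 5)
Your contraction-plus-Euler architecture is a genuinely different route from the paper's proof, which instead fixes an arbitrary edge $e\in E(\Gamma)\setminus(E(t)\cup E(\Gamma_u\cup\Gamma_v\cup\Gamma_w))$ and, via a two-case circuit construction, exhibits the unique face containing $e$ and shows that this face carries at least one edge of $E(t)$ (so there are at most three such faces). Carried out in full, your route could work, and the Euler/incidence count for $H^{\ast}$ does deliver $k=|E^{\ast}|$ and the ``exactly one $\Gamma$-edge per $R_i$'' statement cleanly. However, as written there is a genuine gap: the whole difficulty of the claim is concentrated in your assertion that each $\Gamma_x$ contributes at most one contiguous arc (really: at most one visit, counting single-vertex visits) to $\Gamma$, and the justification offered --- ``because $G[S^t_x]$ is drawn as a single connected region'' --- does not prove it. Connectedness of a region does not by itself prevent it from meeting the outer boundary in two arcs with other material nested in the pockets between them. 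Ruling this out needs precisely the kind of Jordan-curve separation argument the paper runs: if $\Gamma_u$ met $\Gamma$ in two arcs, the disk bounded by $\Gamma$ minus the region of $\Gamma_u$ would split into two parts; the connected set $t\cup\Gamma_v\cup\Gamma_w$, attached to $\Gamma_u$ only at $u$, lies entirely in one part; and the gap edges of $\Gamma$ bounding the other part then have nowhere to land, since every edge of $E(\Gamma)$ that is not inside a bubble joins two distinct bubbles. Without this (or an equivalent) argument, the bound $|E^{\ast}|\le 3$, and hence $k\le 3$, is unsupported.

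A secondary inaccuracy: the internal faces of $H^{\ast}$ are not in general ``lenses'' consisting of one $E^{\ast}$-edge and one parallel edge of $t$. When one or two bubbles are swallowed by the circuit formed by an $E^{\ast}$-edge and the adjoining bubble arcs, the corresponding region $R_i$ is bounded by two or even all three edges of $t$ together with the entire cycles $\Gamma_z$ of the swallowed bubbles --- these are exactly the subcases the paper's proof treats explicitly. You flag a complication of this flavour at the end, but you misdescribe it: no bubble can be drawn ``inside $t$'', since $t$ is a triangular face of $G$ and hence empty; the nesting occurs inside the bounding circuit of $R_i$, outside $t$. Since you explicitly defer this case analysis, the part of the argument that actually establishes the face structure (and with it the uniqueness of the $\Gamma$-edge on each $R_i$) is not present in the proposal.
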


The proof of this claim appears later in this section.

Let $R_1,\ldots,R_k$ (for $1 \leq k \leq 3$) be the regions in $\widetilde{G}$ which are the faces of $\widetilde{G}'$ given by the above claim (see Figure \ref{fig:region-R} for an illustration). We denote by $\ell(R_i)$\footnote{Notice that we slightly abuse the notation $\ell(\cdot)$ here. Before, we use $\ell(S)$ where $S$ is a subset of cactus-vertices, and now we are using $\ell(R)$ where $R$ is a cycle bounding a region.} the overall number of edges and by $o(R_i)$ the number of {occupied} edges in the boundary of $R_i$ (these are the edges belonging to some cycle $\Gamma_x$ for $x \in \{u, v, w\}$.) In the next step, we will upper bound the number of surviving triangles that exist in $G$ in each such region $R_i$.

\begin{Observation}
\label{obs:faces-in-each-region}
Any face in the graph $\widetilde{G}$ which is drawn inside one of the regions $R_i$ contains vertices from at least two cycles $\Gamma_x, \Gamma_y$ for $x, y \in \{u, v, w\}$ and $x \neq y$.
\end{Observation}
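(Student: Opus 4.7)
The plan is to prove the observation by contradiction. Assume toward contradiction that there is a face $F$ of $\widetilde{G}$ drawn inside some region $R_i$ whose vertex set is entirely contained in $V(\Gamma_x)$ for a single $x \in \{u,v,w\}$; without loss of generality take $x = u$.

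First I would analyze the boundary $\partial F$. By construction, every vertex of $\widetilde{G}$ lies on exactly one of the cycles $\Gamma_u, \Gamma_v, \Gamma_w$, so the assumption $V(F) \subseteq V(\Gamma_u) \subseteq V(S_u^t)$ forces every edge of $\partial F$ to have both endpoints in $V(\Gamma_u)$. Because $F$ lies in $R_i$, which is exterior to every $\Gamma_x$, these edges are either cactus edges on the cycle $\Gamma_u$ itself, or non-cactus (type-$1$ or type-$2$) ``chords'' drawn in the exterior of $\Gamma_u$ and lying in the interior of $R_i$.

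Next I would invoke Claim~\ref{claim:regions-to-triangulate}: the region $R_i$ contains on its boundary exactly one edge $e^*$ of the outer cycle $\Gamma$ which is not shared with any $\Gamma_x$. Inspecting the three possible structural shapes of $R_i$ depicted in Figure~\ref{fig:region-R} (one for each $k \in \{1, 2, 3\}$), one can read off that the two endpoints of $e^*$ always lie on two distinct cycles $\Gamma_{x'}, \Gamma_{y'}$ with $x' \neq y'$. Thus if $e^* \in \partial F$, the endpoints of $e^*$ would be forced to lie simultaneously on the single cycle $\Gamma_u$ and on two distinct cycles, an immediate contradiction.

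The remaining case is that $e^* \notin \partial F$, so $F$ is an internal face of $\widetilde{G}$ inside $R_i$ separated from $e^*$ by one or more chords with both endpoints in $V(\Gamma_u)$ drawn exterior to $\Gamma_u$. Here I would combine Proposition~\ref{prop:structure-light} with Observation~\ref{Observation:different-components-green-blue-cycle}: each such chord is type-$1$ or type-$2$ and hence supports at least one cross triangle whose landing vertex lies in $V(G) \setminus S$, and Observation~\ref{Observation:different-components-green-blue-cycle} forces cross triangles drawn in separate regions bounded by cactus/type-$1$/type-$2$ circuits to have distinct landing components. This yields a local configuration of $t$, the chords bounding $F$, and their supported cross triangles that admits a strict $2$-swap improvement (remove $t$ and possibly one more cactus triangle, and add two cross triangles with distinct landing components, as in Figure~\ref{fig:2-swap}), contradicting the local $2$-swap optimality of $\cset$. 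The main obstacle will be this last step, where one must carefully enumerate how the chords can be arranged relative to $\Gamma_u$ and to $e^*$, and for each arrangement exhibit the explicit swap that witnesses the local improvement while verifying that the resulting subgraph remains a cactus.
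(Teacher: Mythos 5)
The paper states this as an unproved Observation, and the intended justification is a one-line planarity fact that your proposal never isolates: any edge of $G[S]$ with both endpoints in $S_u^t$ is an edge of the induced subgraph $G[S_u^t]$, and since $\Gamma_u$ bounds the \emph{outer face} of $G[S_u^t]$, every such edge is drawn in the closed region enclosed by $\Gamma_u$. The regions $R_i$ lie in the exterior of every $\Gamma_x$, so the ``non-cactus chords with both endpoints in $V(\Gamma_u)$ drawn in the exterior of $\Gamma_u$ and lying in the interior of $R_i$'' that your entire second case is built on simply do not exist. Once you have this fact, the observation follows immediately: every edge of $\widetilde{G}$ drawn in the interior of $R_i$ must join two distinct cycles $\Gamma_x,\Gamma_y$, and a face of $\widetilde{G}$ inside $R_i$ cannot be bounded solely by arcs of a single $\Gamma_u$ (the boundary of $R_i$ also contains edges of $t$, whose endpoints lie on distinct cycles, and the edge $e^*$ of $\Gamma$). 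No local-search argument is needed.

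Beyond being unnecessary, your handling of the second case has two genuine defects. First, it is not carried out — you acknowledge that the enumeration of chord arrangements and the verification of the swaps is the ``main obstacle'' and leave it open. Second, the swap you sketch would not be valid: a cross triangle supported by an edge $ab$ with $a,b\in S_u^t$ cannot be added to the cactus after removing $t$, because $a$ and $b$ remain connected by a cactus path inside $\cset[S_u^t]$, so the added triangle would share the edge $ab$ with the resulting cycle. (The paper's exchange arguments only ever add cross triangles supported by edges going \emph{across} different split components.) Finally, in your first case you assert, by ``inspecting Figure~\ref{fig:region-R}'', that the endpoints of $e^*$ always lie on two distinct cycles; the paper's own proof of Claim~\ref{claim:regions-to-triangulate} explicitly treats the case $b\in\Gamma_x$ with $a\in\Gamma_x$, so this needs an argument (it in fact follows from the same planarity fact above), not a reading of the figure.
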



How many surviving triangles can there be in region $R_i$?
Intuitively, if we triangulate $R_i$ by adding edges in its interior, we would have $\ell(R_i)- 2$ triangular faces. Among these faces, $o(R_i)$ of them would not be surviving since the edge bounding the face is occupied. 
In certain cases, we would get an advantage and the term would become $-3$ instead of $-2$.

\begin{claim} 
\label{claim:st-each-region}
The number of surviving triangles drawn inside $R_i$ in $\widetilde{G}$ are at most $\ell(R_i) - o(R_i) - 2$. 
{Moreover, if the common landing component $L$ for the three cross triangles supported by $B_{uv}^t \cup B_{vw}^t \cup B_{uw}^t \cup E(t)$ is drawn inside $R_i$, then we get the stronger bound of $\ell(R_i) - o(R_i) - 3$.}
\end{claim}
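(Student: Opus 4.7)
The plan is to bound the number of surviving triangular faces inside $R_i$ by combining Euler's formula applied locally to the planar region $R_i$ with the structural constraints obtained from the $2$-swap optimality of $\cset$ and from Observation~\ref{obs:faces-in-each-region}.

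First I would set up the counting. A surviving triangle inside $R_i$ is a triangular face of $G$ all of whose vertices lie in $S$; by Observation~\ref{obs:faces-in-each-region} its three vertices must be drawn from at least two of the cycles $\Gamma_u,\Gamma_v,\Gamma_w$, and in particular all three sit on the boundary of $R_i$. Conceptually augment $R_i$ with type-$0$ chord edges so as to maximize the number of surviving triangular faces; this operation only increases the surviving count, and leaves the cross triangles drawn inside $R_i$ together with their landing vertices unchanged. Letting $s$ and $c$ denote the numbers of surviving and cross triangles in $R_i$ and $v'$ the number of distinct landing vertices interior to $R_i$, Euler's formula applied to the resulting simply-connected triangulated region gives
\begin{equation*}
s + c \;=\; \ell(R_i) + 2v' - 2.
\end{equation*}

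The desired inequality $s \leq \ell(R_i) - o(R_i) - 2$ is thus equivalent to $c - 2v' \geq o(R_i)$. I would derive this through the $2$-swap optimality of $\cset$, using the exchange patterns illustrated in Figures~\ref{fig:1-swap} and~\ref{fig:2-swap}. The key step is to show that every boundary edge of $R_i$ lying on some $\Gamma_x$ supports a cross triangle drawn inside $R_i$: otherwise the triangular face of $G$ sitting on the $R_i$ side of such an edge (whose third vertex must lie on a different $\Gamma_y$ by Observation~\ref{obs:faces-in-each-region}) could be combined with a suitably chosen cross triangle supported by another $\Gamma_x$-edge to produce a $1$- or $2$-swap improvement, contradicting the local optimality of $\cset$. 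This yields $c \geq o(R_i)$. In parallel, Observation~\ref{Observation:different-components-green-blue-cycle} together with the $2$-swap argument forces the cross triangles inside $R_i$ to be bundled into fans about few common landing vertices, producing $2v' \leq c - o(R_i)$; combining the two gives the required $c - 2v' \geq o(R_i)$.

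The main obstacle will be carrying out the swap analysis uniformly across the three shape-classes of $R_i$ enumerated in Claim~\ref{claim:regions-to-triangulate} (the cases $k=1,2,3$ pictured in Figure~\ref{fig:region-R}): in each case the boundary of $R_i$ has a qualitatively different interplay between the $\Gamma_x$-segments, the unique $\Gamma$-edge, and the $E(t)$-edges, so slightly different candidate swaps are needed to close each case. For the strengthened bound $s \leq \ell(R_i) - o(R_i) - 3$ in the subcase when the common landing component $L$ of the three supported cross triangles (guaranteed by the third bullet of Proposition~\ref{prop:structure-light}) is drawn inside $R_i$, I would exploit that fact that all three cross triangles share $L$: their joint presence inside $R_i$ contributes an additional unit to the quantity $c - 2v'$ over the previous bound (the extra supported cross triangle is absorbed without spawning a new landing vertex), tightening the Euler accounting by exactly one and yielding the improved inequality.
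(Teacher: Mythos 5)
Your overall strategy (triangulate $R_i$ and count killed faces) is in the right family as the paper's proof, but two of your three key inequalities are wrong, and the swap arguments you invoke to prove them are not available here.

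First, your ``key step'' --- that \emph{every} boundary edge of $R_i$ lying on some $\Gamma_x$ supports a cross triangle drawn inside $R_i$ --- is false and also not what the claim needs. Free edges of $\Gamma_x$ on $\partial R_i$ are exactly the edges that are permitted to be adjacent to surviving triangles; that is why the bound subtracts only $o(R_i)$ rather than the full number of $\Gamma_x$-edges on $\partial R_i$. The inequality $c \geq o(R_i)$ that you want from this step already holds by the definition of an occupied edge (each occupied edge supports a cross triangle drawn inside $R_i$, and distinct occupied edges support distinct cross triangles), so no exchange argument is needed for it. Second, the bound $2v' \leq c - o(R_i)$ cannot hold in general: for a light triangle $t$, at most three cross triangles are supported by $B^t_{uv}\cup B^t_{vw}\cup B^t_{uw}\cup E(t)$, so $c - o(R_i) \leq 3$ and your inequality would force $v' \leq 1$, i.e.\ essentially a single landing component inside $R_i$. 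But distinct occupied edges of $\Gamma_u,\Gamma_v,\Gamma_w$ on $\partial R_i$ may support cross triangles with pairwise distinct landing components, all drawn inside $R_i$; neither Observation~\ref{Observation:different-components-green-blue-cycle} nor local optimality rules this out (the swaps you sketch, which would add a surviving triangle of $G[S]$ together with a cross triangle, do not in general preserve the cactus property, since the surviving triangle's edges lie inside the split components and can close forbidden cycles). Your Euler identity is also only an inequality: after full triangulation a face may have at most one vertex in $S$ and be neither surviving nor a cross triangle.

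The paper sidesteps all of this by \emph{decoupling}: every cross triangle inside $R_i$ (other than the three special ones) is given its own dummy landing vertex, after which one triangulates the polygon $R_i$ with type-$0$ chords running between distinct $\Gamma_x,\Gamma_y$ (possible by Observation~\ref{obs:faces-in-each-region}). This yields at most $\ell(R_i)-2$ candidate faces, and the face adjacent to each occupied edge contains a cross triangle and is killed, giving $\ell(R_i)-o(R_i)-2$ with no Euler bookkeeping of interior vertices. For the strengthened $-3$ bound, your ``extra unit in $c-2v'$'' gloss misses the actual mechanism: the paper shows the face containing the common landing component $L$ must have length at least four (the three supporting edges run across three different pairs of cycles $\Gamma_x$, and a fourth edge is needed to close that face), so the triangulation loses at least two faces there instead of one.
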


The proof of this claim relies on a standard triangulation trick used in the context of planar graphs.
We defer the proof to later in Section~\ref{sec:st-each-region}. 

Now we are ready to complete the proof for Lemma~\ref{lem:survive_light}.

{Let $\one_S^t \in \{0, 1\}$ be the indicator variable such that $\one_S^t = 1$ if we are in the case when there exists exactly three cross triangles supported by $B_{uv}^t \cup B_{vw}^t \cup B_{uw}^t \cup E(t)$ such that the common landing component $L$ for these triangles is drawn inside some region $R_i$, otherwise $\one_S^t = 0$.} Using the bounds for each region from Claim~\ref{claim:st-each-region} we can upper bound $q''_2$ by summing over the number of surviving triangles in each region.

\begin{align}\label{eq:sum-over-regions}
q''_2 & \leq \sum_{i=1}^k (\ell(R_i) - o(R_i)-2) {- \one_S^t}\nonumber \\
    & \leq \sum_{i=1}^k \ell(R_i) - \sum_{i=1}^k {o(R_i)} - 2k {- \one_S^t}
\end{align}

Next we take a closer look at the $\ell({R_i})$ term in the sum. 
By Claim~\ref{claim:regions-to-triangulate}, each region $R_i$ contains exactly one edge of $\Gamma$, and $R_i \subseteq \Gamma \cup E(t) \cup \left(\bigcup_{x \in V(t)} \Gamma_x \right)$. 
Therefore, we can decompose the length of face $R_i$ into three parts: 

\begin{align*}
\ell(R_i) & = 1+ \sum_{x \in V(t)} |E(R_i) \cap \Gamma_x| + |E(R_i) \cap E(t)| 
\end{align*}

Plugging this into Eq.~(\ref{eq:sum-over-regions}) we get,

\begin{align}\label{eq:sum-over-regions-2}
    q''_2 & \leq \sum_{i=1}^k (1 + \sum_{x \in V(t)} |E(R_i) \cap \Gamma_x| + |E(R_i) \cap E(t)|]) - \sum_{i=1}^k {o({R_i})}-2k {- \one_S^t} \\
    & \leq \sum_{i=1}^k (\sum_{x \in V(t)} |E(R_i) \cap \Gamma_x| + |E(R_i) \cap E(t)|) - \sum_{i=1}^k {o({R_i})} - k {- \one_S^t}
\end{align}

\begin{figure}
    \centering
    \includegraphics[width=0.5\textwidth]{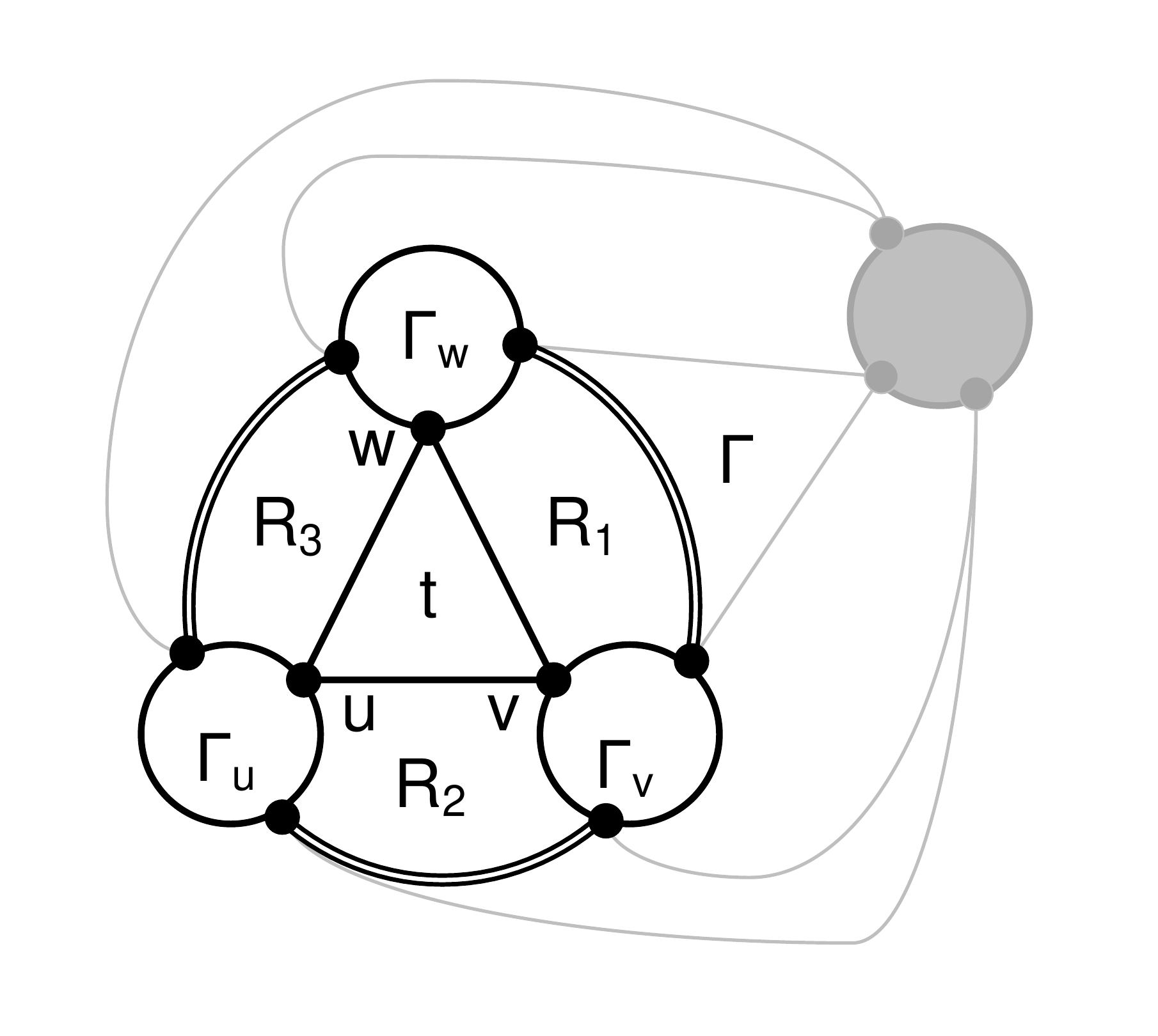}
    \caption{The case when the three supported cross triangles are drawn in the exterior of $\Gamma$. This can only happen when we are in the three regions $R_1, R_2, R_3$ case.}
    \label{fig:three-gray-ext-gamma}
\end{figure}

Note that $t$ can not contribute more than its three edges to the boundaries of all $k$ regions, thus $\sum_{i=1}^k |E(R_i) \cap E(t)| \leq 3$. Using this in Eq~(\ref{eq:sum-over-regions-2}), we get

\begin{align}\label{eq:sum-over-regions-3}
q''_2 & \leq 3 + \sum_{i=1}^k \sum_{x \in V(t)} |E(R_i) \cap \Gamma_x| - \sum_{i=1}^k {o({R_i})} - k {- \one_S^t}
\end{align}

\begin{claim} 
$\sum_{i=1}^k \sum_{x \in V(t)} |E(R_i) \cap \Gamma_x| = \ell(S_{u}^t)+\ell(S_{v}^t)+\ell(S_{w}^t) - \ell(S) + k$ 
\end{claim}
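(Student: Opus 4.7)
The plan is to establish the identity by double counting the contributions of the edges in $\bigcup_{x \in V(t)} \Gamma_x$ to the boundaries of the regions $R_i$, and then relating the remainder $\Gamma \setminus \bigcup_x \Gamma_x$ to $k$ via Claim~\ref{claim:regions-to-triangulate}. First I would note that $\Gamma_u, \Gamma_v, \Gamma_w$ are pairwise edge-disjoint, because the split components $S_u^t, S_v^t, S_w^t$ are pairwise vertex-disjoint; and each $\Gamma_x$ is edge-disjoint from $E(t)$, because the edges of $t$ were removed when forming the split components. Consequently every edge $e \in \Gamma_x$ is adjacent in $\widetilde{G}'$ to the interior face of $\Gamma_x$ on one of its two sides (and this face is not one of the $R_i$); the other side is either the outer face of $\widetilde{G}'$ (exactly when $e \in \Gamma$) or one of the $R_i$ (otherwise, since it cannot be $t$ as $e \notin E(t)$, and cannot be the interior of $\Gamma_{x'}$ for $x' \neq x$ as $e \notin \Gamma_{x'}$). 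Summing over all $x$,
\[\sum_{i=1}^{k} \sum_{x \in V(t)} |E(R_i) \cap \Gamma_x| \;=\; \sum_{x \in V(t)} |\Gamma_x \setminus \Gamma| \;=\; \ell(S_u^t) + \ell(S_v^t) + \ell(S_w^t) - \Bigl| \Gamma \cap \bigcup_{x} \Gamma_x \Bigr|.\]

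To complete the proof, I would show that $|\Gamma \setminus \bigcup_x \Gamma_x| = k$, which rearranges to $|\Gamma \cap \bigcup_x \Gamma_x| = \ell(S) - k$ and plugs into the displayed equation to produce the claimed RHS. The key structural property is that in the current case, where $E(\Gamma) \setminus (E(t) \cup \bigcup_x \Gamma_x) \neq \emptyset$, the cut-open operation leaves the triangle $t$ as a purely internal triangular face of $\widetilde{G}'$, so no edge of $E(t)$ appears on the outer boundary $\Gamma$; consequently $\Gamma \setminus \bigcup_x \Gamma_x = \Gamma \setminus (E(t) \cup \bigcup_x \Gamma_x)$. Invoking Claim~\ref{claim:regions-to-triangulate}, each region $R_i$ has exactly one edge of $\Gamma$ on its boundary, and such an edge must lie in $\Gamma \setminus (E(t) \cup \bigcup_x \Gamma_x)$: edges of $\Gamma \cap \Gamma_x$ border the interior of $\Gamma_x$ (not an $R_i$), and $\Gamma \cap E(t) = \emptyset$ by the previous sentence. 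Conversely, every edge in $\Gamma \setminus (E(t) \cup \bigcup_x \Gamma_x)$ has the outer face on one side and, by elimination, some $R_i$ on the other, so it contributes to exactly one $R_i$. This bijection yields $|\Gamma \setminus (E(t) \cup \bigcup_x \Gamma_x)| = k$, and the identity follows.

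The main obstacle I anticipate is the topological bookkeeping, in particular justifying rigorously that the cut-open operation produces a drawing of $\widetilde{G}'$ in which $t$ is genuinely internal (so $E(t) \cap \Gamma = \emptyset$) and that every edge of $\bigcup_x \Gamma_x$ has the interior face of the appropriate $\Gamma_x$ on one side. Both facts feel natural from the way the cut-open duplicates repeated vertices on each circuit while preserving the drawing structure and the triangular face $t$, but they should be verified by a careful case analysis on how the duplicated copies distribute the incident edges among $E(t)$, $\bigcup_x \Gamma_x$, and the new outer boundary. Once this is in place, the identity reduces to the clean bijection outlined above.
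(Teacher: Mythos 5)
Your proposal follows exactly the same double-counting route as the paper's own (very terse) proof: classify each edge of $\bigcup_{x}\Gamma_x$ by which face lies on its non-$\Gamma_x$ side, and then account for $\Gamma\setminus\bigcup_x\Gamma_x$ via Claim~\ref{claim:regions-to-triangulate}. You execute it more carefully than the paper does, and you correctly isolate the one fact on which everything hinges: that $E(t)\cap\Gamma=\emptyset$, so that $\Gamma\setminus\bigcup_x\Gamma_x$ coincides with $\Gamma\setminus(E(t)\cup\bigcup_x\Gamma_x)$ and hence has size $k$. You flag this step as unverified, and it is a genuine gap: the assertion is false in general. For instance, let $S_v^t=\{v\}$ and $S_w^t=\{w\}$ be singletons, let $S_u^t$ be a large component, and add one edge $av$ with $a\in\Gamma_u$, $a\neq u$, drawn so that $av$, $vw$ and $wu$ all bound the outer face while $uv$ is interior. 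Then $E(\Gamma)\setminus(E(t)\cup\bigcup_x\Gamma_x)=\{av\}\neq\emptyset$ (so we are in the relevant case) and $k=1$, but $|E(t)\cap\Gamma|=2$ and $|\Gamma\setminus\bigcup_x\Gamma_x|=3\neq k$. In general the left-hand side of the claim equals $\ell(S_u^t)+\ell(S_v^t)+\ell(S_w^t)-\ell(S)+k+|E(t)\cap\Gamma|$, so the claim as stated is off by $|E(t)\cap\Gamma|$.

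To be clear, this is not a defect you introduced: the paper's two-sentence proof makes the same implicit assumption ($|\Gamma\setminus\bigcup_x\Gamma_x|=k$, which needs $E(t)\cap\Gamma=\emptyset$), so your rendering is a faithful and more explicit version of it. The enclosing Lemma~\ref{lem:survive_light} survives anyway, because the error cancels exactly against the slack introduced one step earlier by bounding $\sum_{i}|E(R_i)\cap E(t)|\leq 3$: the same face-counting shows $\sum_{i}|E(R_i)\cap E(t)|=3-|E(t)\cap\Gamma|$, so the combined quantity $\sum_{i}(\sum_{x}|E(R_i)\cap\Gamma_x|+|E(R_i)\cap E(t)|)$ really does equal $3+\ell(S_u^t)+\ell(S_v^t)+\ell(S_w^t)-\ell(S)+k$, which is all that the derivation from Inequality~(\ref{eq:sum-over-regions-3}) onward uses. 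The right repair is therefore either to prove the corrected identity with the extra $+|E(t)\cap\Gamma|$ term and carry it through, or to prove the combined identity directly; as written, neither your argument nor the paper's can close the $E(t)\cap\Gamma=\emptyset$ step, because it is not always true.
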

\begin{proof}
Notice that the sum on the left-hand-side counts all edges in $(\bigcup_{x \in V(t)} \Gamma_x) \setminus \Gamma$ where each edge is counted exactly once, and this contribution is $\sum_{x \in V(t)} \ell(S_x^t)  - \ell(S)$.   
Additionally, by Claim~\ref{claim:regions-to-triangulate}, each edge in $\Gamma \setminus (\bigcup_{x \in V(t)} \Gamma_x)$ is also counted exactly once as well, and this contribution is $+k$. 
\end{proof}

Combining all of this with Inequality~(\ref{eq:sum-over-regions-3}) we get,

\begin{align}\label{eq:sum-over-regions-4}
    q''_2 & \leq 3 + \ell(S_u^t) + \ell(S_v^t) + \ell(S_w^t) -\ell(S) - \sum_{i=1}^k {o({R_i})}  {- \one_S^t}
\end{align}

Let $o^t_{across}(S)$ be the number of {occupied} edges among the $o(S)$ {occupied} edges belonging to $\Gamma$ such that they do not belong to any of the $\Gamma_x$ for $x \in \{u, v, w\}$. These edges are the ones which are drawn across two different cycles $\Gamma_x, \Gamma_y$ for $x,y \in \{u, v, w\}$ and $x \neq y$ (potentially some of the edges drawn in double-line style in Fig.~\ref{fig:region-R}). Hence $o^t_{across}(S)$ captures precisely the number of occupied edges in $\Gamma \setminus (E(t) \cup \bigcup_{x \in V(t)} \Gamma_x)$ for which the supported cross triangles are drawn in the exterior of $\Gamma$. By the way we define $o({R_i})$, the following equality holds.
{ 
\begin{align}\label{eq:bad-edges-relation}
\sum_{i=1}^k o(R_i) = o(S_u^t) + o(S_v^t) + o(S_w^t) - ( o(S) - o^t_{across}(S))
\end{align}}
Using this in Inequality~(\ref{eq:sum-over-regions-4}) we get,

\begin{align*}
q''_2 & \leq  3 + \ell(S_u^t) + \ell(S_v^t) + \ell(S_w^t) - \ell(S) {- (o(S_u^t) + o(S_v^t) + o(S_w^t) - (o(S) - o^t_{across}(S))) - \one_S^t}\\
 & \leq  3 + {(\ell(S_u^t)-o(S_u^t)) +  (\ell(S_v^t)-o(S_v^t)) +  (\ell(S_w^t)-o(S_w^t)) -  (\ell(S)-o(S))  - o^t_{across}(S) - \one_S^t}
 \end{align*}
 
Since $\ell(S_x^t) = {\phi(S_x^t) + o(S_x^t)}$ for every $x \in \{u, v, w\}$ we get,
\begin{align}\label{eq:sum-over-regions-5}
q''_2 & \leq   3 + {\phi(S_u^t) + \phi(S_v^t) + \phi(S_w^t) - \phi(S)}  {- o^t_{across}(S)} {- \one_S^t} 
\end{align}

The general inequality $q''_2 \leq  3 + {\phi(S_u^t) + \phi(S_v^t) + \phi(S_w^t) - \phi(S)}$ for the Lemma~\ref{lem:survive_light} trivially follows from the above inequality.
The following claim will complete the proof.

\begin{claim}
If  there are three cross triangles supported by edges in $\bigcup_{uv \in E(t)} B_{uv}^t \cup E(t)$ with the common landing component $L$, then ${o^t_{across}(S) + \one_S^t \geq 1}$. 
\end{claim}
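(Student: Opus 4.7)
The plan is to perform a case analysis on where the common landing component $L$ is drawn relative to the outer boundary cycle $\Gamma$ of $G[S]$ (viewed as a simple cycle in $\widetilde G$ after cutting open). By Observation~\ref{Observation:different-components-green-blue-cycle} applied to $\Gamma$, the three cross triangles sharing the landing component $L$ must all be drawn in the same region of $\Gamma$. Moreover, since $L$ is connected and $L\cap S = \emptyset$ while planarity forbids $L$'s edges from crossing $\Gamma$, the whole component $L$ lies in a single region of $\Gamma$. So the two cases are: $L$ is drawn interior to $\Gamma$, or $L$ is drawn exterior to $\Gamma$.

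In the interior case the goal is to show $\one_S^t = 1$, i.e.\ $L$ lies inside one of the regions $R_1,\ldots,R_k$. I would first rule out $L$ lying inside any $\Gamma_x$: for each cross triangle $abz$ with $z \in L$ supported by an edge $ab \in E(t) \cup \bigcup_{uv} B^t_{uv}$, the endpoints $a,b$ belong to two distinct split components and hence lie on two different sides of some $\Gamma_y$; so $z$ strictly inside $\Gamma_x$ would force one of $az$ or $bz$ to cross $\Gamma_x$, violating planarity. Next, $L$ cannot sit inside the triangular face bounded by $t$ because $t$ corresponds to a triangular face of $G$ and hence is empty. Combined with Claim~\ref{claim:regions-to-triangulate}'s decomposition of the region interior to $\Gamma$ and exterior to every $\Gamma_x$ into the triangular face of $t$ together with $R_1,\ldots,R_k$, this forces $L$ inside some $R_i$, giving $\one_S^t = 1$.

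In the exterior case, each cross triangle drawn exterior to $\Gamma$ has its supporting edge on $\Gamma$ as an outer-face edge (otherwise $\Gamma$ would separate the supporting edge from the landing vertex, making the triangular face disconnected), and is therefore occupied. By Proposition~\ref{prop:structure-light}, the three supporting edges are distributed one per edge-set $\{uv\}\cup B^t_{uv}$, $\{vw\}\cup B^t_{vw}$, $\{uw\}\cup B^t_{uw}$. If at least one supporting edge belongs to some $B^t_{\cdot}$, then since it bridges two distinct split components it lies in no $\Gamma_x$, so it contributes $+1$ to $o^t_{across}(S)$ and we are done. Otherwise all three supporting edges lie in $E(t)$, forcing all three edges of $t$ onto $\Gamma$; since $\Gamma$ is a simple cycle after cutting open, three edges of $\Gamma$ forming a triangle force $\Gamma = t$, contradicting the standing assumption of this portion of the analysis that $E(\Gamma)\setminus(E(t)\cup\bigcup_{x\in V(t)}\Gamma_x)\neq\emptyset$.

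The main obstacle is the interior case: carefully combining three planarity-based exclusions (outside every $\Gamma_x$, outside the empty face of $t$, inside $\Gamma$) with the topological decomposition from Claim~\ref{claim:regions-to-triangulate} to conclude that $L$ really is trapped inside some $R_i$. The exterior case is more mechanical but relies critically on the simple-cycle structure of $\Gamma$ in $\widetilde G$ together with the standing non-emptiness assumption on $E(\Gamma)\setminus(E(t)\cup\bigcup_{x\in V(t)}\Gamma_x)$ to rule out the degenerate configuration $\Gamma = t$.
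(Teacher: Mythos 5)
Your proof is correct and follows the same route as the paper's: a two-case split on whether the common landing component $L$ lies in the interior of $\Gamma$ (forcing $\one_S^t = 1$ via the decomposition into $t$, the $\Gamma_x$'s, and the $R_i$'s) or in the exterior (forcing at least one occupied cross-component edge on $\Gamma$, hence $o^t_{across}(S)\geq 1$). You supply more justification than the paper's terse argument — in particular the planarity exclusions in the interior case and the careful handling of supporting edges lying in $E(t)$ in the exterior case — but the underlying idea is identical.
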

\begin{proof}
There could be two sub-cases: (i) The landing component $L$ is in the exterior of $\Gamma$. In this case, by the definition of {$o^t_{across}(S) \geq 1$}, all the three edges which support one of the three cross triangles will contribute to $o^t_{across}(S)$ (see Fig~\ref{fig:three-gray-ext-gamma} for illustration); and (ii) The cross triangles are drawn inside $\Gamma$. In this case, we have that {$\one_S^t=1$.}
In any case, we have ${o^t_{across}(S) + \one_S^t \geq 1}$, thus proving the lemma.
\end{proof}
  

\subsection{Proof of Claim~\ref{claim:regions-to-triangulate}}

By the assumption that there exists at least one edge in $E(\Gamma) \setminus (E(t) \cup E(\Gamma_u \cup \Gamma_v \cup \Gamma_w))$. Let $ab: = e \in E(\Gamma) \setminus (E(t) \cup E(\Gamma_u \cup \Gamma_v \cup \Gamma_w))$ be one such edge.

To prove the claim, we will show that for any such edge, there exists a unique face $R$ satisfying the conditions of the claim and it contains at least one edge from $E(t)$.
As each edge of $t$ is also incident to the face bounded by $t$, this would imply that there can not be more than three such faces in $\widetilde{G}'$ and since there exists the edge $e$, hence we will be done. 

Let $a \in \Gamma_x$ for some $x \in \{u, v, w\}$. We will always use the fact that, since $e \in E(\Gamma)$, there are two directions starting from $a$ to traverse the boundary of $\Gamma_x$, such that in one direction edges of $\Gamma_x$ belongs to $\Gamma$ and in the other they are drawn in the interior of $\Gamma$. No we split into two possible cases.
\begin{itemize}
    \item ($b \in \Gamma_y$ for some $y \in \{u, v, w\}$ such that $y \neq x$):  Since $a, x \in \Gamma_x$, there exists a path $P_x$ from $a$ to $x$ containing edges of $\Gamma_x$ such that all these edges are drawn in the interior of $\Gamma$ (possibly $x=a$ and $P_x$ is a zero length path). Similarly there exist a path $P_y$ going from $b$ to $y$ containing edges of $\Gamma_y$ such that all these edges are drawn in the interior of $\Gamma$. Hence the circuit $C$ which includes the edge $e$, the edge $xy \in E(t)$ and two paths $P_x$ and $P_y$, is drawn inside of $\Gamma$ (except the edge $e$ which is on the boundary of $\Gamma$). Clearly, there cannot be any other edge from $\Gamma$ which is drawn inside $C$, hence any face drawn inside $C$ can contain at most the edge $e$ from $E(\Gamma) \setminus (E(t) \cup E(\Gamma_u \cup \Gamma_v \cup \Gamma_w))$. Also, by the way we define $\widetilde{G}'$, there cannot be any other edge inside $C$ drawn across different $\Gamma_i$ cycles. Now if $t$ is drawn outside of $C$, then $C$ itself is the face $R$ of $\widetilde{G}'$ satisfying our requirements. Otherwise, the whole of $\Gamma_z$ for $z \neq x$ and $z\neq y$, is drawn inside of $C$. This means that region inside the circuit $C$ can be decomposed into the triangular face $t$, the cycle $\Gamma_z$ and another face $R$ whose boundary comprises of edges $xz, zy \in E(t)$, the edge of $\Gamma_z$, the edge $e$ and two paths $P_x$ and $P_y$. Hence $R$ is the face corresponding to $e$ which we require.
    \item ($b \in \Gamma_x$): Notice that in this case, the circuit comprising of edge $e$ along with a path $P_x$ from $a$ to $b$ containing edges of $\Gamma_x$ such that all these edges are drawn in the interior of $\Gamma$, will enclose the triangle $t$ and the other two cycles $\Gamma_y, \Gamma_z$ such that $y, z \in \{u, v, w\}$ and $x \neq y \neq z \neq x$. Similar to the previous case, there cannot be any other edge from $\Gamma$ which is drawn inside $C$, $C$ is drawn in the interior of $\Gamma$ (except the edge $e$ which is on the boundary of $\Gamma$) and also no other edge is drawn across different $\Gamma_i$ cycles inside of $C$. Hence, any face drawn inside $C$ can contain at most the edge $e$ from $E(\Gamma) \setminus (E(t) \cup E(\Gamma_u \cup \Gamma_v \cup \Gamma_w))$. Also, $C$ can be decomposed into the triangular face $t$, the cycles $\Gamma_y, \Gamma_z$ and another face $R$ whose boundary comprises of edge $e$, all three edges of $t$, all the edge of $\Gamma_y, \Gamma_z$ and two paths $P$ and $P'$ from $a$ to $x$ and $x$ to $b$ containing edges of $\Gamma_x$ drawn inside $\Gamma$. Hence $R$ is the face corresponding to $e$ which we require. 
\end{itemize}

\subsection{Proof of Claim~\ref{claim:st-each-region}} 
\label{sec:st-each-region}

To prove this we will perform a series of monotone operations within the region $R_i$ in graph $\widetilde{G}$, such that in each operation the number of surviving triangles drawn within $R_i$ cannot reduce. 
In the end we will reach a structure for which the bound holds trivially. Since the operations here are monotone, the bound which we get also holds for the original number of surviving triangles drawn within $R_i$.
Notice that we make these modifications in the auxiliary graph $\widetilde{G}$ only for counting purposes and never change the structure of our graph $G$. 

In the first step, except for the three cross triangles supported by the edges in $B_{uv}^t \cup B_{vw}^t \cup B_{uw}^t \cup E(t)$, we {\em decouple} all the other supported cross triangles drawn inside $R_i$ which share their landing components by adding a dummy landing vertex for each such cross triangles and making the new dummy vertex its landing component. 
Note that the decoupling step allows us to get a full triangulation for $R_i$ in its interior (except the face containing the common landing component $L$) and at the same time does not affect the number of surviving triangles drawn inside $R_i$ in $\widetilde{G}$. 

After this we triangulate the interior of $R_i$ by adding extra type-$0$ edges, such that the end point for each additional edge lies in two different $\Gamma_x$ and $\Gamma_y$ for $x \neq y$. This is possible to achieve due to Obs.~\ref{obs:faces-in-each-region} and also this operation is monotone and cannot reduce the number of surviving triangles drawn inside $R_i$ in $\widetilde{G}$. 
Also, all the faces inside $R_i$ are triangular faces except the one containing $L$ in graph $\widetilde{G}$.
The way we triangulate the regions of $R_i$ ensures that the Obs.~\ref{obs:faces-in-each-region} continues to hold which implies that any face in $R_i$ can contain at most one edge from the boundary of $\Gamma_x$ for any $x \in \{u, v, w\}$. Also, $\widetilde{G}$ will remain a simple planar graph since the added type-$0$ edge connect vertices from the boundary of two different cycles $\Gamma_x$ and $\Gamma_y$ for $x \neq y$. In the end, we have at most $\ell(R_i) - 2$ triangular faces and any {occupied} edge counted in $o({R_i})$ (\ie {occupied} edges in $R_i$ which belongs to some cycle $\Gamma_x$ for $x \in \{u, v, w\}$) can kill at most one triangle, hence the claimed upper-bound follows in the general case.

\begin{figure}
    \centering
    \includegraphics[width=0.5\textwidth]{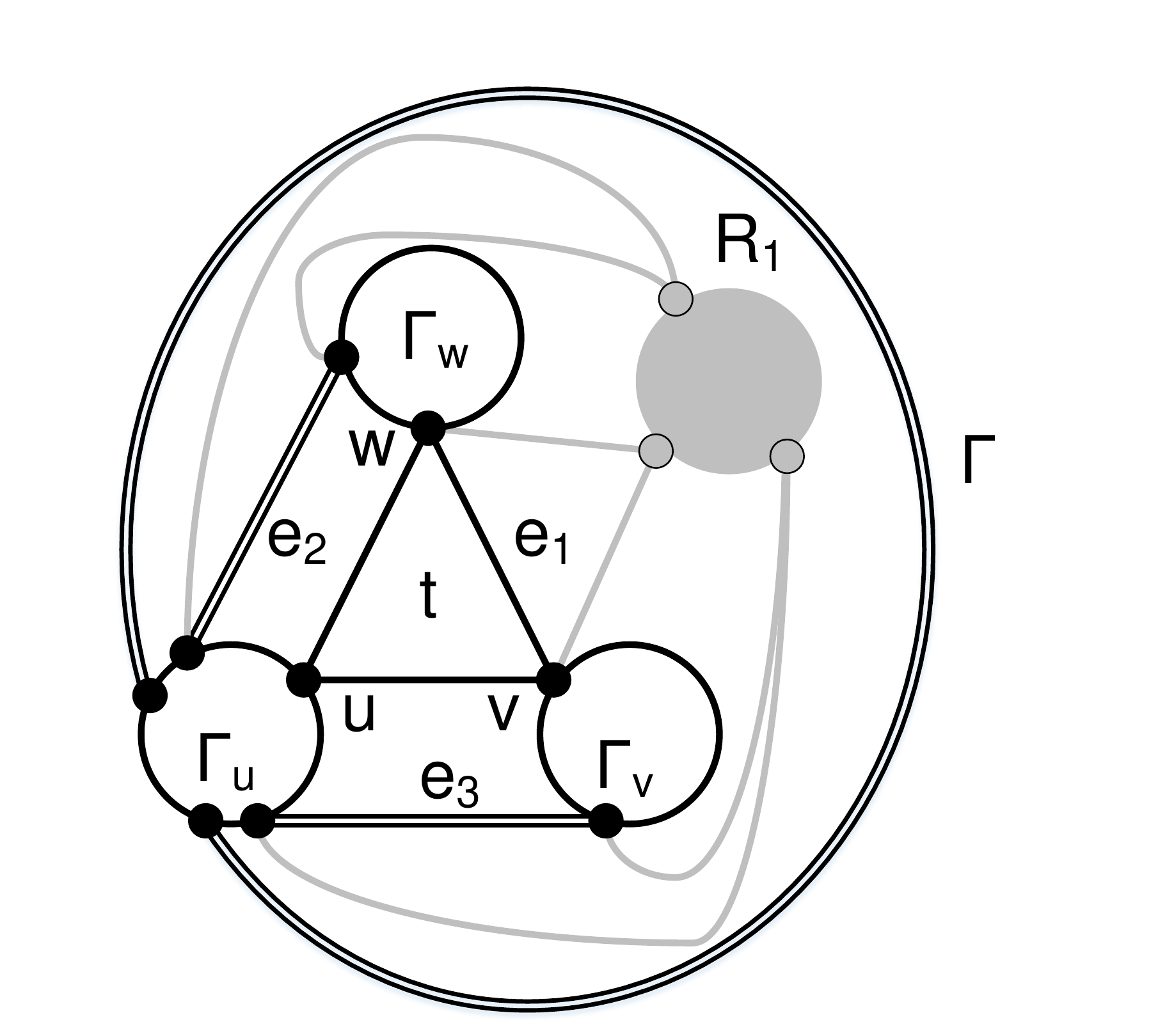}
    \caption{The case when the three supported cross triangles are drawn in the interior of $\Gamma$. This can only happen when we are in one region $R_1$ case.}
    \label{fig:three-gray-int-gamma}
\end{figure}

Now in the case where we have the three cross triangles supported by the edges in $B_{uv}^t \cup B_{vw}^t \cup B_{uw}^t \cup E(t)$, we will prove that the face (say $f$) of $R_i$ inside which the common landing component $L$ is drawn, contains at least one more edge in addition to the three edge from $B_{uv}^t \cup B_{vw}^t \cup B_{uw}^t \cup E(t)$ which supports the three cross triangles. This implies that this face has length at least $4$ and the triangulation of $R_i$ misses at least $2$ triangular faces. Also, in the worst case the fourth edge which we consider here could contribute to the term $o(R_i)$. Hence overall, we get at least $1$ less surviving triangular face than the previous bound and the claim follows.

To prove the claim for face $f$, first recall that (by Prop.~\ref{prop:structure-light}) the three edges in $B_{uv}^t \cup B_{vw}^t \cup B_{uw}^t \cup E(t)$ which supports the cross triangles are drawn across different pair of cycles $\Gamma_u, \Gamma_v, \Gamma_w$. Let $e_1 \in B_{vw}^t \cup vw$, $e_2 \in B_{uw}^t \cup uw$ and $e_3 \in B_{uv}^t \cup uv$ be the three edges supporting the three cross triangles. There is a cycle $C$ comprising of edges $e_1, e_2, e_3$ and paths $P_u, P_v, P_w$ joining the two ends of these edge in $\Gamma_u, \Gamma_v, \Gamma_w$ respectively, such that the triangle $t$ is drawn inside $C$ and the exterior of the $\Gamma$ is outside of $C$. Now since $R_i$ is a bounded region in graph $\widetilde{G}$ hence the face $f$ is a bounded face. Now we show that for $f$ to be bounded face, its length has to be at least four. In the corner case when $e_1 = vw, e_2 = uw, e_3 = uv$, $C$ is precisely the triangular face $t$ and the edges are $e_1, e_2, e_3$ are touching $f$ from the outside of $t$. Hence, for $f$ to be bounded, there should exist at least one more edge to complete the loop going from $\Gamma_u$ to $\Gamma_v$ to $\Gamma_w$ and back to to $\Gamma_u$. Otherwise, assume $e_1 \neq vw$ (other cases are symmetric). Since the cross triangles supported by $e_1, e_2, e_3$ share their landing component, and there exists a cycle $C'$ containing only cactus/type-$0$/type-$1$/type-$2$ edges including edges $e_1$, $vw$ and paths in $\Gamma_v$ and $\Gamma_w$ connecting the end points of $e_1$ and $vw$, such that the face $f$ should be drawn outside of $C$. Now again for $f$ to be bounded, it should contain one more edge and we are done.

\section{Classification Scheme for Factor $7$}
\label{sec:classification-7} 

In this section we will show a classification scheme that allows us to prove our factor seven result. For simplicity, from now on we will use $p, q$ instead of $p(S), q(S)$. More precisely, the aim is to prove the following lemma.


\begin{lemma} 
\label{lem:gain-factor-7} 
There is a $5$-type classification scheme for which  
\[- (\sum_{f \in \fset} gain(f))  \leq  - \phi(S) + (2p + 0.5p_1 - 2.5 a_1 - 3 a_2 - 1.5)\] 
\end{lemma}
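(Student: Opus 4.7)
The plan is to carry out the LP-style strategy outlined in Section~\ref{sec:main-overview}: partition $\fset$ into five types $\fset[1], \ldots, \fset[5]$, prove a tailored lower bound $gain(f) \geq g_j$ within each type $\fset[j]$, and combine these per-type bounds with a short list of linear counting inequalities on the cardinalities $\vec{\chi}[j] = |\fset[j]|$ to deduce the stated inequality. Plugged into Inequality~\ref{eq:gain-highlighted}, the lemma immediately yields the factor-$7$ bound $q \leq 7p - \phi(S)$. The trivial per-face bounds $gain(f) \geq 1.5$ and $gain(f_0) \geq \phi(S) - 1$ from Section~\ref{sec:main-overview} already deliver factor $7.5$, so the classification only needs to harvest an extra $a_1 + 1.5 a_2 - 2p - 0.5 p_1 + 0.5$ of total gain beyond that baseline.

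First I would isolate the outer super-face $f_0$ as its own class, contributing the $\phi(S) - 1$ term via Lemma~\ref{lem:outer-face}. For the interior super-faces, I would classify along local structural features that control whether a $1$-swap exchange argument can beat the baseline $gain(f) \geq 1.5$: the boundary length $|E(f)|$, the number of occupied versus free edges on the boundary, whether the bounding non-cactus edges lie in a single bundle $B^t_{uv}$ or span two distinct bundles, and the incidence pattern with type-$0$ versus type-$1$ cactus triangles. Proposition~\ref{prop:structure-heavy} pins the bundles of each heavy triangle to a single base edge, which keeps the number of qualitative shapes small and is what makes five classes suffice.

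Next I would prove the per-type gain bounds by standard exchange arguments: if $gain(f) < g_j$ for some $f \in \fset[j]$, then two cross triangles drawn inside $f$, together with a cactus triangle incident to $f$, would admit a $1$-swap that contradicts the local optimality of $\cset$. In parallel I would assemble the counting inequalities relating $\vec{\chi}[j]$ to $p, p_1, a_1, a_2$. The basic identity $\sum_j \vec{\chi}[j] = a_1 + a_2 + 1$ follows from Lemma~\ref{lem:skeleton-faces}; additional inequalities come from double-counting the $a_1$- and $a_2$-edges over the at-most-two super-faces they bound, using that Proposition~\ref{prop:structure-heavy} concentrates every bundle on a single base edge of its heavy cactus triangle.

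The final step assembles a positive linear combination of the per-type gain bounds $\overrightarrow{gain} \cdot \vec{\chi}$ and the counting inequalities to match the right-hand side $\phi(S) - 2p - 0.5 p_1 + 2.5 a_1 + 3 a_2 + 1.5$; this is a routine finite LP check once the classification is fixed. The main obstacle is the design of the five classes themselves: they must be fine enough to extract gain proportional to $a_1$ and $a_2$ (the baseline only delivers $1.5$ per face), yet coarse enough that each $g_j$ can be verified by a direct $1$-swap exchange argument and that the counting inequalities are strong enough to absorb the negative $-2p - 0.5 p_1$ slack in the target via the Euler-type bound $|\fset| \leq 2p - 2$ on the skeleton.
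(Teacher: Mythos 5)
Your high-level architecture --- isolate $f_0$ via Lemma~\ref{lem:outer-face}, classify the interior super-faces, prove per-type gain bounds, and close with counting inequalities and a linear combination --- is exactly the paper's, but the proposal leaves all the load-bearing content unspecified, and the one mechanism you do commit to is not the one that works. The paper's per-type gain bounds for factor $7$ (Lemma~\ref{lem:all-bounds-factor-7}) are \emph{not} obtained by exchange arguments inside $f$: they follow from a decouple-and-triangulate counting argument (Lemma~\ref{lem:sf-trivial-longface}), which gives $survive(f) \leq |Free(f)| + \floor{|Occ(f)|/2} - 2$ whenever $|E(f)|>3$ or $a_1^{free}(f)+p_0^{base}(f)>0$. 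The improvements over the generic $1.5$ then come purely from the parity of $|Occ(f)|$ and from ruling out $|E(f)|=3$, which is exactly why the classification is by $i = p_1^{base}(f)+a_2(f)+a_1(f)$ and $j=a_1^{free}(f)$: a type-$[1,0]$ face has $|Occ(f)|=1$ odd and $|E(f)|>3$, hence $gain(f)\geq 2.5$; types $[1,1]$ and $[2,\bdot]$ have $|E(f)|>3$, hence $gain(f)\geq 2$; only type-$[\geq 3,\bdot]$ can be a triangle of occupied edges and is stuck at $1.5$. Local optimality enters only upstream, through Proposition~\ref{prop:structure-heavy} and its consequences (every heavy triangle has a single non-empty bundle, the free sides share a super-face, Observation~\ref{rem:surviving}). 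Your proposed swap built from ``two cross triangles drawn inside $f$ plus an incident cactus triangle'' cannot yield the crucial $2.5$ bound for a type-$[1,0]$ face, since such a face has $|Occ(f)|=1$ and therefore only one cross triangle drawn inside it.

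Second, your list of counting inequalities is too weak to close the computation. Double-counting the $a_1$- and $a_2$-edges over the two super-faces they bound cannot produce the $+0.5\,p_1$ term on the right-hand side, and that term is not optional slack: a type-$[\geq 3,\bdot]$ face may have, say, $p_1^{base}(f)=2$ and $a_2(f)=1$, so it carries only one type-$1$/type-$2$ edge and an $a$-edge double count charges it a single unit, which is not enough to control $2\eta[\geq 3,\bdot]$. The paper's key inequality $\eta[2,\bdot]+2\eta[\geq 3,\bdot] \leq p_1 + |\fset| - 2$ (Lemma~\ref{lem:two-bounds-factor-7}) is obtained by also charging the base edges of type-$1$ cactus triangles, and this is precisely where $p_1$ enters; combined with $\eta[1,1]\leq a_1$, $|\fset|=a_1+a_2+1$ and $a_1+a_2\leq 2p-3$ it closes the bound. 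Without identifying the concrete classes, the concrete gain values, and this $p_1$-charging, the ``routine finite LP check'' you defer to has nothing to run on.
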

First, we show that Lemma \ref{lem:gain-factor-7} is sufficient for proving the final result. 
For this we substitute the bound from Lemma~\ref{lem:gain-factor-7} into Inequality~(\ref{eq:gain-highlighted}) to get: 
\[q \leq (4p+0.5p_1 + 2.5a_1 + 3a_2) - g(S) + (2p + 0.5p_1 - 2.5 a_1 - 3 a_2 - 1.5) = 6p + p_1 - g(S) - 1.5  \] 

This implies $q \leq 7p - \phi(S)$ as desired. 

In order to define the classification schemes, we further classify the edges, vertices and split components for any heavy triangle $t$ in $G[S]$ into several types.

\paragraph{Further classification of cactus vertices, edges and split components:}
The cactus edges on each heavy triangle are further classified into {\em free} and {\em base} edges as follows: For any heavy triangle $t$, where $V(t) = \{u,v,w\}$. Let $uv \in E(t)$ be an edge for which $B^t_{uv} \neq \emptyset$.
By Proposition~\ref{prop:structure-heavy} there is exactly one such edge in $E(t)$. We say that the edge $uv$ is the {\em base} edge and both $u$ and $v$ are called {\em base} vertices. 
We say that the other two edges in $E(t) \setminus uv$ are {\em free}, and the vertex $w$ is called a {\em free} vertex. Both $S^t_u$ and $S^t_v$ are called {\em occupied} components and $S^t_w$ is a {\em free} component. See Figure~\ref{fig:classification-free-base-occupied} for an illustration.

    \begin{figure}[H]
        \centering
        
        
        
        \includegraphics[width=0.9\textwidth]{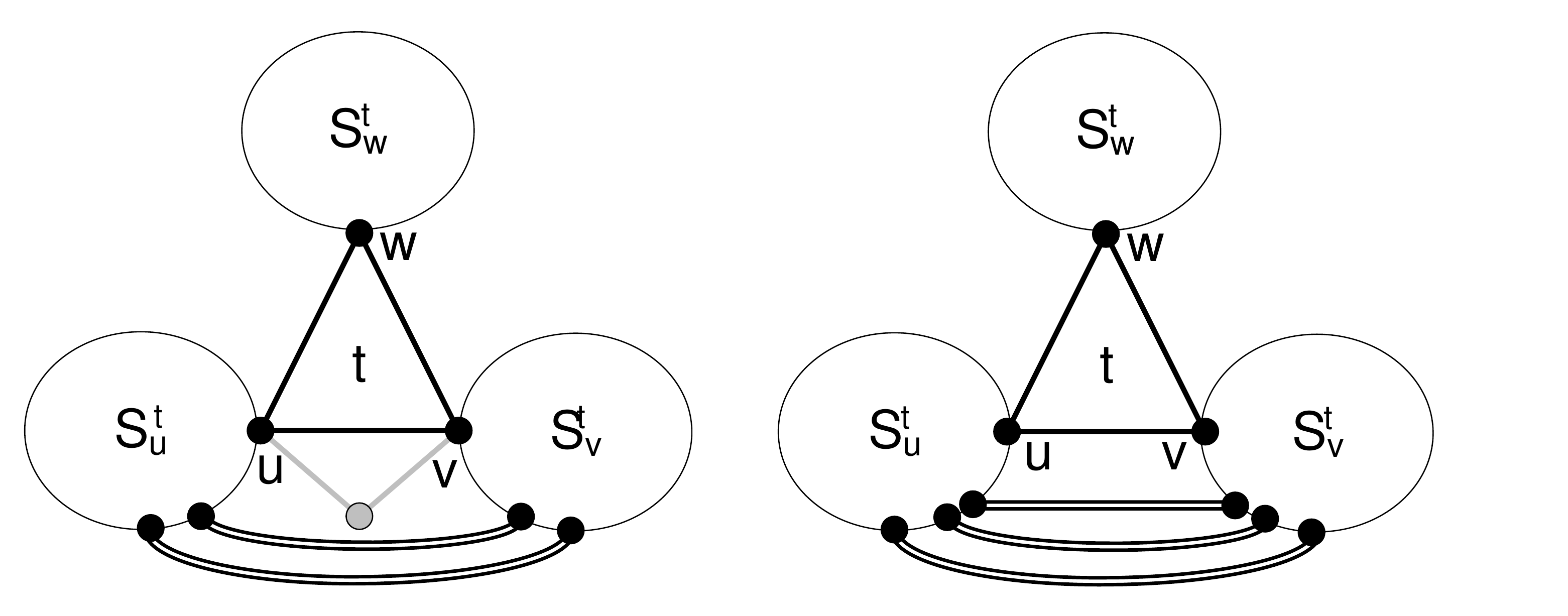}
        \caption{Classification of of cactus edges and split components, based on the type-$1$ and type-$2$ edges going across the split components of any heavy triangle $t$. The split components $S_u^t, S_v^t$ are occupied components and $S_w^t$ is the free component of $t$. For a type-$1$ triangle (left figure), we know that $uv$ also supports a cross triangle.}
        \label{fig:classification-free-base-occupied}
    \end{figure}

The following claim follows from the properties of heavy type-$0$ and type-$1$ triangles shown in Proposition \ref{prop:structure-heavy}.
\begin{claim}
\label{claim:free-edges-same-face}
The two free cactus edges of any cactus triangle are part of the same super-face in $\fset$.
\end{claim}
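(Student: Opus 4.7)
The plan is to use Proposition~\ref{prop:structure-heavy} to recognize that $w$ is a cut vertex of the skeleton $H$, and then a block decomposition at $w$ together with an elementary topological argument about the plane embedding to exhibit the common super-face.

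Let $t$ be a cactus triangle with $V(t)=\{u,v,w\}$, free edges $uw, vw$, and base edge $uv$. First I would invoke Proposition~\ref{prop:structure-heavy}: since every cactus triangle in this section is heavy, we have $B^t_{uw}=B^t_{vw}=\emptyset$. Because the skeleton $H$ contains no type-$0$ edges, the only $H$-edges joining $S^t_w$ to $S \setminus S^t_w$ are the two free edges $uw, vw$, and both of them are incident to $w$. Consequently $w$ is a cut vertex of $H$ separating $S^t_w \setminus \{w\}$ from the remainder of $S$.

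Next I would analyze the block $B$ of $H$ that contains the cycle $t$. Since all edges of $t$ lie on the same $3$-cycle they belong to a single block, and since $w$ has no $H$-neighbors outside of $S^t_w \cup \{u,v\}$ and all its neighbors in $S^t_w \setminus \{w\}$ lie in blocks attached at the cut vertex $w$, the degree of $w$ in $B$ is exactly $2$, with $uw$ and $vw$ being its only $B$-edges. Therefore, in the plane embedding of $B$, the edges $uw$ and $vw$ are consecutive in the cyclic order at $w$ and they bound exactly two faces of $B$ at $w$: the interior of $t$ on one side, and a unique face $F_B$ on the other. Both free edges lie on $\partial F_B$.

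Finally, I would lift this conclusion from $B$ back to $H$. The induced subgraph $H[S^t_w]$ is connected and attached to the rest of $H$ only at $w$, so in the plane embedding of $H$ it sits entirely inside $F_B$, meeting $\partial F_B$ only at the single point $w$. Attaching a connected planar subgraph at a single boundary vertex of a face splits that face into a single ``outer'' region (the complement of the attached subgraph within $F_B$) together with the bounded faces of the attached subgraph. This outer region is the super-face we seek: it belongs to $\fset$ (it is not the interior of any cactus triangle, since it contains $\partial F_B$ in its boundary), and its boundary walk traces $uw$, wraps once around the outer boundary of $H[S^t_w]$ based at $w$, traces $vw$, and finally closes through the portion of $H$ induced on $S^t_u \cup S^t_v$. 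The main technical obstacle is formalizing this last step — that attaching a connected planar subgraph to a topological disk at a single boundary vertex keeps the complement inside the disk connected, and hence produces a single outer face — which I expect to handle by a standard Jordan-curve argument leveraging the fact that $w$ is a cut vertex of $H$.
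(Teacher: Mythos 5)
Your proof is correct, and it rests on the same key structural fact as the paper's --- namely that Proposition~\ref{prop:structure-heavy} forces $B^t_{uw}=B^t_{vw}=\emptyset$, so the two free edges are the only edges of $H$ leaving $S^t_w$ --- but you package that fact quite differently. The paper argues by contradiction directly on the face-boundary walk: if a super-face contained $uw$ but not $vw$, its boundary would have to return from $w$ to $u$ without using $vw$, hence would have to exit $S^t_w$ via a type-$1$ or type-$2$ edge, which does not exist. You instead give a positive, structural argument: $w$ is a cut vertex of $H$, so in the block $B$ containing $t$ the vertex $w$ has degree exactly $2$, whence $uw$ and $vw$ are consecutive in the rotation at $w$ and bound a common face $F_B$ of $B$ (the one other than the interior of $t$); reattaching $H[S^t_w]$ at the single vertex $w$ inside $F_B$ leaves one connected ``outer'' region whose boundary walk contains both free edges. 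Your route is heavier (block decomposition plus a Jordan-curve step) but arguably more self-contained: it does not need the paper's slightly informal assertion that every super-face boundary must contain a type-$1$ or type-$2$ edge, and it additionally identifies the common super-face explicitly, which could be reused elsewhere. The one point worth making fully precise is the final face-splitting step, but as you note this is the standard fact that a connected plane subgraph attached to a face at a single boundary vertex leaves a single complementary region meeting the rest of that face's boundary; with that in place the argument is complete.
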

\begin{proof}
Let $vw$ and $uw$ be the free edges in $E(t)$.
Assume for contradiction that there is a super-face $f \in \fset$ that only contains $uw$ but not $vw$. Any super-face boundary needs to contain at least one type-$1$ or type-$2$ edge in order to form a cycle. Therefore a path along the super-face $f$, not including the edge $uw$, from $u$ to $w$ must leave $S_w^t$ using a type-$1$ or type-$2$ edge, a contradiction to the fact that for a heavy triangle, $B_{vw}^t$ and $B_{uw}^t$ are empty in graph $H[S]$.
\end{proof}

We will upper bound the number of surviving triangles inside each super-face $f \in \fset$ based on the characteristics of the edges bounding $f$ (see Figure~\ref{fig:classification-face-edges}). 

\paragraph{Classification of Edges in the Face Boundaries of $G[S]$:} 
Edges that bound $f$ are further partitioned into the following types:
\begin{itemize}
    \item The two free edges of the cactus triangles. Let $p_0^{free}(f)$ and $p_1^{free}(f)$ denote the total number of type-$0$ and type-$1$ triangles respectively whose free edges participate in $f$. 
    \item The base edges of the cactus triangles. Let $p_0^{base}(f)$ and $p_1^{base}(f)$ denote the total number of such triangles whose base edges participate in $f$. 
    
    \item The type-$2$ edges. Let $a_2(f)$ denote the total number of such edges on $f$. 
    
    \item The type-$1$ edges whose supported cross triangle are drawn inside $f$. This side of any type-$1$ edge is referred to as the {\em occupied} side. Let $a_1^{occ}(f)$ denote the total number of such edges in the boundary of $f$. 
    \item The type-$1$ edges whose supported cross triangle is drawn in $G$ in some region bounded by a super-face other than $f$. This side of any type-$1$ edge which does not support a cross triangle is referred to as the {\em free} side. We denote the number of such edges by $a_1^{free}(f)$. 
\end{itemize}

    \begin{figure}[H]
        \centering
      \includegraphics[width=0.6\textwidth]{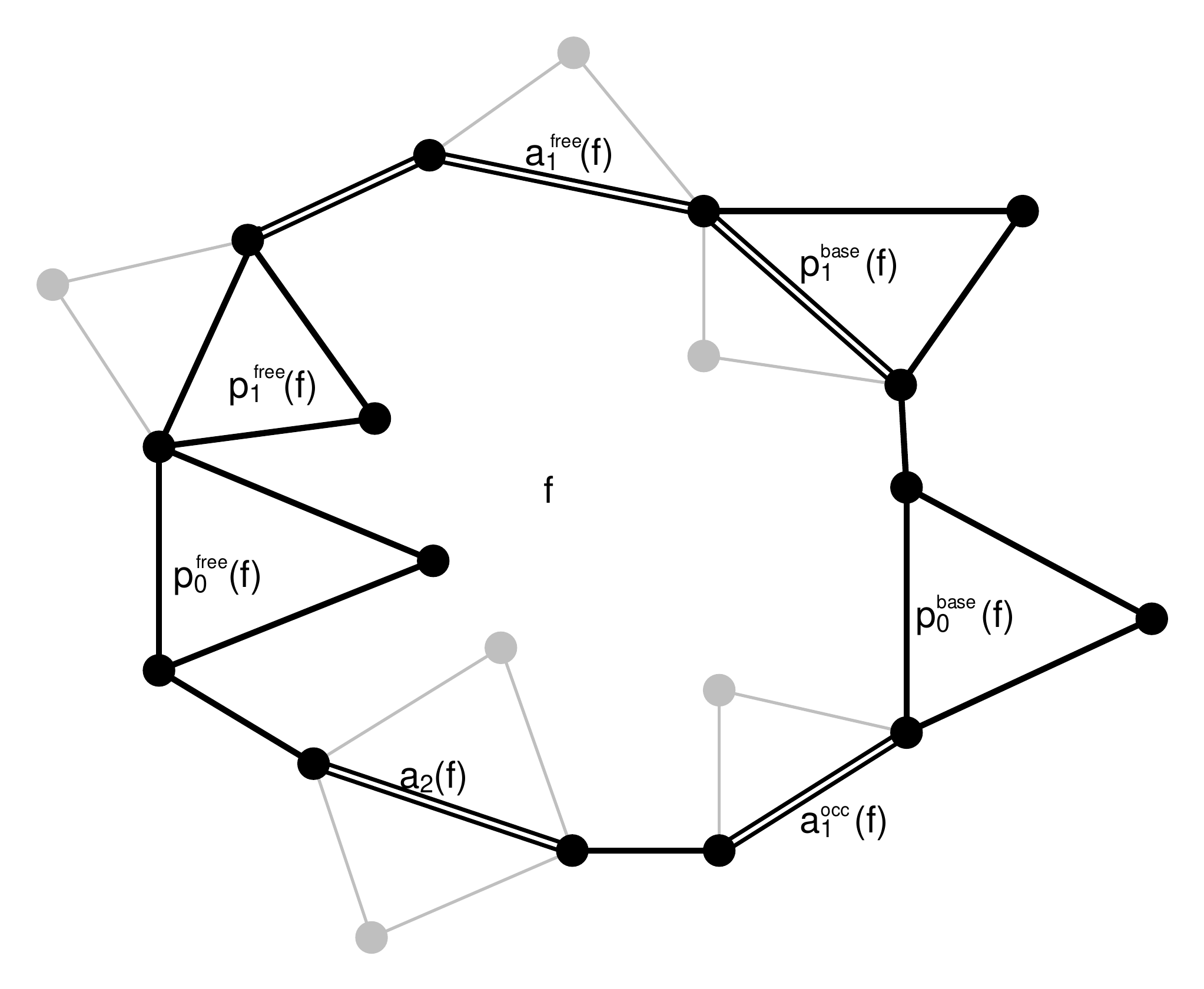}
        \caption{Different type of edges in a super-face $f \in \fset$, which corresponds to the region bounded by single (cactus edges) and double-lined (type-$1$ or type-$2$ edges) black edges (ignoring the gray edges which are the cross edges). For each type we indicate to which number they contribute. }
        \label{fig:classification-face-edges}
    \end{figure}

Notice that $|\fset| > 1$, since all cactus triangles in $G[S]$ are heavy, hence $a_1 + a_2 \geq 1$.
Since $\cset$ is a triangular cactus and $|\fset|>1$, the following can be observed.

\begin{Observation}
\label{obs:f_a1_a2}
For any super-face $f \in \fset$, $a_2(f) + a_1(f) \geq 1$.
\end{Observation}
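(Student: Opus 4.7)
The plan is to use the fact that the skeleton $H[S]$ differs from the cactus $\cset[S]$ only by the addition of the $a_1 + a_2$ type-$1$ and type-$2$ edges, and to track where these edges can be drawn in the plane.

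First, I would argue that every type-$1$ or type-$2$ edge of $G[S]$ is drawn in the outer face of $\cset[S]$. An edge placed in the interior of some cactus triangular face would have to connect two of its three corner vertices (the only cactus vertices bordering that face), duplicating a cactus edge, which is impossible in the simple plane graph $G$. Equivalently, this is already built into the setup: the preceding definition lists the $p$ cactus triangles explicitly as faces of $H[S]$, so nothing may subdivide them. This localization is the only delicate step in the argument.

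Now I would argue by contradiction. Suppose some super-face $f \in \fset$ satisfies $a_1(f) + a_2(f) = 0$, so only cactus edges bound $f$. Since $\cset[S] \subseteq H[S]$, the region $f$ is contained in some face $f'$ of the cactus alone; because no type-$1$ or type-$2$ edge lies on $\partial f$, no such edge can subdivide $f'$, and therefore $f = f'$. Since $f \in \fset$ is not a cactus triangle, $f'$ must be the outer face of $\cset[S]$, and this can only happen if $a_1 + a_2 = 0$ and $|\fset| = 1$. This contradicts the remark immediately preceding the observation: since $\cset[S]$ consists only of heavy triangles, Proposition~\ref{prop:structure-heavy} guarantees a nonempty $B_{uv}^t$ for every cactus triangle $t$, giving $a_1 + a_2 \geq 1$ and $|\fset| \geq 2$. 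Hence $a_1(f) + a_2(f) \geq 1$ for every $f \in \fset$.
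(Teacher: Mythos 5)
Your proof is correct and follows essentially the same route the paper takes: the paper justifies this observation only by the preceding remark that heaviness of all cactus triangles forces $a_1 + a_2 \geq 1$ (hence $|\fset| > 1$), together with the fact that $\cset[S]$ is a triangular cactus, and your argument simply fills in the topological details of why a super-face bounded solely by cactus edges would have to be the outer face of $\cset[S]$ itself. No gaps.
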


Let $p^{free}(f) := p_0^{free}(f) + p_1^{free}(f)$, $p^{base}(f) := p_0^{base}(f) + p_1^{base}(f)$ and $a_1(f) := a_1^{occ}(f) + a_1^{free}(f)$. 

\begin{Observation}
\label{rem:surviving}
Any surviving triangular face cannot be incident to any type-$2$ edge, the occupied side of a type-$1$ edge or the base side of a type-$1$ triangle.
\end{Observation}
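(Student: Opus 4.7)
The plan is to prove the observation by direct case analysis, using only the definitions of the edge/triangle types and Proposition~\ref{prop:structure-heavy}. In each case, I would show that the face of $G$ lying on the specified side of the edge is already occupied by a cross triangle, and hence any triangular face of $G$ incident there must coincide with that cross triangle, excluding it from $survive(f)$.

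For a type-$2$ edge $e$, I would argue as follows. By definition, $e$ supports two cross triangles $t_1, t_2$, each being a triangular face of $G$ that contains $e$ as one of its edges. Since $G$ is a plane graph, the edge $e$ is incident to exactly two faces, and these must be $t_1$ and $t_2$. Therefore every triangular face of $G$ incident to $e$ is one of the cross triangles supported by $e$, hence is counted in the $p_1 + a_1 + 2a_2$ term and not in $survive(f)$.

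For the occupied side of a type-$1$ edge $e$, the argument is literally the definition: by the classification of face-boundary edges, the ``occupied side'' of $e$ is the side on which the unique cross triangle supported by $e$ is drawn. Hence the face of $G$ adjacent to $e$ on that side \emph{is} this cross triangle, and any triangular face of $G$ incident to $e$ on that side must coincide with it. For the base side of a type-$1$ cactus triangle $t = uvw$ with base $uv$, Proposition~\ref{prop:structure-heavy} guarantees that $uv$ supports a cross triangle; since $t$ sits on one side of $uv$, the cross triangle supported by $uv$ must be drawn on the opposite side, i.e.\ the base side. Thus the face of $G$ incident to $uv$ on the base side equals this cross triangle, and the same argument applies.

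I do not expect any real obstacle, since each case is essentially unwinding the definition of ``supports a cross triangle'' together with the fact that in a plane drawing each edge borders exactly two faces. The only minor care required is to make the ``side'' conventions precise: namely, to verify that the drawn cross triangle supported by a type-$1$ edge (or by the base edge of a type-$1$ triangle) indeed lies on the side that the paper labels as ``occupied'' (respectively, ``base''), so that a surviving triangle cannot be forced into the same half-plane without coinciding with the cross triangle.
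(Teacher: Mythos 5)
Your proposal is correct and matches the paper's (implicit) reasoning: the paper states this as an Observation without proof, treating it as immediate from the definitions, and your case analysis — each edge borders exactly two faces of the plane graph $G$, and on the specified side that face is already a cross triangle (or, for the base edge, the cactus triangle $t$ occupies the other side, forcing the supported cross triangle onto the base side) — is exactly the intended justification.
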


By Observation \ref{rem:surviving} and \ref{obs:f_a1_a2}, $|E(f)| = 2p^{free}(f) + p^{base}(f) + a_2(f) + a_1(f)$. Also, $|Free(f)| =  2 p^{free}(f) + a_1^{free}(f) + p_0^{base}(f)$ and $|Occ(f)|= a_1^{occ}(f) + a_2(f) + p_1^{base}(f)$.

\subsection{Classification Rules}
\label{sec:factor-7-rules} 
Now we are ready to define the classification rules for our analysis. 
Since the bound on the number of surviving triangles (hence the $gain(f)$ quantity) that can be drawn inside each super-face heavily depends on the type of edges contained in its face boundary, we classify each super-face $f \in \fset$ (except the outer-face $f_0$) into three broad categories, based on the total number of $p_1^{base}(f) + a_2(f) + a_1(f)$ edges. We also sub-categorize each super-face $f \in \fset$ for which $p_1^{base}(f) + a_2(f) + a_1(f) = 1$ into further classes, based on whether it contains an $a^{free}_1(f)$ edge or not.

\paragraph{Classifications of super-faces:} 
 A super-face $f$ will be of type-$[i, j]$ if $p_1^{base}(f) + a_2(f) + a_1(f) = i$ and $a^{free}_1(f) = j$. If there is no restriction on some dimension, then we put a dot ($[\bdot]$) there. Following is the precise categorization for the super-faces in $\fset \setminus \{f_0\}$.

\begin{itemize}
\item A super-face $f$ is of type-$[1, \bdot]$), if $p_1^{base}(f) + a_2(f) + a_1(f) = 1$ in addition
\begin{itemize}
    \item $f$ is of type-$[1, 0]$ , if $a^{free}_1(f) = 0$ or
    \item of type-$[1, 1]$), if $a^{free}_1(f) = 1$.
\end{itemize}
\item A super-face $f$ is of type-$[2, \bdot]$, if $p_1^{base}(f) + a_2(f) + a_1(f) = 2$
\item A super-face $f$ is of type-$[\geq3, \bdot]$, if $p_1^{free}(f) + a_2(f) + a_1(f) \geq 3$
\end{itemize}
Let the set $\fset[i, j]$ $\subseteq \fset$ be the subset of type-$[i, j]$ super-faces in $H[S]$ and analogously let $\eta[i, j] = |\fset[i, j]|$ for each type-$[i, j]$ super-face.
Notice that $\fset[1, \bdot] \cup \fset[2, \bdot] \cup \fset[\geq3, \bdot] \cup \{f_0\}= \fset$ and $\fset[i, \bdot] \cap \fset[j, \bdot]$ for any $i \neq j$, which implies, $|\fset| = 1+ \eta[1, \bdot] + \eta[2, \bdot] + \eta[\geq3, \bdot]$. Also, $\fset[1, j] \subseteq \fset[1, \bdot] $ for any $j \in \{0, 1\}$, hence, $\eta[1, \bdot] = \eta[1, 0] + \eta[1, 1]$.

The following lemma (whose proof will appear in Section~\ref{sec:generic-face}) gives lower bounds on the quantity $gain(f)$ for each type of super-faces in $\fset \setminus f_0$. For $f_0$ we will use Lemma~\ref{lem:outer-face} (whose proof will appear in Section~\ref{sec:outer-face}).

\begin{lemma}
\label{lem:all-bounds-factor-7} 
For any super-face $f \in \fset$, the following holds: 
\begin{enumerate}
\item If $f$ is of type-$[1, 0]$), then $gain(f) \geq 2.5$. 
\item If $f$ is of type-$[1, 1]$), then $gain(f) \geq 2$. 
\item If $f$ is of type-$[2, \bdot]$), then $gain(f) \geq 2$. 
\item If $f$ is of type-$[\geq3, \bdot]$), then $gain(f) \geq 1.5$. 
\end{enumerate}
\end{lemma}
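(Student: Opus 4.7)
The plan is to prove each of the four bullets by directly computing $\mu(f)$ and upper-bounding $survive(f)$ via a triangulation argument. Using the identity $|E(f)| = |Free(f)| + |Occ(f)|$, one first rewrites
\[\mu(f) = |E(f)| - \tfrac{1}{2}\,|Occ(f)|,\]
where, from the definitions, $|Occ(f)| = p_1^{base}(f) + a_2(f) + a_1^{occ}(f)$ and $|Free(f)| = 2p^{free}(f) + p_0^{base}(f) + a_1^{free}(f)$.

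The central combinatorial step is that any triangulation of the interior of $f$ by type-$0$ edges produces exactly $|E(f)| - 2$ triangulation-triangles, and a triangulation-triangle contributes to $survive(f)$ only if no occupied edge of $f$ lies on its boundary, because the cross triangle drawn on the $f$-side of an occupied edge $e$ is necessarily enclosed in whichever triangulation-triangle contains $e$. Letting $\kappa(f)$ be the minimum, over all triangulations, of the number of triangulation-triangles incident to at least one occupied edge, this yields
\[survive(f) \leq |E(f)| - 2 - \kappa(f).\]
I would then record three basic facts about $\kappa(f)$: (a) $\kappa(f) \geq 1$ whenever $|Occ(f)| \geq 1$; (b) when $|E(f)| \geq 4$ every triangulation-triangle shares at most two edges with the face boundary, so $\kappa(f) \geq \lceil |Occ(f)|/2 \rceil$; (c) in the degenerate case $|E(f)| = |Occ(f)| = 3$ the unique triangulation-triangle is killed, giving $\kappa(f) = 1$ and $survive(f) = 0$, matching the extremal example of Figure~\ref{fig:tight-1.5-face}.

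With these in hand, the four bullets follow from a short case analysis. For type-$[1,0]$, the defining conditions force $|Occ(f)| = 1$, so by (a) $survive(f) \leq |E(f)| - 3$ and $\mu(f) = |E(f)| - \tfrac{1}{2}$, giving $gain(f) \geq 2.5$. For type-$[1,1]$, one has $|Occ(f)| = 0$, hence $\mu(f) = |E(f)|$ and $survive(f) \leq |E(f)| - 2$ yields $gain(f) \geq 2$. For type-$[2,\bdot]$ I split on $|Occ(f)| \in \{0,1,2\}$; the tight sub-case is $|Occ(f)| = 2$, where (a) gives $survive(f) \leq |E(f)| - 3$ and $\mu(f) = |E(f)| - 1$, hence $gain(f) \geq 2$, while the other sub-cases give a strictly larger margin. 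For type-$[\geq 3, \bdot]$, sub-cases $|Occ(f)| \leq 2$ again give $gain(f) \geq 2$ from (a); the configuration in (c) gives exactly $gain(f) = 1.5$; and when $|Occ(f)| \geq 3$ with $|E(f)| \geq 4$, fact (b) delivers $gain(f) \geq 2 + \lceil |Occ(f)|/2 \rceil - \tfrac{1}{2}|Occ(f)| \geq 2$.

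I expect fact (b) to be the most delicate step: one must verify that no triangulation-triangle of a face of length at least four covers three boundary edges (a standard consequence of the observation that such a triangulation-triangle would force the remaining polygon vertices to detach) and then apply pigeonhole to deduce that at least $\lceil |Occ(f)|/2 \rceil$ triangulation-triangles must be incident to occupied edges. A secondary subtlety, which I would handle by a short reduction to the simple-polygon setting, is that a super-face boundary may be a closed walk rather than a simple cycle when the skeleton $H$ contains bridges; this affects neither the triangulation counts nor the occupied-edge incidences, so the same bounds carry over.
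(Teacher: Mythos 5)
Your proposal is correct and follows essentially the same route as the paper: the paper's Lemmas \ref{lem:sf-trivial-longface} and \ref{lem:sf-trivial-shortface} are exactly your facts (a)--(c) (triangulate to get $|E(f)|-2$ triangles after decoupling shared landing components, note each occupied edge kills its triangle while each triangle absorbs at most two occupied edges unless $|E(f)|=|Occ(f)|=3$), and your case split on $|Occ(f)|$ reproduces the paper's plug-in of $|Occ(f)|=1$ for type-$[1,0]$ and $|E(f)|>3$ for the remaining types. The only cosmetic difference is that the paper makes the ``complete any partial drawing to a triangulation'' step rigorous via the explicit dummy-vertex decoupling of cross triangles, which you leave implicit.
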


\subsection{Proof for Lemma~\ref{lem:gain-factor-7}}
\label{sec:proof-factor-7-gain-lemma} 

Notice that the bounds for type-$[1,\bdot]$ and type-$[2,\bdot]$ are better than the trivial bound of $1.5$, which leads to the improvement from $7.5$ to $7$.

We apply Lemma~\ref{lem:outer-face} and Lemma~\ref{lem:all-bounds-factor-7} to $\sum_{f \in \fset} gain(f)$, depending on the type of each super-face: In particular, this includes the lower bounds for each super-face of type-$[1, 0]$, type-$[1, 1]$, type-$[2, \bdot]$, type-$[\geq 3]$ and the outer-face $f_0$. 

\begin{align*}
-(\sum_{f} gain(f)) &\leq  (1 - \phi(S))- \sum_{f \in \fset[1, 0]} 2.5 - \sum_{f \in \fset[1, 1]} 2 - \sum_{f \in \fset[2, \bdot]} 2 - \sum_{f \in \fset[\geq3, \bdot]} 1.5  \\ 
& =  1 - \phi(S) - 2.5 \eta[1, 0] - 2 \eta[1, 1] - 2 \eta[2, \bdot] - 1.5 \eta[\geq3, \bdot] \\ 
\end{align*}
Here we use the fact that $|\fset| = \eta[1,\bdot] + \eta[2,\bdot] + \eta[3,\bdot]+ 1$. 
\begin{align}\label{form:sum over super-faces}
-(\sum_{f} gain(f)) &\leq 1 - \phi(S) - 2.5 (|\fset|-1) + \fbox{$ 0.5 \eta[1, 1]  + 0.5 \eta[2, \bdot] +\eta[\geq 3, \bdot]$} \nonumber \\ 
& = 3.5 - \phi(S) -2.5 |\fset| + \fbox{$ 0.5 \eta[1, 1]  + 0.5 \eta[2, \bdot] +\eta[\geq 3, \bdot]$} 
\end{align}

Next, we deal with the ``residual terms'' highlighted in the formula above by the box. For this purpose, we present various upper bounds on the number of super-faces of certain type: 

\begin{lemma}[Two upper bounds on the number of super-faces]
\label{lem:two-bounds-factor-7}  
The following upper bounds hold: 

\begin{enumerate}
    \item $\eta[1,1] \leq a_1$. 
    \item $\eta[2, \bdot] + 2 \eta[\geq 3, \bdot] \leq p_1 + |\fset| -2$.  
\end{enumerate}
\end{lemma}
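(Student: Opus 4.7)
The plan is to establish both bounds via double-counting arguments over the super-faces in $\fset$, with only a mild extra observation needed to handle the outer-face $f_0$ in the second bound.

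For the first bound $\eta[1,1] \leq a_1$, I plan a simple injection $\fset[1,1] \hookrightarrow \{\text{type-}1\text{ edges}\}$. By definition, every $f \in \fset[1,1]$ has $a_1^{free}(f) = 1$, so its boundary contains exactly one type-$1$ edge appearing on its \emph{free} side. Since every type-$1$ edge has a unique free side that lies on a single super-face, distinct members of $\fset[1,1]$ are witnessed by distinct type-$1$ edges, which yields $\eta[1,1] \leq a_1$.

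For the second bound I plan to double-count the quantity $N := \sum_{f \in \fset}\bigl(p_1^{base}(f) + a_2(f) + a_1(f)\bigr)$. Counting edge-by-edge gives $N = p_1 + 2a_1 + 2a_2$: each base edge of a type-$1$ triangle lies in exactly one super-face (the one opposite the triangular face in $H$), and each type-$1$ or type-$2$ edge, being a non-bridge of the skeleton $H$ (removing it leaves the spanning cactus $\cset[S]$ intact), separates two distinct super-faces and therefore contributes $2$. Counting face-by-face, each non-outer $f \in \fset[i,\bdot]$ contributes at least $i$ (with equality for $i \in \{1,2\}$ and $\geq 3$ for $i = \geq\!3$), while the outer-face $f_0$ contributes $n(f_0) := p_1^{base}(f_0) + a_2(f_0) + a_1(f_0)$.

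The only real subtlety will be to argue $n(f_0) \geq 1$. For this I will use that in the all-heavy regime Proposition~\ref{prop:structure-heavy} forces $B_{uv}^t \neq \emptyset$ for every cactus triangle $t$, hence $a_1 + a_2 \geq 1$ and $|\fset| \geq 2$, so Observation~\ref{obs:f_a1_a2} applies to the outer-face as well and gives $a_1(f_0) + a_2(f_0) \geq 1$. Combining the two counts then yields
\[\eta[1,\bdot] + 2\eta[2,\bdot] + 3\eta[\geq 3,\bdot] \;\leq\; p_1 + 2a_1 + 2a_2 - 1,\]
and substituting the identity $a_1 + a_2 = |\fset| - 1 = \eta[1,\bdot] + \eta[2,\bdot] + \eta[\geq 3,\bdot]$ from Lemma~\ref{lem:skeleton-faces} and rearranging produces exactly $\eta[2,\bdot] + 2\eta[\geq 3,\bdot] \leq p_1 + |\fset| - 2$, which is the desired bound. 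Once the $n(f_0) \geq 1$ argument is in place, the remaining algebra is routine.
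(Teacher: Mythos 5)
Your proposal is correct and follows essentially the same route as the paper: the first bound via the injection from type-$[1,1]$ super-faces to the free sides of type-$1$ edges, and the second via double-counting $\sum_{f \in \fset}(p_1^{base}(f) + a_1(f) + a_2(f)) = p_1 + 2a_1 + 2a_2$ against the per-face lower bounds, with the outer-face contributing at least $1$ by Observation~\ref{obs:f_a1_a2}. Your extra care in justifying $n(f_0) \geq 1$ is exactly the step the paper leaves implicit, and the algebra matches.
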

\begin{proof}
We start by proving the first upper bound. Since $a_1^{free}(f)=1$ for a type-$[1,1]$ super-face $f$ and each type-$1$ edge can contribute to $a_1^{free}(f)$ to exactly one super-face in $\fset$, we have that $\eta[1,1] \leq a_1$.

The second upper bound can be proved by a simple charging argument. 
On each super-face $f \in \fset$, we give $1$ unit of money to a certain set of edges on the face. 
In particular, each of the following types of edges gets a unit: (i) base of the type-$1$ cactus triangle, (ii) type-$1$ edge, and (iii) type-$2$ edge. 
Therefore, the total amount of money put into the system is exactly: 
$$\sum_{f \in \fset} (p_1^{base}(f) + a_1(f) + a_2(f)) = p_1 + 2 a_1 + 2a_2 = p_1 + 2 |\fset| -2$$
Counting from a different viewpoint, each super-face of type-$[j,\bdot]$ receives at least $j$ units of money, so the total amount is at least $1+\eta[1,\bdot] + 2\eta[2,\bdot] + 3\eta[\geq 3,\bdot] = |\fset| + \eta[2, \bdot] + 2 \eta[\geq 3, \bdot]$. 
This immediately implies the inequality: 
\[|\fset| + \eta[2,\bdot] + 2 \eta[\geq 3, \bdot] \leq p_1 + 2 |\fset| -2\] 

\end{proof}

Applying Lemma~\ref{lem:two-bounds-factor-7} to Inequality~(\ref{form:sum over super-faces}), we get that 
\begin{align} \label{form:sum over super-faces 2}
-(\sum_{f} gain(f)) &\leq 3.5 - \phi(S)  -2.5 |\fset| + 0.5 (a_1 + p_1 + |\fset| - 2) \nonumber \\ 
 & = 2.5 - \phi(S) - \fbox{$2 |\fset|$} + 0.5a_1 + 0.5 p_1
\end{align}

Using equality $|\fset| = a_1 + a_2 + 1$ in Inequality~(\ref{form:sum over super-faces 2}), we get:
\begin{align} \label{form:sum over super-faces 3}
-(\sum_{f} gain(f)) &\leq 2.5 - \phi(S)  -2 a_2 - 1.5a_1 - 2 + 0.5 p_1 = 0.5 - \phi(S) + \fbox{$a_1 + a_2 $} -2.5 a_1 - 3a_2 + 0.5 p_1
\end{align}

Using Lemma~\ref{lem:skeleton-faces} in Inequality~(\ref{form:sum over super-faces 3}), we get:
\[-(\sum_{f} gain(f)) \leq  0.5 - \phi(S) + 2p - 2 - 2.5 a_1 - 3a_2 + 0.5 p_1 = - \phi(S) + (2p + 0.5p_1 - 2.4 a_1 - 3 a_2 - 1.5)\] 

\subsection{Handling the Non-Outer-Faces (Proof of Lemma~\ref{lem:all-bounds-factor-7})} \label{sec:generic-face}

We split the proof of Lemma~\ref{lem:all-bounds-factor-7} into three parts. First we show an upper bound for the number of surviving triangles if a super-face $f$ has $|E(f)| >3$ or $a_1^{free}(f) + p_0^{base}(f) > 0$. 
Then we show that $survive(f) \leq \mu(f) - 1.5$, if $|E(f)| = 3$ and $a_1^{free}(f) + p_0^{base}(f) = 0$.
Finally we combine both results to give the upper bound for the number of surviving triangles in each type of super-face in $\fset$.

\begin{lemma}
\label{lem:sf-trivial-longface}
Let $f \in \fset$, if $|E(f)| >3$ or $a_1^{free}(f) + p_0^{base}(f) > 0$ we have
\[survive(f) \leq |Free(f)| + \floor{\frac{|Occ(f)|}{2}} - 2 \]
\end{lemma}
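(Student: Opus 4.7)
The plan is to analyze an optimal triangulation of the interior of $f$ obtained by adding type-$0$ edges. Since $survive(f)$ is defined as the maximum achievable over such completions, we may work with any convenient triangulation; moreover, no added type-$0$ edge is ever occupied (it is neither cactus, type-$1$, nor type-$2$), so killing of triangles can only come from occupied boundary edges of $f$. A standard polygon triangulation argument shows that the interior of $f$ breaks into exactly $|E(f)| - 2$ triangles, with each boundary edge of $f$ lying on exactly one of them and each interior edge lying on two. Let $k_j$ denote the number of triangles in the triangulation carrying exactly $j$ occupied boundary edges, for $j \in \{0,1,2,3\}$. By definition $k_0 = survive(f)$, and
\[
\sum_{j=0}^{3} k_j \;=\; |E(f)| - 2, \qquad \sum_{j=0}^{3} j \cdot k_j \;=\; |Occ(f)|.
\]
Eliminating $k_0$ we get $survive(f) = (|E(f)| - 2) - (k_1 + k_2 + k_3)$, so it suffices to establish $k_1 + k_2 + k_3 \geq \lceil |Occ(f)|/2 \rceil$.

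The key step is to argue that $k_3 = 0$ under the hypothesis of the lemma. If some triangle in the triangulation had all three edges on the boundary of $f$, then $f$ itself would be that triangle, so $|E(f)| = 3$; furthermore, all three boundary edges would be occupied, hence would be of type $p_1^{base}$, $a_2$, or $a_1^{occ}$. None of these edge types contributes to $a_1^{free}(f) + p_0^{base}(f)$, so this would force $a_1^{free}(f) + p_0^{base}(f) = 0$ on top of $|E(f)| = 3$, contradicting the assumption of the lemma. Therefore $k_3 = 0$, and the constraint $k_1 + 2 k_2 = |Occ(f)|$ gives $k_1 + k_2 \geq |Occ(f)| - \lfloor |Occ(f)|/2 \rfloor = \lceil |Occ(f)|/2 \rceil$. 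Substituting back,
\[
survive(f) \;\leq\; (|E(f)| - 2) - \lceil |Occ(f)|/2 \rceil \;=\; |Free(f)| + \lfloor |Occ(f)|/2 \rfloor - 2,
\]
as required.

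The only mildly delicate point is the bookkeeping that justifies $k_3 = 0$: we must be careful that the possible constitutions of a $3$-edge face really exhaust to $p_1^{base}$, $a_2$, $a_1^{occ}$, $p_1^{free}$, $p_0^{free}$, $a_1^{free}$, and $p_0^{base}$, and that free cactus edges appear in pairs on the same super-face (Claim~\ref{claim:free-edges-same-face}), so that an ``all occupied'' triangular face $f$ can indeed only be made of $p_1^{base}$, $a_2$, and $a_1^{occ}$ edges. Beyond this classification check, the argument is a direct polygon-triangulation counting, which I expect to be the easiest part of the analysis; the harder work has already been isolated into the $|E(f)| = 3$ corner cases that are separately handled by the companion lemma $survive(f) \leq \mu(f) - 1.5$.
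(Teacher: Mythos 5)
Your proof is correct and follows essentially the same route as the paper: (decouple and) triangulate the interior of $f$ into $|E(f)|-2$ triangular faces and observe that each occupied boundary edge kills its incident face while no face can absorb more than two occupied edges, yielding the $\lceil |Occ(f)|/2\rceil$ loss. The only difference is cosmetic — your $k_3=0$ observation folds the $|E(f)|=3$, $a_1^{free}(f)+p_0^{base}(f)\geq 1$ corner cases into the general count, whereas the paper disposes of them by direct enumeration before running the triangulation argument for $|E(f)|>3$.
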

\begin{proof}
If $|E(f)| = 3$ and $a_1^{free}(f) + p_0^{base}(f) \geq 1$, it is easy to enumerate all possible compositions of the face boundary of $f$ and check for each case that the claimed bound holds.
\begin{itemize}
    \item ($a_1^{free}(f) + p_0^{base}(f) = 1$:) In this case, $survive(f) = 0$, $|Free(f)| =1$ and $|Occ(f)| =2$. 
    
    \item ($a_1^{free}(f) + p_0^{base}(f) = 2$:) In this case, $survive(f) = 0$, and $|Free(f)| =2$.  
    
    \item ($a_1^{free}(f) + p_0^{base}(f) = 3$:) In this case, $survive(f)= 1$, $|Free(f)| = 3$, and $|Occ(f)| = 0$. 
\end{itemize}

Now consider the case where $|E(f)| > 3$. In order to bound $survive(f)$ in this case, we {\em locally} modify the internal structure for a fixed $f$ in a special way. Notice that we make these modifications only for counting purposes and they do not change the structure of our graph $G$ in any way. First we {\em decouple} the supported cross triangles drawn inside $f$ which share their landing components by adding a dummy landing vertex for each such cross triangle and making the new dummy vertex its landing component. Then using additional type-$0$ edges we triangulate the super-face $f$ in an arbitrary way.
Note that the decoupling step allows us to get a full triangulation for $f$ and at the same time this operation does not reduce the value of $survive(f)$ for $f$ (see Figure~\ref{fig:decoupling-triangulation-f} for illustration). Hence, any bound which we get after performing this operation also holds for the original quantity $survive(f)$. 
This triangulation of the super-face $f$ has exactly $|E(f)| - 2$ triangular faces. Starting with this bound, we use the particular structure of $f$ to achieve the desired bound for $survive(f)$.

    \begin{figure}[H]
        \centering
        \includegraphics[width=0.9\textwidth]{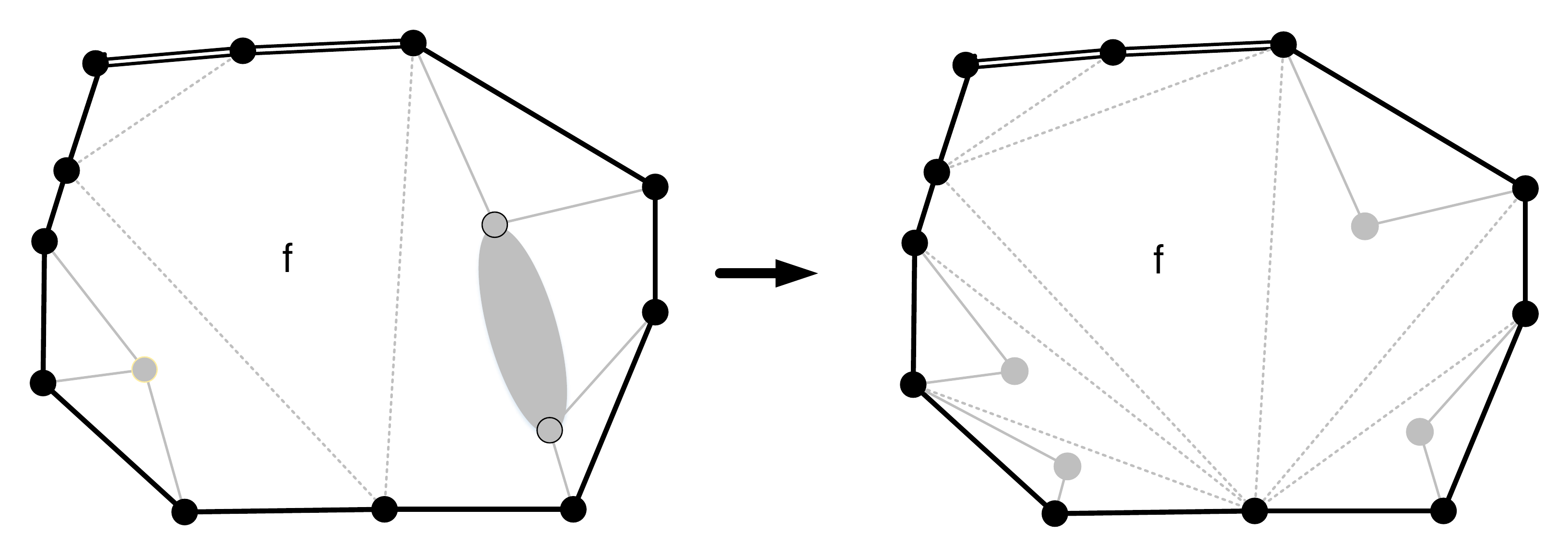}
        \caption{The decoupling and triangulation operation for some super-face $f \in \fset$. Notice that we make these modifications only for counting purposes and they do not change the structure of our graph $G$ in any way.}
        \label{fig:decoupling-triangulation-f}
    \end{figure}

By Observation~\ref{rem:surviving} no edge of type-$2$, occupied side of a type-$1$ edge or base side of a type-$1$ triangle can be adjacent to any triangular face in $survive(f)$. Also, at most two of these edges could belong to any triangular face in $f$. Hence, out of all the potential $|E(f)| - 2$ faces in the triangulate super-face $f$, at least $\ceil{\frac{|Occ(f)|}{2}}$ faces will be killed and hence we get the claimed bound on $survive(f)$.
\end{proof}

For other cases, we can still get a slightly weaker bound. 

\begin{lemma}
\label{lem:sf-trivial-shortface}
Otherwise, if $|E(f)| = 3$ and $a_1^{free}(f) + p_0^{base}(f) = 0$, then we have 
\[survive(f) \leq \mu(f) - 1.5\]
\end{lemma}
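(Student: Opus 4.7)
The plan is to proceed by a case analysis on the profile $(|Free(f)|, |Occ(f)|)$ of the boundary of $f$. Since $a_1^{free}(f) + p_0^{base}(f) = 0$, the identity
\[
|Free(f)| = 2p^{free}(f) + a_1^{free}(f) + p_0^{base}(f) = 2p^{free}(f)
\]
forces $|Free(f)|$ to be even. Combined with $|E(f)| = 3$ and Observation~\ref{obs:f_a1_a2}, the only admissible profiles are $(|Free(f)|, |Occ(f)|) = (0,3)$ and $(2,1)$, for which $\mu(f) - 1.5$ equals $0$ and $1$ respectively.

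First I would treat the case $(|Free(f)|, |Occ(f)|) = (2,1)$. By Claim~\ref{claim:free-edges-same-face}, the two free edges of $f$ must be the two free edges of a single cactus triangle $t = uvw$ with free vertex $w$, namely $uw$ and $vw$; these share the vertex $w$, so the third (occupied) edge of $f$ is forced to connect $u$ and $v$. Hence the only $S$-vertices bordering $f$ are $u,v,w$, and the unique triangle on three such vertices is $uvw$ itself. Therefore at most one surviving triangle can arise in the interior of $f$ (realized by an inner copy of the $uv$ edge of type-$0$ placed between $w$ and the cross triangle), giving $survive(f) \leq 1 = \mu(f) - 1.5$.

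Next I would treat the case $(|Free(f)|, |Occ(f)|) = (0,3)$. All three boundary edges are occupied, so each supports a cross triangle drawn inside $f$; by Observation~\ref{Observation:different-components-green-blue-cycle} these three cross triangles share a common landing component. Applying the decoupling step used in the proof of Lemma~\ref{lem:sf-trivial-longface}, we may assume they have three distinct landing vertices outside $S$. The only $S$-vertices incident to $f$ are the (at most three) endpoints of its boundary, and every pair of them that could bound a triangle is already connected by a boundary edge; any triangular face of $G$ on three $S$-vertices drawn inside $f$ would therefore have to be realized via an inner copy of one of the boundary edges, but in every such situation the three cross triangles occupy precisely the sub-region that such a copy would have to cross, preventing the alleged triangle from being a face of $G$. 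Hence $survive(f) = 0 = \mu(f) - 1.5$.

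The main obstacle I expect is a careful treatment of the $(0,3)$ case when the three occupied edges have varying types (namely $a_1^{occ}$, $a_2$, or $p_1^{base}$). Sub-configurations involving two base edges of type-$1$ cactus triangles must be ruled out via the cactus property (no two triangles share an edge), and sub-configurations in which boundary vertices of $f$ coincide require delicate use of Observation~\ref{Observation:different-components-green-blue-cycle} to constrain how the cross triangles can be drawn inside $f$.
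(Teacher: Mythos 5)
Your conclusion is right, but you have made the lemma harder than it is, and the extra case you introduce cannot actually occur. The paper's proof is a two-line substitution: $|E(f)|=3$ forces $p^{free}(f)=0$, because if $f$ contained the two free edges $uw,wv$ of a cactus triangle $t$, the third boundary edge of $f$ would have to join $u$ and $v$; since $G$ is simple, the only such edge is the cactus base edge $uv$ of $t$ itself, and by Observation~\ref{Observation:different-face-base-free-t} (and the fact that $\fset$ excludes the cactus triangle faces) that edge cannot bound the same super-face as the free sides. Hence the profile $(|Free(f)|,|Occ(f)|)=(2,1)$ is vacuous, $|Free(f)|=0$, $|Occ(f)|=3$, $\mu(f)=1.5$, and $survive(f)=0$ is immediate: a length-$3$ face admits no interior type-$0$ edges (they would be parallel edges), so the only candidate triangular face is $f$ itself, which is killed because all its edges are occupied (Observation~\ref{rem:surviving}). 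Two points in your write-up would not survive scrutiny even though they do not affect the final bound: (i) your argument in the $(2,1)$ case, and again in the $(0,3)$ case, appeals to ``an inner copy'' of a boundary edge of type-$0$; no such copy exists in a simple plane graph, so this device should be removed rather than relied upon; (ii) Observation~\ref{Observation:different-components-green-blue-cycle} does not assert that the three cross triangles drawn inside $f$ share a landing component --- it says the opposite for triangles separated by a circuit --- and in any case no statement about landing components is needed here.
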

\begin{proof}
Notice that $|E(f)| = 3$ implies $p^{free}(f) = 0$. 
Hence the first inequality is trivially true by substituting the value $a_1^{occ}(f) + a_2(f) + p_1^{base}(f) = 3$ and $2 p^{free}(f) + a_1^{free}(f) + p_0^{base}(f) = 0$. 
\end{proof}

Now, we are ready to complete the proof of Lemma~\ref{lem:all-bounds-factor-7}. 
For any type-$[1, 0]$ super-face $f$, $|Occ(f)| = 1$ and $|E(f)| > 3$, hence using Lemma~\ref{lem:sf-trivial-longface}, we get
\[survive(f) \leq |Free(f)| + \floor{\frac{|Occ(f)|}{2}} - 2 = |Free(f)| + \frac{|Occ(f)|}{2} - 2.5 = \mu(f) - 2.5\]
For any type-$[1, 1]$ or type-$[2, \bdot]$ super-face $f$ we have that $|E(f)| > 3$, hence by Lemma~\ref{lem:sf-trivial-longface}, we get
\[survive(f) \leq |Free(f)| + \floor{\frac{|Occ(f)|}{2}} - 2 \leq |Free(f)| + \frac{|Occ(f)|}{2} - 2 = \mu(f) - 2\]
For any type-$[\geq3, \bdot]$ super-face $f$, if $|Occ(f)| = |E(f)| = 3$, then Lemma~\ref{lem:sf-trivial-shortface} implies
\[survive(f) \leq |Free(f)| + \floor{\frac{|Occ(f)|}{2}} - 1.5 \leq |Free(f)| + \frac{|Occ(f)|}{2} - 1.5 = \mu(f) - 1.5\]
Otherwise using Lemma~\ref{lem:sf-trivial-longface} we get
\[survive(f) \leq |Free(f)| + \floor{\frac{|Occ(f)|}{2}} - 2 \leq |Free(f)| + \frac{|Occ(f)|}{2} - 2 = \mu(f) - 2\]

\subsection{Handling the Outer-Face $f_0$ (Proof of Lemma~\ref{lem:outer-face})} 
\label{sec:outer-face}

In this section we will prove that $survive(f_0) \leq \mu(f_0) - g(S) + 1$. 
If $\phi(S) \leq 3$, this bound can be easily achieved by enumerating all possible compositions of the face boundary of $f_0$. If $\phi(S) > 3$, the $\phi(S)$ term in the bound we want to prove becomes more significant and hence this case needs special treatment.

In contrast to the other super-faces in $\fset$, the number of surviving triangles in $f_0$ also depends on $\phi(S)$. We first give an intuition on how this term influences the number of surviving triangles in $f_0$ and then use the idea behind it to prove Lemma~\ref{lem:outer-face}.
Starting from $G[S]$, we can construct an auxiliary graph $\widetilde{G}$ by modifying the outer-face $f_S$, such that this part of the graph is fully triangulated 
using type-$0$ edges, such that in total we obtain $\phi(S) - 2$ extra triangles.
Also, in this process the structure of the free and occupied edges of the outer-face (say $\widetilde{f}_0$) of the subgraph $\widetilde{H} := \widetilde{G} \setminus A$ (where $A$ is the set of type-$0$ edges) of $\widetilde{G}$ remains exactly the same as that of the original outer-face $f_0$ of  $H[S]$. 
Finally, we use the trivial upper bound given by Lemma~\ref{lem:sf-trivial-longface} on the number of triangular faces drawn inside the outer-face $\widetilde{f}_0$ in graph $\widetilde{G}$, which in turn gives us the $-\phi(S)$ term for the bound on the number of triangular faces drawn inside the outer-face $f_0$ in graph $G[S]$.
Notice that the modified graph $\widetilde{G}$ is created only for counting purposes and the modification does not change the structure of our original graph $G$ in any ways.

The following lemma formalizes this idea of triangulating the outer-face.

\begin{lemma}
\label{lem:gs_triangulation}
For the graph $G[S]$ with outer-face $f_S$ having $\phi(S) > 3$ free edges, there exists another simple planar graph $\widetilde{G}$ with outer-face $\widetilde{f_S}$, such that 
\begin{itemize}
    \item The graphs $\widetilde{G}$ and $G$ only differ inside the outer-face $f_S$ of $G[S]$.
    \item The structure of the outer-face $\widetilde{f}_0$ for the graph $\widetilde{H} := \widetilde{G} \setminus A$ (where $A$ is the set of type-$0$ edges) is the same as that of $f_0$, \ie $|Occ(f_0)| = |Occ(\widetilde{f}_0)|$ and $|Free(f_0)| = |Free(\widetilde{f}_0)|$.
    \item There are at least $\phi(S) - 2$ extra surviving triangles drawn inside the outer-face $\widetilde{f}_0$ in $\widetilde{G}$ as compared to the outer-face $f_0$ in $G[S]$.
\end{itemize}
\end{lemma}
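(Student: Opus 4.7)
The plan is to construct $\widetilde G$ by a \emph{fan of type-$0$ chords} in the outer face $f_S$ anchored at an endpoint of a free edge, then count surviving fan faces directly. Label the occurrences of vertices along the boundary walk of $f_S$ as $v_1, v_2, \ldots, v_n$ with $n = \ell(S)$, where by the hypothesis $\phi(S) > 3$ we may assume that the outer-walk edge $v_1 v_2$ is free. Define $\widetilde G$ to be $G$ together with the new edges $v_1 v_k$ for $3 \leq k \leq n-1$, each drawn as a planar curve inside $f_S$ and each declared type-$0$ (none is a cactus edge and none supports a cross triangle, by construction). If any such chord already exists in $G$ as a type-$0$ edge, I simply skip it; this does not alter the triangulation of $f_S$ and keeps $\widetilde G$ simple.

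Properties (i) and (ii) of the lemma are direct from the construction: all modifications lie inside $f_S$ and introduce no new vertex, and all added edges are type-$0$ and hence members of $A$, so $\widetilde H = \widetilde G \setminus A$ equals $H[S]$. Consequently $\widetilde f_0 = f_0$ with identical free/occupied decomposition of its boundary.

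Before the counting I would route the chords with care so that (a) no new chord crosses any edge of $G$; (b) each cross triangle supported by an occupied outer edge $v_p v_{p+1}$ ends up on the polygon side of every chord $v_1 v_k$ with $k > p$, and hence inside the bounded fan triangle $(v_1, v_p, v_{p+1})$; and (c) any other component of $G$ lying inside $f_S$ is pushed into the unbounded fan region. This is achievable by inserting the chords sequentially and in each step routing the new chord through the currently ``empty'' subregion of $f_S$, but it is topological bookkeeping and I expect it to be the main obstacle in the full write-up, especially when the boundary walk has repeated vertices because $\cset[S]$ is a cactus rather than a simple cycle.

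Granted the routing, the fan produces $n-2$ triangular faces of $\widetilde G$ inside $f_S$, all of the form $(v_1, v_i, v_{i+1})$. A middle face with $3 \le i \le n-2$ is surviving iff $v_i v_{i+1}$ is free; the first face is surviving iff $v_2 v_3$ is free, since $v_1 v_2$ already is; and the last face is surviving iff both $v_{n-1} v_n$ and $v_n v_1$ are free. Writing $a = |\{i : 3 \leq i \leq n-2 \text{ and } v_i v_{i+1} \text{ is free}\}|$ and letting $b, d, e \in \{0,1\}$ be the freeness indicators of $v_2 v_3, v_{n-1} v_n, v_n v_1$ respectively, I get $\phi(S) = 1 + b + a + d + e$, while the number of surviving fan triangles equals $b + a + de$. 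Using $de \geq d + e - 1$ for $d,e \in \{0,1\}$ this is at least $\phi(S) - 2$. Since the fan faces are new triangular faces of $\widetilde G$ and the routing leaves every previously surviving triangle of $G$ untouched, this produces at least $\phi(S) - 2$ extra surviving triangles and completes (iii).
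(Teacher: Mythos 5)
Your construction is the same star/fan triangulation of the outer face that the paper uses, and your counting at the end is correct (in fact the identity $\phi(S)=1+b+a+d+e$ together with $de\ge d+e-1$ is a cleaner way to package the paper's three-case analysis of the two end triangles). The problem is precisely the step you flag as ``topological bookkeeping'': your routing requirement (b) is not achievable in general, and sequential insertion of chords does not rescue it. Two cross triangles drawn inside $f_S$ and supported by two non-adjacent occupied boundary edges can (and, by the $1$-swap exchange argument, in many configurations \emph{must}) share a single connected landing component $L$. Then $L$ together with the two pairs of cross edges forms a connected barrier inside $f_S$ joining the two occupied edges, and any fan chord $v_1v_k$ that separates those two edges has to cross it; so no plane drawing of the full fan exists in $G$ itself. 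The paper resolves this by first \emph{decoupling}: each cross triangle inside $f_S$ gets its own dummy landing vertex, which makes $G[V(f_S)]$ outer-planar, after which the fan can be drawn and each cross triangle sits inside its own fan triangle. This decoupling does not change which boundary edges are free or occupied, so your counting goes through verbatim afterwards; without it, the construction is incomplete.

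A second, smaller gap is the choice of apex. You anchor at an endpoint $v_1$ of a free edge and only excuse chords that already exist as type-$0$ edges; but $v_1v_k$ may already exist in $G$ as a cactus, type-$1$ or type-$2$ edge (drawn in the bounded part of $G[S]$), in which case adding the chord violates simplicity and skipping it destroys the fan face structure, and $v_1$ may also be a repeated vertex of the boundary walk, making some chords self-loops. The paper sidesteps all of this by choosing the apex to be a vertex of degree two in $G[V(f_S)]$ (such a vertex exists once the decoupled graph is outer-planar): its only neighbours among $V(f_S)$ are its two walk-neighbours, so no chord from it can duplicate an existing edge or be a loop. Note that this choice costs you nothing in the counting, because the surviving-triangle tally in the paper does not assume the apex is incident to a free edge.
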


\begin{proof}
In order to prove this lemma, we will transform $G[S]$ to $\widetilde{G}$ by creating at least $\phi(S) - 2$ new surviving triangles in $f_S$ by first pre-processing and then triangulating $f_S$ using extra type-$0$ edges in a specific way.

First we {\em decouple} the supported cross triangles drawn inside $f_S$ which share their landing components by adding a dummy landing vertex for each such cross triangle and making the new dummy vertex its landing component. 
Notice that the decoupling step makes the induced graph $G[V(f_S)]$ an outer-planar graph, where $V(f_S)$ are the vertices contained in face $f_S$. 
Also, it does not change the structure of the graph $G$ anywhere else except inside face $f_S$. 
Since $G[V(f_S)]$ is outer-planar, there exists a vertex $u_1 \in V(f_S)$, such that the degree of $u_1$ in $G[V(f_S)]$ is two. Now we number the vertices in the face $f_S$ in clockwise order as $u_1, u_2, \dots u_{\ell_S}$, where $u_1$ is the degree $2$ vertex in $G[V(f_S)]$. Next we triangulate the outer-face $f_S$ by adding a star of type-$0$ edges with vertex $u_1$ as the root for this star and vertices $u_3, u_4 \dots u_{\ell_S - 1}$ as the leaves of the star (see Figure~\ref{fig:f_S-triangulation}). 
This completes the construction of our auxiliary graph $\widetilde{G}$. Notice that this operation cannot create a parallel edge in $\widetilde{G}$, implied by the way we fixed $u_1$. Also, the decoupling and triangulation will maintain the planarity of $\widetilde{G}$. Finally, it is easy to see that the occupied and the free edges for the outer-face $\widetilde{f}_0$ of graph $\widetilde{H}$ are the same as that of the original outer-face $f_0$, hence the second property is satisfied.

    \begin{figure}[H]
        \centering
        \includegraphics[width=0.9\textwidth]{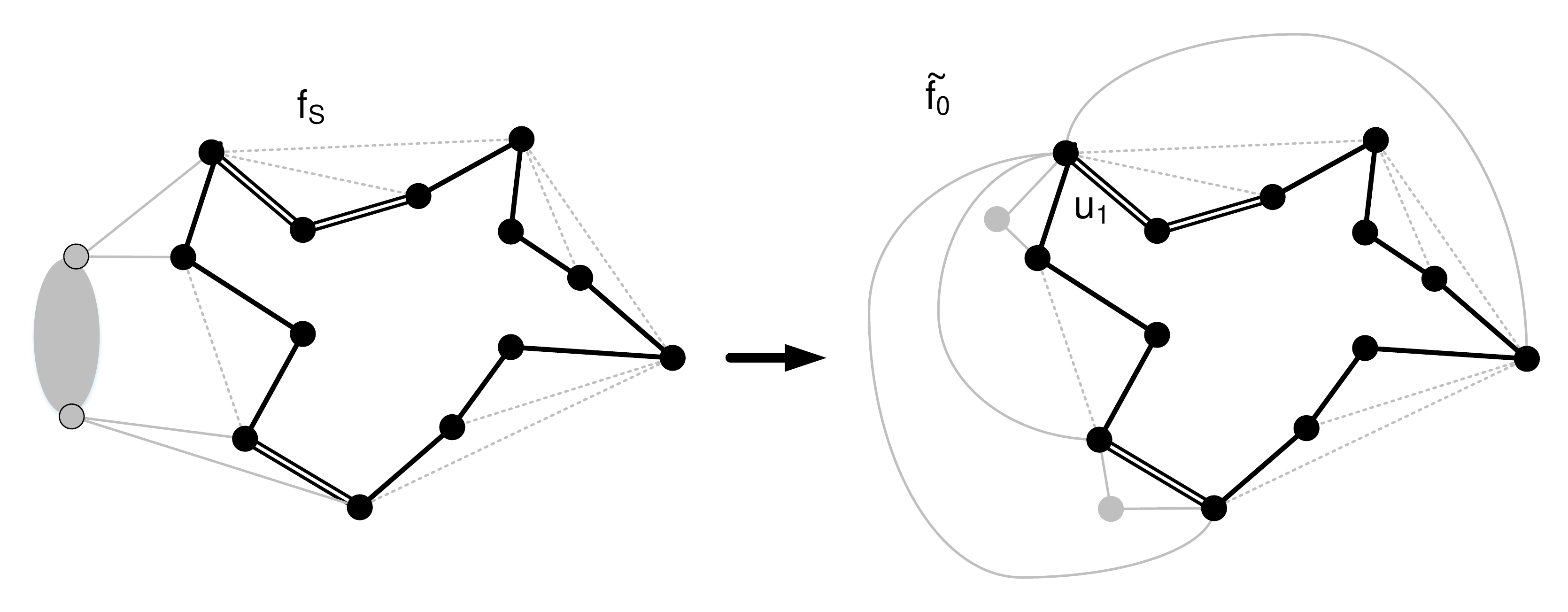}
        \caption{The decoupling and triangulation of the face $f_S$. On the left $f_S$ is identical to the outer face of the drawn graph when ignoring the gray solid (cross) edges and component. On the right $\widetilde{f}_0$ is formed by ignoring all gray (dotted (type-$0$) and solid (cross)) edges.}
        \label{fig:f_S-triangulation}
    \end{figure}

Each of the triangles $(u_1, u_2, u_3)$ and $(u_{\ell_S - 1}, u_{\ell_S}, u_1)$ could either survive if both the edges coming from $f_S$ are free or not survive if at least one of these edges is occupied. Any triangle of the form $(u_1, u_i, u_{i+1})$ for $2<i<\ell_S -1$ will survive if the $(u_i, u_{i+1})$ edge is free. Now if both the triangles $(u_1, u_2, u_3)$ and $(u_{\ell_S - 1}, u_{\ell_S}, u_1)$ do not survive, then at most two out of the $\phi(S)$ free edges can be a part of these triangles and hence there will be at least $\phi(S) - 2$ triangles of the form $(u_1, u_i, u_{i+1})$ for $2<i<\ell_S - 1$ which survive. 
If one of the triangles $(u_1, u_2, u_3)$ and $(u_{\ell_S - 1}, u_{\ell_S}, u_1)$ survives, then at most three out of the $\phi(S)$ free edges can be part of these triangles and hence there will be at least $\phi(S)-3$ triangles of the form $(u_1, u_i, u_{i+1})$ for $2 < i < \ell_S - 1$ which survive. Else both of the $(u_1, u_2, u_3)$ and $(u_{\ell_S - 1}, u_{\ell_S}, u_1)$ triangles survive, then four out of the $\phi(S)$ free edges will be part of these triangles and hence there will be at least $\phi(S)-4$ triangles of the form $(u_1, u_i, u_{i+1})$ for $2 < i < \ell_S - 1$ which survive. Hence, overall in each case, $\phi(S) - 2$ triangles survive and the lemma follows.
\end{proof}

Note that $\phi(S)$ consists of a subset of the edges counted in $p^{free}(f_0), p_0^{base}(f_0), a_1^{free}(f_0)$ and $a_0(f_0)$. Also $|E(f_0)| \geq \ell_S \geq \phi(S)$, since $f_S$ is formed after including all the $a_0(f)$ edges drawn inside $f_0$ in $G$ (see Figure~\ref{fig:outer-faces-f_0-f_S}).

    \begin{figure}[H]
        \centering
        \includegraphics[width=0.9\textwidth]{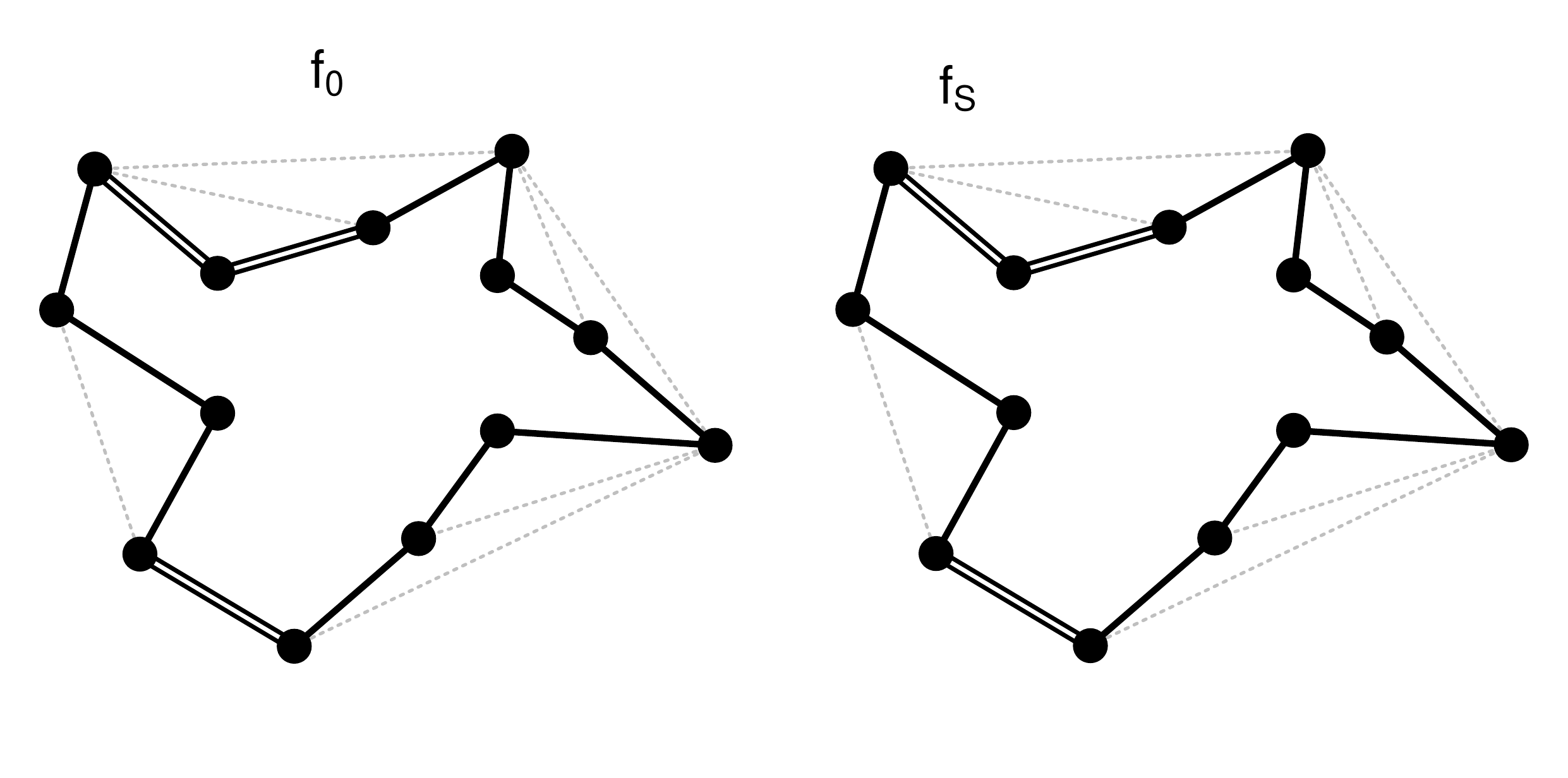}
        \caption{In the first figure the outer face boundary resulting from deleting all doted gray (type-$0$) edges corresponds to the outer-super-face $f_0$ of graph
$H[S]$. In the second figure the outer face corresponds to the outer-face $f_S$ of graph $G[S]$.}
        \label{fig:outer-faces-f_0-f_S}
    \end{figure}

Now, we are ready to present the proof of Lemma~\ref{lem:outer-face}. 
We split the analysis into two cases: 
\begin{itemize}
    \item First, consider the case when $|E(f_0)| = 3$. The worst case then is when $\phi(S)=3$, which implies $|Free(f_0)| = 3$, $|Occ(f_0)| = 0$ and $\mu(f_0) = 3$.  In this case, $survive(f_0) = 1$, which gives the inequality.

Otherwise, when $\phi(S) \leq 2$, we have $survive(f_0)= 0$ (there would be an occupied edge that supports a cross triangle in $f_0$ which kills it), $|Free(f_0)| \leq 2$ and $|Occ(f_0)| \geq 1$. 
This gives $\mu(f_0) \geq 1.5$, and $\mu(f_0) - \phi(S) +1 \geq 0.5 > survive(f_0)$.

\item If $|E(f_0)| > 3$ and $\phi(S) \leq 3$, then the trivial bounds given by Lemma~\ref{lem:sf-trivial-longface} and –\ref{lem:sf-trivial-shortface} imply the inequality. 

From now on we assume that $\phi(S) > 3$. For this case, we use Lemma~\ref{lem:gs_triangulation} on $G[S]$ to get the auxiliary graph $\widetilde{G}$ with at least $\phi(S) - 2$ extra surviving faces in its outer-face, totaling to $survive(f_0) + \phi(S) - 2$. Now using the trivial bound given by Lemma~\ref{lem:sf-trivial-longface} on the outer-face $\widetilde{f}_0$ for the corresponding graph $\widetilde{H}$, we get
\[survive(f_0) + \phi(S) - 2 \leq survive(\widetilde{f}_0) \leq \mu(\widetilde{f}_0) - 2 \leq \mu(f_0) - 2\]

which proves the lemma.

\end{itemize}

\paragraph{Intuition for next step}
The tight example (see Appendix~\ref{subsec:tight-7}) of the factor $7$ analysis for a $1$-swap optimal solution shows that looking at a locally optimal solution for $1$-swap is not enough to achieve our main Theorem~\ref{thm:main}. In this example there exists an improving $2$-swaps, which indicates that further classification of super-faces in $H[S]$ and achieving stronger bounds for some special type of super-faces (which we refer to as type-$[1,0,0]$, type-$[1,1,0]$ and type-$[2,0,0]$ super-faces) could lead to an improvement. These special super-faces are the ones where an adversary can efficiently packs a lot of surviving triangles which leads to $q > 6p$. It turns out that in every such super-face, there is an improving-$2$ swap. Hence to prove the tight $q \leq 6p$ bound, it is necessary to get stronger bounds for such faces by looking at locally optimal solution for $2$-swaps. This intuition lead us to the sub-categorization of the type-$[1, \bdot]$ and type-$[2, \bdot]$ faces in $H[S]$.

\section{Classification Scheme for Factor $6$}
\label{sec:classification-6} 

We will show a classification scheme that certifies the factor $6$. 
This scheme extends the one given in the previous section. 

\subsection{Classification Rules}
\label{sec: factor 6, rules} 
The important observation that leads to a better bound is to derive a better gain for super-faces of type-$[1,\bdot]$ and type-$[2,\bdot]$ in the previous classification.
We notice that, for a certain sub-class of these super-faces, a better bound can be obtained.

\paragraph{A New Super-face Classification:}
Now we sub-categorize type-$[1, \bdot]$ and type-$[2, \bdot]$ super-faces into further classes, based on the values of $a^{free}_1(f)$ and $p^{base}_0(f)$. 
A super-face $f$ will be of type-$[i, j, k]$ if $p_1^{base}(f) + a_2(f) + a_1(f) = i$, $a^{free}_1(f) = j$ and $p^{base}_0(f) = k$. If there is no restriction on a particular dimension, then we put a dot ($[\bdot]$) there. Following is the categorization of super-faces which we use.
\begin{itemize}
\item type-$[1, \bdot, \bdot]$: $p_1^{base}(f) + a_2(f) + a_1(f) = 1$
\begin{itemize}
\item type-$[1, 0, \bdot]$: $a^{free}_1(f) = 0$
\begin{itemize}
\item type-$[1, 0, 0]$: $p^{base}_0(f) = 0$ 
\item type-$[1, 0, \geq1]$: $p^{base}_0(f) \geq 1$ 
\end{itemize}
\item type-$[1, 1, \bdot]$: $a^{free}_1(f) = 1$
\begin{itemize}
\item type-$[1, 1, 0]$: $p^{base}_0(f) = 0$ 
\item type-$[1, 1, \geq1]$: $p^{base}_0(f) \geq 1$ 
\end{itemize}
\end{itemize}

\item type-$[2, \bdot, \bdot]$: $p_1^{base}(f) + a_2(f) + a_1(f) = 2$
\begin{itemize}
\item type-$[2, 0, \bdot]$: $a^{free}_1(f) = 0$
\begin{itemize}
\item type-$[2, 0, 0]$: $p^{base}_0(f) = 0$ 
\item type-$[2, 0, \geq1]$: $p^{base}_0(f) \geq 1$ 
\end{itemize}

\item type-$[2, 1, \bdot]$: $a^{free}_1(f) = 1$ 

\item type-$[2, 2, \bdot]$: $a^{free}_1(f) = 2$ 
\end{itemize}

\item type-$[\geq3, \bdot, \bdot]$: $p_1^{base}(f) + a_2(f) + a_1(f) \geq 3$

\end{itemize}
Let the subset $\fset[i, j, k] \subseteq \fset$ be the set of type-$[i, j, k]$ super-faces and analogously let $\eta[i, j, k] = |\fset[i, j, k]|$. It is easy to see that the categorization partitions the set $\fset \setminus \{f_0\}$,  $\fset[i, j, k] \subseteq \fset[i, j, \bdot] \subseteq \fset[i, \bdot, \bdot] $ for any $i, j, k$, which implies, $|\fset| = 1 + \eta[1, \bdot, \bdot] + \eta[2, \bdot, \bdot] + \eta[\geq3, \bdot, \bdot]$. Also, $\eta[i, \bdot, \bdot] = \sum_j \eta[i, j, \bdot]$ for each $i$, $\eta[i, j, \bdot] = \sum_k \eta[i, j, k]$ for each $i,j$.

We classify a sub-class of type-$[1,0,0]$, type-$[1,1,0]$, and type-$[2,0,0]$ super-faces that admits an improved bound via several new notions. 

\paragraph{Adjacent triangles and edges and friends:}
\label{par:adj-fri}
Let $t_1$ and $t_2$ be two cactus triangles that share a vertex. Denote their vertices by $V(t_i) = \{u_i, v_i, w_i\}$, where $v_1 = v_2$ (say $v$). 
In this case, we call them {\em adjacent triangles}. Let $w_i$ be a free vertex of $t_i$. 
If there is a way to draw an edge $w_1 w_2$ such that the region bounded by $(v, w_1, w_2)$  is empty, we say that these triangles are {\em strongly adjacent}; otherwise, they are {\em weakly adjacent}.  
Furthermore, if the $t_1$ and $t_2$ are strongly adjacent in $H$ and $w_1 w_2 \in E(G[S])$, then we say that $t_1$ and $t_2$ are {\em friends} or {friendly triangles}.

    \begin{figure}[H]
        \centering

        \includegraphics[width=0.9\textwidth]{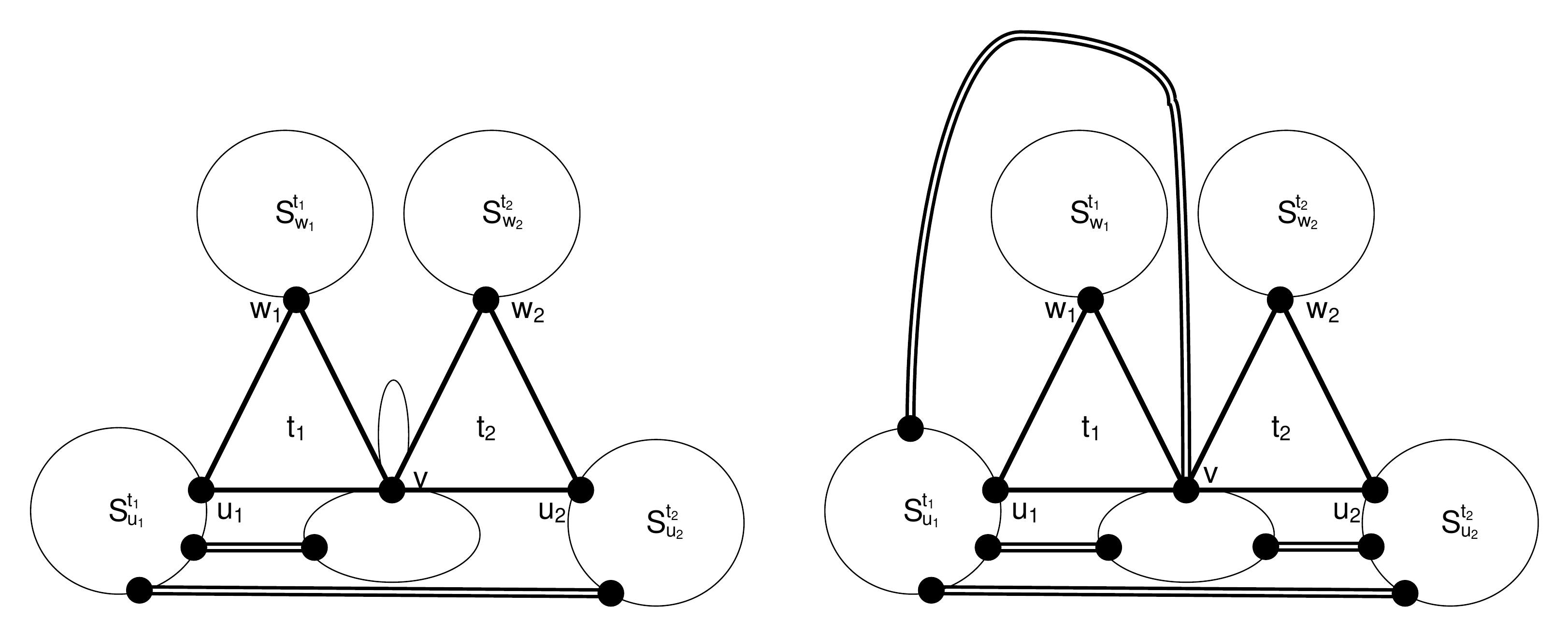}
        \caption{Cases when two adjacent triangles are {\em weakly}-adjacent in graph $H[S]$.}
        \label{fig:weakly-adjacent-triangles}
    \end{figure}

     \begin{figure}[H]
        \centering
        \includegraphics[width=0.5\textwidth]{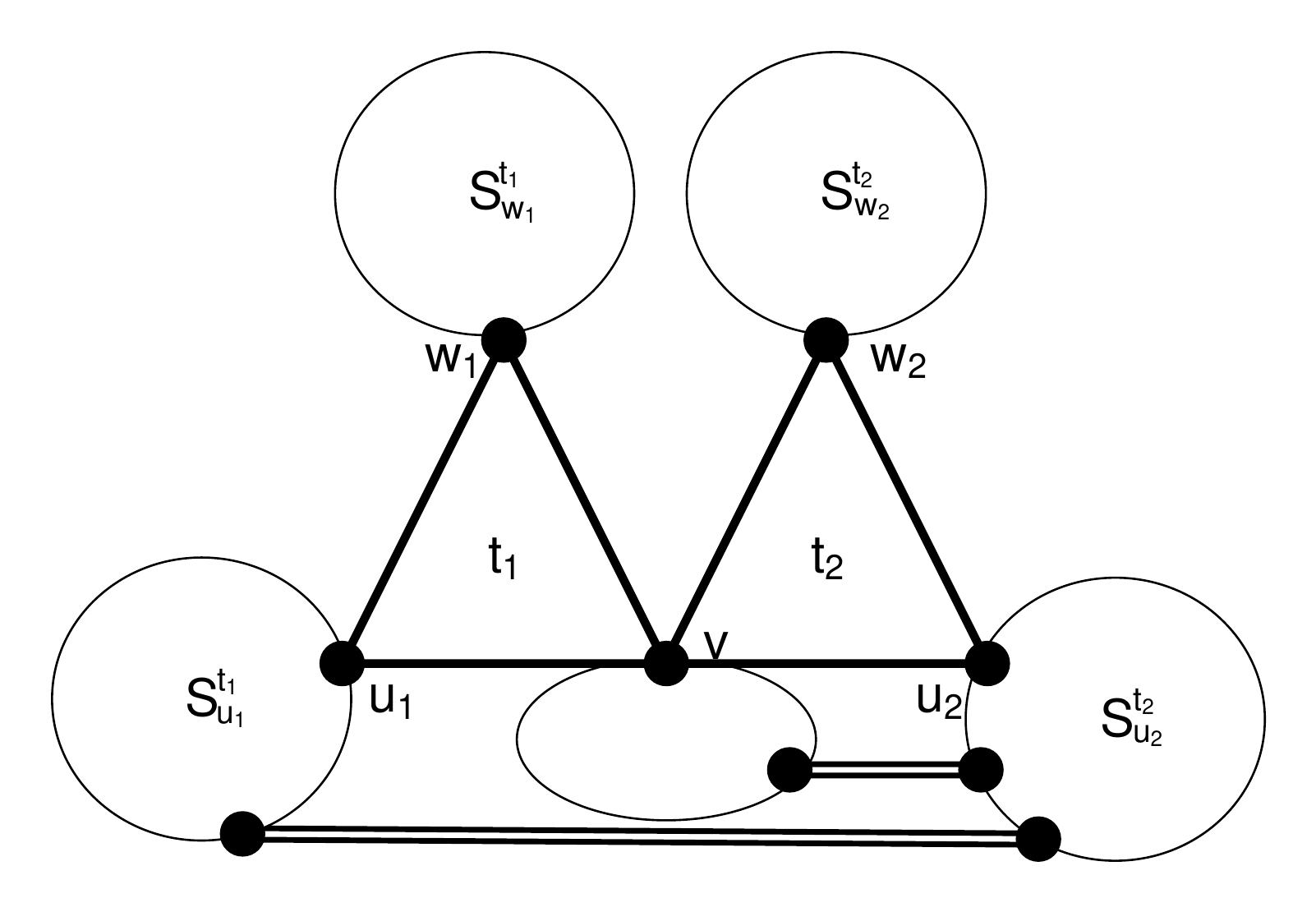}
        \caption{Two adjacent triangles which are {\em strongly}-adjacent in graph $H[S]$.}
        \label{fig:strongly-adjacent-triangles}
    \end{figure}

    \begin{figure}[H]
        \centering
        \includegraphics[width=0.5\textwidth]{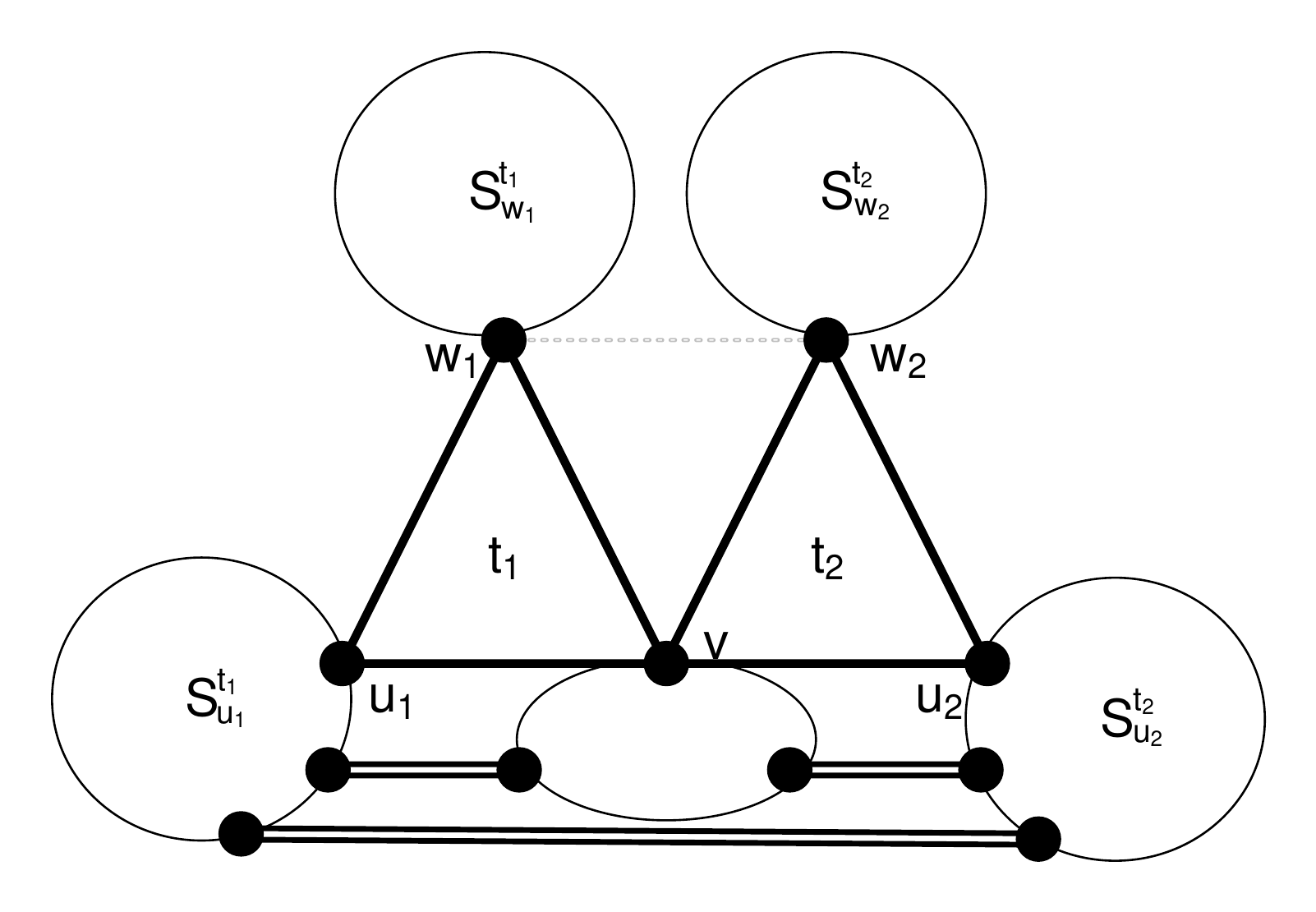}
        \caption{Two adjacent triangles which are {\em friends} in graph $G[S]$. Notice they will be {\em strongly}-adjacent in graph $H[S]$.}
        \label{fig:friendly-triangles}
    \end{figure}

\begin{Observation}
\label{Observation:adj-fri}
    The free sides for any pair of triangles which are strongly-adjacent or friends are part of the same super-face in $\fset$.
\end{Observation}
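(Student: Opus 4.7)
The plan is to leverage Claim~\ref{claim:free-edges-same-face}, which already guarantees that for each individual cactus triangle $t_i$, its two free edges $vw_i$ and $u_iw_i$ lie on the boundary of a single super-face $f_i \in \fset$; call it the \emph{free-side super-face} of $t_i$. The task then reduces to showing $f_1 = f_2$ whenever $t_1, t_2$ are strongly adjacent or friends. Since both triangles share the vertex $v$, I would work locally around $v$ and use the fact that a super-face is determined by what one can reach in $H$ without crossing edges.

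The main step is a rotational argument at the shared vertex $v$. Consider the cyclic order around $v$ of the edges of $H$ incident to $v$. The edge $vw_1$ borders its free side on the side opposite the cactus triangle $t_1$, and similarly for $vw_2$; the super-faces $f_1$ and $f_2$ are exactly the faces of $H$ one sees when stepping off these edges on their free sides. By the definition of strongly adjacent, there is a placement of a potential edge $w_1w_2$ such that the closed region bounded by $v, w_1, w_2$ contains no edge of $H$ in its interior. This rules out the existence of any edge of $H$ incident to $v$ lying strictly between $vw_1$ and $vw_2$ in the angular sector corresponding to the free sides. Hence, in the rotation at $v$, the edges $vw_1$ and $vw_2$ are angularly consecutive on the relevant side, and a walk along the boundary of $f_1$ starting from $vw_1$ will immediately (after traversing at most the path along $v$) reach $vw_2$, certifying $f_1 = f_2$.

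For the friends case, the only additional wrinkle is the actual edge $w_1w_2 \in E(G[S])$. If $w_1w_2$ is of type-$0$, it is discarded in forming $H$ and the strongly adjacent argument above applies verbatim. If $w_1w_2$ is of type-$1$ or type-$2$, it survives in $H$ and in fact lies on the boundary of the common super-face, merely bounding $f_1 = f_2$ from the opposite side; the rotational argument at $v$ is unaffected because no edge incident to $v$ can enter the empty region $(v,w_1,w_2)$.

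The main obstacle I anticipate is making the ``angular region around $v$ is unobstructed'' claim fully rigorous in the presence of cactus multiplicities, since the vertex $v$ can belong to several cactus triangles and appears multiple times along cycle traversals of the cactus. To handle this, I would carefully invoke the planar drawing of $H$ fixed at the outset, interpret ``strongly adjacent in $H$'' as asserting the nonexistence of an $H$-edge separating the two free corners at $v$, and then conclude via the standard fact that two edge-sides incident to a common vertex with no intervening edge in a plane graph are incident to the same face.
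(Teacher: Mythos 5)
Your proof is correct, and since the paper states this as an unproved Observation, your argument — that strong adjacency gives an empty region at the shared corner $v$, so no $H$-edge separates the two free sides there and they must bound a common face — is exactly the justification the paper implicitly relies on (cf.\ the role of Claim~\ref{claim:free-edges-same-face} and Figure~\ref{fig:strongly-adjacent-triangles}). As a minor simplification, your ``type-$1$ or type-$2$ friend edge'' case cannot actually arise: a friend edge $w_1w_2$ joins the free component $S^{t_1}_{w_1}$ to $S^{t_1}_{v}$, and Proposition~\ref{prop:structure-heavy} forces $B^{t_1}_{vw_1}=\emptyset$, so $w_1w_2$ is necessarily type-$0$ and is deleted in forming $H$.
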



We will crucially rely on the following lemma, whose proof is provided later in Section~\ref{sec:friend-lemma} 

\begin{lemma}[Friend Lemma]
\label{lem:friend} 
The following properties hold: 
\begin{itemize}
    \item No type-$1$ heavy triangle is friends with any other heavy cactus triangle.
    \item For any pair of type-$0$ triangles which are friends, their corresponding base sides belong to a common super-face in $\fset$.
\end{itemize}
\end{lemma}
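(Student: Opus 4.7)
The plan is to prove both parts by contradiction: in each case we assume the conclusion fails and exhibit an explicit $2$-swap that strictly increases $|\cset|$, contradicting the local $2$-swap optimality. In the common setup, let $t_1, t_2$ be the pair of friends sharing cactus vertex $v$, with free vertices $w_1, w_2$; by definition of friendship, $(v, w_1, w_2)$ is a triangular face of $G$, hence admissible in a cactus. By Proposition~\ref{prop:structure-heavy}, each heavy $t_i$ has a unique base edge $u_i v$ with $B^{t_i}_{u_i v}$ the only nonempty side-set. The candidate $2$-swap removes $\{t_1, t_2\}$ and inserts $\{(v, w_1, w_2), \tau, \tau'\}$ for two carefully chosen cross triangles $\tau, \tau'$; once the resulting subgraph is shown to be a cactus, the net change of $+1$ immediately contradicts optimality.

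For Part~1, suppose $t_1$ is heavy type-$1$ and is friends with a heavy triangle $t_2$. Proposition~\ref{prop:structure-heavy} yields the cross triangle $\tau_0 = (u_1, v, x_0)$ supported by the base $u_1 v$, plus at least two further cross triangles $\tau_1, \tau_2$ supported by edges $a_i b_i \in B^{t_1}_{u_1 v}$ with $a_i \in S^{t_1}_{u_1}$ and $b_i \in S^{t_1}_v$. I would argue that a suitable pair can be selected from $\{\tau_0, \tau_1, \tau_2\}$ to play the role of $\tau, \tau'$. The pairwise-vertex checks split into: (a) no two added triangles share an edge, which holds because distinct cross triangles either use distinct supporting edges or (by Observation~\ref{Observation:different-components-green-blue-cycle}) lie in planar regions that force distinct landing components; and (b) each added triangle shares at most one vertex with each residual cactus block, which follows from the partition of $S$ into the split components of $t_1$ and $t_2$.

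For Part~2, suppose $t_1, t_2$ are heavy type-$0$ friends whose base edges lie in distinct super-faces $f_1 \ne f_2$ of the skeleton $H[S]$. Proposition~\ref{prop:structure-heavy} supplies at least three cross triangles in each of $B^{t_1}_{u_1 v} \cup \{u_1 v\}$ and $B^{t_2}_{u_2 v} \cup \{u_2 v\}$. The separation $f_1 \ne f_2$ means that a circuit in $H$ isolates the two base sides into distinct planar regions, so Observation~\ref{Observation:different-components-green-blue-cycle} forbids cross triangles on the two sides from sharing a landing component. I then pick $\tau$ from $t_1$'s base support and $\tau'$ from $t_2$'s, using the multiplicity of choices on each side to avoid conflicts among the $S$-side endpoints, and run the same $2$-swap to derive the contradiction.

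The hard part will be verifying the cactus property of the inserted triangles, in particular excluding the degenerate configurations where two candidate cross triangles share both a landing vertex and a cactus endpoint (producing a common edge). The main tool is Observation~\ref{Observation:different-components-green-blue-cycle}, which turns the planar separation encoded by super-faces and split components into constraints on shared landing components; the remaining coincidences are ruled out by combining it with Observation~\ref{Observation:adj-fri} on the common super-face of free/friend sides and with Proposition~\ref{prop:structure-heavy}. After this case analysis the contradictions in both parts follow directly from $2$-swap optimality.
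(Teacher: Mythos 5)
Your overall strategy --- contradict $2$-swap optimality by deleting $t_1,t_2$ and inserting the friend triangle $(v,w_1,w_2)$ together with two cross triangles --- is the same as the paper's, but both parts have genuine gaps in how the two cross triangles are chosen and certified. In Part~1 you draw \emph{both} replacement triangles $\tau,\tau'$ from the base support of the single type-$1$ triangle $t_1$ (i.e.\ from $\{u_1v\}\cup B^{t_1}_{u_1v}$). After removing $E(t_1)\cup E(t_2)$ the cactus splits into five components $S^{t_1}_{u_1},S^{t_2}_{u_2},S^{t_1}_{w_1},S^{t_2}_{w_2},S_v$, and every triangle in your candidate set joins $S^{t_1}_{u_1}$ to either $S_v$ or $S^{t_2}_{u_2}$. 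If none of them reaches $S^{t_2}_{u_2}$ --- which Proposition~\ref{prop:structure-heavy} does not exclude --- then any two of your candidates connect the \emph{same} pair of components, and their union with the two components already contains a cycle sharing two edges with one of the inserted triangles, so the result is not a cactus no matter how the landing components behave. This is why the paper's swap takes one cross triangle from each of $t_1$ and $t_2$: it uses the cross triangle $t'$ supported by the type-$1$ triangle's own base edge to reattach $S^{t_2}_{u_2}$, and a triangle supported by the edge $e_1\in B^{t_1}_{u_1v}$ lying on the common super-face $f$ to reattach $S^{t_1}_{u_1}$.

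The second gap is the claim, central to your Part~2, that the base sides lying in distinct super-faces $f_1\neq f_2$ lets Observation~\ref{Observation:different-components-green-blue-cycle} ``forbid'' the two chosen cross triangles from sharing a landing component. That is false as stated: the separating circuit need not place the two cross triangles in different regions, and the paper explicitly has to handle the cases where $t_1'$ and $t_2'$ \emph{do} share a landing component. Resolving those cases is the bulk of the argument --- it requires Lemma~\ref{lem:one-blue-per-occupied-t} (uniqueness of the $B$-edge on each super-face), Lemma~\ref{lem:blue-edges-same-face-p-free-sides}, and Lemma~\ref{lem:different-one-blue-per-p-base} (at least one of $e_1,e_2$ is incident on $S_v$) to identify a third cross triangle $t_1''$ on the free-side super-face, followed by a case analysis in which some configurations yield an alternative swap and others are shown to be geometrically impossible via separating circuits. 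Your proposal compresses all of this into ``the remaining coincidences are ruled out by combining [the observations],'' which is precisely the part that does not follow directly and where the proof actually lives.
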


From the lemma, whenever we talk about friends, we always mean a pair of type-$0$ triangles. 

\paragraph{Friendly super-faces:} 
We call a super-face $f \in \fset$ of type-$[1, 0, 0], [1, 1, 0]$ or $[2, 0, 0]$ a {\em friendly} super-face if it contains at least one pair of cactus triangles that are friends. 
Let $\fset_{fri}[1, 0, 0] \subseteq \fset[1, 0, 0]$, $\fset_{fri}[1, 1, 0] \subseteq \fset[1, 1, 0]$ and $\fset_{fri}[2, 0, 0] \subseteq \fset[2, 0, 0]$ be the set of friendly super-faces of type-$[1, 0, 0], [1, 1, 0]$ and $[2, 0, 0]$ respectively. 
Also, let $\eta_{fri}[i, j, k] = |\fset_{fri}[i, j, k]|$. Let $\eta_{fri} = \eta_{fri}[1, 0, 0] + \eta_{fri}[1, 1, 0] + \eta_{fri}[2, 0, 0]$.



The subsequent lemmas (which we prove later) give us stronger bounds on $survive(f)$ for super-faces of type-$[1, 0, 0], [1, 1, 0]$ or $[2, 0, 0]$ which are not friendly.  

\begin{lemma}
\label{lem:f1_0}
For any type-$[1, 0, 0]$ super-face $f \in \fset[1, 0, 0] \setminus \fset_{fri}[1, 0, 0]$, the following bound holds for $gain(f)$.
\[gain(f) \geq 4.5\]
\end{lemma}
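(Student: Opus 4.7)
The plan is to prove $survive(f) \leq 2 p^{free}(f) - 4$, which gives $gain(f) \geq 4.5$ since $\mu(f) = |Free(f)| + \tfrac{1}{2}|Occ(f)| = 2 p^{free}(f) + \tfrac{1}{2}$ for a type-$[1,0,0]$ super-face.

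First I would describe the boundary of $f$ explicitly. By the type-$[1,0,0]$ conditions, the boundary cycle has length $2 p^{free}(f)+1$: exactly one occupied edge $e$ (either a type-$2$ edge, a type-$1$ occupied edge, or a type-$1$ base, each supporting a cross triangle drawn inside $f$), and the remaining $2 p^{free}(f)$ edges come in pairs of free edges, one pair per cactus triangle $t_i$ meeting at its peak $w_i$. By Claim~\ref{claim:free-edges-same-face}, both free edges of each $t_i$ lie on $f$, and consecutive triangles $t_i, t_{i+1}$ share a valley vertex $u_i$, with the only gap in the chain spanned by $e$ between endpoint valleys $u_0, u_k$.

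Next, I would combine non-friendliness with Lemma~\ref{lem:friend} to forbid certain chord edges inside $f$. For any consecutive pair $t_i, t_{i+1}$ on $f$ sharing valley $u_i$, a hypothetical chord $w_i w_{i+1}$ drawn inside $f$ arbitrarily close to $u_i$ encloses no other vertex of $H$, so $t_i, t_{i+1}$ are strongly adjacent in $H$. If both are type-$0$, non-friendliness of $f$ forces $w_i w_{i+1} \notin E(G[S])$; if at least one is type-$1$, Lemma~\ref{lem:friend} forces the same conclusion. In addition, each cactus base $u_{i-1} u_i$ of $t_i$ is drawn on the opposite side of $t_i$ from $f$ in the plane embedding and hence cannot appear as a type-$0$ chord inside $f$.

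The main step is to combine these structural restrictions with $2$-swap local optimality. Starting from the trivial upper bound $2 p^{free}(f) - 2$ of Lemma~\ref{lem:sf-trivial-longface} (after the cross triangle on $e$ kills one face), I aim to save two further triangles. One saving is structural: because neither $w_i w_{i+1}$ nor $u_{i-1} u_i$ can be drawn as a type-$0$ chord inside $f$, any planar arrangement of type-$0$ edges leaves at least one ``ear-quadrilateral'' $w_i u_i w_{i+1} u_{i+1}$ that cannot be further triangulated, costing one extra triangle. The second saving comes from local search: if any surviving triangle remained in excess, I would produce three edge-disjoint triangular faces incident to an adjacent pair $t_i, t_{i+1}$ of cactus triangles (two taken from excess surviving faces and one from the cross triangle along $e$) whose insertion, after removing $t_i, t_{i+1}$, yields an improving $2$-swap that preserves the cactus property, contradicting the local optimality of $\cset$.

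The hardest part will be the last $2$-swap exchange: producing the three edge-disjoint insertable triangles in each extremal configuration and verifying globally that no new cycle is introduced through the former cactus vertices of $t_i, t_{i+1}$ beyond the inserted triangles themselves. I expect the case analysis to branch by the identity of $e$ (type-$2$ vs.\ type-$1$ vs.\ type-$1$ base) and by the types of the swapped pair (type-$0$/type-$0$, type-$0$/type-$1$, or type-$1$/type-$1$), with the smallest-$p^{free}(f)$ regime driving the tightness of the bound.
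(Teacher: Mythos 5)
Your overall strategy --- exhibit the fence-like boundary of $f$, use non-friendliness together with Lemma~\ref{lem:friend} to forbid the chords $w_iw_{i+1}$, and conclude that the interior of $f$ cannot be fully triangulated --- is the paper's strategy, but two of your steps have genuine gaps. First, the boundary structure is asserted rather than proven: that consecutive triangles on $f$ are strongly adjacent and that no free component is ``trapped'' inside $f$ is exactly the content of Lemmas~\ref{lem: fence 1} and~\ref{lem:fence}, proved by a contraction argument showing that a trapped component or a weakly adjacent pair would force a second type-$1$/type-$2$ edge onto the boundary of $f$, contradicting $p_1^{base}(f)+a_2(f)+a_1(f)=1$. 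Your justification (``a chord drawn arbitrarily close to $u_i$ encloses no other vertex'') assumes what it must prove: whether such an empty chord can be drawn is the definition of strong adjacency, and it fails precisely when a sub-cactus attached at $u_i$ is drawn inside $f$ between the two triangles.

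Second, and more seriously, your accounting of the two remaining savings does not go through as planned. The structural saving is worth \emph{two} triangular faces, not one: Lemma~\ref{lem: triangulation contains friends} shows that every triangulation of the fence polygon must contain some friends chord $w_iw_{i+1}$ (an ear at a peak $w_i$ would require the parallel edge $u_{i-1}u_i$), and deleting that single absent chord from the triangulation merges two triangles into one quadrilateral, i.e.\ loses two triangular faces. With the correct count, plus the paper's final observation that the face killed by the cross triangle on $e$ is distinct from this quadrilateral (the friends chord joins two peaks while $e$ joins two valley vertices, so they cannot bound a common triangle), you already get $survive(f)\le |E(f)|-5$ and hence $gain(f)\ge 4.5$; no further exchange argument is needed. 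Your proposed substitute for the missing unit --- a fresh $2$-swap inserting ``two excess surviving faces plus the cross triangle on $e$'' --- is not a proof: two surviving triangles inside $f$ have all their vertices among the fence vertices and need not attach the five split components of $\cset\setminus(E(t_i)\cup E(t_{i+1}))$ without closing a cycle (they may connect the same pair of components, or lie within one), and you do not characterize the configurations in which such a swap is actually available. The paper deliberately confines all exchange arguments to the proof of Lemma~\ref{lem:friend} and keeps this step a pure planarity/counting argument; if you repair the count of the structural saving, you can do the same.
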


\begin{lemma}
\label{lem:f11_0}
For any type-$[1, 1, 0]$ super-face $f \in \fset[1, 1, 0] \setminus \fset_{fri}[1, 1, 0]$, the following bound holds for $survive(f)$.
\[gain(f) \geq 4\]
\end{lemma}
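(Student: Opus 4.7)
The plan is to show $survive(f) \le 2p - 3$, where $p := p^{free}(f)$; since the type-$[1,1,0]$ conditions force $p_1^{base}(f) = p_0^{base}(f) = a_2(f) = a_1^{occ}(f) = 0$ and $a_1^{free}(f) = 1$, we get $|Free(f)| = 2p+1$, $|Occ(f)| = 0$, so $\mu(f) = 2p+1$ and $gain(f) = \mu(f) - survive(f) \ge 4$ is equivalent to the claimed bound. First I would spell out the polygon structure of $\partial f$: walking along the boundary one sees the $p$ cactus triangles $t_1, \dots, t_p$ in cyclic order, each contributing its two free edges through the free vertex $w_i$, with consecutive triangles sharing a base vertex $v_i = u_{i+1}$, and the unique $a_1^{free}$ edge closes the cycle from $v_p$ back to $u_1$. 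This makes $\partial f$ a simple $(2p+1)$-gon with vertices $u_1, w_1, v_1, w_2, v_2, \dots, w_p, v_p$; the case $p = 1$ is ruled out since the $a_1^{free}$ edge would otherwise duplicate the cactus base $u_1 v_1$, contradicting simplicity of $G$. Because $|Occ(f)| = 0$, no cross triangle is drawn inside $f$, so the only $G$-edges inside $f$ are type-$0$ chords between boundary vertices (in particular, $f$ has no interior vertices).

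The crux is to identify two families of \emph{forbidden chords} of this polygon, i.e.\ chords that cannot appear as edges of any triangular face of $G$ drawn inside $f$. First, each cactus base edge $u_i v_i$ lies in $G$ but is embedded on the cactus-triangle side of $t_i$, opposite $f$, so it is unavailable as an edge inside $f$. Second, since $t_i$ and $t_{i+1}$ have their free edges $v_i w_i$ and $v_i w_{i+1}$ consecutive on $\partial f$ at $v_i$, the wedge at $v_i$ opens directly into $f$, whose interior contains no $H$-edges; hence $t_i$ and $t_{i+1}$ are strongly adjacent in $H$. Combining $f \notin \fset_{fri}[1,1,0]$ with Lemma~\ref{lem:friend} (which rules out type-$1$ heavy triangles being friends with anyone), no pair of adjacent cactus triangles on $\partial f$ is friends, so $w_i w_{i+1} \notin E(G)$. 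The key observation is that $\{u_i v_i\}_{i \in [p]}$ and $\{w_i w_{i+1}\}_{i \in [p-1]}$ are exactly the ``skip-one'' diagonals at every internal polygon vertex $w_1, v_1, w_2, v_2, \dots, w_p$, so only the two endpoints $u_1, v_p$ of the $a_1^{free}$ edge have their skip-one diagonals available.

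For the counting step I would use Euler's formula. Any triangular faces of $G$ drawn inside $f$, together with the remaining larger faces in $f$, form a planar subdivision of the polygon $\partial f$ with $T$ triangular faces, $Q \ge 0$ non-triangular faces, and $e$ chords. Euler gives $T + Q = e + 1$, and handshaking together with $\deg \ge 4$ for non-triangular faces yields $3T + 4Q + (2p+1) \le 2((2p+1) + e)$, i.e.\ $T \le 2p - 1 - 2Q$. So it suffices to rule out a full triangulation ($Q = 0$). A full triangulation of a polygon always has at least two ears (triangles using two consecutive boundary edges), and the chord of an ear at a vertex $x$ is exactly the skip-one chord at $x$; but by the forbidden-chord analysis ears can occur only at $u_1$ or $v_p$, and both candidate ears necessarily contain the $a_1^{free}$ boundary edge $u_1 v_p$, which sits in at most one triangle of any subdivision. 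Hence at most one ear can exist, contradicting the two-ears requirement, so $Q \ge 1$ and $survive(f) \le T \le 2p - 3$, giving $gain(f) \ge 4$.

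The main obstacle I anticipate is the careful verification of the ``only ears at $u_1$ or $v_p$, sharing the $a_1^{free}$ edge'' claim: one has to check that the skip-one chord characterization of ears together with the two identified forbidden families really leaves no other ear candidates, and that the two endpoint ears genuinely share $u_1 v_p$ (the Euler/handshaking bookkeeping is routine once this ear restriction is secured). A secondary subtlety is the situation in which $\partial f$ is a closed walk rather than a simple cycle (possible when $H[S]$ has cut vertices on $f$); in that case repeated vertices should be treated as distinct polygon corners, since the forbiddenness of $u_i v_i$ and $w_i w_{i+1}$ is a property of $G$'s embedding and is unaffected by this bookkeeping.
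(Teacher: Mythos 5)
Your proof is correct and follows essentially the same route as the paper: both rest on the fence structure of the boundary (the paper's Lemma~\ref{lem:fence}) together with a two-ears argument showing that the forbidden base chords $u_iv_i$ (parallel to existing cactus edges) and friends chords $w_iw_{i+1}$ (excluded since $f\notin\fset_{fri}[1,1,0]$) leave at most one admissible ear, which is exactly the content of Lemma~\ref{lem: triangulation contains friends}. Your Euler-formula bookkeeping (a non-triangular face forces $survive(f)\le |E(f)|-4$) is just a repackaging of the paper's step of fully triangulating an auxiliary graph and then deleting the nonexistent friends edge $e'$ to lose two triangular faces.
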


\begin{lemma}
\label{lem:f2_0}
For any type-$[2, 0, 0]$ super-face  $f \in \fset[2, 0, 0] \setminus \fset_{fri}[2, 0, 0]$, the following bound holds for $survive(f)$.
\[gain(f) \geq 3\]
\end{lemma}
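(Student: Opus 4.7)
The plan is to show $survive(f) \leq \mu(f) - 3$ for any non-friendly type-$[2,0,0]$ super-face $f$. Direct computation with $p_1^{base}(f) + a_2(f) + a_1^{occ}(f) = 2$, $a_1^{free}(f) = 0$, and $p_0^{base}(f) = 0$ gives $|E(f)| = 2 p^{free}(f) + 2$, $|Free(f)| = 2 p^{free}(f)$, $|Occ(f)| = 2$, and $\mu(f) = 2 p^{free}(f) + 1$, so the target becomes $survive(f) \leq 2 p^{free}(f) - 2$. Applying the decoupling-and-triangulation argument from the proof of Lemma~\ref{lem:sf-trivial-longface} yields only the weaker bound $survive(f) \leq 2 p^{free}(f) - 1$, since in the adversarial imaginary triangulation the two occupied edges may share a common triangle. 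The proof therefore amounts to recovering one extra unit from the non-friendliness of $f$.

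The first step is a sub-case split on whether the two occupied edges $e_1, e_2$ share an endpoint. If they are vertex-disjoint, no triangulation can place them adjacent to a common face, so they destroy two distinct triangles and the bound $2 p^{free}(f) - 2$ follows immediately. Otherwise, let $z$ be their shared endpoint, with $e_1 = zy_1$ and $e_2 = zy_2$. The combinatorially adversarial choice is to use the imaginary diagonal $y_1 y_2$, placing both occupied edges into the single triangle $(z, y_1, y_2)$; the remaining polygonal region $R$ has boundary formed by the free edges of the $p^{free}(f)$ wedges contributing to $f$ together with the diagonal $y_1 y_2$.

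The main step is the structural claim that any triangulation of $R$ contains at least one triangle which cannot be realized as a real triangular face of $G$ drawn inside $f$. Two kinds of forced-imaginary triangles arise: (i) a wedge triangle $(u_i, w_i, v_i)$ is never a real face inside $f$, since the base edge $u_i v_i$ is drawn on the opposite side of the cactus triangle $t_i$ and lies in a different super-face (using $p_0^{base}(f) = 0$ together with Claim~\ref{claim:free-edges-same-face}); and (ii) a friend-triangle $(v_i, w_i, w_{i+1})$ at the shared base vertex of two consecutive wedges $t_i, t_{i+1}$ is a real face only if $t_i$ and $t_{i+1}$ are friends. The Friend Lemma rules out friendship whenever a type-$1$ wedge is involved, and the non-friendly hypothesis $f \notin \fset_{fri}[2,0,0]$ rules out the remaining type-$0$/type-$0$ pair. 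Hence one further triangle beyond $(z, y_1, y_2)$ is forced-imaginary in every triangulation, yielding $survive(f) \leq 2 p^{free}(f) - 2$ and therefore $gain(f) \geq 3$.

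The principal obstacle is the structural claim itself: every triangulation of the wedge-chain polygon $R$ must contain a wedge triangle or a friend triangle. I would prove this by induction on $p^{free}(f)$. Any chord of $R$ either coincides with a base diagonal $u_i v_i$ or a friend diagonal $w_i w_{i+1}$, in which case a forbidden triangle is exposed immediately, or it splits $R$ into two sub-polygons that both inherit the wedge-chain structure, to which the induction applies. The critical base case is the quadrilateral $(v_{i-1}, w_i, v_i, w_{i+1})$, whose only two internal diagonals are precisely a base diagonal and a friend diagonal, forcing a forbidden triangle; every recursive split eventually reduces to such a quadrilateral. A remaining subtlety is to verify the analogous argument when some occupied edge is itself a type-$1$ base (the case $p_1^{base}(f) \geq 1$), where the type-$1$ triangle sitting at $z$ replaces the $(z, y_1, y_2)$ role in the splitting but the structural obstruction remains of the same form.
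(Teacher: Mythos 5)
Your strategy is essentially the paper's: reduce to $survive(f)\le |E(f)|-4$, dispose of the case where the two occupied edges cannot lie on a common triangle of the triangulation, and in the remaining (shared-vertex, common-triangle) case extract one more non-surviving triangle from the fact that any triangulation of the residual region is forced to contain either a parallel copy of a base edge $u_iv_i$ or a friends edge $w_iw_{i+1}$ — the latter being imaginary by the non-friendliness hypothesis together with the Friend Lemma. The only methodological difference is how that forced-triangle fact is established: the paper gets it from the two-ears property of polygon triangulations (Lemma~\ref{lem: triangulation contains friends}, applied after deleting $e_1,e_2$ and keeping the third edge $e$ of their common triangle), whereas you run an induction on chords; both are fine, and your observation that the friend diagonal and the base diagonal cannot form a triangle with the occupied edges plays the same role as the paper's remark that $e$ joins two base vertices.

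The one genuine omission is that you assume, rather than prove, that the residual region $R$ really is the clean ``wedge-chain polygon'' your induction is about. Knowing $p_0^{base}(f)=a_1^{free}(f)=0$ and $p_1^{base}(f)+a_2(f)+a_1^{occ}(f)=2$ only tells you which \emph{types} of edges bound $f$; it does not by itself guarantee that consecutive wedges share a base vertex and are strongly adjacent, nor that no cactus triangle is trapped inside $f$ with a non-singleton free component hanging between two wedges. This is precisely the content of the paper's fence lemmas (Lemma~\ref{lem: fence 2}, via Lemma~\ref{lem:fence}), proved by a contraction argument showing that a trapped triangle or a weakly adjacent pair would force two non-adjacent type-$1$/type-$2$ edges on the boundary of $f$, contradicting that the two occupied edges are adjacent. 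Without this, both the base case of your induction and the claim that every chord splits $R$ into sub-polygons ``inheriting the wedge-chain structure'' are unjustified. Supplying the fence lemma (or citing it) closes the argument; everything else in your proposal, including the treatment of the $p_1^{base}(f)\ge 1$ sub-case, matches the paper.
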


Now, we have identified the set of super-faces for which we obtain an improved bound. 
The rest of the super-faces only relies on trivial upper bounds. 

\begin{lemma} 
\label{cor:trivial-6}
For any super-face $f \in \fset$, the respective bounds hold for $gain(f)$
\begin{itemize}
\item type-$[1, 0, \geq1]$: \[gain(f) \geq 2.5\]
\item ($f \in \fset_{fri}[1, 0, 0]$): \[gain(f) \geq  2.5\]

\item type-$[1, 1, \geq1]$: \[gain(f) \geq 2\]
\item ($f \in \fset_{fri}[1, 1, 0]$): \[gain(f) \geq 2\]

\item type-$[2, 0, \geq1]$: \[gain(f) \geq   2\]
\item ($f \in \fset_{fri}[2, 0, 0]$): \[gain(f) \geq   2\]

\item type-$[2, 1, \bdot]$:   \[gain(f) \geq   2.5\]

\item type-$[2, 2, \bdot]$:  \[gain(f) \geq   2\]

\item type-$[\geq3, \bdot, \bdot]$:  \[gain(f) \geq   1.5\]

\end{itemize}
\end{lemma}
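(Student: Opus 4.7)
\textbf{Plan for proving Lemma~\ref{cor:trivial-6}.} The key observation is that the refined $[i,j,k]$-classification is a strict refinement of the coarser $[i,j]$-classification used in the factor~$7$ analysis, so most of the stated bounds will follow immediately from Lemma~\ref{lem:all-bounds-factor-7}. The only bound that is genuinely stronger than its factor-$7$ counterpart is the bound of $2.5$ for type-$[2,1,\bdot]$, and this will require a short separate argument using the parity of $|Occ(f)|$ together with Lemma~\ref{lem:sf-trivial-longface}.

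First, I would explicitly record that $\fset[i,j,k] \subseteq \fset[i,j]$ for every $i,j,k$. Thus:
type-$[1,0,\geq 1]$ and $\fset_{fri}[1,0,0]$ are both subsets of $\fset[1,0]$, so $gain(f) \geq 2.5$ by Lemma~\ref{lem:all-bounds-factor-7}(1);
type-$[1,1,\geq 1]$ and $\fset_{fri}[1,1,0]$ lie in $\fset[1,1]$, giving $gain(f)\geq 2$ by Lemma~\ref{lem:all-bounds-factor-7}(2);
type-$[2,0,\geq 1]$, type-$[2,2,\bdot]$ and $\fset_{fri}[2,0,0]$ all lie in $\fset[2,\bdot]$, so $gain(f)\geq 2$ by Lemma~\ref{lem:all-bounds-factor-7}(3); and type-$[\geq 3,\bdot,\bdot]$ is exactly $\fset[\geq 3,\bdot]$, giving $gain(f)\geq 1.5$ by Lemma~\ref{lem:all-bounds-factor-7}(4). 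All of these are immediate.

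The remaining case is type-$[2,1,\bdot]$, where one should obtain a stronger bound of $2.5$ rather than the generic $2$. For such an $f$ we have $p_1^{base}(f) + a_2(f) + a_1(f) = 2$ and $a_1^{free}(f) = 1$, so $|Occ(f)| = p_1^{base}(f) + a_2(f) + a_1^{occ}(f) = 1$. Since $a_1^{free}(f) = 1 > 0$, the hypothesis of Lemma~\ref{lem:sf-trivial-longface} is satisfied and we get
\[ survive(f) \leq |Free(f)| + \floor{\tfrac{|Occ(f)|}{2}} - 2 = |Free(f)| - 2. \]
On the other hand $\mu(f) = \tfrac{1}{2}|Occ(f)| + |Free(f)| = |Free(f)| + \tfrac{1}{2}$, so
\[ gain(f) = \mu(f) - survive(f) \geq \bigl(|Free(f)| + \tfrac{1}{2}\bigr) - \bigl(|Free(f)| - 2\bigr) = 2.5. \]

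I expect no genuine obstacles here: the whole statement amounts to observing that the new classification refines the old one and that the one improvement (the $[2,1,\bdot]$ case) is produced by the parity gap in $\floor{|Occ(f)|/2}$ versus $|Occ(f)|/2$. The only thing to be slightly careful about is verifying for the $[2,1,\bdot]$ case that Lemma~\ref{lem:sf-trivial-longface} really is applicable (i.e., either $|E(f)|>3$ or $a_1^{free}(f)+p_0^{base}(f)>0$), which is automatic from $a_1^{free}(f) = 1$.
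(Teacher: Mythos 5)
Your proof is correct and follows essentially the same route as the paper: the paper likewise handles every case by computing $|Occ(f)|$ and invoking Lemma~\ref{lem:sf-trivial-longface} (or Lemma~\ref{lem:sf-trivial-shortface} for the length-three type-$[\geq 3,\bdot,\bdot]$ faces), and your shortcut of citing Lemma~\ref{lem:all-bounds-factor-7} for the non-improved cases just reuses those identical computations. Your treatment of the one genuinely new case, type-$[2,1,\bdot]$, matches the paper's ($|Occ(f)|=1$ plus the floor in Lemma~\ref{lem:sf-trivial-longface}), and is in fact slightly more careful in justifying the lemma's hypothesis via $a_1^{free}(f)>0$ rather than asserting $|E(f)|>3$ outright.
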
 

\begin{proof}
For any type-$[1, 0, \bdot]$ or type-$[2, 1, \bdot]$ super-face $f$, $|Occ(f)| = 1$ and $|E(f)| > 3$, hence using Lemma~\ref{lem:sf-trivial-longface}, we get
\[survive(f) \leq |Free(f)| + \floor{\frac{|Occ(f)|}{2}} - 2 = |Free(f)| + \frac{|Occ(f)|}{2} - 2.5 = \mu(f) - 2.5\]
For any type-$[1, 1, \bdot]$ or type-$[2, 0, \bdot]$ or type-$[2, 2, \bdot]$ super-face $f$, $|E(f)| > 3$, hence using Lemma~\ref{lem:sf-trivial-longface}, we get
\[survive(f) \leq |Free(f)| + \floor{\frac{|Occ(f)|}{2}} - 2 \leq |Free(f)| + \frac{|Occ(f)|}{2} - 2 = \mu(f) - 2\]
For any type-$[\geq3, \bdot, \bdot]$ super-face $f$, if $|Occ(f)| = |E(f)| = 3$, using Lemma~\ref{lem:sf-trivial-shortface}, we get
\[survive(f) \leq |Free(f)| + \floor{\frac{|Occ(f)|}{2}} - 1.5 \leq |Free(f)| + \frac{|Occ(f)|}{2} - 1.5 = \mu(f) - 1.5\]
Else, using Lemma~\ref{lem:sf-trivial-longface}, we get
\[survive(f) \leq |Free(f)| + \floor{\frac{|Occ(f)|}{2}} - 2 \leq |Free(f)| + \frac{|Occ(f)|}{2} - 2 = \mu(f) - 2\]
\end{proof}

\subsection{Valid Inequalities}
\label{sec: factor 6, eq} 
We present various upper bounds on the number of super-faces of certain type. 
We denote by $\Phi$ the following system of linear inequalities. 

\begin{lemma}[Various upper bounds on the number of super-faces]
\label{lem:various-bounds-factor-6} 
The following bounds hold: 
\begin{itemize}
    \item $\eta[2, \bdot, \bdot] + 2 \eta[\geq3, \bdot, \bdot] \leq p_1 + |\fset| -2$
    \item $\eta[1, 1, \bdot] + \eta[2, 1, \bdot] + 2 \eta[2, 2, \bdot]  \leq a_1$
    
    \item $\eta_{fri} + \eta[1, 0, \geq1] + \eta[1, 1, \geq1] + \eta[2, 0, \geq1] \leq p_0$
\end{itemize}
\end{lemma}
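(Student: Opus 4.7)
The plan is to establish the three bounds independently, each by a counting or charging argument, with bound~3 requiring the real structural work.

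For bound~1, I would reuse the charging argument from Lemma~\ref{lem:two-bounds-factor-7}(2) verbatim. Distribute one unit of ``money'' on each super-face to every base edge of a type-$1$ cactus triangle, every type-$1$ edge, and every type-$2$ edge on its boundary; the total amount distributed equals $\sum_{f \in \fset}(p_1^{base}(f)+a_1(f)+a_2(f)) = p_1 + 2a_1 + 2a_2 = p_1 + 2|\fset|-2$. Counting from the receiving end, a type-$[j,\bdot,\bdot]$ super-face collects at least $j$ units, and the outer face collects at least one unit by Observation~\ref{obs:f_a1_a2}, so the total is at least $|\fset| + \eta[2,\bdot,\bdot] + 2\eta[\geq3,\bdot,\bdot]$. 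Rearranging yields the bound.

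Bound~2 follows from a direct edge count. Each type-$1$ edge in $G[S]$ has exactly one free side, lying on a unique super-face, so $\sum_{f \in \fset} a_1^{free}(f) = a_1$. Reading off $a_1^{free}(f)$ on the three relevant super-face types ($1$ for $[1,1,\bdot]$ and $[2,1,\bdot]$, $2$ for $[2,2,\bdot]$) gives $\eta[1,1,\bdot]+\eta[2,1,\bdot]+2\eta[2,2,\bdot] \leq a_1$.

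For bound~3, I would build an injection $\alpha$ from $X := \fset_{fri}[1,0,0] \cup \fset_{fri}[1,1,0] \cup \fset_{fri}[2,0,0] \cup \fset[1,0,\geq 1] \cup \fset[1,1,\geq 1] \cup \fset[2,0,\geq 1]$ into the set of type-$0$ cactus triangles. For each friendly super-face $f\in X$, fix a friendly pair $(t_1^f,t_2^f)$; by the Friend Lemma~\ref{lem:friend} these must be type-$0$ triangles, and their base sides lie on a common super-face $f^*_f \neq f$ (since $p_0^{base}(f)=0$ by the type of $f$). Set $\alpha(f):=t_1^f$. By Claim~\ref{claim:free-edges-same-face} both free sides of $t_1^f$ belong to the unique super-face $f$, so $\alpha$ is already injective across friendly super-faces. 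For each case-$1$ super-face $f\in X$, choose $\alpha(f)$ among the type-$0$ triangles whose base edge lies on $f$ but that is not a reserved $t_1^{f'}$ of some friendly super-face $f'$. To see such a choice is always available, let $k_f$ be the number of friendly super-faces $f'$ with $f^*_{f'}=f$: the Friend Lemma gives $p_0^{base}(f) \geq 2k_f$, while only $k_f$ of those base triangles are reserved; together with $p_0^{base}(f)\geq 1$ from the type classification, there are at least $\max(1,2k_f)-k_f \geq 1$ eligible triangles.

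Injectivity of $\alpha$ on the whole of $X$ follows since no two case-$1$ super-faces can select the same triangle (each type-$0$ triangle's base edge lies on a unique super-face) and cross-collisions between the two halves of $\alpha$ are avoided by construction. The main obstacle is this bound~3 injection: without both the Friend Lemma (controlling where the base sides of friendly pairs live) and the free-side uniqueness from Claim~\ref{claim:free-edges-same-face} (ensuring each type-$0$ triangle is ``claimed'' by at most one friendly super-face), a single triangle could plausibly be forced to represent both a case-$1$ super-face and a friendly one; the doubling $p_0^{base}(f)\geq 2k_f$ that the Friend Lemma supplies is precisely what makes the reservation scheme feasible.
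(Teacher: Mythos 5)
Your proofs of the first two bounds are identical to the paper's (the same money-distribution over $p_1^{base}(f)+a_1(f)+a_2(f)$ for the first, and summing $a_1^{free}(f)$ for the second). For the third bound the paper also charges to type-$0$ triangles, but does so as a money-transfer argument: each type-$0$ triangle gets one unit, each friendly super-face collects two units (one from each member of a friendly pair, which by the Friend Lemma are both type-$0$), each face of type $[\cdot,\cdot,\geq 1]$ collects one unit from a type-$0$ triangle on its boundary or from ``extra money'', and the extra money is bounded by $\eta_{fri}$, yielding $2\eta_{fri}+\eta[1,0,\geq 1]+\eta[1,1,\geq 1]+\eta[2,0,\geq 1]\leq p_0+\eta_{fri}$. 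Your explicit injection with the reservation scheme is the same combinatorial content, but it is the cleaner formulation: the inequality $p_0^{base}(f)\geq 2k_f$ (which follows because the two members of each friendly pair have their free sides on the unique super-face $f'$ and, by the Friend Lemma, their base sides on a common face $f^*_{f'}\neq f'$, and pairs for distinct friendly faces are disjoint by Claim~\ref{claim:free-edges-same-face}) makes precise exactly the step that the paper handles with the informal ``at most $\eta_{fri}$ units of extra money''. Both arguments rest on the same two structural inputs (the Friend Lemma and the uniqueness of the super-face containing a triangle's free/base sides), so this is a presentational refinement rather than a different proof, and it is correct.
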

\begin{proof}
The first bound is derived in exactly the same manner as in Lemma~\ref{lem:two-bounds-factor-7}. 
The second bound is also similar. Consider the sum:  
\[\sum_{f \in \fset[1,1,\bdot] \cup \fset[2,1,\bdot] \cup \fset[2,2, \bdot] } a_1^{free}(f) \leq a_1\] 
Notice that each super-face of type-$[1,1,\bdot]$ or type-$[2,1,\bdot]$ gets the contribution of at least $1$, while the other type gets the contribution of $2$, so we have that the sum is at least $\eta[1,1,\bdot] + \eta[2,1,\bdot] + 2 \eta[2,2, \bdot]$. 

Finally, for the third bound, we give a combinatorial charging argument. First, we imagine giving $1$ unit of money to each type-$0$ triangle. 
Therefore, $p_0$ units of money are placed into the system. 
We will argue that we can ``transfer'' this amount such that each super-face in $\fset_{fri}[1, 0, 0] \cup \fset_{fri}[1, 1, 0] \cup \fset_{fri}[2, 0, 0] \cup \fset[1, 0, \geq1] \cup \fset[1, 1, \geq1] \cup \fset[2, 0, \geq1]$ receives at least one unit of money, hence establishing the desired bound. 

\begin{itemize}
    \item  For each face $f \in  \fset_{fri}[1, 0, 0] \cup \fset_{fri}[1, 1, 0] \cup \fset_{fri}[2, 0, 0] $, we know that there must be at least one pair of friends. By Lemma~\ref{lem:friend}, no type-$1$ triangle is friends with any other heavy cactus triangle. 
    The super-face $f$ receives $1$ unit of money from each such triangle in the pair, so we have $2$ units on each such super-face.

\item Now consider a super-face $f \in \fset[1, 0, \geq1] \cup \fset[1, 1, \geq1] \cup \fset[2, 0, \geq1]$. 
On such super-face, there is at least one type-$0$ triangle, and such cactus triangle would (i) pay super-face $f$ if it still has the money, or (ii) the ``extra'' money would be put in the system to pay $f$ if no cactus triangle in $f$ has money left with it. 
\end{itemize}

In the end, all such super-faces would have at least one or two units of money, so the total money in the system is at least $2\eta_{fri} + \eta[1, 0, \geq1] + \eta[1, 1, \geq1] + \eta[2, 0, \geq1]$. 
The total payment into the system is at most $p_0$ plus the extra money. 
There can be at most $\eta_{fri}$ units of extra money spent: Due to Lemma~\ref{lem:friend}, i.e. whenever a face contains a triangle that spent in the first step, it must also contain its pair of friends, so there can be at most $\eta_{fri}$ such faces that cause an extra spending. 
This reasoning implies that 
$$2\eta_{fri} + \eta[1, 0, \geq1] + \eta[1, 1, \geq1] + \eta[2, 0, \geq1] \leq p_0 + \eta_{fri}$$
\end{proof}

\paragraph{Deriving Factor 6:} Now that we have both the inequalities and the gain bounds, the following is an easy consequence (e.g. it can be verified by an LP solver.) 
For completeness, we produce a human-verifiable proof in Appendix~\ref{subsec:proof-factor-6}. 

\begin{lemma}
\label{lem: derive factor 6} 
$$q \leq 4p + 0.5 p_1 + 2.5 a_1 + 3 a_2 -\overrightarrow{gain} \cdot \vec \chi \leq 6p  - \phi(S)$$ 
\end{lemma}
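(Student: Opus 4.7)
The first inequality is immediate: by Inequality~(\ref{eq:gain-highlighted}) we already have $q \leq 4p + 0.5p_1 + 2.5a_1 + 3a_2 - \sum_{f\in\fset} gain(f)$, and by the very definition of the classification scheme $\sum_{f\in\fset} gain(f) = \overrightarrow{gain}\cdot \vec\chi$, where the coordinates of $\overrightarrow{gain}$ are given by Lemmas~\ref{lem:outer-face}, \ref{lem:f1_0}, \ref{lem:f11_0}, \ref{lem:f2_0} and~\ref{cor:trivial-6}. So all the work goes into showing the second inequality.

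\textbf{Step 1 (assemble the lower bound on total gain).} I will separate the outer face $f_0$, which contributes $\phi(S)-1$ via Lemma~\ref{lem:outer-face}, from the rest. Over the remaining super-faces I plug in the per-type bounds from Lemmas~\ref{lem:f1_0}--\ref{cor:trivial-6}, which give
$\overrightarrow{gain}\cdot \vec\chi \geq \phi(S) - 1
+ 4.5(\eta[1,0,0]-\eta_{fri}[1,0,0]) + 2.5\,\eta_{fri}[1,0,0] + 2.5\,\eta[1,0,\geq 1]
+ 4(\eta[1,1,0]-\eta_{fri}[1,1,0]) + 2\,\eta_{fri}[1,1,0] + 2\,\eta[1,1,\geq 1]
+ 3(\eta[2,0,0]-\eta_{fri}[2,0,0]) + 2\,\eta_{fri}[2,0,0] + 2\,\eta[2,0,\geq 1]
+ 2.5\,\eta[2,1,\bdot] + 2\,\eta[2,2,\bdot] + 1.5\,\eta[\geq 3,\bdot,\bdot].$
Rewriting the non-friendly contributions as a $1.5$-per-super-face baseline plus bonuses $+3$, $+2.5$, $+1.5$, $+1$, $+1$, $+0.5$, $+1$, etc., makes it convenient to extract a clean term $1.5(|\fset|-1)$ plus residuals.

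\textbf{Step 2 (plug in the valid inequalities).} The residual bonuses, collected type by type, should be exactly attackable by the three inequalities of Lemma~\ref{lem:various-bounds-factor-6}: the $p_1 + |\fset|-2$ bound converts bonuses on type-$[2,\bdot,\bdot]$ and type-$[\geq 3,\bdot,\bdot]$ faces into terms in $p_1$ and $|\fset|$; the $a_1$ bound absorbs the bonuses attached to $\eta[1,1,\bdot]$, $\eta[2,1,\bdot]$, $\eta[2,2,\bdot]$; and the $p_0$ bound absorbs the bonuses on friendly faces and on type-$[\cdot,\cdot,\geq 1]$ faces. Together with $p_0+p_1=p$, this produces a lower bound on $\overrightarrow{gain}\cdot\vec\chi$ that is linear in $p, p_1, a_1, a_2, |\fset|, \phi(S)$.

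\textbf{Step 3 (close with the skeleton face count).} Substituting back into $q \leq 4p+0.5p_1+2.5a_1+3a_2-\overrightarrow{gain}\cdot\vec\chi$ and using $|\fset|=a_1+a_2+1$ should cancel exactly the $a_1$ and $a_2$ terms, leaving a bound of the form $c\cdot p + c'\cdot p_1 + c''|\fset| - \phi(S)+ \text{const}$. Then Lemma~\ref{lem:skeleton-faces} ($|\fset|\leq 2p-2$) together with $p_1\leq p$ collapses everything to $6p - \phi(S)$.

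\textbf{Expected main obstacle.} Mathematically every step above is a bookkeeping exercise; the real difficulty is to choose the decomposition of the per-type bonuses in Step~1 so that in Step~2 exactly the right linear combination of Lemma~\ref{lem:various-bounds-factor-6}'s three inequalities appears. Getting the coefficients wrong by even a small fraction loses the factor $6$ and drops the argument back to $7$, so the bookkeeping must be set up so that the bonuses $+3$ on $\fset[1,0,0]\setminus \fset_{fri}[1,0,0]$, $+2.5$ on $\fset[1,1,0]\setminus \fset_{fri}[1,1,0]$, $+1.5$ on $\fset[2,0,0]\setminus \fset_{fri}[2,0,0]$, and $+1$ on $\fset[2,1,\bdot]$ line up exactly with the $p_0$-bound and $a_1$-bound (after using that friends are type-$0$, which is where Lemma~\ref{lem:friend} is consumed).
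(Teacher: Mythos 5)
Your first inequality and the overall architecture (separate the outer face via Lemma~\ref{lem:outer-face}, plug in the per-type gain bounds, convert face counts into $p$'s and $a$'s via Lemma~\ref{lem:various-bounds-factor-6} and $|\fset|=a_1+a_2+1$) match the paper. But the key bookkeeping step is set up in the wrong direction, and as written the argument does not close. You expand $\overrightarrow{gain}\cdot\vec\chi$ around the \emph{minimum} gain, as $1.5(|\fset|-1)$ plus non-negative bonuses, and then propose to ``attack'' the bonuses with the three inequalities of Lemma~\ref{lem:various-bounds-factor-6}. Those inequalities are \emph{upper} bounds on the face counts $\eta[\cdot]$. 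Since you need a \emph{lower} bound on $\overrightarrow{gain}\cdot\vec\chi$, you need a lower bound on the bonus terms, and an upper bound on the $\eta[\cdot]$'s gives you nothing there; the only valid lower bound on the bonuses is $0$, which sends you back to the factor-$7.5$ computation. Relatedly, with the $1.5$ baseline the $a_1,a_2$ terms do \emph{not} cancel in your Step~3 ($2.5a_1+3a_2-1.5(a_1+a_2)=a_1+1.5a_2$ survives), and your planned fallback on $|\fset|\le 2p-2$ and $p_1\le p$ is precisely the lossy step that caps the earlier analysis at factor $7$.

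The paper's proof expands around the \emph{maximum} gain instead: $\overrightarrow{gain}\cdot\vec\chi \ge 4.5\,\one^T\vec\chi - \vec{d}\cdot\vec\chi$ with a non-negative deficit vector $\vec{d}$ (deficit $3$ on type-$[\geq3,\bdot,\bdot]$, $2$ on friendly and on $[\cdot,\cdot,\geq 1]$ faces, $0$ on $\fset[1,0,0]\setminus\fset_{fri}[1,0,0]$, etc.). The deficits then enter the final bound with a positive sign, so the upper bounds of Lemma~\ref{lem:various-bounds-factor-6} apply directly: taking $1.5$ times the first inequality, $0.5$ times the second, and $2$ times the third (together with the $\eta_{fri}$ bookkeeping and $p_0=p-p_1$) yields $\vec{d}\cdot\vec\chi\le 2p-0.5p_1+2a_1+1.5a_2-1.5$. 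Since $\one^T\vec\chi=|\fset|-1=a_1+a_2$, the baseline term $4.5(a_1+a_2)$ overshoots the $2.5a_1+3a_2$ from Inequality~(\ref{eq:gain-highlighted}) by exactly the $2a_1+1.5a_2$ that the deficit bound returns; everything cancels and one gets $q\le 6p-\phi(S)-0.5$ without ever invoking $|\fset|\le 2p-2$ or $p_1\le p$. To repair your write-up you would have to flip the baseline from $1.5$ to $4.5$ and redo the coefficient matching accordingly; the identification of which lemma supplies which gain, and of where Lemma~\ref{lem:friend} is consumed, is otherwise correct.
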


\subsection{Gain analysis for other cases}

In this section, we analyze the gain for various types of faces where we get improved bounds. 

\subsubsection{$\fset[1,0,0] \setminus \fset_{fri}[1,0,0]$ Super-faces (Proof for  Lemma~\ref{lem:f1_0})}
A super-face in this set turns out to behave in a very structured way, i.e. the edges of the cactus triangles bounding this face look like a ``fence'', which is made precise below. 

\paragraph{Cactus fence:} A {\em cactus fence} of size $k$ is a maximal sequence of cactus triangles $(t_1,\ldots, t_k)$ such that any pair $t_i$ and $t_{i+1}$ are strongly adjacent. Moreover, for each triangle $t$, if $w \in V(t)$ is a free vertex of $t$, then $S^t_{w}$ is a singleton.

    \begin{figure}[H]
        \centering
        \includegraphics[width=0.4\textwidth]{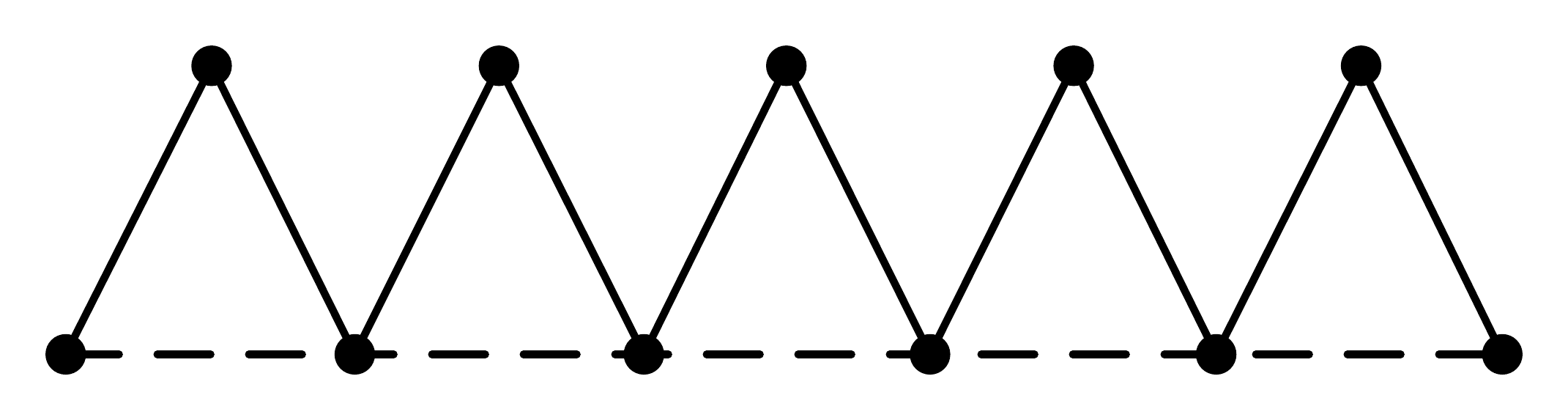}
        \caption{The cactus fence structure of size five.}
        \label{fig:fence-structure}
    \end{figure}

\begin{lemma}[Fence lemma]
\label{lem: fence 1} 
Any super-face $f \in \fset[1,0,0] \setminus \fset_{fri}[1,0,0]$ is bounded by free sides of a cactus fence together with one edge $e$ that is of type-$2$. 
\end{lemma}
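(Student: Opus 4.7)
The plan is to read the structure of $f$'s boundary off the defining constraints, and then pin down the fence shape using planarity, Proposition~\ref{prop:structure-heavy}, Claim~\ref{claim:free-edges-same-face}, the non-friendliness of $f$, and $2$-swap optimality. Since $p_1^{base}(f)+a_2(f)+a_1(f)=1$, $a_1^{free}(f)=0$, and $p_0^{base}(f)=0$, the boundary of $f$ consists of free sides of cactus triangles (contributing to $p^{free}(f)$) together with exactly one ``special'' boundary edge $e$ that is either (a) the base of a single type-$1$ triangle $t^\star$, (b) a type-$2$ edge, or (c) the occupied side of a type-$1$ edge. By Claim~\ref{claim:free-edges-same-face}, for every cactus triangle $t$ that contributes a free side to $f$, both free sides of $t$ lie on $f$'s boundary.

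Next I would rule out cases (a) and (c) to force $e$ to be of type-$2$. In each of these cases, $e$ supports a cross triangle drawn inside $f$ with some landing component $S_\ell$ drawn inside $f$. Since the boundary additionally contains at least one cactus triangle $t_1$ (forced by $|E(f)|\geq 3$ together with the block structure), I would construct an improving $2$-swap: remove $t_1$ (and $t^\star$ as well in case (a)), and add back three triangles built from the cross triangle supplied by $e$, a cross triangle arising on $t_1$'s free edges, and a third triangle routed into $S_\ell$ through an edge whose existence is guaranteed by $f$ being a face and by the planar embedding. The swap strictly increases the number of triangular cactus faces while preserving the cactus property, contradicting $2$-swap optimality.

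Third, with $e$ now of type-$2$, I would prove that every cactus triangle $t_i$ on $f$'s boundary satisfies $S^{t_i}_{w_i}=\{w_i\}$. Otherwise $w_i$ would be incident in $H[S]$ to an edge other than $u_iw_i$ and $v_iw_i$; by Proposition~\ref{prop:structure-heavy} (emptiness of $B^{t_i}_{u_iw_i}$ and $B^{t_i}_{v_iw_i}$) this edge must stay inside $S^{t_i}_{w_i}$, so $w_i$ is shared with some other cactus triangle $t'$, whose own free sides again lie on $f$ by Claim~\ref{claim:free-edges-same-face}. Planarity then makes $t_i$ and $t'$ strongly adjacent at $w_i$; the non-friendliness of $f$ forbids the edge between their free vertices in $G[S]$. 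Combining this configuration with the cross triangle supplied by $e$, I would exhibit an improving $2$-swap (remove $t_i$ and a neighbouring boundary triangle, and add three triangles routed through $t'$, the landing component of $e$, and an intermediate available edge), once more contradicting $2$-swap optimality.

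With the singleton condition in hand, the boundary walk around $f$ cycles through $u_1\to w_1\to v_1=u_2\to w_2\to\cdots\to v_k$ and back along $e$, so the shared vertex between any two consecutive triangles $t_i,t_{i+1}$ is a common base vertex of both. Since $f$ is a face of $H[S]$, the region bounded by $(v_i,w_i,w_{i+1})$ contains no $H[S]$-edges, giving the strong adjacency of $t_i$ and $t_{i+1}$; combined with the singleton condition this is exactly the structure of a cactus fence. The main obstacle is the third step: assembling the non-friendliness hypothesis and the specific planar availability of the cross triangle supplied by $e$ into a single, explicitly verifiable improving $2$-swap requires a careful planar case analysis to confirm both strict improvement and preservation of the cactus property.
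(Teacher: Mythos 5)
Your first step is fine: from $p_1^{base}(f)+a_2(f)+a_1(f)=1$, $a_1^{free}(f)=0$, $p_0^{base}(f)=0$ and Claim~\ref{claim:free-edges-same-face}, the boundary is free sides of cactus triangles plus one special edge, and case (a) (the special edge being a base side) is killed immediately by Observation~\ref{obs:f_a1_a2} ($a_1(f)+a_2(f)\geq 1$) --- no $2$-swap is needed there, and the swap you sketch cannot exist anyway because free edges of heavy triangles support no cross triangles (Proposition~\ref{prop:structure-heavy}), so there is no ``cross triangle arising on $t_1$'s free edges'' to add.

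The genuine gap is in your third and fourth steps, where the two substantive claims (singleton free components, i.e.\ no trapped triangles, and strong adjacency of consecutive triangles) are deferred to unspecified improving $2$-swaps. This is the wrong tool for this lemma: every improving $2$-swap in this paper is assembled around the empty triangle $(v,w_1,w_2)$, whose existence requires the edge $w_1w_2$ to be present in $G[S]$ --- exactly what non-friendliness of $f$ forbids --- and you give no concrete replacement triangles, so neither strict improvement nor preservation of the cactus property can be checked. Two of your supporting assertions are also false as stated: Claim~\ref{claim:free-edges-same-face} only says the two free sides of $t'$ share \emph{some} super-face, not that this face is $f$; and ``since $f$ is a face of $H[S]$, the region bounded by $(v_i,w_i,w_{i+1})$ contains no $H[S]$-edges'' begs the question, because a sub-cactus hanging off the shared vertex $v_i$ and drawn between $t_i$ and $t_{i+1}$ has its outer boundary \emph{on} $\partial f$, not in the interior of $f$, which is precisely the weakly-adjacent configuration you must exclude. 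The paper's proof (Lemma~\ref{lem:fence}) uses no exchange argument at all here: if a free component is non-singleton, or if two adjacent boundary triangles are only weakly adjacent, one contracts the offending sub-cactus and applies Observation~\ref{obs:f_a1_a2} to produce a second type-$1$ or type-$2$ edge on $\partial f$ that is not adjacent to the first, contradicting $a_1(f)+a_2(f)=1$. You would need to replace your swap sketches with an argument of this structural kind (or a fully specified swap, which we do not believe exists) for the proof to go through.
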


The proof of this lemma is quite intricate and is deferred to the next subsection. 
Moreover, from definition of the set $\fset[1,0,0] \setminus \fset_{fri}[1,0,0]$, each pair of cactus triangles on this face is not a pair of friends. 
It suffices to show that $survive(f) \leq |E(f)| - 5$: 
Since $|Occ(f)| = 1$, this would imply  $survive(f) \leq |E(f)| - 5 = |Free(f)| + |Occ(f)| - 5 = |Free(f)| + |Occ(f)|/2 - 4.5 = \mu(f) - 4.5$ which proves the lemma.

For obtaining the bound on $survive(f)$, we obtain an auxiliary graph $H'$ on $V(f)$ by modifying the inside of the super-face $f$. First we decouple the supported cross triangles drawn inside $f$ which share their landing components by adding a dummy landing vertex for each such cross triangle and making the new dummy vertex its landing component. Then the inside of $f$ is fully triangulated using additional type-$0$ edges such that in total it contains $|E(f)| - 2$ triangular faces. Notice that, this process cannot decrease the number of $survive(f)$ triangles drawn inside of $f$ in $H'$.

\begin{lemma}
\label{lem: triangulation contains friends}
If a super-face $f: |E(f)| \geq 5$ contains a single cactus fence structure and only one additional edge, then any triangulation of $f$ using type-$0$ edges must contain the free sides for at least one pair of cactus triangles which are friends.
\end{lemma}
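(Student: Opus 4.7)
The plan is to model the super-face $f$ as a simple polygon and apply the classical two-ears theorem from polygon triangulation, together with a crossing argument, to force an ear whose chord is exactly the edge needed to witness a friendly pair. First, I will describe the boundary of $f$ explicitly. Using the cactus fence $(t_1, \ldots, t_k)$, the $2k$ free sides string together into the path $u_1, w_1, v_{1,2}, w_2, v_{2,3}, \ldots, v_{k-1,k}, w_k, u_k$, where $w_i$ is the free vertex of $t_i$, $v_{i,i+1}$ is the base vertex shared by the consecutive fence triangles $t_i$ and $t_{i+1}$, and $u_1, u_k$ are the outer base vertices of $t_1$ and $t_k$; the single additional edge closes this path from $u_k$ to $u_1$, yielding a polygon with $|E(f)| = 2k+1 \geq 5$, so $k \geq 2$. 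I will then identify the chords $u_1 v_{1,2}, v_{1,2} v_{2,3}, \ldots, v_{k-1,k} u_k$ as \emph{forbidden}, since each would duplicate a base cactus edge of some $t_i$ in a simple planar graph. The crucial observation is that every potential ear at a free vertex $w_i$ requires exactly one of these forbidden chords, so no ear of the triangulation can sit at any $w_i$.

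Second, I will invoke the two-ears theorem: any triangulation of a simple polygon on at least four vertices has at least two ears. By the previous step, each such ear must sit at one of the base vertices $u_1, v_{1,2}, \ldots, v_{k-1,k}, u_k$. It remains to rule out the case where both ears sit at $u_1$ and $u_k$ simultaneously. An ear at $u_1$ forces the chord $w_1 u_k$, and an ear at $u_k$ forces the chord $w_k u_1$; a direct check of cyclic positions along the polygon boundary shows these two chords cross, and hence cannot coexist in any planar triangulation. Therefore, at least one of the guaranteed ears must sit at some shared vertex $v_{i,i+1}$; its chord is the type-$0$ edge $w_i w_{i+1}$, and the ear triangle itself is $(v_{i,i+1}, w_i, w_{i+1})$. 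Since $t_i$ and $t_{i+1}$ are strongly adjacent at $v_{i,i+1}$ by the cactus fence definition, and since $w_i w_{i+1}$ is now an edge of $G[S]$, the pair $(t_i, t_{i+1})$ is by definition a pair of friends, and both free sides $v_{i,i+1} w_i$ and $v_{i,i+1} w_{i+1}$ lie on $\partial f$ and hence appear in the triangulation.

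The main technical obstacle will be handling possibly degenerate fence embeddings in which several triangles share a single common base vertex, which would cause the closed boundary walk of $f$ to revisit that vertex and violate polygon simplicity. I will resolve this by a rotational argument at any such vertex $v$: strong adjacency of $t_i$ and $t_{i+1}$ at $v$ forces the free edges $vw_i$ and $vw_{i+1}$ to be rotationally consecutive in the planar embedding around $v$, and because each cactus triangle $t_i$ also forces its own base edge $v u_i$ to be rotationally adjacent to $v w_i$ at $v$, the chain of strong adjacencies through $v$ can accommodate at most two fence triangles. Thus for $k \geq 3$ no coincidence of shared base vertices can occur and the polygon is simple, while for $k = 2$ the boundary is already a simple pentagon and the ear argument above applies verbatim.
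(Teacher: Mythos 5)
Your main argument is exactly the paper's: view $f$ as a polygon, invoke the two-ears theorem, observe that an ear at a free vertex $w_i$ would require the chord joining the two base vertices of $t_i$ (a parallel copy of the base edge of $t_i$, hence forbidden in a simple graph), and conclude that some ear sits at a shared base vertex $v_{i,i+1}$, producing the triangle $(v_{i,i+1},w_i,w_{i+1})$ and the friends edge $w_iw_{i+1}$. Your explicit exclusion of the configuration with one ear at $u_1$ and one at $u_k$ is a detail the paper's one-sentence proof glosses over; your crossing argument for it is correct, and it can be seen even more directly: both of those ears would contain the boundary edge $e=u_ku_1$, and each boundary edge of a polygon lies in exactly one triangle of a triangulation.

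The one weak point is your treatment of what you yourself flag as the main obstacle, namely the simplicity of the boundary. Your rotational argument at a repeated base vertex $v$ only shows that a single chain of strong adjacencies pivoting at $v$ contains at most two fence triangles; it does not exclude a coincidence of two far-apart pivots, say $v_{1,2}=v_{4,5}$, where $v$ would carry two disjoint consecutive blocks in its rotation (one for the pair $t_1,t_2$, one for $t_4,t_5$) and no local contradiction arises. The clean way to rule out any coincidence among $u_1,v_{1,2},\dots,v_{k-1,k},u_k$ is the cactus property: if $v_{i,i+1}=v_{j,j+1}$ with $i<j$, the base edges of $t_{i+1},\dots,t_j$ concatenate into a closed walk of cactus edges, which forces either a repeated edge (two cactus triangles sharing their base edge) or a cycle in $\cset[S]$ consisting of base edges and hence sharing an edge with the cactus triangle $t_{i+1}$ while being distinct from it --- in either case contradicting that $\cset$ is a cactus. (The case $j=i+1$ is already impossible since $t_{i+1}$ would have two coinciding base vertices, and the same walk argument handles coincidences involving $u_1$ or $u_k$.) With that substitution the boundary of $f$ is a simple $(2k+1)$-gon and the rest of your proof goes through.
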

\begin{proof}
The lemma follows easily using the facts that in any triangulation of a polygon there are at least two triangles each containing two side of the polygon and no two base vertices can be joined by an edge inside super-face $f$ as this will create a multi-edge, hence there should be at least one triangular face containing two adjacent free edges each belonging a different cactus triangle from a pair of strongly adjacent cactus triangles.
\end{proof}

It is clear that $|E(f)| \geq 5$, hence by Lemma~\ref{lem: triangulation contains friends}, $H'$ contains an edge $e'$ joining strongly adjacent pair of cactus triangles.
Hence, $e' \in E(H')$ but not drawn inside $f$ in $G$ (since $G$ cannot contain any pair of friends), so $H' \setminus e'$ still contains all surviving faces in the original graph and has only $|E(f)| - 4$ triangular faces inside $f$. 
Since the friends edge $e'$ goes across the two free vertices of two cactus triangles but $e$ joins two base vertices of two cactus triangles, hence they cannot form a triangle together. This implies at least one more triangular face which is bounded by $e$, does not survive, which proves Lemma~\ref{lem:f1_0}.

\subsubsection{$\fset[1,1,0] \setminus \fset_{fri}[1,1,0]$ (Proof for  Lemma~\ref{lem:f11_0})}

Same reasoning as in the proof of the previous case proves this case as well, the only difference is that, since $a_2(f) + a_1(f) + p_1^{base}(f) = 1$ and $a_1^{free}(f) = 1$, it implies $|Occ(f)| = 0$. 
We can show $survive(f) \leq |E(f)| - 4 = \mu(f) - 4$ by simply using the absence of edge $e'$ (from Lemma~\ref{lem: triangulation contains friends}), therefore missing two surviving faces from the triangulation in the interior of super-face $f$.



\subsubsection{$\fset[2,0,0] \setminus \fset_{fri}[2,0,0]$ (Proof for  Lemma~\ref{lem:f2_0})}

It suffice to show that $survive(f) \leq |E(f)| - 4$: 
Since $|Occ(f)| = 2$, this would imply  $survive(f) \leq |E(f)| - 4 = |Free(f)| + |Occ(f)| - 4 = |Free(f)| + |Occ(f)|/2 - 3 = \mu(f) - 3$ which proves the lemma. 

Similarly to the previous case, let $H'$ be the maximal auxiliary graph on $V(f)$ that contains all edges drawn in the interior of $f$ in $G$. 
Then $H'$ has $|E(f)| - 2$ triangular faces inside of $f$. 
Since $p_1^{base}(f) + a_2(f) + a_1^{occ}(f) = 2$, let $e_1$ and $e_2$ be the two edges bounding $f$ that contribute to this sum.
If $e_1$ and $e_2$ bound different faces of $H'$, then we are done, since the number of surviving faces of $H'$ is at most $|E(f)| - 4$. 

Now, assume that $e_1$ and $e_2$ bound the same face of $H'$. 

\begin{lemma} [The second fence lemma]
\label{lem: fence 2}
For any super-face $f \in \fset[2, 0, 0] \setminus \fset_{fri}[2,0,0]$, if the two edges corresponding to $p^{base}_1(f) + a_2(f) + a_1(f)$ are adjacent, then the face consists of a cactus fence of size $p^{free}(f)$ together with two edges $e_1$ and $e_2$ that contribute to the sum $p^{base}_1(f) + a_2(f) + a_1(f)$. 
\end{lemma}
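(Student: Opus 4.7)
My plan is to adapt the strategy underlying the first fence lemma (Lemma~\ref{lem: fence 1}). Since $f$ is of type-$[2,0,0]$, the only edges that can bound $f$ are two occupied edges $e_1,e_2$ (each of which is a type-$2$ edge, a type-$1$ edge on its occupied side, or the base side of a type-$1$ cactus triangle) together with free cactus edges: the conditions $a_1^{free}(f)=0$ and $p_0^{base}(f)=0$ rule out type-$1$ free sides and type-$0$ base sides on $f$, and by Claim~\ref{claim:free-edges-same-face} both free sides of any cactus triangle touching $f$ belong to $f$. A direct count then gives $|E(f)|=2p^{free}(f)+2$. Since $e_1$ and $e_2$ are assumed adjacent, walking along the boundary of $f$ starting right after $e_2$ produces a contiguous run of exactly $2p^{free}(f)$ free cactus edges that terminates at $e_1$.

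I would then peel this contiguous run into pairs. The two free edges of a cactus triangle share its free vertex and therefore appear consecutively on the face. So the run decomposes into $p^{free}(f)$ consecutive pairs, giving a sequence of cactus triangles $(t_1,\ldots,t_{p^{free}(f)})$. Between consecutive pairs the walk passes through a single cactus vertex that must be a base vertex of both $t_i$ and $t_{i+1}$, so any two consecutive triangles in the sequence are adjacent in the sense of Section~\ref{sec:factor-7-rules}.

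It remains to upgrade adjacency to strong adjacency, to verify that each intermediate free split component is a singleton, and to rule out friendship. For strong adjacency, I will mirror the exchange/structural argument used for Lemma~\ref{lem: fence 1}: if some pair $t_i,t_{i+1}$ were only weakly adjacent, then by Proposition~\ref{prop:structure-heavy} together with Observation~\ref{Observation:adj-fri} the witness of the weak adjacency would force an additional cactus, type-$1$, or type-$2$ edge onto the boundary of $f$, contradicting the exact count $|E(f)|=2p^{free}(f)+2$. The singleton condition on each intermediate free split component follows by the same counting: any extra vertex in $S^{t_i}_{w_i}$ would contribute further cactus edges to $\partial f$, again breaking the count. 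Finally, since $f \notin \fset_{fri}[2,0,0]$, no consecutive pair $t_i,t_{i+1}$ is a pair of friends, so the sequence $(t_1,\ldots,t_{p^{free}(f)})$ is a cactus fence in the precise sense required by the statement.

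The hard part will be the case where one of $e_1,e_2$ is the base side of a type-$1$ cactus triangle $\tau$ whose two free sides must also lie on $f$ by Claim~\ref{claim:free-edges-same-face}. Here $\tau$ plays a double role: it is simultaneously one of the fence members $t_i$ and a contributor to the occupied block. I will need to show that when the boundary walk enters $e_1$ it does so from the free side of $\tau$, so that the fence count $p^{free}(f)$ is still consistent and the weak-adjacency exchange argument applies verbatim to the neighbors of $\tau$ in the sequence. The Friend Lemma (Lemma~\ref{lem:friend}) will be invoked in this step to preclude $\tau$ being friends with its neighboring fence triangle, which is precisely what allows the fence structure to extend across the type-$1$ triangle without collapsing.
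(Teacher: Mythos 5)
There is a genuine gap at the heart of your argument: the step that upgrades adjacency to strong adjacency and forces each intermediate free split component to be a singleton. You claim a contradiction with ``the exact count $|E(f)|=2p^{free}(f)+2$,'' but that count is a definitional identity ($|E(f)| = 2p^{free}(f)+p^{base}(f)+a_2(f)+a_1(f)$) which holds whether or not a triangle is trapped or a pair is only weakly adjacent; extra free cactus edges on the boundary simply increase $p^{free}(f)$ and $|E(f)|$ in lockstep, so no contradiction arises from them. The only thing that can be contradicted is the type constraint $p_1^{base}(f)+a_2(f)+a_1(f)=2$ with the two contributing edges adjacent, and for that you must exhibit a type-$1$ or type-$2$ edge on $\partial f$ that is provably \emph{not} one of $e_1,e_2$. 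The paper does exactly this (Lemma~\ref{lem:fence}): if some free component $S_w^t$ is not a singleton, or some adjacent pair is only weakly adjacent, then Observation~\ref{obs:f_a1_a2} applied to the offending sub-component yields one type-$1$/type-$2$ edge $e$ on $\partial f$, and a contraction of that sub-component yields a second one $e'$ on the residual face, with $e$ and $e'$ non-adjacent; since $e_1,e_2$ are assumed adjacent, $\{e,e'\}\neq\{e_1,e_2\}$ and the occupied count exceeds $2$. Your appeal to Proposition~\ref{prop:structure-heavy} and Observation~\ref{Observation:adj-fri} does not supply this: neither says that weak adjacency forces any particular edge onto $\partial f$.

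A secondary error: your ``hard case'' of a type-$1$ triangle $\tau$ whose base side is $e_1$ and whose free sides ``must also lie on $f$ by Claim~\ref{claim:free-edges-same-face}'' misreads that claim. Claim~\ref{claim:free-edges-same-face} only says the two free sides lie on a common super-face, and Observation~\ref{Observation:different-face-base-free-t} states that the base side and the free sides of a heavy triangle lie on \emph{different} super-faces. So $\tau$ cannot simultaneously contribute $e_1$ and be a fence member of $f$; the double-role scenario you prepare for does not occur, and the Friend Lemma is not needed there. Your boundary-walk decomposition of the free run into consecutive pairs and the observation that consecutive fence triangles share a base vertex are fine, but without the contraction argument (or an equivalent way to produce two non-adjacent occupied edges) the lemma is not proved.
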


Since, both $e_1$ and $e_2$ bounds the same triangular face of $H'$, they must be adjacent. Let $e$ be the third edge which bounds the triangular face adjacent to both $e_1$ and $e_2$ drawn inside of $f$ in $H'$.
Now, consider the graph $\tilde H = H' \setminus \{e_1, e_2\}$, so $\tilde H$ consists of a cactus fence together with $e$. 
Using Lemma~\ref{lem: triangulation contains friends}, $\tilde H$ must contain an edge joining strongly adjacent pair of cactus triangles. 
This edge cannot exist in the original graph since $f$ contains no pair of friends, so $\tilde H \setminus e$ still contains all surviving faces of the original graph. But it contains at most $|E(f)| - 5$ surviving faces. 

\subsection{Proof of the Fence Lemmas} 

In this section, we prove the two fence lemmas used in deriving the gain bounds in the previous section. 
An important notion that we will use is that of the {\em trapped triangles.}

\paragraph{Trapped and free triangles:} We further classify heavy cactus triangles based on whether their free component is a singleton or not. Let $f$ be a face that contains free sides of heavy triangle $t$. 
If the free component of  heavy triangle $t$ is a singleton, then we call $t$ a {\em free} triangle, else it will be a {\em trapped} triangle inside $f$. 



The following lemma implies the two fence lemmas used in the previous section.

\begin{lemma}
\label{lem:fence}
For any super-face $f \in \fset$ with $p^{base}(f) = 0$, if $a_1(f) + a_2(f) = 1$ or if $a_1(f) + a_2(f) = 2$ but the two type-$1$ and type-$2$ edges are adjacent: 
\begin{itemize}
    \item Then, there can be no triangle trapped inside $f$ and 
    
    \item Every pair of adjacent triangles is strongly adjacent. 
\end{itemize}
\end{lemma}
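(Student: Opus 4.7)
The plan is to prove both claims by a direct structural analysis of the boundary walk of $f$. Since $p^{base}(f) = 0$, the cactus edges appearing on the boundary of $f$ are all free edges, and by Claim~\ref{claim:free-edges-same-face} both free edges of every cactus triangle incident to $f$ occur in the walk, meeting at the corresponding free vertex. Combined with the hypothesis that at most two non-cactus edges appear on $\partial f$ (and that these share a vertex when there are two), this forces the boundary of $f$ to resemble a \emph{fence}: a chain of triangles traversed alternately through their two free edges, sealed at the top by the one or two type-$1$/type-$2$ edges. The proof simply formalizes this picture.

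For the first claim (no trapped triangle), I would assume for contradiction that some triangle $t$ with free vertex $w$ on $f$ satisfies $|S^t_w|\geq 2$. Then $w$ is also a vertex of some other cactus triangle $t'\subseteq \cset[S^t_w]$, so $w$ has at least one additional cactus edge incident to it in $H$ beyond $uw$ and $vw$. In the planar embedding, this extra edge sits between $uw$ and $vw$ on exactly one side; if it sits on the side facing $f$, the boundary walk at $w$ must detour through $t'$, which introduces either a base edge of $t'$ (contradicting $p^{base}(f)=0$) or additional non-cactus edges beyond the one or two allowed. If instead the extra structure of $S^t_w$ is drawn on the side opposite $f$, then the additional cactus/type-$1$/type-$2$ edges of $S^t_w$ are not interior to the region of $f$, so by definition $t$ is not trapped in $f$. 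Either case contradicts the hypothesis, proving the first claim.

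For the second claim (strong adjacency), consider two triangles $t_1,t_2$ sharing a vertex $v$ whose free sides both belong to $\partial f$, with free vertices $w_1,w_2$ respectively. By the fence description above, the boundary walk of $f$ near $v$ passes through the two free edges $vw_1$ and $vw_2$; the question is whether some edges incident to $v$ in $H$ separate these two free edges inside the angular region facing $f$. Any such separating edge would appear on $\partial f$ and would be either a base edge (impossible by $p^{base}(f) = 0$) or a type-$1$/type-$2$ edge. For $a_1(f)+a_2(f)=1$ the unique non-cactus edge already appears elsewhere on the fence (as its ``top''), and for $a_1(f)+a_2(f)=2$ with the two edges adjacent I would do a short case analysis showing that placing such an edge into the $(v,w_1,w_2)$ region is inconsistent with the sharing-a-vertex hypothesis. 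Consequently, the region bounded by $(v,w_1,w_2)$ inside $f$ is empty, and a chord $w_1w_2$ can be drawn there, so $t_1$ and $t_2$ are strongly adjacent.

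The main obstacle is the local-to-global topological argument: cactus vertices may be cut vertices of $H$, so super-faces can have non-simply-connected or non-simple boundary walks, and one must be careful about which ``side'' of a free edge the additional structure of $S^t_w$ occupies. The delicate point in both parts is to convert the claim ``extra structure at $w$ (or $v$) facing $f$'' into a concrete edge-count inside $\partial f$, which is what makes the hypothesis $p^{base}(f)=0$ together with the tight bound on $a_1(f)+a_2(f)$ do the actual work.
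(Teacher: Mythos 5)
Your high-level intuition --- that any obstruction must force forbidden edges onto $\partial f$ --- matches the paper's, but the quantitative core of the argument is missing in both parts. In the first part, the dichotomy ``the detour through $S^t_w$ introduces either a base edge of $t'$ or additional non-cactus edges beyond the one or two allowed'' is asserted rather than proved, and even granting it, it does not yield a contradiction: under the hypothesis $a_1(f)+a_2(f)=2$, one extra type-$1$/type-$2$ edge coming from the boundary of $H[S^t_w]$ is perfectly consistent with the count, and you never address adjacency. The paper closes exactly these two holes. First, $H[S^t_w]$ is itself a connected \emph{heavy} cactus, so by Observation~\ref{obs:f_a1_a2} its skeleton must expose at least one type-$1$/type-$2$ edge $e$ on its outer boundary, which lies on $\partial f$ (this is the justification your dichotomy lacks). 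Second --- and this is the step entirely absent from your proposal --- one contracts $S^t_w$ to a point; the residual component is still a heavy cactus, so the residual super-face contributes a \emph{second} type-$1$/type-$2$ edge $e'$ to $\partial f$, and $e$ and $e'$ have endpoints in essentially disjoint vertex sets, hence are non-adjacent. Only the pair ``two non-adjacent type-$1$/type-$2$ edges on $\partial f$'' contradicts both branches of the hypothesis. (Separately, your fallback ``if $S^t_w$ is drawn on the side opposite $f$ then $t$ is not trapped'' misreads the definition: trapped is defined solely by the free component being a non-singleton; that case is in fact impossible because the other side of the free edges is the empty cactus-triangle face.)

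The second part has the same gap plus a circularity. The obstruction to strong adjacency is not a single separating edge at $v$ but an entire sub-cactus $S'\subseteq S^{t_1}_{v}\cap S^{t_2}_{v}$ drawn inside $f$; the edges of $S'$ incident to $v$ can all be free sides of its triangles, so your claim that a separating edge ``would be either a base edge or a type-$1$/type-$2$ edge'' fails. The paper again applies the heaviness-plus-contraction argument to $S'$ to manufacture two non-adjacent type-$1$/type-$2$ edges on $\partial f$. Finally, your appeal to ``the unique non-cactus edge already appears elsewhere on the fence (as its top)'' presupposes the fence structure of $f$, which is precisely what Lemmas~\ref{lem: fence 1} and~\ref{lem: fence 2} derive \emph{from} the present lemma; you cannot use it here. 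The deferred ``short case analysis'' for $a_1(f)+a_2(f)=2$ is where all the real work lives, and it is not supplied.
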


\begin{proof}
We do this in two steps. 
\begin{itemize}
    \item In the first step, we argue that every triangle is not trapped inside $f$. 
    Assume otherwise, that some $t: V(t) = \{u,v,w\}$ is trapped, and the free component $S_w^t$ is not a singleton. 
    Since $S^t_w$ is a free component, we have that $B^t_{wu} \cup B^t_{wv}$ is empty. 
     By using Observation~\ref{obs:f_a1_a2} on $S_w^t$, there is at least one type-$1$ or type-$2$ edge, says $e$, bounding the outer-face of the graph $H[S_w^t]$ and edge $e$ also bounds the face $f$ (see Figure~\ref{fig:trapped}). 
    
    Now consider the contracted graph that contracts $S_w^t$ into a single vertex. 
    Let $f'$ be the residual super-face corresponding to $f$ and $S'$ be the residual component after the contraction of $S_w^t$. 
    Notice that, the graph $H[S']$ contains only heavy triangles: For any cactus triangle $t'$ in $H[S']$, no type-$1$ or type-$2$ edge that contributes to its ``heaviness'' was contracted. This implies that the super-face $f'$ of $H'[S']$ contains at least one type-$1$ or type-$2$ edge, says $e'$  (by Observation~\ref{obs:f_a1_a2}). 
    It is easy to verify that $e$ and $e'$ are not adjacent.
    
     \begin{figure}[H]
        \centering
        \includegraphics[width=0.9\textwidth]{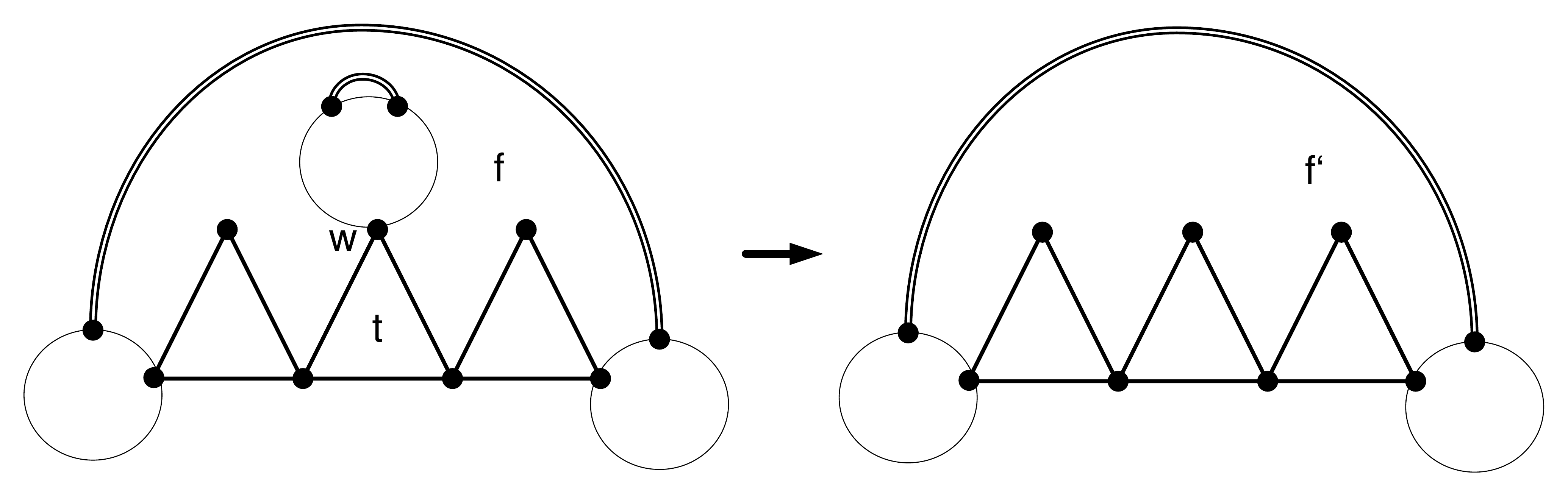}
        \caption{Contraction operation when $f \in \fset$ contains a trapped triangle's free side.}
        \label{fig:trapped}
    \end{figure}
    
    \item Now we prove the second property. Let $t_1$ and $t_2$ be an adjacent pair of triangles whose free sides bound the super-face $f$. 
    We will argue that $t_1: V(t_1) = \{u_1,v_1, w_1\} $ and $t_2: V(t_2) = \{u_2, v_2, w_2\}$ are strongly adjacent, with $w_i$ being the free vertex of $t_i$ and $v_1 = v_2$ being the common vertex. 
    Assume that they were not strongly adjacent.

    
    Notice that, since the free sides for both $t_1, t_2$ bound a common super-face $f$, this can only happen if $S^{t_1}_{v_1}$ has a connected component $S': S' \subseteq S^{t_1}_{v_1} \cap S^{t_2}_{v_2}$ drawn inside $f$ (see Figure~\ref{fig:weakly-adjacent-free-sides}.)  
    
    Observe that $C[S']$ contains only heavy cactus triangles: Any type-$1$ or type-$2$ edge with exactly one end point in $S'$ can only be incident on $v_1$ and must be drawn in the exterior of $f$. 
    Again, as in the previous case, we can do the contraction trick to argue that there exist two type-$1$ or type-$2$ edges $e$ and $e'$ bounding face $f$ such that $e \neq e'$ and they are not adjacent.
     
     \begin{figure}[H]
        \centering
        \includegraphics[width=0.9\textwidth]{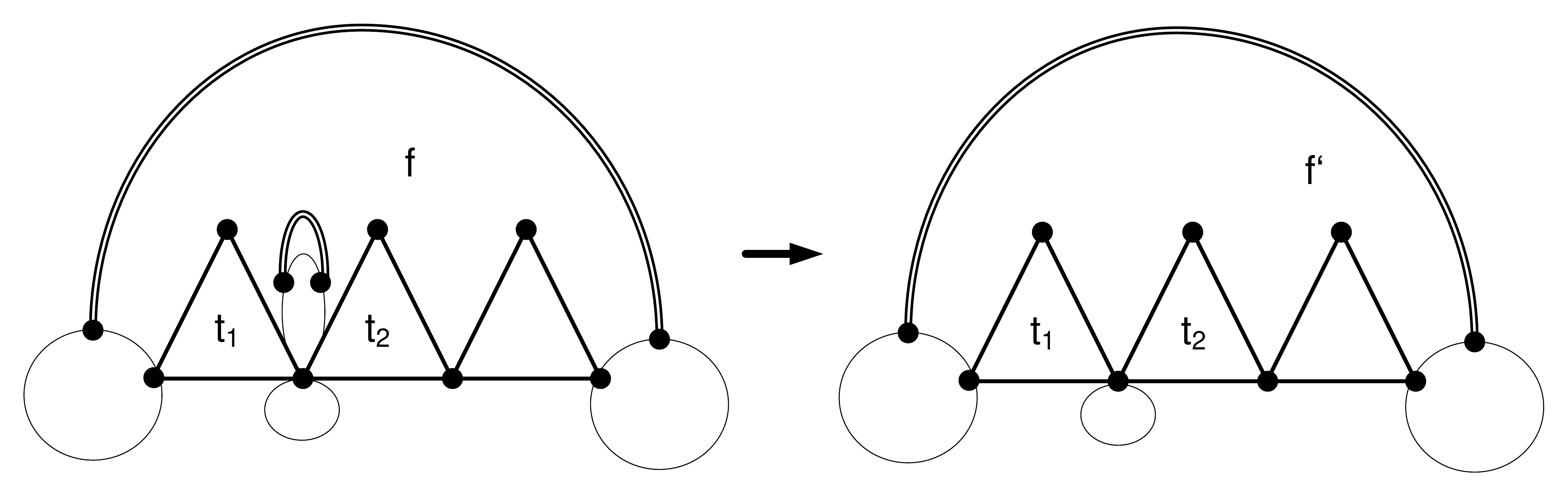}
        \caption{Contraction operation when $f \in \fset$ contains a pair of free sides which corresponds to a pair of weakly-adjacent triangles.}
        \label{fig:weakly-adjacent-free-sides}
    \end{figure}
\end{itemize}
\end{proof}

\subsection{Proof of the Friend Lemma (Proof of Lemma~\ref{lem:friend})} 
\label{sec:friend-lemma} 
In this section, we prove the friend lemma. We will rely on some structural observations:

\begin{lemma}
\label{lem:one-blue-per-occupied-t}
Let $f \in \fset$ and $t=(u,v,w)$ be any heavy cactus triangle such that $E(t) \cap E(f) \neq \emptyset$, and $uv \in E(t)$ be its (unique) cactus edge for which $B^t_{uv} \neq \emptyset$.  Then, we have $|E(f) \cap B^t_{uv}| = 1$.
\end{lemma}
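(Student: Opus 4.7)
The plan is to show that there are precisely two super-faces in $\fset$ that can contain any edge of $t$, and then to verify that each of them contains exactly one edge of $B^t_{uv}$.

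First, I would locate the super-faces incident to $t$. The triangular face bounded by $uv, uw, vw$ is not in $\fset$, so each edge of $t$ bounds exactly one super-face in $\fset$ on its opposite side. Let $f_{uv}$ denote the super-face on the other side of $uv$. By Claim~\ref{claim:free-edges-same-face}, the two free edges $uw, vw$ lie on a common super-face $f_{free}$. I then need to argue that $f_{uv}\neq f_{free}$: heaviness of $t$ together with Proposition~\ref{prop:structure-heavy} guarantees $|B^t_{uv}|\geq 1$, and any such edge, drawn on the $uv$-side of $t$, topologically separates the $uv$-side of the exterior of $t$ from the $w$-side, so the two super-faces are distinct. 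Hence $f\in\{f_{uv},f_{free}\}$.

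Second, I would use the heaviness structure to analyze the boundary walks. The main observation, via Proposition~\ref{prop:structure-heavy}, is that in the skeleton $H[S]$ the \emph{only} edges going between distinct split components of $t$ are $E(t)\cup B^t_{uv}$, because $B^t_{uw}=B^t_{vw}=\emptyset$. For $f_{uv}$, its boundary walk contains the edge $uv$ and, deleting $uv$, reduces to a walk $P$ from $v$ to $u$ inside $H[S]\setminus E(t)$ lying on the $uv$-side. Every intermediate vertex of $P$ belongs to one of the split components $S^t_u, S^t_v, S^t_w$, and every transition between two different split components along $P$ must use an edge of $B^t_{uv}$. Since $P$ starts at $v\in S^t_v$ and ends at $u\in S^t_u$, and no edge can appear more than once in the boundary of a single face on the same side, there is exactly one such transition, giving $|E(f_{uv})\cap B^t_{uv}|=1$.

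The argument for $f_{free}$ is completely analogous: its boundary walk traces $uw$, then runs along the outer boundary of (the skeleton of) $S^t_w$, then traverses $wv$, and finally closes up by a walk from $v$ back to $u$ lying on the \emph{opposite} (outer) side of $uv$. Again, transitions between split components along this closing walk can happen only via $B^t_{uv}$, and parity/planarity force exactly one such transition. Thus $|E(f_{free})\cap B^t_{uv}|=1$, completing the proof.

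The main obstacle I anticipate is making the uniqueness of the transition fully rigorous: one must argue, using planarity and the cyclic ordering of edges around vertices of $S^t_u$ and $S^t_v$, that the boundary walk of $f_{uv}$ (respectively $f_{free}$) cannot re-enter the same split component after leaving it and thereby use more than one $B^t_{uv}$ edge. Once this is pinned down, the bound $|E(f)\cap B^t_{uv}|=1$ follows uniformly for both possible choices of $f$.
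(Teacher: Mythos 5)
Your overall route is the same as the paper's: walk along the boundary of $f$, track which split component of $t$ the walk is in, observe that (by Proposition~\ref{prop:structure-heavy}) the only $H[S]$-edges between distinct split components are $E(t)\cup B^t_{uv}$, and count the crossings. However, the decisive step is missing, and you flag it yourself in your last paragraph. The justification you offer for uniqueness of the crossing --- ``no edge can appear more than once in the boundary of a single face on the same side, there is exactly one such transition'' --- does not work: the boundary walk could a priori use several \emph{distinct} edges of $B^t_{uv}$, crossing $S^t_u\to S^t_v\to S^t_u\to S^t_v$. Parity (start in $S^t_u$, end in $S^t_v$) only tells you the number of crossings is odd, not that it equals one. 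Ruling out three or more crossings is exactly the content of the lemma, so as written the proposal assumes what it needs to prove.

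The paper closes this gap with a concrete maximal-trail-plus-region argument that you would need to reproduce. Starting from $u$, take the maximal trail $P_2$ along $\partial f$ staying inside $S^t_u$; the next boundary edge $u_2u_3$ must land in $S^t_v$ (it cannot land in $S^t_w$, since $B^t_{uw}=B^t_{vw}=\emptyset$ and an extra entry into $S^t_w$ would separate the two free sides of $t$, contradicting Claim~\ref{claim:free-edges-same-face}). Then take the maximal trail $P_3$ from $u_3$ inside $S^t_v$ and show it must reach $v$: if it ended at some $v'\neq v$, the next boundary edge $e'$ would have to be drawn inside the closed region $R$ bounded by the sides of $t$ on $f$, the trail $P_2u_3P_3$, and a cactus path from $v'$ to $v$ inside $S^t_v$; but $e'$ can go neither to $S^t_w$ nor back to $S^t_u$ (the latter contradicts the maximality of $P_2$ and the choice of $u_2u_3$). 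Once $P_3$ reaches $v$, the next boundary edge is an edge of $t$, so the walk closes having used exactly one edge of $B^t_{uv}$. This single argument covers both of your cases $f_{uv}$ and $f_{free}$ at once, so the case split you set up, while harmless, is not needed.
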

\begin{proof}
Let $P_2$ be a maximal trail along the boundary of $f$ starting from $u$ and only visiting vertices in $S_u^t$ in graph $H[S]$. Notice that $P_2$ may use cactus edges or type-$1$ or type-$2$ edges. Let $u_2$ be the other endpoint of $P_2$ and $u_2u_3$ be the next edge on the boundary of $f$, such that $u_3 \in S_v^t \cup S_w^t$. 
First, notice that $u_3$ cannot be in $S^w_t$ , for otherwise, we would have the free sides of $t$ on different super-faces. Therefore, $u_3 \in  S_v^t$ . Now, let $P_3$ be a maximal trail from $u_3$ along the boundary of $f$, visiting only vertices in $S_v^t$. We claim that $P_3$ must contain $v$: Otherwise, let $v'$ be the last node on $P_3$ and $e'$ be the next edge on $f$ incident to $v'$. Consider a region $R$ bounded by (i) the sides of $t$ on super-face $f$,(ii) trail $P_2u_3P_3$,and (iii) any path from $v'$ to $v$ using only cactus edges in $S_v^t$. This close region must contain super-face $f$, so $e'$ must be drawn inside $R$ (see Figure~\ref{fig:unique-blue-p-base-p-free-faces}). This is a contradiction since $e'$ cannot connect $v'$ to a node in $S_w^t$ (same reasoning as before), and similarly it cannot connect $v'$ to $S_u^t$ (this would contradict the choice of $u_2$ or the edge $u_2u_3$.).
\end{proof}

    \begin{figure}[H]
        \centering
        \includegraphics[width=0.9\textwidth]{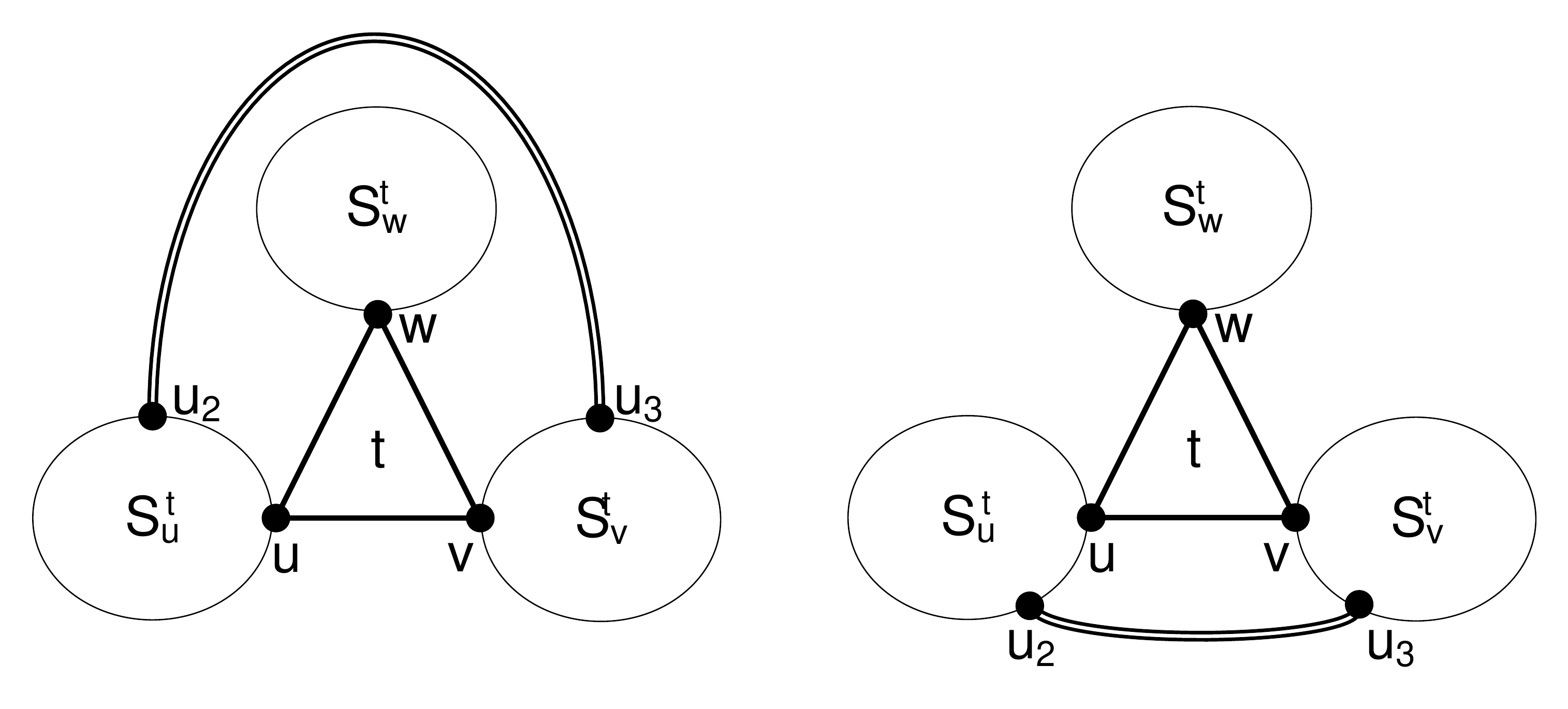}
        \caption{The regions containing free sides or base sides of a heavy triangle $t$ in the proof for Lemma~\ref{lem:one-blue-per-occupied-t}.}
        \label{fig:unique-blue-p-base-p-free-faces}
    \end{figure}

\begin{Observation}
\label{Observation:different-face-base-free-t}
For any heavy triangle $t$, the free or base sides will be adjacent to two different super-faces in $\fset$.
\end{Observation}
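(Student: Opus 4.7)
The plan is to exhibit a cycle $C$ in $H[S]$ that topologically separates the base side of the base edge from the free sides of the two free edges, which will immediately force the two super-faces to differ. Let $t$ be a heavy cactus triangle with $V(t) = \{u, v, w\}$, base edge $uv$ and free vertex $w$. By Claim~\ref{claim:free-edges-same-face}, both free sides lie on a common super-face $f_1 \in \fset$; let $f_0 \in \fset$ denote the super-face containing the base side of $uv$. The target is to show $f_0 \neq f_1$.

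By Proposition~\ref{prop:structure-heavy}, $B^t_{uv} \neq \emptyset$; pick any edge $e = xy \in B^t_{uv}$ with $x \in S^t_u$ and $y \in S^t_v$. Since each split cactus is connected, there exist simple cactus paths $P_u$ from $u$ to $x$ in $\cset^t_u$ and $P_v$ from $y$ to $v$ in $\cset^t_v$. Because $S^t_u$ and $S^t_v$ are disjoint, the concatenation $C := P_u \cdot e \cdot P_v \cdot uv$ is a simple cycle that uses only edges of $H[S]$.

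The key geometric input is that $e$ cannot be drawn inside $t$: since $t$ is a triangular face of $G$ by assumption, its interior contains no edge of $G$. Hence the arc $P_u \cdot e \cdot P_v$ lies entirely in the exterior of $t$, and the simple closed curve $C$ bounds a region on the base (exterior) side of $uv$. In particular, the base side of $uv$ is contained in the interior of $C$, while the interior of $t$, and therefore the free vertex $w$ (which is not on $C$), lies in the exterior of $C$.

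To conclude, I would observe that the free edges $uw$ and $vw$ must be drawn entirely in the exterior of $C$, since $w$ lies outside $C$, both $u$ and $v$ lie on $C$, and the edges of the plane graph cannot cross edges of $C$. Thus the free sides of $uw$ and $vw$ sit in the exterior of $C$, whereas the base side of $uv$ sits in the interior of $C$. As $C$ is a cycle in $H[S]$, no single face of $H[S]$ can touch both sides of $C$, forcing $f_0 \neq f_1$. The main (and only) obstacle is to verify the topological separation step carefully: that $C$ is simple, that $w$ sits in the outer region of $C$, and that no other edge of $H[S]$ allows a path across $C$. All these are consequences of the split-cactus structure and the fact that no edge of $G$ lies in the interior of the triangular face $t$.
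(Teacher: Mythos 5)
Your argument is correct: the cycle $C = P_u \cdot e \cdot P_v \cdot uv$ lies in $H[S]$, avoids the interior of the triangular face $t$ (and the vertex $w$), and therefore separates the base side of $uv$ from the free sides of $uw$ and $vw$, so the two super-faces must differ. The paper states this as an unproved Observation (illustrated by Figure~\ref{fig:p-base-p-free-faces}), and your separating-cycle construction is exactly the justification implicit there — it is the same technique the paper uses in Observation~\ref{Observation:different-components-green-blue-cycle} and Lemma~\ref{lem:one-blue-per-occupied-t}; the only cosmetic caveat is that "interior/exterior of $C$" should be read as "the two regions of $C$, distinguished by which side of $uv$ they locally border," since the arc through $e$ need not be drawn on the base side of the line $uv$, only outside the face $t$.
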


    \begin{figure}[H]
        \centering
        \includegraphics[width=0.7\textwidth]{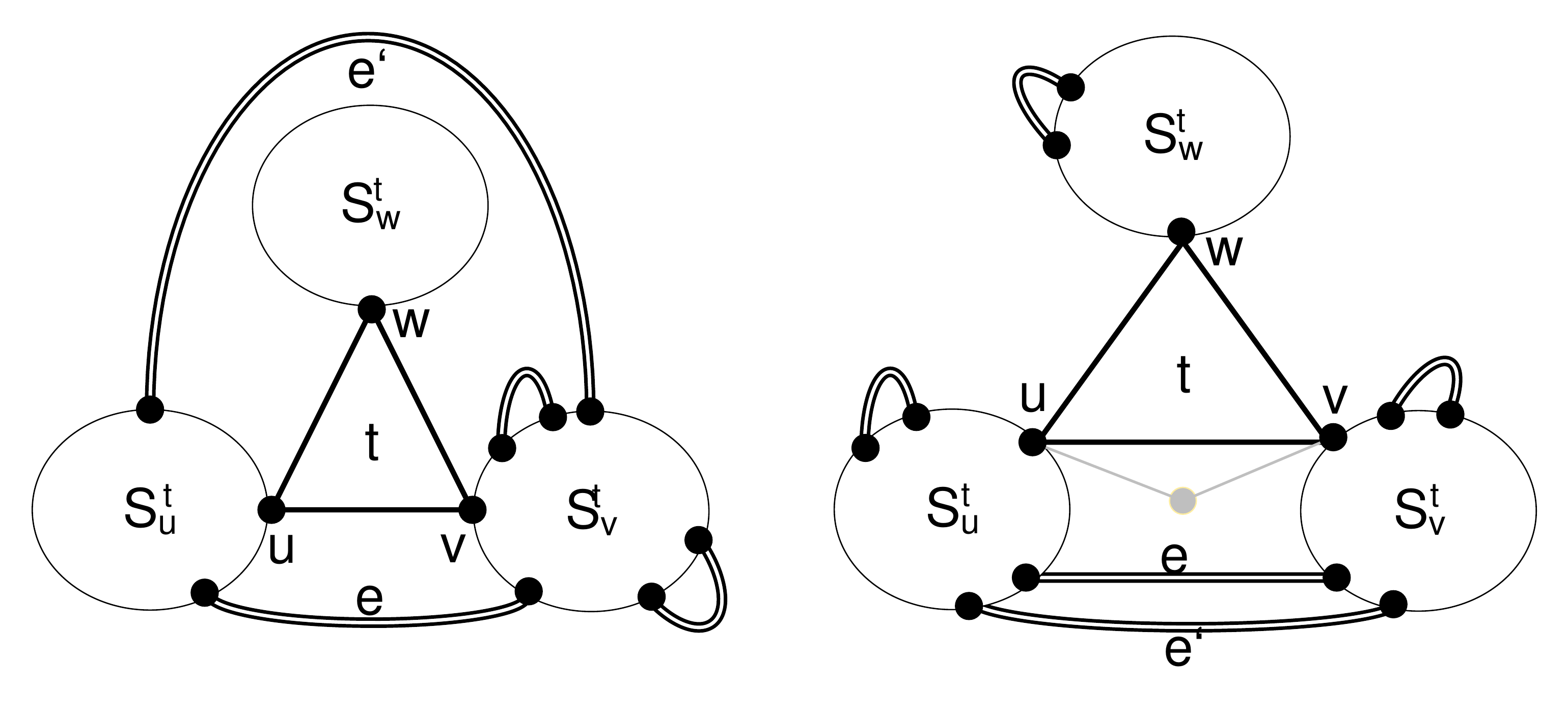}
        \caption{The structure of super-faces containing free and base sides of any heavy triangle $t$.}
        \label{fig:p-base-p-free-faces}
    \end{figure}

\begin{lemma}
\label{lem:different-one-blue-per-occupied-base-free}
Let $t$ be a heavy triangle. Let $f, f' \in \fset$ be the two different super-faces that contain the base and free sides of $t$ respectively.   
Let $e, e'$ be the unique type-$1$ or type-$2$ edges on $f$ and $f'$ across the occupied components of $t$ (given by Lemma~\ref{lem:one-blue-per-occupied-t}). Then $e \neq e'$.
\end{lemma}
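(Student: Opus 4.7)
I would prove that $e \neq e'$ by contradiction: assume $e = e' = ab$ with $a \in S_u^t$ and $b \in S_v^t$, and construct an improving $2$-swap on $\cset$, contradicting local optimality. By assumption the single edge $e$ lies on the boundary of both $f$ and $f'$, so its two sides bound these two distinct super-faces; in particular, $e$ is type-$1$ or type-$2$ and supports at least one cross triangle.

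The enabling geometric tool is Observation~\ref{Observation:different-components-green-blue-cycle} applied to the closed circuit $\Gamma$ obtained by concatenating the base edge $uv$, the edge $e$, and the boundary paths in $G[S_u^t]$ from $u$ to $a$ and in $G[S_v^t]$ from $b$ to $v$ that are extracted from the face walks used in the proof of Lemma~\ref{lem:one-blue-per-occupied-t}. The super-faces $f$ and $f'$ lie on opposite sides of $\Gamma$, so any cross triangle drawn inside $f$ lands on a component distinct from that of any cross triangle drawn inside $f'$. Consequently, two such cross triangles share no vertex outside $S$ and share at most one vertex in $\{u,v,a,b\}$ inside $S$, hence share no edge.

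Using Proposition~\ref{prop:structure-heavy}, I would then locate a pair $(\tau_1,\tau_2)$ with $\tau_1 \subseteq f$ and $\tau_2 \subseteq f'$ supported by distinct edges from $\{uv\} \cup B_{uv}^t$. If $t$ is heavy type-$1$, take $\tau_1 = \tau_{uv}$ (the cross triangle on $uv$ inside $f$) and $\tau_2$ a cross triangle supported by $e$ on the $f'$-side, which exists when $e$ is type-$2$ or type-$1$ with its triangle on the $f'$-side; the remaining possibility is resolved by taking a second edge $e_2 \in B_{uv}^t$, guaranteed by the heavy count $\ge 2$. If $t$ is heavy type-$0$, the $\ge 3$ cross triangles on $B_{uv}^t$ must distribute across super-faces, and I would again select one each from $f$ and from $f'$. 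The swap removes $t$ and inserts $\tau_1,\tau_2$; the landing-component observation together with the non-cactus nature of the supporting edges implies edge-disjointness both among $\tau_1,\tau_2$ and from the surviving cactus triangles in $C_u^t, C_v^t, C_w^t$, producing a valid cactus with strictly more triangular faces.

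The main obstacle will be the subcase in which $e$ is type-$1$ with its only cross triangle lying on the $f$-side, so the partner $\tau_2$ must come from a second edge $e_2 \in B_{uv}^t$: by Lemma~\ref{lem:one-blue-per-occupied-t} one has $e_2 \notin E(f')$, so $e_2$ is routed through another super-face in the planar embedding, and one has to chase it carefully to verify that the cross triangle it supports still pairs safely with $\tau_{uv}$ (type-$1$ case) or with $\tau_1 \subseteq f$ (type-$0$ case) without producing any spurious cycle that shares an edge with an existing cactus cycle. This planar embedding case analysis, building on the face-tracing argument of Lemma~\ref{lem:one-blue-per-occupied-t}, is the technically delicate core of the proof.
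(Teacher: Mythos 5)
Your proposal takes a fundamentally different route from the paper, and the route does not work. The paper disposes of this lemma in two lines by pure planarity: if $e=e'$, then the super-faces $f$ and $f'$ are adjacent along $e$, and this forces $e$ to be the \emph{only} type-$1$ or type-$2$ edge across the occupied components of $t$ (any further edge of $B^t_{uv}$ would lie on one side of the closed curve formed by $uv$, $e$ and the two cactus paths joining their endpoints, and would then separate either the base side of $t$ from $e$ or the free sides of $t$ from $e$, contradicting Lemma~\ref{lem:one-blue-per-occupied-t}); this contradicts the structure of heavy triangles. No exchange argument is involved.

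The exchange argument you build instead has a fatal flaw. Both of your candidate triangles $\tau_1$ and $\tau_2$ are supported by edges of $\{uv\}\cup B^t_{uv}$, so each has one vertex in $S^t_u$, one in $S^t_v$, and its landing vertex outside $S$. After you delete $E(t)$ and insert $\tau_1$, the split components $\cset^t_u$ and $\cset^t_v$ merge into a single component; inserting $\tau_2$ then joins two vertices of that same component, closing a cycle that runs through the supporting edge of $\tau_1$ and the supporting edge of $\tau_2$. That cycle shares an edge with the triangle $\tau_1$ (and with $\tau_2$), so the result is not a cactus --- irrespective of the landing components being distinct and of $\tau_1,\tau_2$ being edge-disjoint. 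This is exactly why Lemma~\ref{lem:green-triangles-supporting} requires the two swapped-in cross triangles to jointly meet all three split components $S^t_u,S^t_v,S^t_w$: two cross triangles spanning the \emph{same} pair of split components can never both be added. Were your swap legitimate, it would contradict local optimality for essentially every heavy triangle (by Observation~\ref{Observation:different-components-green-blue-cycle} and Lemma~\ref{lem:type-2-edge-support}, a heavy triangle always has two cross triangles across its occupied components with distinct landing components), and the paper's entire notion of heavy triangles would be vacuous. The ``technically delicate core'' you leave open at the end is therefore moot; the argument should be replaced by the short planarity argument above.
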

\begin{proof}
Assume otherwise that $e = e'$, so the super-faces $f$ and $f'$ are adjacent at $e$. 
This means that there is only one type-$1$ or type-$2$ edge across the occupied components, contradicting to the fact that $t$ is heavy (see Fig.~\ref{fig:p-base-p-free-faces} for illustration).
\end{proof}

\begin{figure}[H]
    \centering
    \includegraphics[width=0.5\textwidth]{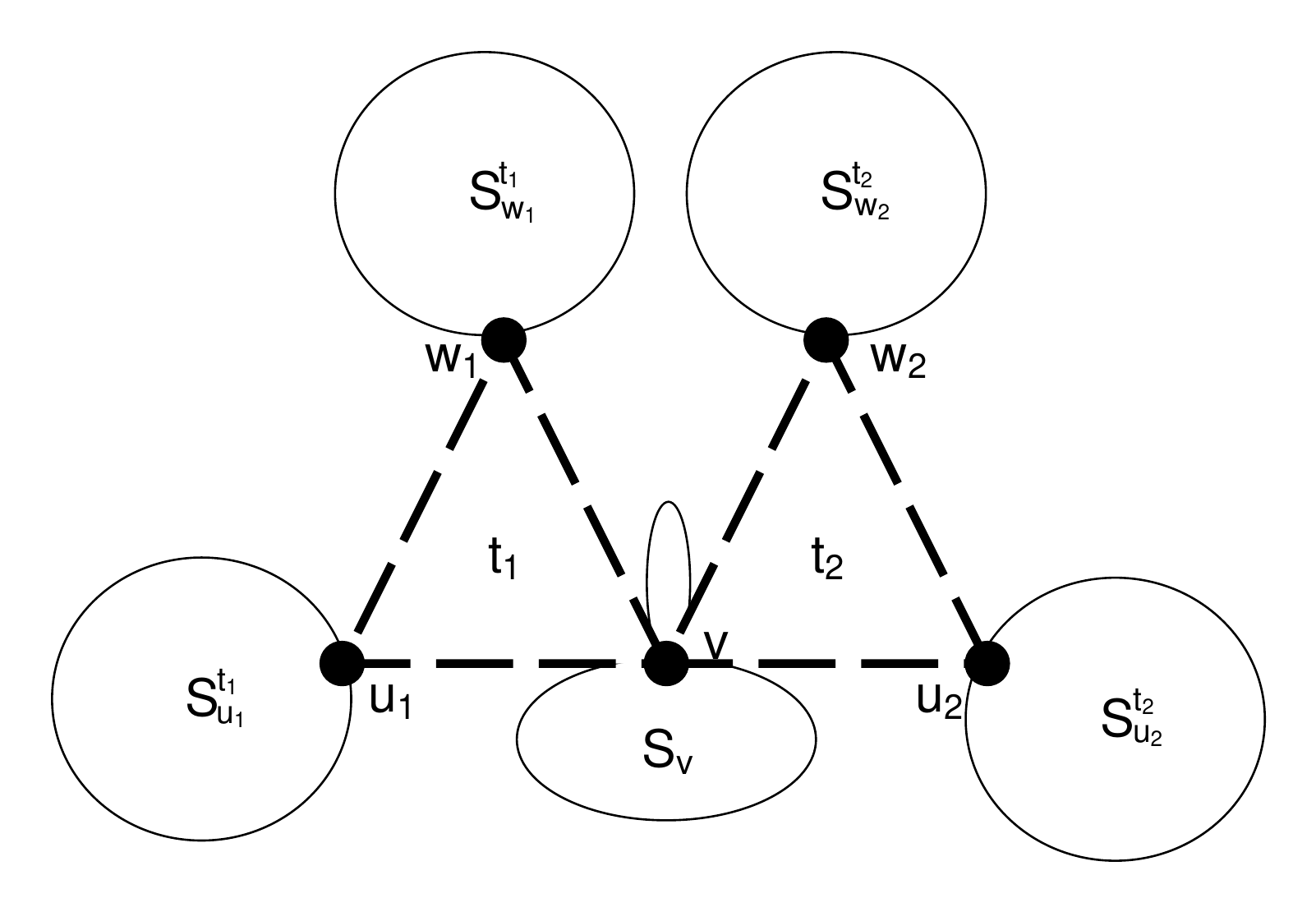}
    \caption{The structure of split components formed by removing two adjacent triangles.}
    \label{fig:adjacent-triangles-components}
\end{figure}

\paragraph{Components for two adjacent heavy triangles:} Now we fix the labelling for the new components created by the operation of removing edges for two adjacent heavy triangles from $\cset[S]$, which we will use in the rest of this section. Every time when we talk about two adjacent heavy triangles we will denote them by $t_1, t_2$ such that $V(t_1) = \{u_1, v_1 = v, w_1 \}$ and $V(t_2) = \{u_2, v_2 = v, w_2 \}$, where $w_1, w_2$ will be the corresponding free vertices and $v$ the common base vertex of $t_1$ and $t_2$. The vertices of the new components formed by removing edges $E(t_1) \cup E(t_2)$ from $\cset[S]$ will be $S_{w_1}^{t_1}, S_{w_2}^{t_2}, S_{u_1}^{t_1}, S_{u_2}^{t_2}, S_v$, such that $w_1 \in S_{w_1}^{t_1} , w_2 \in S_{w_2}^{t_2}, u_1 \in S_{u_1}^{t_1}, u_2 \in S_{u_2}^{t_2}$ and $v \in S_v$. Notice that the free components of $t_1, t_2$ are $S_{w_1}^{t_1}, S_{w_2}^{t_2}$ respectively, the occupied components of $t_1$ are $S_{u_1}^{t_1}, S_{v_1}^{t_1} = S_v \cup S_{w_2}^{t_2} \cup S_{u_2}^{t_2}$ and the occupied components of $t_2$ are $S_{u_2}^{t_2}, S_{v_2}^{t_2} = S_v \cup S_{w_1}^{t_1} \cup S_{u_1}^{t_1}$.

\begin{lemma}
\label{lem:blue-edges-same-face-p-free-sides}
Let $f \in \fset$ be a super-face. Let $t_1, t_2: V(t_i) = (u_i,v_i, w_i)$ be two adjacent heavy cactus triangles with $v_1 = v_2$ (say $v$) such that $E(t_i) \cap E(f) \neq \emptyset$ for $i  \in \{1,2\}$. 
    For each $i$, let $u_i v_i \in E(t_i)$ be the base edge and $B^{t_i}_{u_i v_i} \cap E(f)  = \{e_i\}$ (unique due to lemma~\ref{lem:one-blue-per-occupied-t}).
\begin{enumerate}
    \item \label{itm:same-blue-edges} $e_1 = e_2 := e$ if and only if the common edge $e$ goes across $S^{t_1}_{u_1}$ and $S^{t_2}_{u_2}$.
    \item  \label{itm:diff-blue-edges} $e_1 \neq e_2$ if and only if both $e_1, e_2$ are incident to $S_v$. 
\end{enumerate}

\end{lemma}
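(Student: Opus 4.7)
The plan is to analyze the possible locations of the endpoints of $e_1$ and $e_2$ among the five components $S^{t_1}_{w_1}, S^{t_2}_{w_2}, S^{t_1}_{u_1}, S^{t_2}_{u_2}, S_v$ obtained by removing $E(t_1) \cup E(t_2)$ from $\cset[S]$, and then appeal to the uniqueness lemma (Lemma~\ref{lem:one-blue-per-occupied-t}) to close each direction. By the definition of $B^{t_1}_{u_1 v}$, the edge $e_1$ has one endpoint in $S^{t_1}_{u_1}$ and the other in $S^{t_1}_{v_1} = S_v \cup S^{t_2}_{u_2} \cup S^{t_2}_{w_2}$; and symmetrically for $e_2$. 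A key preliminary observation is that $S^{t_2}_{w_2}$ is the \emph{free} component of the heavy triangle $t_2$, so by Proposition~\ref{prop:structure-heavy} no type-$1$ or type-$2$ edge can be incident to a vertex in $S^{t_2}_{w_2}$; hence the second endpoint of $e_1$ actually lies in $S_v \cup S^{t_2}_{u_2}$. Symmetrically the second endpoint of $e_2$ lies in $S_v \cup S^{t_1}_{u_1}$.

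For part~(\ref{itm:same-blue-edges}), the forward direction is immediate: if $e_1 = e_2$, then intersecting the two endpoint descriptions forces one endpoint to lie in $S^{t_1}_{u_1}$ and the other in $S^{t_2}_{u_2}$. For the backward direction, suppose $e_1$ runs between $S^{t_1}_{u_1}$ and $S^{t_2}_{u_2}$. Then $e_1$ has one endpoint in $S^{t_2}_{u_2}$ and the other in $S^{t_1}_{u_1} \subseteq S^{t_2}_{v_2}$, so $e_1 \in B^{t_2}_{u_2 v}$. Since $e_1 \in E(f)$ as well, Lemma~\ref{lem:one-blue-per-occupied-t} applied to $t_2$ and $f$ gives uniqueness of the edge in $B^{t_2}_{u_2 v} \cap E(f)$, forcing $e_1 = e_2$.

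For part~(\ref{itm:diff-blue-edges}), I would deduce it from the preliminary location constraints together with part~(\ref{itm:same-blue-edges}). The second endpoint of $e_1$ is in $S_v \cup S^{t_2}_{u_2}$: if it lies in $S^{t_2}_{u_2}$, part~(\ref{itm:same-blue-edges}) forces $e_1 = e_2$; thus $e_1 \neq e_2$ forces this endpoint to be in $S_v$, i.e.\ $e_1$ is incident to $S_v$. The same argument for $e_2$ gives that $e_2$ is incident to $S_v$. Conversely, if both $e_1$ and $e_2$ are incident to $S_v$, then $e_1 \notin B^{t_2}_{u_2 v}$ (its endpoints are in $S^{t_1}_{u_1}$ and $S_v$, neither of which is in $S^{t_2}_{u_2}$), while $e_2 \in B^{t_2}_{u_2 v}$; hence $e_1 \neq e_2$.

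The only subtle step is the use of Proposition~\ref{prop:structure-heavy} to eliminate the ``rogue'' case where an endpoint of $e_1$ or $e_2$ could sit inside the free component of the other triangle; once this is ruled out, the rest is bookkeeping on which pair of components contains the endpoints, plus a single invocation of the uniqueness statement of Lemma~\ref{lem:one-blue-per-occupied-t}. I do not anticipate a genuine obstacle beyond being careful that the labelling convention for the five components (as fixed just above the lemma) is consistent with the roles of occupied versus free components for $t_1$ and $t_2$ simultaneously.
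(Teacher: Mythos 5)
Your proof is correct and follows essentially the same route as the paper's: endpoint bookkeeping over the five split components combined with the uniqueness guarantee of Lemma~\ref{lem:one-blue-per-occupied-t}. You are in fact slightly more explicit than the paper in invoking Proposition~\ref{prop:structure-heavy} to rule out an endpoint landing in the free component $S^{t_2}_{w_2}$ (a step the paper leaves implicit), and that step is applied correctly since $e_1$ crosses from $S^{t_1}_{u_1}$ and would otherwise lie in the empty set $B^{t_2}_{v w_2}$.
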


\begin{proof}
The first direction for item~\ref{itm:same-blue-edges} is easy to see by the way $S_{u_1}^{t_1}, S_{u_2}^{t_2}$ are defined and by the fact that $e$ goes across the occupied components for both $t_1, t_2$. 
In the other direction, if $e_1$ goes across $S_{u_1}^{t_1}, S_{u_2}^{t_2}$, hence it also goes across the occupied components $S_{u_2}^{t_2}, S_{v_2}^{t_2}$ for $t_2$. This along with the fact that $e_1$ belongs to $f$ and Lemma~\ref{lem:one-blue-per-occupied-t}, it implies that $e_2 = e_1$.

One direction for item~\ref{itm:diff-blue-edges} follows from the negation of item~\ref{itm:same-blue-edges} because if any one of $e_1$ or $e_2$ goes across $S_{u_1}^{t_1}, S_{u_2}^{t_2}$, then it implies $e_1 = e_2$. On the other hand, if one of $e_1$ or $e_2$ is incident on $S_v$, then they cannot be same by item~\ref{itm:same-blue-edges}.
\end{proof}

\begin{lemma}
\label{lem:different-one-blue-per-p-base}
Let $f \in \fset$ be a super-face. Let $t_1, t_2: V(t_i) = (u_i,v_i, w_i)$ be two adjacent heavy cactus triangles with $v_1 = v_2$ (say $v$) such that both of whose free sides belong to $f$. If the base sides for $t_1$ and $t_2$ belong to two different super-faces $f_1, f_2 \in \fset$ and for each $i$, let $B^{t_i}_{u_i v_i} \cap E(f_i)  = \{e_i\}$ (unique due to lemma~\ref{lem:different-one-blue-per-occupied-base-free}). Then at least one of $e_1$ or $e_2$ is incident on some vertex in $S_v$, which in turn implies $e_1 \neq e_2$.
\end{lemma}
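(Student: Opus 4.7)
I would prove this lemma by contradiction: assume that neither $e_1$ nor $e_2$ is incident on any vertex of $S_v$. My first step is to pin down that under this hypothesis both edges must go between $A_1 := S^{t_1}_{u_1}$ and $A_2 := S^{t_2}_{u_2}$. Indeed, $e_i \in B^{t_i}_{u_i v_i}$ has one endpoint in $A_i$ and the other in $S^{t_i}_{v_i} = S_v \cup S^{t_j}_{u_j} \cup S^{t_j}_{w_j}$ (where $j = 3-i$); since $w_j$ is the free vertex of the heavy triangle $t_j$, the sets $B^{t_j}_{u_j w_j}$ and $B^{t_j}_{v w_j}$ are empty, so no type-$1$ or type-$2$ edge touches $S^{t_j}_{w_j}$. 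Combined with the contradiction hypothesis this forces the other endpoint of $e_i$ into $A_{3-i}$.

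The second step is to trace the boundary walk of $f_1$ starting at $v$. The walk first traverses the base edge $u_1 v$ into $A_1$; staying inside $A_1$, it eventually leaves via some edge of $B^{t_1}_{u_1 v}$, and by Lemma~\ref{lem:one-blue-per-occupied-t} applied to $(f_1, t_1)$ the unique such exit on $\partial f_1$ is $e_1$, so the walk lands in $A_2$. The walk must close by returning to $v$, and its possible exits from $A_2$ are either a cactus edge of $t_2$ or an edge of $B^{t_2}_{u_2 v}$. The free sides $u_2 w_2$ and $v w_2$ lie on $\partial f$ by hypothesis and not on $\partial f_1$, so the only admissible cactus option is the base edge $u_2 v$. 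If the walk uses $u_2 v$, then $u_2 v \in \partial f_1$; but $u_2 v$ is a cactus edge bounding exactly one super-face (the other adjacent face being the triangular face of $t_2$), and that super-face is $f_2$ by definition, so $f_1 = f_2$, contradicting the hypothesis $f_1 \neq f_2$.

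Otherwise the walk exits $A_2$ through some edge $e'' \in B^{t_2}_{u_2 v}$. By the same reasoning as in the first step, $e''$ cannot reach $S^{t_1}_{w_1}$, and planarity of the closed boundary walk rules out a re-entry into $A_1$, so $e''$ lands in $S_v$. Running the symmetric trace of $\partial f_2$ yields an analogous edge $e''' \in B^{t_1}_{u_1 v} \cap E(f_2)$ incident on $S_v$. The final step is to identify $e'' = e_2$ (and $e''' = e_1$): once this identification is made, $e_2$ is incident on $S_v$ and we contradict our assumption. For this identification I would contract $A_1$ and $A_2$ each to a single vertex in $H[S]$; this preserves the incidences of $f_1$ and $f_2$ and turns every edge of $B^{t_2}_{u_2 v}$ landing in $S_v$ into a parallel edge at the contracted $A_2$-vertex, after which Lemma~\ref{lem:one-blue-per-occupied-t} can be applied cleanly in the contracted graph to pick out the unique $B^{t_2}_{u_2 v}$-edge on each of $f_1, f_2$. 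The chief obstacle is precisely this identification: Lemma~\ref{lem:one-blue-per-occupied-t} cannot be applied directly to $(f_1, t_2)$ because $E(t_2) \cap E(f_1) = \emptyset$ in our configuration, and the contraction trick is what creates an edge of (the image of) $t_2$ on $\partial f_1$ needed for the hypothesis of the lemma to hold.
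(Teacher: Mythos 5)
Your overall strategy -- assume neither $e_1$ nor $e_2$ touches $S_v$, deduce that both must run between $A_1=S^{t_1}_{u_1}$ and $A_2=S^{t_2}_{u_2}$, and then trace the boundary walk of $f_1$ -- is essentially the paper's, and your handling of the sub-case where the walk exits $A_2$ through the base edge $u_2v$ (forcing $f_1=f_2$) is correct. The genuine gap is in the remaining sub-case, where the walk exits $A_2$ through some $e''\in B^{t_2}_{u_2v}$ landing in $S_v$. What the lemma requires is that $e_2$ \emph{itself} -- the unique edge of $B^{t_2}_{u_2v}$ on $\partial f_2$ -- touches $S_v$, and producing \emph{some} edge of $B^{t_2}_{u_2v}$ on $\partial f_1$ that touches $S_v$ does not give this. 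Your proposed fix does not close the hole: Lemma~\ref{lem:one-blue-per-occupied-t} is a per-face uniqueness statement, so even in the contracted graph it would only certify that $e''$ is the unique such edge on $f_1$ and $e_2$ the unique one on $f_2$; nothing forces these two edges to coincide (indeed Lemma~\ref{lem:different-one-blue-per-occupied-base-free} shows that for a single triangle the distinguished edges on two different incident super-faces are in general \emph{distinct}). Moreover, the contraction does not actually place an edge of the image of $t_2$ on $\partial f_1$ unless $e''$ happens to be incident to $v$ itself, so the hypothesis of the lemma still fails after contracting. A smaller point: your "planarity rules out re-entry into $A_1$" should instead invoke the uniqueness of $e_1$ in $E(f_1)\cap B^{t_1}_{u_1v_1}$ (any $A_2$--$A_1$ exit edge would be a second member of that set).

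The paper closes this sub-case differently, with a region-confinement argument rather than an edge identification. From $e''=u'v'$ (with $u'\in A_2$, $v'\in S_v$) one forms the closed curve consisting of $e''$, a cactus path in $A_2$ from $u'$ to $u_2$, the base edge $u_2v$, and a cactus path from $v$ back to $v'$ inside the part of $S_v$ drawn outside $f$. This curve bounds a region $R$ whose boundary meets only $A_2\cup S_v$ and which contains the base side of $t_2$ on its non-$f$ side; hence $f_2$ must be drawn inside $R$. Consequently $e_2$, which lies on $\partial f_2$ and crosses between the occupied components of $t_2$, can only run between $A_2$ and $S_v$, contradicting the standing assumption. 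You would need this (or an equivalent topological argument) in place of the identification $e''=e_2$. Finally, note that you never address the clause "which in turn implies $e_1\neq e_2$"; it is quick (an edge incident on $S_v$ cannot simultaneously join $A_1$ to $A_2$, and the paper alternatively derives a forbidden cycle in $\cset[S]$), but it is part of the statement and should be said.
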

\begin{proof}
As $t_1, t_2$ are adjacent, we use the notations defined above for the various components corresponding to two adjacent heavy triangles. First, assuming that at least one of $e_1$ or $e_2$ is incident on some vertex in $S_v$, we prove that $e_1 \neq e_2$.

By contradiction, let $e_1 = e_2 :=u'v'$. Now we show that there will be a cycle in $\cset[S]$ sharing an edge with $t_1$, contradicting the fact that $\cset$ is a triangular cactus (see Figure~\ref{fig:same-blue-edge-across-adjacent-p-contradiction}). By the above claim, this edge is incident to $S_v$ (say $v' \in S_v$). Also, by the way $e_1$ and $e_2$ are defined, the other end point $u'$ belongs to both $S_{u_1}^{t_1}$ and $S_{u_2}^{t_2}$. Hence in $\cset[S] \setminus (E[t_1] \cup E[t_2])$, using only cactus edge, there is a path $P_1$ from $u'$ to $u_1$ and another path $P_2$ from $u'$ to $u_2$. Hence, $u' P_1 u_1 \cup u_1 v \cup v u_2 \cup u_2 P_2 u'$ is a cycle in $\cset[S]$ sharing edge $u_1 v$ with $t_1$, contradicting the fact that $\cset$ is a triangular cactus.

\begin{figure}[H]
    \centering
        \includegraphics[width=0.5\textwidth]{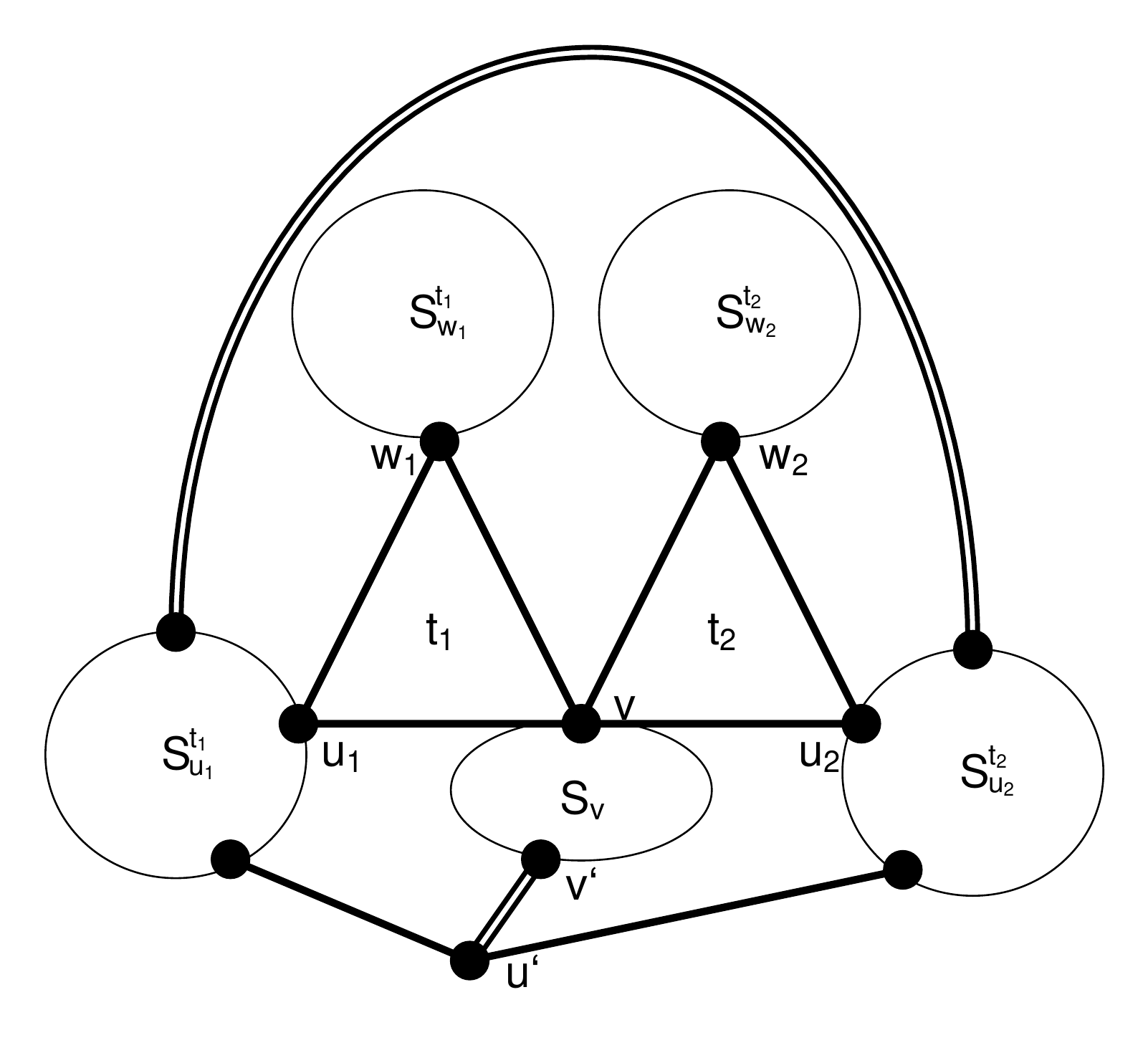}
        \caption{An illustration of the example used to reach a contradiction. The cycle in $\cset[S]$ sharing edge $v_2^1 w$ with $t_1$, when $e_1 =  e_2 :=uv$ and $v \in S_w$.}
        \label{fig:same-blue-edge-across-adjacent-p-contradiction}
    \end{figure}

Finally, to finish the proof for our lemma, we prove that at least one of $e_1$ or $e_2$ is incident on some vertex in $S_v$. 

For contradiction assume none of $e_1, e_2$ are incident on $S_v$. Notice that since free sides of $t_1, t_2$ belongs to the same super-face $f$, we can partition the component $S_v$ into two parts $S_v', S_v''$ (with an exception that $v$ is a common vertex), such that $S_v'$ is drawn inside $f$, $S_v''$ outside of $f$ and $v$ lies on $f$.

Starting with vertex $u_1$ and the base side $u_1v$ create a maximal trail $P_1$ in $H[S]$ along the boundary of $f_1$ by visiting type-$1$ or type-$2$ or cactus edges and vertices only from $S_v''$. This trail should end at some vertex $v' \in S_v$ (possibly $v$), such that there is a type-$1$ or type-$2$ edge leaving $S_v''$ incident on $v'$ (say $v'u'$). If not, then the trail would end at $v$ and the base side $v u_2$ will be the next edge belonging to super-face $f_1$ in the graph $H[S]$, which contradicts our assumption. Notice that since $S_{w_1}^{t_1}, S_{w_2}^{t_2}$ are the free components, hence either $u' \in S_{u_1}^{t_1}$ or $u' \in S_{u_2}^{t_2}$. In case when $u' \in S_{u_1}^{t_1}$, it implies that $v'u'$ is an edge going across the components of $t_1$ and also belongs to $f_1$. But by Lemma~\ref{lem:one-blue-per-occupied-t}, it implies that $e_1 = u'v'$, contradicting our assumption (see Figure~\ref{fig:diff-blue-edges-across-adjacent-p}).

In the other case, when $u' \in S_{u_2}^{t_2}$, we look at the original graph $H[S]$. Since both $v', v \in S_v''$, there exists a path $P_v$ from $v'$ to $v$ using only cactus edges and vertices from $S_v''$. Similarly since $u', u_2 \in S_{u_2}^{t_2}$, there exists a path $P_{u'}$ from $u'$ to $u_2$ using only cactus edges and vertices from $S_{u_2}^{t_2}$. Hence, the region $R$ bounded by $v' P_v v \cup v u_1 \cup u_2P_{u'} u' \cup u'v'$ contains only the vertices from $S_v'' \cup S_{u_2}^{t_2}$ at its boundary and also contains the base edge $v u_2$ from the side outside of $f$. This implies that the super-face $f_2$ can only be drawn inside $R$ and can only contain vertices from $S_v'' \cup S_{u_2}^{t_2}$ and hence for $e_2$ to go across the occupied components of $t_2$, the only possibility is to go across $S_v''$ and  $S_{u_2}^{t_2}$, contradicting our assumption (see Figure~\ref{fig:diff-blue-edges-across-adjacent-p}). 


    \begin{figure}[H]
        \centering
        \includegraphics[width=0.9\textwidth]{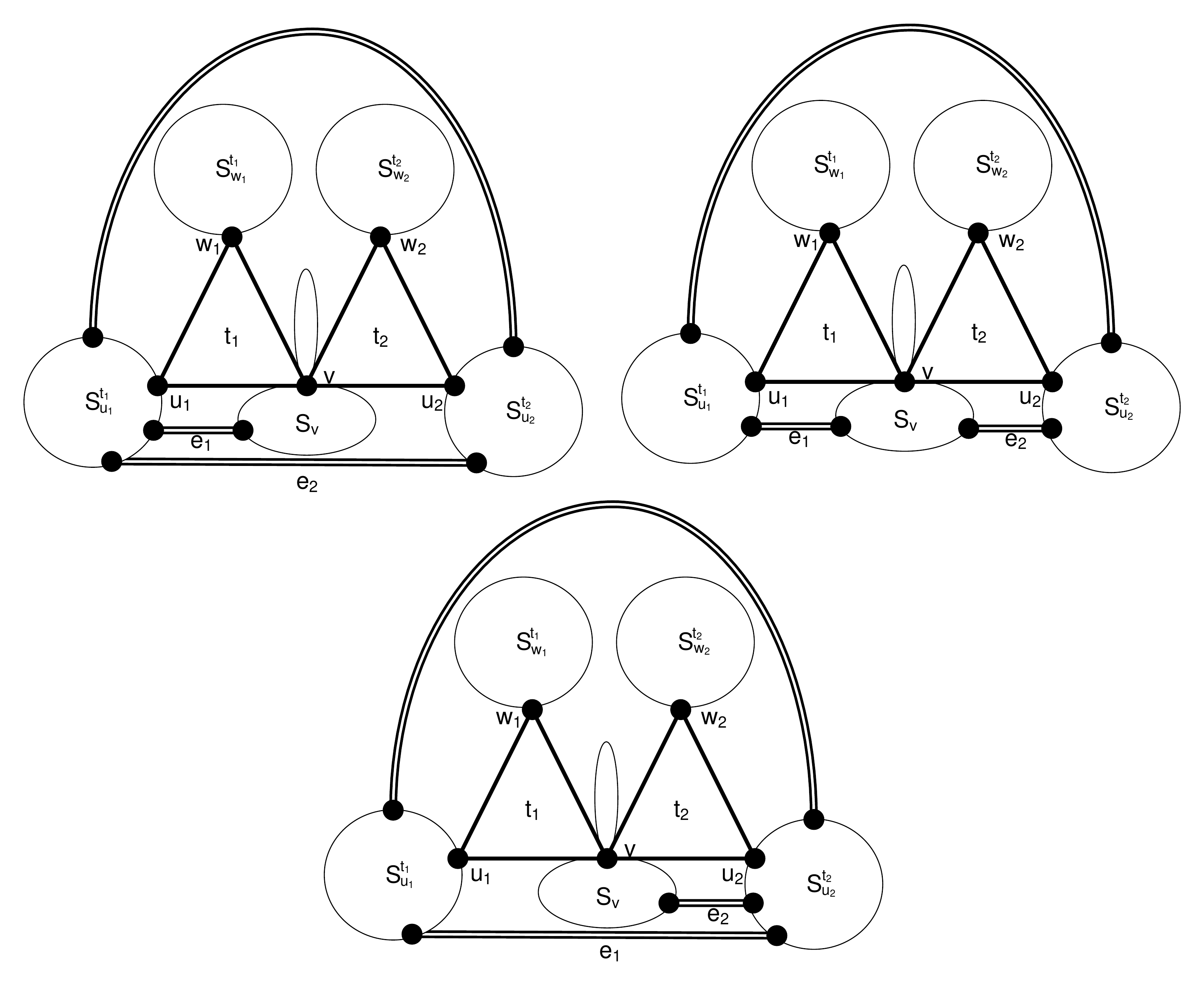}
        \caption{The possible drawings when two adjacent heavy cactus triangles $t_1, t_2$, both of whose base sides belong to two different super-faces $f_1, f_2 \in \fset$, where $e_1$ and $e_2$ are the corresponding edges for $t_1$ and $t_2$ given by Lemma~\ref{lem:one-blue-per-occupied-t}.}
        \label{fig:diff-blue-edges-across-adjacent-p}
    \end{figure}

\end{proof}

\subsubsection*{Proof of the first part}
We assume that for contradiction that $t_1, t_2$ are friends where $t_2$ is type-$1$ triangle.    
We will argue that there exists an improving $2$-swap, contradicting the fact that $\cset$ is the optimal cactus.
As $t_1, t_2$ are adjacent, we use the notations defined above for the various components corresponding to two adjacent heavy triangles.

Let $t'$ be the supported cross triangle of $t_2$ and let $t_3$ be the empty triangle formed by vertices $\{w_1, w_2, v\}$. 
Also let $e_1, e_2$ (possibly same) be the type-$1$ or type-$2$ edges belonging to the super-face $f$ going across the occupied components of $t_1, t_2$ respectively (exists by Lemma~\ref{lem:one-blue-per-occupied-t}). Also, let $e_2'$ be the edge going across the occupied components of $t_2$ which belongs to the super-face $f_2$ containing the base side for $t_2$ (exists by Lemma~\ref{lem:one-blue-per-occupied-t}). By Lemma~\ref{lem:different-one-blue-per-occupied-base-free}, $e_2 \neq e_2'$.

Now there could be two cases based on the landing components for supported cross triangles $t', t_1'$. The second case will be further divided into sub-cases based on the way $e_1$ is drawn in $\phi_H$.
\begin{itemize}
\item ($e_1$ is a type-$1$ edge and different landing components for supported cross triangles $t_1', t'$): 
We modify our cactus by $\cset' = (\cset \setminus (E(t_1) \cup E(t_2))) \cup E(t') \cup E(t_1') \cup E(t_3)$ (see Figure~\ref{fig:case-e1-type-1-non-common-landing-components}). Note that $t'$ will attach $S_2^{t_2}$ to  $S_v$, $t_3$ will attach $S_v$ with $S_{w_1}^{t_1}$ and $S_{w_2}^{t_2}$ and finally $t_1'$ will attach $S_{u_1}^{t_1}$ to this structure, hence $\cset'$ will be a triangular cactus with one more cactus triangle, which contradicts the optimality of $\cset$.

    \begin{figure}[H]
        \centering
        \includegraphics[width=0.9\textwidth]{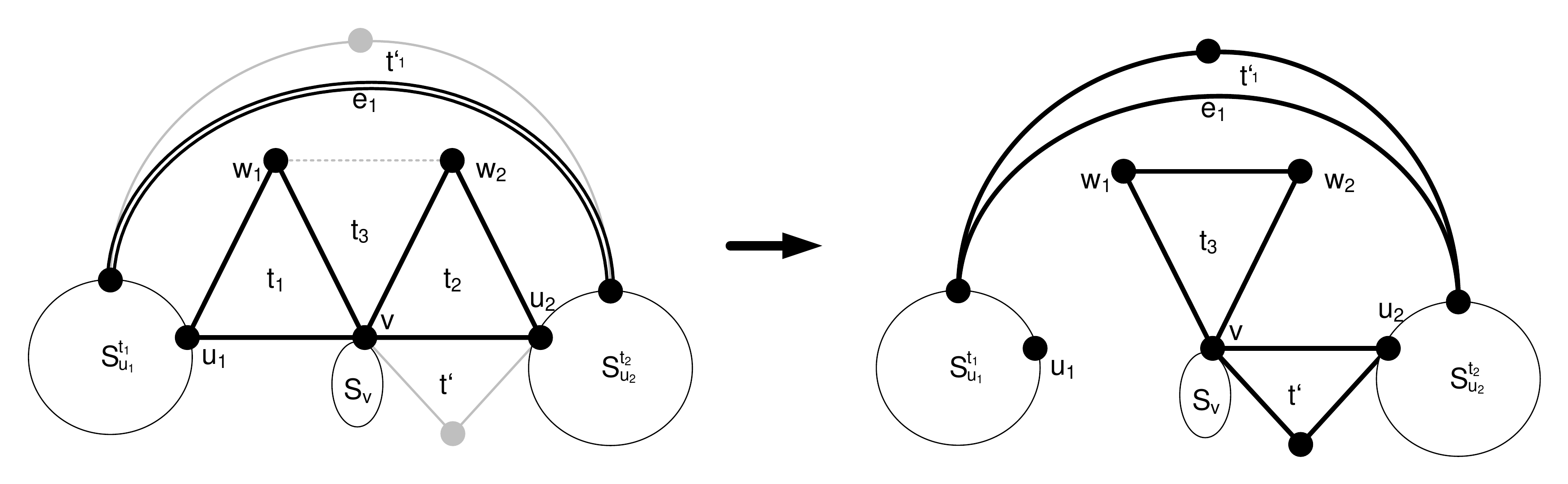}
        \caption{Improving $2$-swap when the edge $e_1$ given by Lemma~\ref{lem:one-blue-per-occupied-t} is type-$1$ and the cross triangles supported by $e_1$ and $t_2$ have different landing components.}
        \label{fig:case-e1-type-1-non-common-landing-components}
    \end{figure}

\item ($e_1$ is a type-$1$ edge and $t_1', t'$ share a common landing component): 
Since $e_1$ is the unique edge belonging to $f$ going across the occupied components of $t_1$ (see Lemma~\ref{lem:one-blue-per-occupied-t}), there could be two sub-cases.

\begin{itemize}

    \item ($e_1$ goes across $S_{u_1}^{t_1}, S_{u_2}^{t_2}$): From Lemma~\ref{lem:blue-edges-same-face-p-free-sides} it implies that $e_1 = e_2 =: e$. Now if we focus on $t_2$, the base side of it must belong to a super-face $f_2$ such that $e_2'$ goes across its occupied components such that $f \neq f_2$ and $e_2' \neq e$ (by Observation~\ref{Observation:different-face-base-free-t}, Lemma~\ref{lem:one-blue-per-occupied-t} and~\ref{lem:different-one-blue-per-occupied-base-free}). Also, by the uniqueness of the edge $e_2'$, the edge $e$ cannot belong to $f_2$. But this implies that $t'$ is drawn inside $f_2$ and $t_1'$ is drawn outside it, hence by Observation~\ref{Observation:different-components-green-blue-cycle} $t_1', t'$ cannot share their landing components, contradiction.
    
    \begin{figure}[H]
        \centering
        \includegraphics[width=0.5\textwidth]{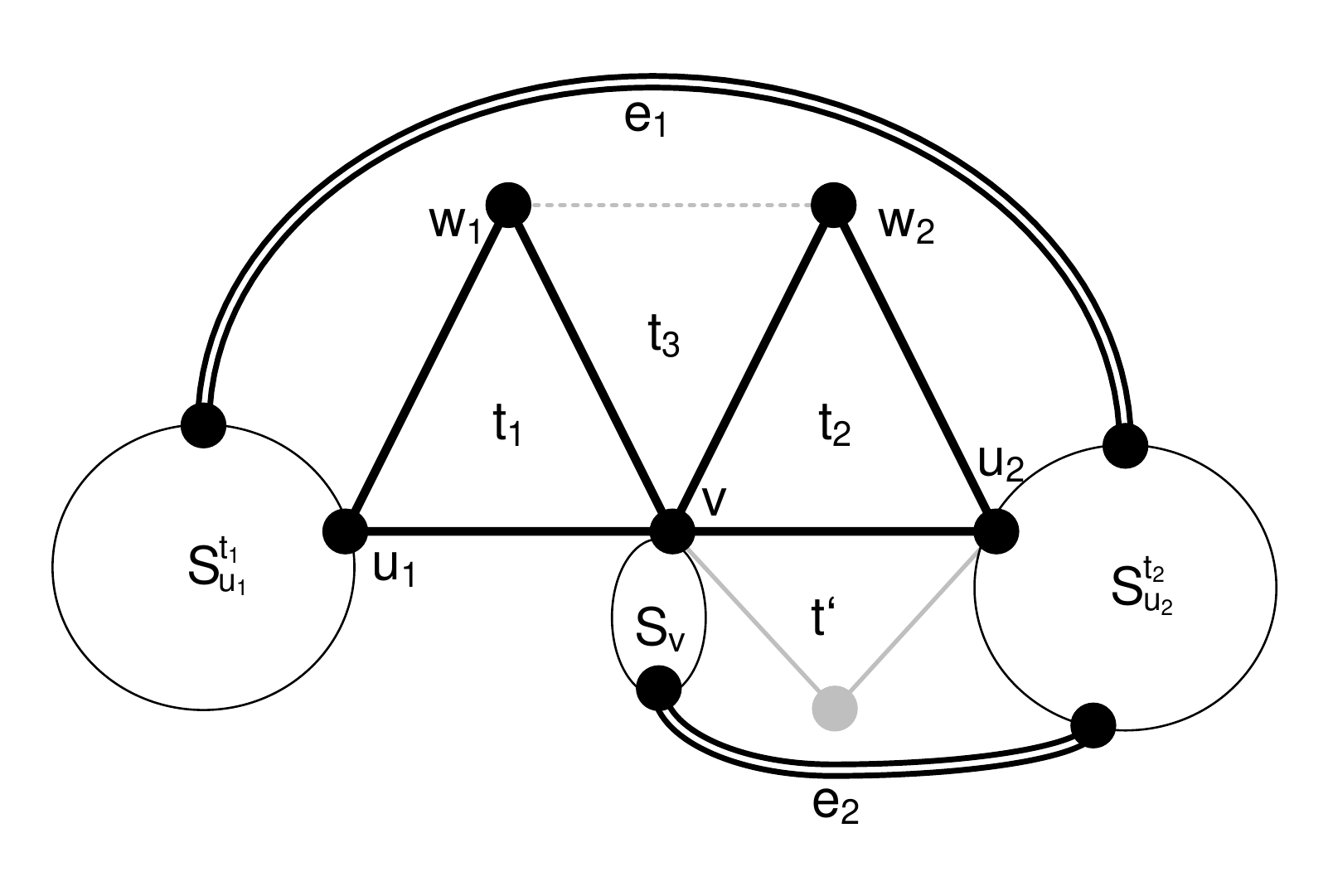}
        \caption{The case when the edge $e_1$ given by Lemma~\ref{lem:one-blue-per-occupied-t} is type-$1$ and the cross triangles supported by $e_1$ and $t_2$ have the same landing component. If $e_1$ goes across $S_{u_1}^{t_1}, S_{u_2}^{t_2}$, then due to the presence of $e_2'$, the landing component for $t_1', t'$ cannot be the same.}
        \label{fig:case-e1-type-1-common-landing-component-S_2_t2}
    \end{figure}
  
    \item ($e_1$ goes across $S_{u_1}^{t_1}, S_v$): By Lemma~\ref{lem:blue-edges-same-face-p-free-sides}, it implies $e_2 \neq e_1$ and both $e_1, e_2$ are incident on $S_v$. Now let $u'v' := e_2$ such that $u' \in S_{u_2}^{t_2}$ and $v' \in S_v$. Since both $v', v \in S_v$, there exists a path $P_{v'}$ from $v'$ to $v$ using only cactus edges and vertices from $S_v$. Similarly since $u', u_2 \in S_{u_2}^{t_2}$, there exists a path $P_{u'}$ from $u'$ to $u_2$ using only cactus edges and vertices from $S_{u_2}^{t_2}$. Hence, the region $R$ bounded by $v' P_{v'} v \cup v u_2 \cup u_2 P_{u'} u' \cup u'v'$ contains only the vertices from $S_v \cup S_{u_2}^{t_2}$ at its boundary and also contains the base edge $v u_2$ (see Fig.~\ref{fig:case-e1-type-1-common-landing-component-S_w}). This implies that the super-face $f_2$ can only be drawn inside $R$ and consecutively the triangle $t'$ is drawn inside $R$. This implies that $e_1$ should be drawn outside $R$ and consecutively $t_1'$ is drawn outside $R$, hence by Observation~\ref{Observation:different-components-green-blue-cycle} $t_1', t'$ cannot share their landing components, contradiction.
    
    \begin{figure}[H]
    \centering
        \includegraphics[width=0.5\textwidth]{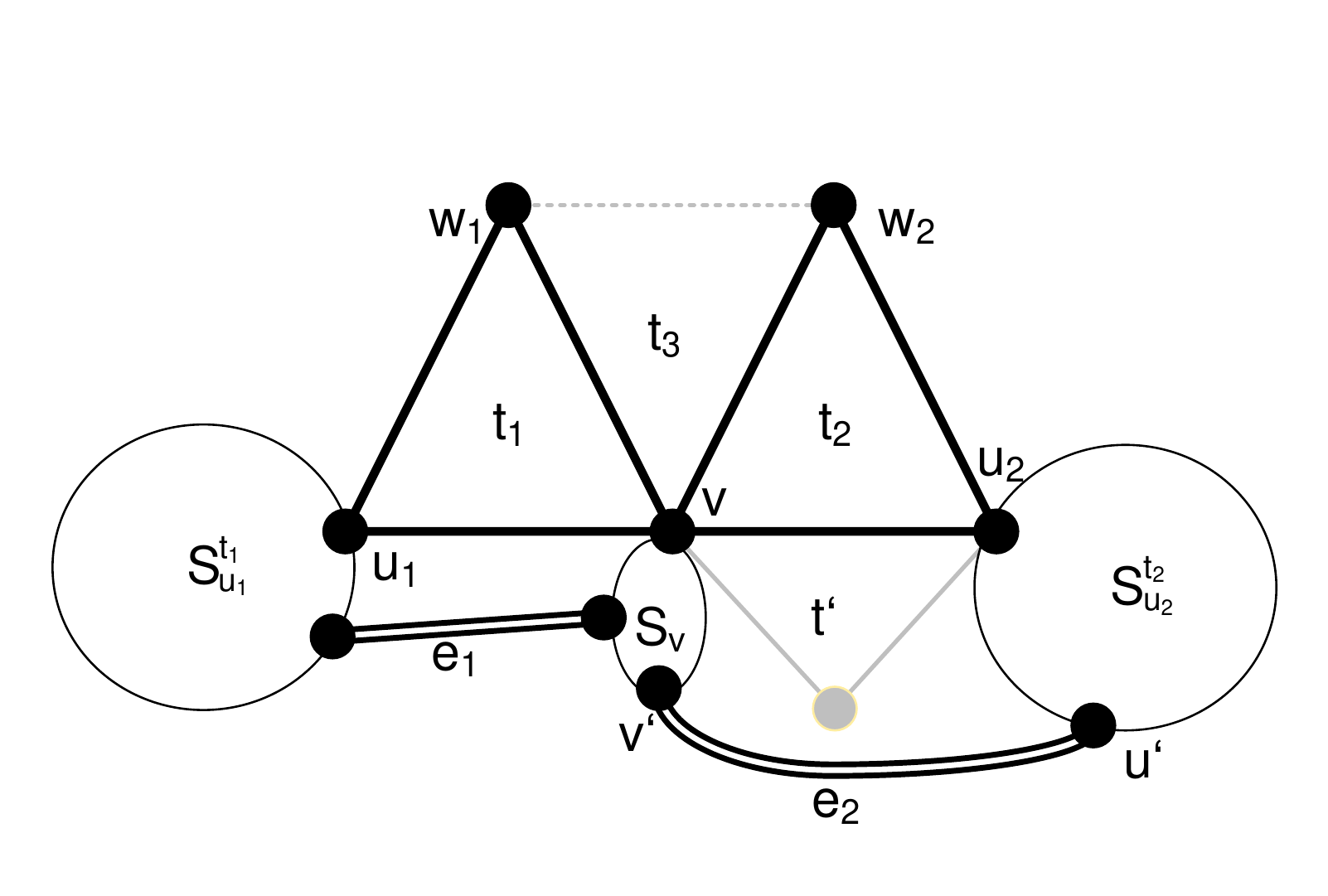}
        \caption{The setting before we reach a contradiction for the case when the edge $e_1$ given by Lemma~\ref{lem:one-blue-per-occupied-t} is type-$1$ and the cross triangles supported by $e_1$ and $t_2$ have the same landing component. If $e_1$ goes across $S_{u_1}^{t_1}, S_v$, then $e_2$ will go across $S_v, S_{u_2}^{t_2}$, hence the landing component for $t_1', t'$ cannot be the same.}
        \label{fig:case-e1-type-1-common-landing-component-S_w}
    \end{figure}

\end{itemize}
\end{itemize}

\subsubsection*{Second part}
For contradiction we assume that $t_1, t_2$ are friends. Again $t_1, t_2$ are adjacent, hence we use the notations defined above for the various components corresponding to two adjacent heavy triangles.
Let $t_3$ be the triangle formed by vertices $\{w_1, w_2, v\}$. Also let $e_1$ be the unique type-$1$ or type-$2$ edge belonging to the super-face $f_1$ containing base side of $t_1$ going across occupied components of $t_1$ (exists by Lemma~\ref{lem:one-blue-per-occupied-t}) and $e_2$ be the unique type-$1$ or type-$2$ edge belonging to the super-face $f_2$ containing base side of $t_2$ going across occupied components of $t_2$ (exists by Lemma~\ref{lem:one-blue-per-occupied-t}). Let $e_1', e_2'$ (possibly same) be the unique type-$1$ or type-$2$ edges belonging to the super-face $f$ going across occupied components of $t_1, t_2$ respectively (exists by Lemma~\ref{lem:one-blue-per-occupied-t} and the fact that $f$ contains free sides for both $t_1, t_2$.). By Lemma~\ref{lem:different-one-blue-per-occupied-base-free} and~\ref{lem:different-one-blue-per-p-base}, $e_1 \neq e_2$, $e_1 \neq e_1'$ and $e_2 \neq e_2'$. 

Now we fix the cross triangles $t_1', t_2', t_1''$ each supported by $e_1, e_2, e_1'$ respectively, as follows.
The idea here is to fix these supported cross triangles in such a way that their landing components are as different as possible. If $e_1'$ supports a cross triangle drawn inside $f$, then we fix $t_1''$ to be that triangle, otherwise $t_1''$ is any supported cross triangle of $e_1'$. 
If there exists a cross triangle supported by $e_1$ which does not share its landing component with $t_1''$ then we fix $t_1'$ to be that triangle, otherwise $t_1'$ is any supported cross triangle of $e_1$. Similarly, we choose the supported cross triangle $t_2'$ of $e_2$ such that it does not share its landing component with any of $t_1''$ or $t_1'$ (or both), otherwise $t_2'$ is any supported cross triangle of $e_2$. 

By the way $t_1', t_2', t_1''$ are chosen, it ensures that all three of them can share a landing component if and only if all three $e_1, e_2, e_1'$ are type-$1$ edges (by Lemma~\ref{lem:type-2-edge-support}). Now there could be three cases.

\begin{itemize}
\item ($t_1', t_2'$ have different landing components): Since the base sides for $t_1, t_2$ are in different super-faces, Lemma~\ref{lem:different-one-blue-per-p-base}) implies that at least one of $e_1, e_2$ is incident on $S_v$ (by renaming assume $e_1$). Hence, if the triangles $t_1', t_2'$ do not share their landing components then we modify our cactus by $\cset' = (\cset \setminus (E(t_1) \cup E(t_2))) \cup E(t_1') \cup E(t_2') \cup E(t_3)$ (See Figure~\ref{fig:p0-2-swap1}). Note that $t_1'$ will attach $S_{u_1}^{t_1}$ to  $S_v$, $t_3$ will attach $S_v$ with $S_{w_1}^{t_1}$ and $S_{w_2}^{t_2}$ and finally $t_2'$ will attach $S_{u_2}^{t_2}$ to this structure, hence $\cset'$ will be a triangular cactus with one more cactus triangle, which contradicts the optimality of $\cset$.
    
    \begin{figure}[H]
    \centering
        \includegraphics[width=0.9\textwidth]{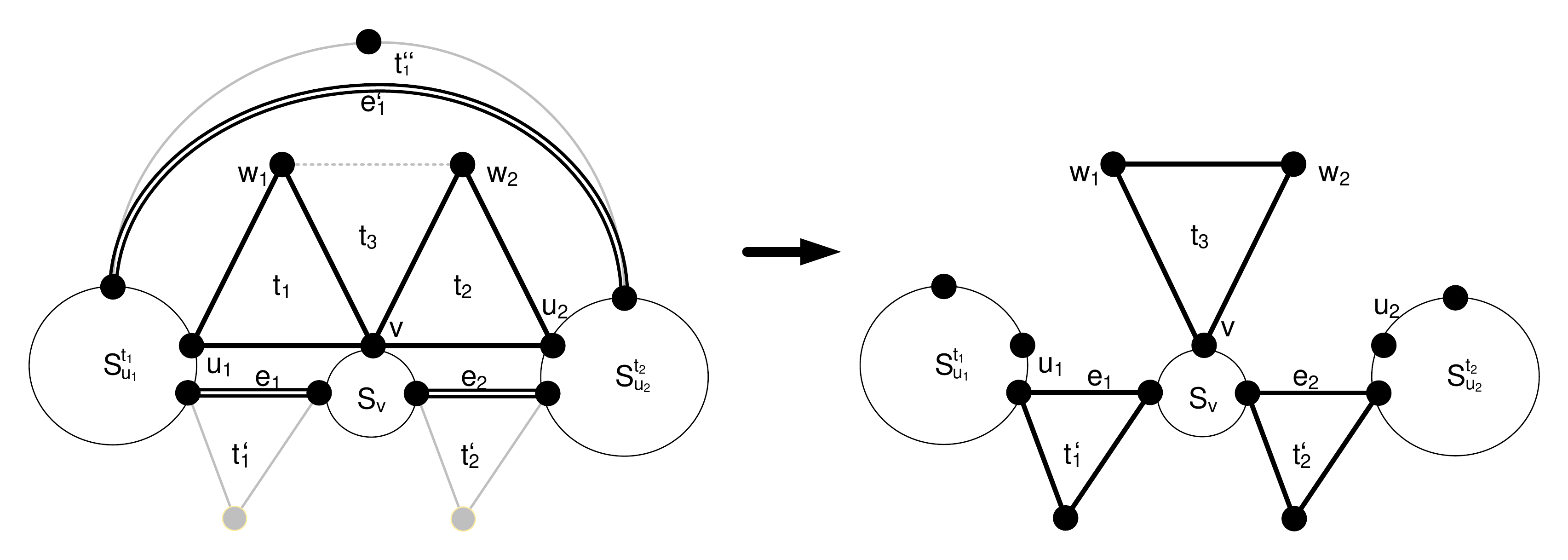}
        \caption{Improving $2$-swap when there exist two cross triangles $t_1', t_2'$ supported by $e_1, e_2$ respectively such that their landing components are different.}
        \label{fig:p0-2-swap1}
    \end{figure}
    
\item ($t_1''$ has a different landing component than the common landing component for $t_1', t_2'$): In this case we know that $t_1', t_2'$ share their landing components but the landing component for $t_1''$ is different. Again, since the base sides for $t_1, t_2$ are in different super-faces, Lemma~\ref{lem:different-one-blue-per-p-base}) implies that at least one of $e_1, e_2$ is incident on $S_v$. Now there are two sub-cases:
\begin{itemize}
    \item ($e_2$ incident on $S_v$): In this case, we modify our cactus by $\cset' = (\cset \setminus (E(t_1) \cup E(t_2))) \cup E(t_2') \cup E(t_1'') \cup E(t_3)$ (See Figure~\ref{fig:p0-2-swap2}). Again $t_2'$ will attach $S_v$ with $S_{u_2}^{t_2}$, $t_3$ will attach $S_v$ with $S_{w_1}^{t_1}$ and $S_{w_2}^{t_2}$ and finally $t_1''$ will attach $S_{u_1}^{t_1}$ to this structure, hence $\cset'$ will be a triangular cactus with one more cactus triangle, which contradicts the optimality of $\cset$.
    
\begin{figure}[H]
        \centering
        \includegraphics[width=0.9\textwidth]{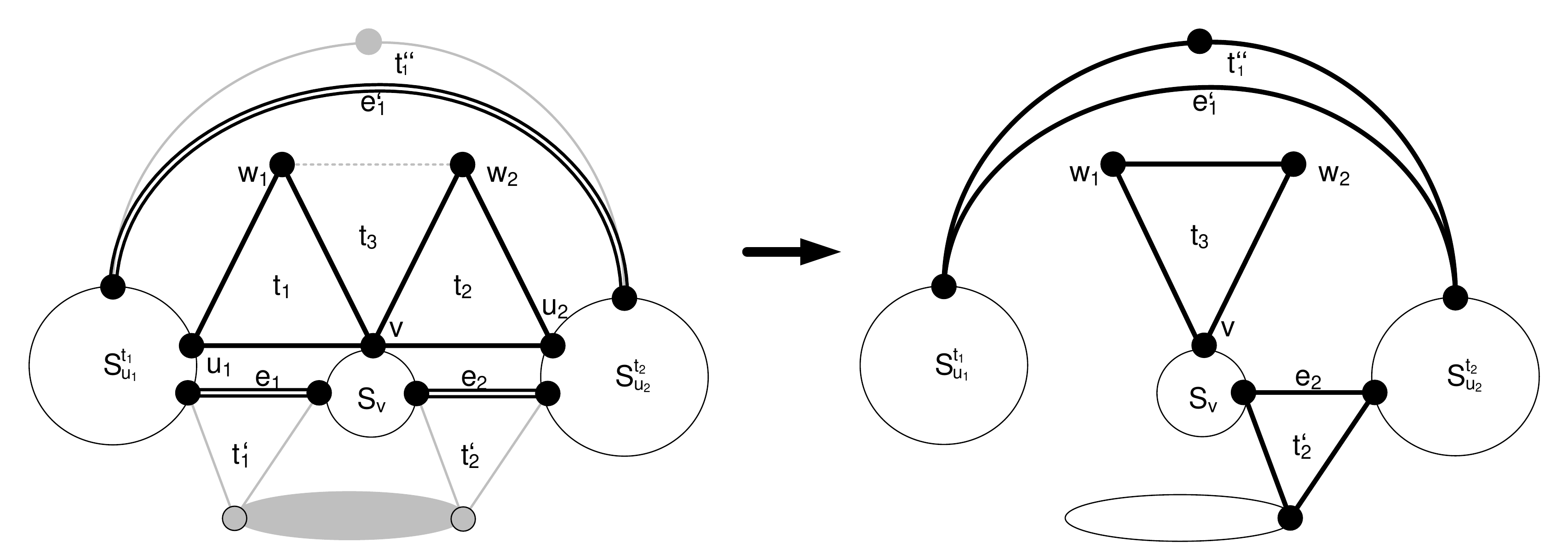}
        \caption{Improving $2$-swap when there exists two cross triangles $t_2', t_1''$ supported by $e_2, e_1'$ respectively such that their landing components are different.}
        \label{fig:p0-2-swap2}
    \end{figure}
    
    \item (Only $e_1$ incident on $S_v$): In this case $e_2$ goes across $S_{u_1}^{t_1}, S_{u_2}^{t_2}$.
    \begin{itemize}
        \item ($e_1'$ incident on $S_v$): The modification $\cset' = (\cset \setminus (E(t_1) \cup E(t_2))) \cup E(t_2') \cup E(t_1'') \cup E(t_3)$ gives us the contradiction since $t_1'$ will attach $S_{u_1}^{t_1}$ to  $S_v$, $t_3$ will attach $S_v$ with $S_{w_1}^{t_1}$ and $S_{w_2}^{t_2}$ and finally $t_2'$ will attach $S_{u_2}^{t_2}$ to this structure, hence $\cset'$ will be a triangular cactus with one more cactus triangle, which contradicts the optimality of $\cset$.
        \item ($e_1'$ goes across $S_{u_1}^{t_1}, S_{u_2}^{t_2}$): The modification $\cset' = (\cset \setminus (E(t_1) \cup E(t_2))) \cup E(t_1') \cup E(t_1'') \cup E(t_3)$ (See Figure~\ref{fig:p0-2-swap3}) gives us the contradiction since $t_1'$ will attach $S_{u_1}^{t_1}$ to $S_v$, $t_3$ will attach $S_v$ with $S_{w_1}^{t_1}$ and $S_{w_2}^{t_2}$ and finally $t_1''$ will attach $S_{u_2}^{t_2}$ to this structure, hence $C'$ will be a triangular cactus with one more cactus triangle, which contradicts the optimality of $\cset$.
        
    \begin{figure}[H]
        \centering
        \includegraphics[width=0.9\textwidth]{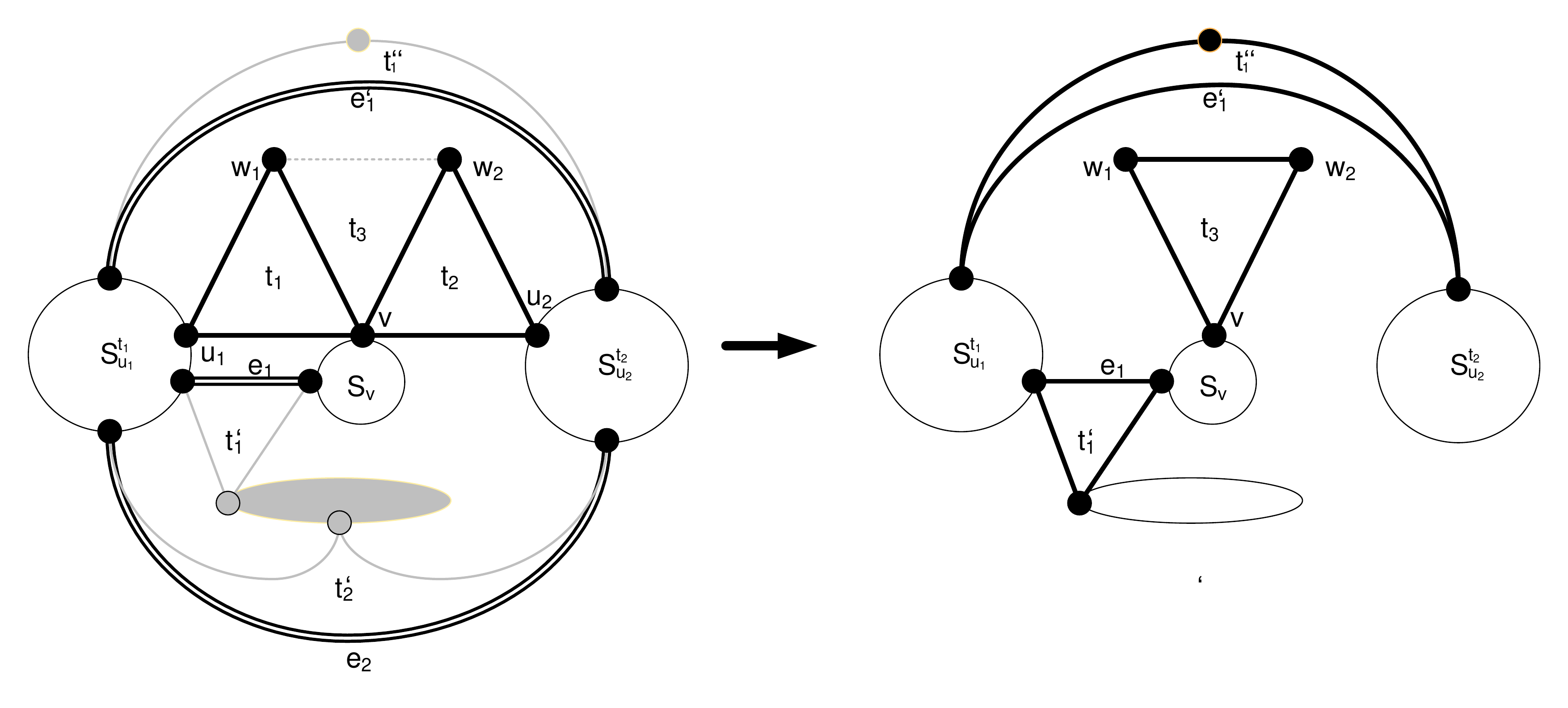}
        \caption{Improving $2$-swap when there exists two cross triangles $t_2', t_1''$ supported by $e_2, e_1'$ respectively such that their landing components are different.}
        \label{fig:p0-2-swap3}
    \end{figure}
        
    \end{itemize}
\end{itemize}

\item (All three triangles $t_1', t_2', t_1''$ share their landing components): In this case, all three $e_1, e_2, e_1'$ are type-$1$ edges. Also by Lemma~\ref{lem:different-one-blue-per-p-base}, at least one of $e_1, e_2$ will be incident on $S_v$. And since $S_{w_1}^{t_1}, S_{w_1}^{t_2}$ are free components, none of the three edges $e_1, e_2, e_1'$ can be incident on $S_{w_1}^{t_1}, S_{w_1}^{t_2}$. Based on these facts, there could be two sub-cases:
\begin{itemize}
    \item (Exactly one of $e_1$ or $e_2$ is incident on $S_v$): We will argue that this case cannot occur, by showing that there is no way for $t_1''$ to share the same landing component with $t_1', t_2'$. Since $t_1, t_2$ are friends, all the vertices of $S_v$ (except $v$) are drawn outside $t_3$. This also implies that there is a trail $P$ starting from vertex $u_1$, using all the cactus/type-$1$/type-$2$ edges on the outer-face for $H[S_v]$ and finally reaching $u_2$, such that the only repeated vertex in the trail is $v$. Since exactly one of $e_1$ or $e_2$ is incident on $S_v$, this implies that the other one goes across $S_{u_1}^{t_1}, S_{u_2}^{t_2}$ (say $u'v'$) such that $u' \in S_{u_1}^{t_1}$ and $v' \in S_{u_2}^{t_2}$. This means that there exists a circuit $C$ comprising of only cactus/type-$1$/type-$2$ edges formed by concatenating the trail $P$, the path $P_{u'}$ between $u'$ and $u_1$ using cactus edges/vertices only from $S_{u_1}^{t_1}$, the path $P_{v'}$ between $v'$ and $u_2$ using cactus edges/vertices only from $S_{u_2}^{t_2}$ and the type-$1$ or type-$2$ edge $u'v'$. It is easy to see that this circuit partitions the plane into two regions, say $R_1, R_2$, such that all the vertices of $S_v$ are drawn inside $R_1$ as a hole and the free sides for $t_1, t_2$ are drawn in $R_2$ such that the only vertex from $S_v$ on the boundary for these regions is $v$. Also, the presence of the edge $w_1 w_2$ does not allow the vertex $v$ to be a part of any type-$1$ or type-$2$ edge drawn inside $R_2$. This implies that the edge out of $e_1, e_2$ which is incident on $S_v$ will be drawn inside $R_1$ and $e_1'$ will be drawn outside $R_2$, which contradicts the fact that all three supported cross triangles $t_1', t_2', t_1''$ share their landing component.
    
    \begin{figure}[H]
        \centering
        \includegraphics[width=0.9\textwidth]{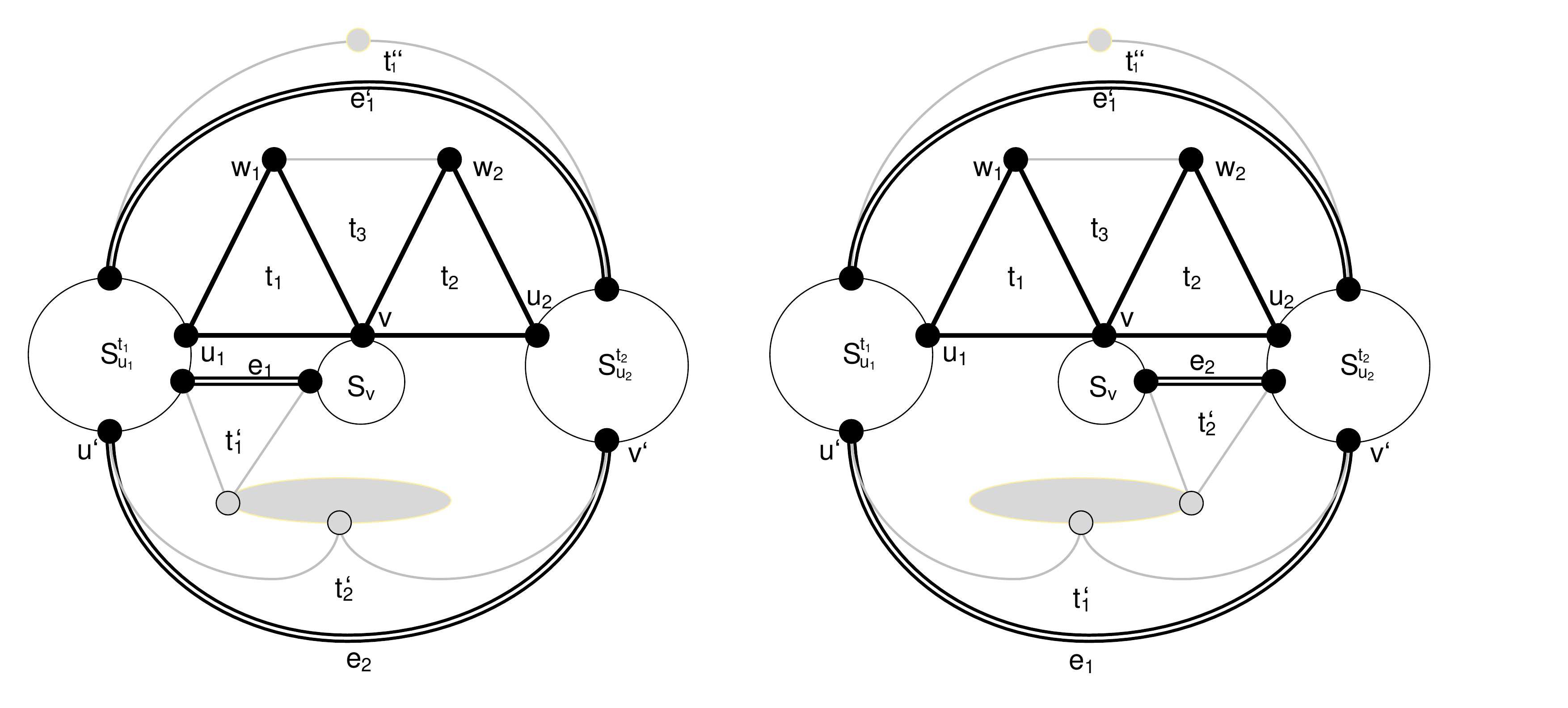}
        \caption{The setting before we reach a contradiction in the case where all three edges $e_1, e_2, e_1'$ are type-$1$ and the landing component for the respective supported cross triangles $t_1', t_2', t_1''$ is the same. Also, exactly one of $e_1$ or $e_2$ is incident on $S_v$. This case cannot occur since then $t_1''$ cannot reach the landing component of $t_1', t_2'$.}
        \label{fig:p0-$2$-swap-contradict-scenario1}
    \end{figure}
    
    \item (Both $e_1$ and $e_2$ are incident on $S_v$):
    Now we focus on $t_1$, which is a heavy type-$0$ triangle and look at the type-$1$ or type-$2$ edges going across $t_1$'s occupied components. The two type-$1$ edges $e_1$ and $e_1'$ are surely going across the occupied components of $t_1$. By Prop.~\ref{prop:structure-heavy}, $t_1$ should have at least one more such type-$1$ or type-$2$ edge (say $e_1'':=u'v'$). Now let $u' \in S_{u_1}^{t_1}$ and $v' \in S_{v_1}^{t_1}$. This means that there is a path $P_{u'}$ from $u'$ to $u_1$ in $\cset[S]$ and another path $P_{v'}$ from $v'$ to $v_1 = v$  in $\cset[S]$ such that the cycle $C_1 := u' P_{u'} u_1 \cup u_1 v \cup v P_{v'} v' \cup u'v'$ is made of only cactus/type-$1$/type-$2$ edges and cactus vertices such that it divided the plane into two regions such that one region contains the base side of $t_1$ and another contains the free side for $t_1$. Since, $e_1, e_1'$ see the base and free sides for $t_1$ respectively, hence they have to be drawn in the different region bounded by $C_1$. Hence, the cross triangles $t_1', t_1''$ supported by $e_1, e_1'$ cannot share their landing components, contradiction.
    
    \begin{figure}[H]
        \centering
        \includegraphics[width=0.9\textwidth]{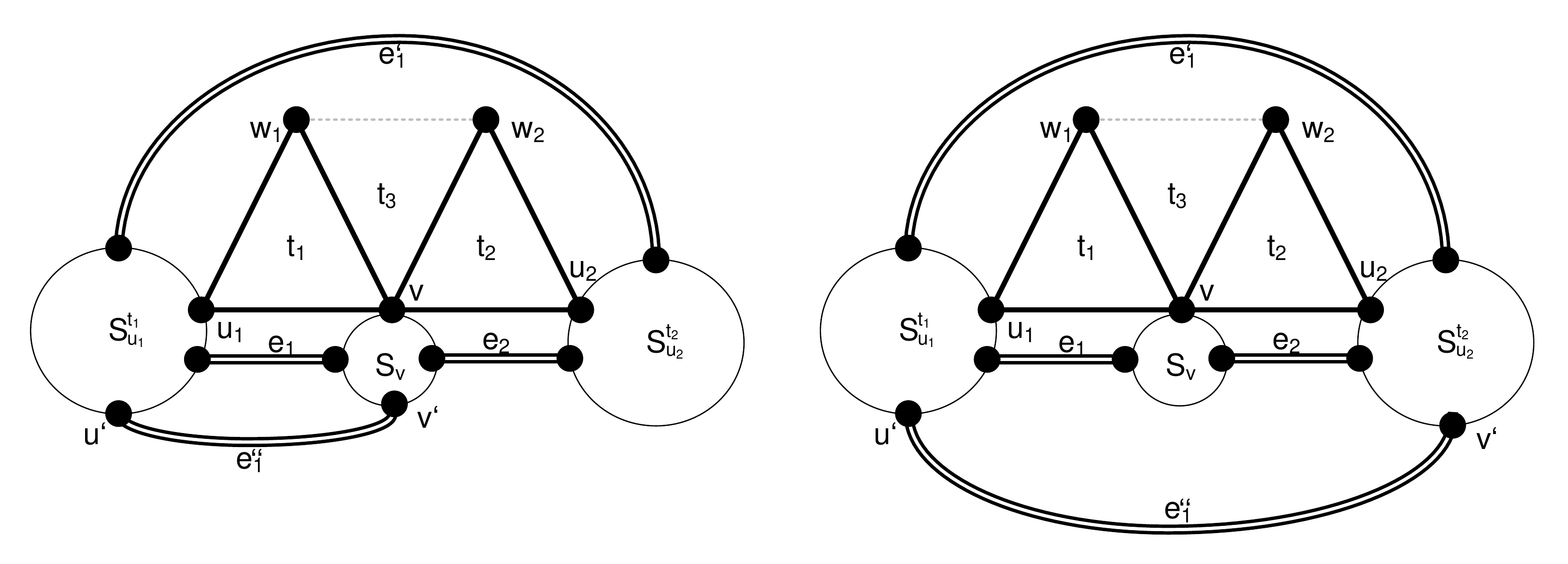}
        \caption{The setting before we reach a contradiction in the case where the edges $e_1, e_2, e_1'$ are type-$1$ and the landing component for the respective supported cross triangles $t_1', t_2', t_1''$ is the same. Also, both $e_1$ and $e_2$ are incident on $S_v$. This case cannot occur since $t_1$ is a type-$0$ heavy triangle and the additional type-$1$ or type-$2$ edge $e_1''$ going across $t_1$'s occupied components will separate $e_1, e_1'$ into different regions.}
        \label{fig:p0-$2$-swap-contradict-scenario2}
    \end{figure}
    
    \end{itemize}

\end{itemize}

\section{On the Strength of Our Result}\label{sec:extras}

\subsection{Our Bound is Almost Tight}
\label{sec:lower_bound}

In this section, we show that there exists a graph $G$ for which $\beta(G) \leq (\frac{1}{6}+o(1))f_3(G)$. We show this indirectly using a family of graphs presented in~\cite{CS17}, as stated in the following lemma. 

\begin{lemma}\label{lem:bad-1over6}\cite{CS17}
There is a family of $n$-vertex planar graphs $\{H_n\}_{n \in {\mathbb Z}}$ for which there exist a maximal cactus subgraph $C_n$ of $H_n$ such that $\frac{f_3(C_n)}{f_3(H_n)} \leq \frac{1}{12} + o_n(1)$. 
\end{lemma}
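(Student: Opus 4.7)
The plan is to construct the family $\{H_n\}$ and exhibit inclusion-maximal cactus subgraphs $C_n$ achieving the claimed ratio, via a gadget-based construction.

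First, I would design a basic planar gadget $\Gamma$ with a constant number $k$ of triangular faces and a chosen ``blocker set'' $T^\star$ consisting of $k/12$ of those faces, such that $T^\star$ (together with enough tree edges of $\Gamma$ to connect it) is an inclusion-maximal triangular cactus in $\Gamma$. For $T^\star$ to be maximal, every triangular face of $\Gamma$ not in $T^\star$ must, upon addition, either (i) share an edge with some face in $T^\star$, or (ii) close a new cycle in the resulting subgraph that shares an edge with an already-present cycle (the genuine cactus-cycle violation). Since each blocker has only three edges and in a plane graph each edge borders at most two faces, condition (i) alone can block at most $3|T^\star|$ additional faces; hence the bulk of the remaining blocking must come from condition (ii).

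Next, I would define $H_n$ as the planar graph obtained by gluing $n$ copies of $\Gamma$ along tree-like connectors (for instance by identifying designated boundary vertices or by joining them through paths) in such a way that planarity is preserved, no new triangular faces arise at the joins, and no triangular face bridges two distinct copies. The cactus $C_n$ is then taken as the union of the per-copy blocker sets together with the tree edges used in the joins. Tree additions create no new cycles, so $C_n$ remains a cactus; its inclusion-maximality in $H_n$ follows from the local maximality of each $T^\star$ inside its own copy combined with the absence of any cross-gadget triangular face to add. Counting faces then yields $f_3(C_n)/f_3(H_n) \to |T^\star|/f_3(\Gamma) \leq 1/12$ as $n \to \infty$, with the $o_n(1)$ error absorbing lower-order contributions from the connectors.

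The main obstacle is the first step: realising a planar gadget in which the cactus-cycle constraint (not merely edge-sharing) blocks many non-chosen faces per blocker. A natural candidate arranges $T^\star$ so that its triangles share vertices in a rich pattern, and surrounds them with further triangular faces each of which, upon addition, would complete a short cycle with two blockers that share a vertex---so that the new cycle uses one edge of each blocker and thereby violates edge-disjointness of the cactus cycles. Verifying this reduces to a finite case analysis across the triangular faces of $\Gamma$, together with a direct count confirming that $\Gamma$ contains at least $12|T^\star|$ triangular faces; once such a $\Gamma$ is pinned down, the rest of the proof is routine.
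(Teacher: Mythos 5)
This lemma is not proved in the paper at all: it is imported verbatim from~\cite{CS17}, so the only thing your proposal can be measured against is whether it would actually establish the statement. It would not, for two reasons. First, the entire content of the lemma is the explicit construction, and your proposal never produces it: you correctly observe that edge-sharing can block at most $3|T^\star|$ extra faces and that the real blocking must come from the cactus-connectivity constraint (a triangle $uvw$ with $u,v$ in the same component of the cactus is blocked because the cycle formed by $uv$ together with the cactus path from $u$ to $v$ shares the edge $uv$ with the new triangle), but you then defer the existence of a gadget realizing ratio $1/12$ to ``a finite case analysis'' that is never carried out. Exhibiting a plane graph in which essentially every edge spanned by the cactus supports two triangular faces with apexes outside the spanned set, all simultaneously embeddable, is exactly the hard part; it is not routine, and the known construction is a growing recursive family rather than something one finds by local inspection.

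Second, the constant-size-gadget framing is quantitatively impossible. If a maximal triangular cactus has $j$ triangles distributed over $c$ non-singleton components, it spans $2j+c$ vertices; every blocked triangular face must contain an edge of $H_n$ with both endpoints in one component, there are at most $\sum_i(3|S_i|-6)=6j-3c$ such edges by planarity, and each borders at most two faces, so $f_3(H_n)\leq 12j-6c\leq 12f_3(C_n)-6$. Hence any fixed gadget $\Gamma$ has ratio $|T^\star|/f_3(\Gamma)\geq 1/(12-6/|T^\star|)$, which is a constant strictly greater than $\frac{1}{12}$, and gluing $n$ disjoint copies along trees preserves this constant, so the resulting family satisfies $\frac{f_3(C_n)}{f_3(H_n)}\geq \frac{1}{12}+\epsilon$ for a fixed $\epsilon>0$ rather than $\leq\frac{1}{12}+o_n(1)$. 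To approach $\frac{1}{12}$ the blocker components themselves must grow with $n$; the construction cannot be assembled from bounded-size pieces in the way you describe.
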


In \cite{CS17}, this family of graphs is used to show that a maximal cactus (not maximum) is not sufficient to improve over the best known greedy strategies when approximating MPT. In the context of this paper we use $C_n$ to compare it to a maximum cactus for $H_n$ to prove the following.

\begin{theorem}
Let $H_n$ be the graph family as in Lemma~\ref{lem:bad-1over6}. Then, $\frac{\beta(H_n)}{f_3(H_n)} \leq \frac{1}{6} + o_n(1)$. 
\end{theorem}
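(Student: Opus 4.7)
Fix the family $\{H_n\}$ from Lemma~\ref{lem:bad-1over6}, and let $C_n$ denote the maximal triangular cactus subgraph of $H_n$ guaranteed there, so $f_3(C_n) \le \bigl(\tfrac{1}{12}+o_n(1)\bigr) f_3(H_n)$. The goal is to upper-bound $\beta(H_n)$, the triangle count of a \emph{maximum} cactus of $H_n$. The strategy is to prove the factor-two comparison
\[
\beta(H_n) \;\le\; 2\,f_3(C_n) + o\bigl(f_3(H_n)\bigr),
\]
and then combine it with Lemma~\ref{lem:bad-1over6} to conclude $\beta(H_n) \le \bigl(\tfrac{1}{6}+o_n(1)\bigr) f_3(H_n)$, which is exactly the claimed statement.

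I would establish this factor-two comparison by inspecting the explicit construction of $H_n$ given in \cite{CS17}. That construction glues together many identical planar gadgets along a small interface, and each gadget contributes a bounded number of triangular faces to $f_3(H_n)$. Inside a single gadget, a short case analysis of which collections of triangular faces can coexist inside a cactus shows that they split into two complementary ``families'': a maximal cactus must choose exactly one family in each gadget (this is what $C_n$ does), while a maximum cactus can pick at most both families per gadget before the cactus property fails somewhere in the gadget. Summing the per-gadget counts over the $\Theta(n)$ gadgets yields the factor of two, with $o(1)$ relative error absorbed by boundary effects between neighbouring gadgets.

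The main obstacle is the per-gadget case analysis together with the interaction between gadgets: since gadgets in $H_n$ share vertices, one has to rule out that a maximum cactus could cleverly combine parts of both families across different gadgets to exceed the factor of two. A cleaner alternative, avoiding the gadget-by-gadget combinatorics, would be to prove the generic statement that any maximal triangular cactus $C$ of a graph $G$ contains at least $\beta(G)/2$ triangles: for each triangle $t$ in a maximum cactus $C^*$ with $t \notin C$, maximality of $C$ forces that adding $t$ would create two edge-sharing cycles in $C \cup \{t\}$, and each triangle of $C$ can be the ``blocker'' for at most one such $t$ via a standard charging argument. Either route delivers the desired inequality $\beta(H_n)/f_3(H_n) \le \tfrac{1}{6} + o_n(1)$, matching the lower bound of Theorem~\ref{thm:main} and showing that our ratio is tight.
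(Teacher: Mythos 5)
Your overall reduction is exactly the paper's: invoke Lemma~\ref{lem:bad-1over6} and reduce the theorem to showing that the maximal cactus $C_n$ from that lemma satisfies $f_3(C_n) \ge \beta(H_n)/2$, i.e.\ the generic claim that a maximal triangular cactus has at least half as many triangles as a maximum one. The paper proves precisely this generic claim (your ``cleaner alternative''), not the gadget-by-gadget analysis of your first route, which you rightly flag as fragile and do not carry out. The issue is that your sketch of the generic claim has two genuine gaps. First, the count ``each triangle of $C$ can be the blocker for at most one such $t$'' is false: a triangle $s$ of $C$ can share two of its three edges with two distinct triangles of $C^*$ (only all three is impossible, since then the cycle $s$ would lie entirely inside $C^*$ and share an edge with a triangle of $C^*$). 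The correct bound is two; already a single triangle $abc$ in $C$ facing triangles $abd$ and $bce$ in $C^*$ shows the factor two is attained. With ``at most two'' the naive bookkeeping $\beta(G) \le |C^* \cap C| + 2 f_3(C)$ only yields a factor of three, unless you also observe that a triangle lying in $C \cap C^*$ cannot share an edge with any other triangle of $C^*$ and therefore blocks nothing.

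Second, and more seriously, you do not handle the triangles $t \in C^* \setminus C$ that share \emph{no} edge with any triangle of $C$: for these, the obstruction to adding $t$ to $C$ is a long cycle formed by an edge of $t$ together with a path inside $C$, so there is no canonical triangle of $C$ to charge, and several such $t$ could a priori blame overlapping paths. The paper sidesteps both difficulties by running the exchange in the opposite direction: it inserts the triangles of $C_n$ into $C^*_n$ one at a time, deleting at most two triangles of $C^*_n$ per insertion (at most two in the edge-sharing case, one to break a newly created long cycle in the edge-disjoint case), and then uses maximality of $C_n$ to conclude that no triangle of $C^*_n$ can survive the process, giving $f_3(C^*_n) \le 2 f_3(C_n)$ directly. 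To complete your proof you would need either this reversed iterative accounting or a careful charging scheme for the connectivity-obstruction case; as written, the argument does not close.
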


\begin{proof}
By Lemma \ref{lem:bad-1over6}, it suffices to argue that $f_3(C_n) \geq \frac{\beta(H_n)}{2}$.
Let $C^*_n$ be an optimal cactus with $\beta(H_n)$ triangles. 
Notice that for any triangle $t$ in $C_n$, $E(t)$ intersects at most two other triangles in $C^*_n$. If all three edges of $t$ were to be used by three different triangles in $C^*_n$, this would contradict the cactus property. Moreover, if $t$ does not intersect any triangle in $C^*_n$ this would imply that one of its edges would complete a cycle if added to $C^*_n$. By these two observations we can use a simple counting scheme to upper-bound the number of triangles in $C_n^*$ depending on the number of triangles in $C_n$.
We iteratively add triangles of $C_n$ to $C^*_n$ and count in every step how many triangles in $C^*_n$ need to be removed to maintain the cactus property. For every triangle in $C_n$ that intersects $C^*_n$ in one or two edges, we have to remove at most two triangles from $C^*_n$. For every triangle in $C_n$, that does not intersect $C^*_n$ in any edge, we have to break a cycle in the resulting $C^*_n$ by deleting one other triangle from it. In each iteration we therefore destroy at most two triangles from the original $C^*_n$ and therefore get $f_3(C^*_n) \leq 2f_3(C_n)$. This concludes the proof as
$f_3(C_n) \geq f_3(C^*_n)/2 = \beta(H_n)/2$.  
\end{proof}

\subsection{Comparison to the Previous Bound}

One integral part to derive the improved approximation ration for MPS in~\cite{cualinescu1998better} was to show that for any given planar graph $G=(V,E)$ with $n=|V|$ vertices and $|E|=3n-6-t(G)$ edges, we have:

\begin{theorem}[\cite{cualinescu1998better}]
\label{thm:cualinescu_cactus_extremal}
Let $G$ be as above, then $\beta(G) \geq \frac{1}{3} (n - t(G) - 2)$.
\end{theorem}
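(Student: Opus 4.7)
The plan is to derive this theorem as an immediate corollary of our main bound (Theorem~\ref{thm:main}) by establishing the intermediate inequality
\[ f_3(G) \;\geq\; 2n - 4 - 2t(G), \]
as already suggested in the remark following Theorem~\ref{thm:main}. Once this inequality is in place, combining it with $\beta(G) \geq \tfrac{1}{6} f_3(G)$ yields
\[ \beta(G) \;\geq\; \tfrac{1}{6}(2n - 4 - 2t(G)) \;=\; \tfrac{1}{3}(n - t(G) - 2), \]
which is exactly the desired bound.

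The key step is therefore to prove the lower bound on $f_3(G)$. I would fix an arbitrary planar embedding of $G$ and then triangulate each non-triangular face by adding chords inside the face; since by assumption $G$ is missing exactly $t(G)$ edges from being a triangulated plane graph, this yields a triangulation $G'$ on the same vertex set with $|E(G')| = 3n - 6$. By Euler's formula applied to $G'$, the number of faces of $G'$ equals $2 - n + (3n-6) = 2n - 4$, and every face of $G'$ is a triangle, so $f_3(G') = 2n - 4$.

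Now I would obtain $G$ back by removing the $t(G)$ added chords one by one. Each such removal merges two faces incident to the removed edge into a single face; in the worst case, both of these merged faces were triangular in the current intermediate graph, so this step destroys at most two triangular faces (and can never create any new triangular face, since the merged face has strictly more than three edges on its boundary). Iterating over all $t(G)$ removals, we lose at most $2\,t(G)$ triangular faces in total, hence
\[ f_3(G) \;\geq\; f_3(G') - 2t(G) \;=\; 2n - 4 - 2t(G). \]

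The main (mild) obstacle is that the triangulation step requires some care when $G$ is not $2$-connected, since the boundary walks of faces may revisit vertices; however, in this case one can still add chords within each face (subdividing boundary walks into triangles) without creating a self-loop, and the counting of merged triangular faces during the reverse deletion process is unaffected. Combining the resulting inequality with Theorem~\ref{thm:main} finishes the proof.
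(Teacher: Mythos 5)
Your proposal is correct and follows essentially the same route as the paper: the paper also derives $f_3(G) \geq 2n - 2t(G) - 4$ by observing that a triangulation has $2n-4$ triangular faces and that each edge removal merges two faces (destroying at most two triangles), and then combines this with Theorem~\ref{thm:main}. Your write-up merely spells out the Euler-formula count and the reverse-deletion bookkeeping in more detail than the paper does.
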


As removing one edge from a triangulated planar graph merges exactly two faces, we can easily derive a lower bound that depends on $t(G)$, for the number of triangular faces in $G$: 

\[ f_3(G) \geq 2n - 2t(G) - 4 \]

By Theorem~\ref{thm:main}, we have that $\beta(G) \geq \frac{1}{6} f_3(G)$. Combining these two facts implies Theorem~\ref{thm:cualinescu_cactus_extremal}. 

\begin{figure}[!ht]
  \centering
   \includegraphics[width=0.5\textwidth]{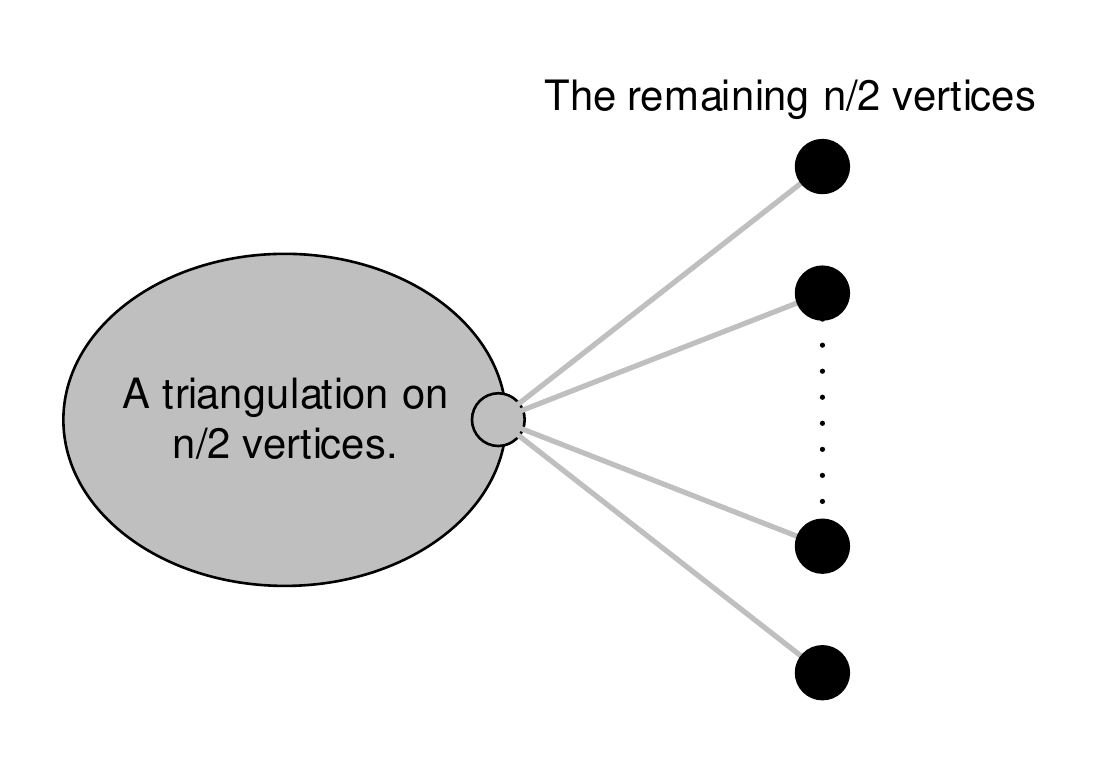}
    \caption{Bad example which shows that a extremal bound like the one in in~\cite{cualinescu1998better} for MPS does not necessarily imply a similarly strong result to MPT.}
    \label{fig:BadExampleMPT}
\end{figure}

We end this section by showing that the bound in~\cite{cualinescu1998better} alone is not sufficient for approximating MPT. 
To this end we construct a graph in which $\frac{1}{3} (n - t(G) - 2) \leq 0$, even though $f_3(G) = \Theta(n)$, 
Let $G$ be a planar graph with $n$ vertices, where $\frac{n}{2}$ vertices form a triangulated planar subgraph. Let $v$ be a vertex on the outer-face of this triangulated structure. The remaining $\frac{n}{2}$ vertices are embedded in the outer-face and are incident to exactly one edge each, with the other endpoint being $v$ (see Figure~\ref{fig:BadExampleMPT} for an illustration of this construction). 
Therefore by Euler's formula, the number of edges in this graph is equal to $3(\frac{n}{2}) - 6 + \frac{n}{2} = 2n - 6$ and thus $t(G) = n$, while the number of triangular faces is $f_3(G) = 2(\frac{n}{2}) - 4 -1 = n - 5$. 

\section{Conclusions and Open Problems} 
\label{sec:conclusion}

Our work implies that a natural local search algorithm gives a $(\frac{4}{9}+ \epsilon)$-approximation for MPS and a $\frac{1}{6}+\epsilon$ approximation for MPT. To be more precise, when given any graph $G$, we follow the $t$-swap local search strategy for $t= O(1/\epsilon)$: Start from any cactus subgraph $H$. Try to improve it by removing $t$ triangles and adding $(t+1)$ triangles in a way that ensures that the graph remains a cactus subgraph. 
A local optimal solution will always be a $(\frac{4}{9}+\epsilon)$ approximation for MPS and a $(\frac 1 6 + \epsilon)$ approximation for MPT. 

Knowing this fact, there is an obvious candidate algorithm for improving over the long-standing best approximation factor for MPS. 
We call a graph $H$ a diamond-cactus if every block in $H$ is either a diamond\footnote{A diamond subgraph is a graph that is isomorphic to the graph resulting from deleting any single edge from a $K_4$.} or a triangle.
Start from any diamond-cactus subgraph $H$ of $G$ and then try to improve it by removing $t$ triangles from $H$ and adding $(t+1)$ triangles, maintaining the fact that $H$ is a diamond-cactus subgraph. 
We conjectured that this algorithm gives a better than $\frac{4}{9}$-approximation for MPS, but we suspect that the analysis will require substantially new ideas. 

Another interesting direction is to see whether there is a general principle that captures a denser planar structure than cactus subgraphs by going above matroid parity in the hierarchy of efficiently computable problems. For instance, are diamond-cactus subgraphs captured by matroid parity? Or can it be formulated as an even more abstract structure than matroids (e.g. commutative rank~\cite{BlaserJP17}) that can still be computed efficiently?
We believe that studying this direction will lead to a better understanding of algebraic techniques for finding dense planar structures. 

Finally, the absence of LP-based techniques in this problem domain seems rather unfortunate. There have been some experimental studies recently, but the theoretical understanding of what can be proven formally in the context of power of relaxation is certainly lacking \cite{Juenger1996,ChimaniLP_SEA18,chimani_et_al_LP_ESA18}.
Is there a convex relaxation that allows us to find a relatively dense planar subgraph (e.g. $(3-\epsilon)$-approximation for MPS using LP-based techniques)?

\paragraph{Acknowledgement:} Parinya Chalermsook is supported by the European Research Council (ERC) under the European Union’s Horizon 2020 research and innovation programme (grant agreement No 759557) and by Academy of Finland Research Fellows, under grant number 310415 and 314284. Sumedha Uniyal is partially supported by Academy of Finland under the grant agreement number 314284.

\bibliographystyle{plain}
\bibliography{references}

\newpage
\appendix

\section{Missing Proofs} 
\label{sec:missing-proofs}

\subsection{Proof of Proposition \ref{prop:structure-heavy}}
\label{subsec:proof-prop-heavy}

Observation \ref{Observation:different-components-green-blue-cycle} immediately leads to a simple lemma which will prove helpful in the proof of Proposition \ref{prop:structure-heavy}.

\begin{lemma}
\label{lem:type-2-edge-support}
Let $e := uv$ be a type-$2$ edge in $G[S]$, then the cross triangles $t_1$ and $t_2$ supported by $e$ can not have the same landing component.
\end{lemma}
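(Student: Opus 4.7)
The plan is to apply Observation \ref{Observation:different-components-green-blue-cycle} to a suitably chosen cycle through $e$, exploiting the fact that in any planar drawing two distinct triangles sharing the edge $uv$ must lie on opposite sides of that edge.

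First I would observe that, since $e = uv$ is a non-cactus edge of $G[S]$ supporting two cross triangles $t_1, t_2$, planarity forces $t_1$ and $t_2$ to be drawn on opposite sides of the segment $uv$: otherwise the interiors of $t_1$ and $t_2$ (whose third vertices lie outside $S$) would overlap. In particular, in a small neighborhood of $e$, the two local sides of $e$ belong to the two triangles respectively.

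Next I would build a circuit that separates these two sides. Since $\cset[S]$ is a connected triangular cactus containing both $u$ and $v$, there exists a path $P$ in $\cset[S]$ from $u$ to $v$ that uses only cactus edges (and hence only cactus vertices). Concatenating $P$ with the edge $e$ yields a cycle
\[
C \;=\; P \cup \{uv\},
\]
which is made exclusively of cactus edges plus the single type-$2$ edge $e$, and whose vertices are all cactus vertices. Thus $C$ satisfies the hypothesis of Observation \ref{Observation:different-components-green-blue-cycle}, so it divides the plane into two regions, and cross triangles drawn in distinct regions of $C$ cannot share a landing component.

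The final step is to note that because $e \in E(C)$, the two local sides of $e$ lie in different regions of $C$. Combined with the first paragraph, this places $t_1$ and $t_2$ in different regions of $C$. Applying Observation \ref{Observation:different-components-green-blue-cycle} to $C$ then yields the conclusion that $t_1$ and $t_2$ cannot share a landing component. No real obstacle is anticipated; the only subtlety is being careful that the chosen path $P$ uses only cactus edges and cactus vertices so that $C$ qualifies for the observation, which is guaranteed by the connectedness of $\cset[S]$.
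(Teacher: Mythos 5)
Your proposal is correct and follows essentially the same route as the paper: both take a cactus path $P$ from $u$ to $v$ in the connected cactus $\cset[S]$, close it up with the edge $e$ to form a cycle of cactus/type-$2$ edges and cactus vertices, and invoke Observation~\ref{Observation:different-components-green-blue-cycle} after noting the two supported cross triangles lie on opposite sides of $e$ and hence in different regions of that cycle. Your explicit justification that the two triangles lie on opposite sides of $e$ is a detail the paper leaves implicit, but the argument is the same.
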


\begin{proof}
Since both $u$ and $v$ are in $S$, there exists a path $P$ from $u$ to $v$ in $G[S]$ containing only cactus edges and vertices (see for example Figure~\ref{fig:classification-edges-vertices-triangles}). Hence, the cycle $D := uPv \cup vu$ consists of only type-$1$ or type-$2$ or cactus edges and cactus vertices, such that the two cross triangles supported by $e$ will be drawn in different regions corresponding to $D$. Thus by Observation~\ref{Observation:different-components-green-blue-cycle} the two cross triangles supported by $e$ cannot have the same landing component.
\end{proof}

The $1$-swap introduced in Figure~\ref{fig:1-swap} and the $2$-swap optimality of $\cset$, implies the following lemma.

    \begin{figure}[H]
    \centering
    \includegraphics[width=0.35\textwidth]{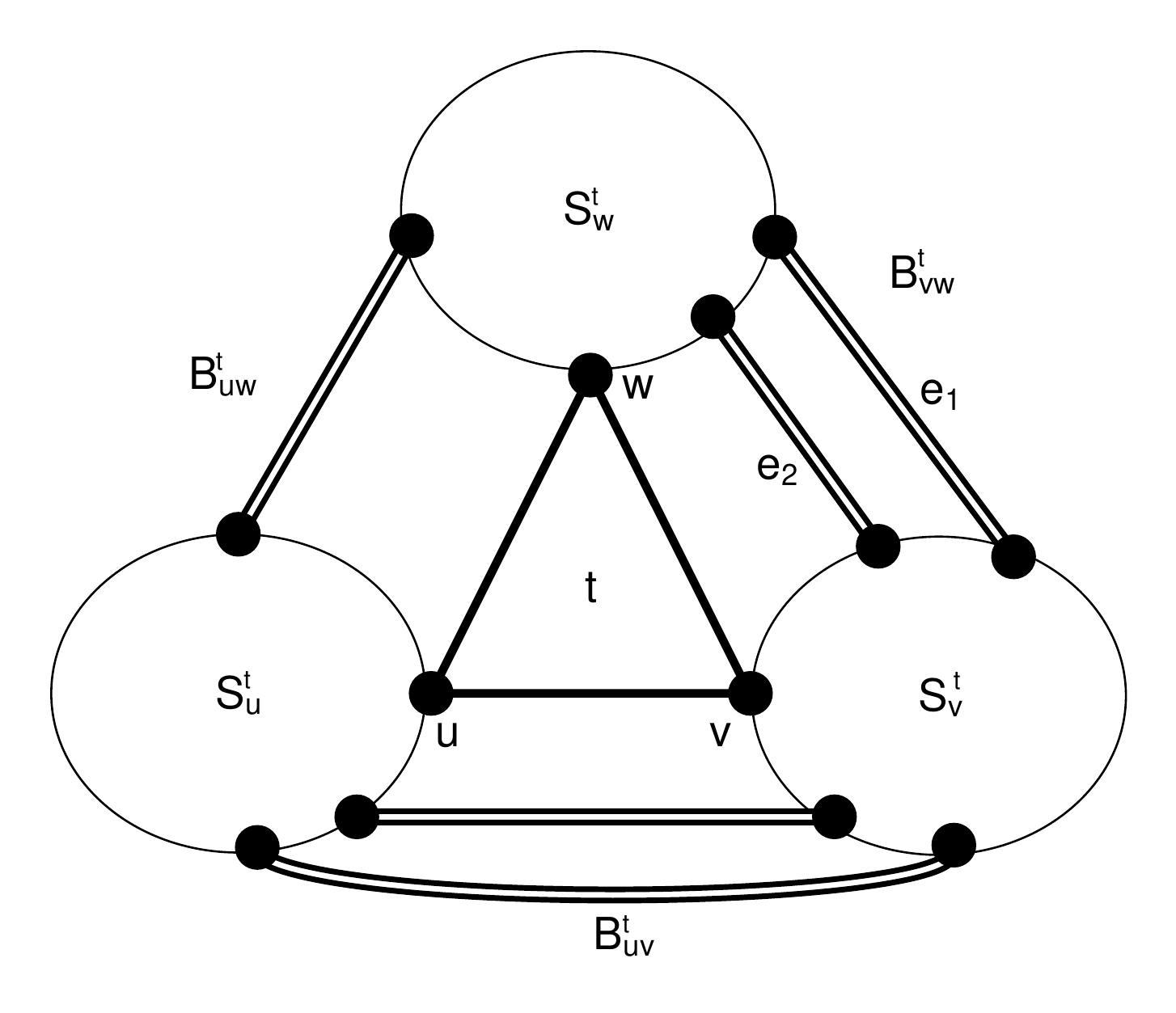}
        \caption{The split components $S_u^t, S_v^t, S_w^t$ and the sets $B_{uv}^t, B_{uw}^t, B_{vw}^t$ for a cactus triangle $t$. In fact there cannot be two edges $e_1, e_2$ as shown by Lemma \ref{lem:green-triangles-supporting} property~\ref{itm:at-most-one-blue-edge-each}.}
        \label{fig:split-components-t}
    \end{figure}

\begin{lemma}
\label{lem:green-triangles-supporting}
Let $t$ be a cactus triangle with vertices $u,v$ and $w$ and let there exist at least two cross triangles $t_1,t_2$ in $G$ such that $(V(t_1)\cup V(t_2)) \cap S_x^t \neq \emptyset$, for $x\in \{u,v,w\}$, then 
\begin{enumerate}[i]
    \item \label{itm:same-landing}  $t_1$ and $t_2$ must have the same landing component,
    \item \label{itm:all-type-1-blue-edges} any edge $e$ in $B_{uv}^t \cup B_{uw}^t \cup B_{vw}^t$ is of type-$1$, 
    \item \label{itm:at-most-one-blue-edge-each} $|B_{uv}^t|,|B_{uw}^t|,|B_{vw}^t| \leq 1$ and
    \item \label{itm:at-most-one-gray-each} any set of edges $\{xy\} \cup B_{xy}^t$ for $xy \in E(t)$ support at most one cross triangle.
\end{enumerate}
\end{lemma}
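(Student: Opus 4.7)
The plan is to derive each of (i)--(iv) by contradiction, constructing an improving swap whenever a conclusion fails and contradicting the $2$-swap optimality of $\cset$. Throughout, I exploit the basic counting that each cross triangle has two vertices in $S$, so $V(t_1)\cup V(t_2)$ contributes four vertices that (by hypothesis) are spread across all three split components $S^t_u,S^t_v,S^t_w$; in particular, one of $t_1,t_2$ must have a vertex in any prescribed one of the three split components.

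For (i), I would use the $1$-swap of Figure~\ref{fig:1-swap} directly. Assume for contradiction that $t_1$ and $t_2$ have distinct landing components $L_1\neq L_2$. Removing the edges of $t$ splits $\cset[S]$ into the three split cacti $\cset^t_u,\cset^t_v,\cset^t_w$, and inserting $t_1$ and $t_2$ reattaches them through $L_1$ and $L_2$. The distribution of the four $S$-vertices of $t_1,t_2$ across the three split components gives a short case analysis; in each case, because $L_1$ and $L_2$ belong to originally distinct cactus components and $t_1,t_2$ share at most a single vertex, the only cycles in the new graph are the triangles $t_1,t_2$ themselves, so the result is a triangular cactus with $|\cset|+1$ triangles, contradicting optimality.

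Property (ii) then follows by combining (i) with Lemma~\ref{lem:type-2-edge-support}. If some $e\in B^t_{xy}$ were type-$2$ and supported cross triangles $s_1,s_2$, then Lemma~\ref{lem:type-2-edge-support} gives $s_1,s_2$ distinct landing components. By hypothesis one of $t_1,t_2$, say $t_1$, has a vertex in $S^t_z$ with $\{z\}=\{u,v,w\}\setminus\{x,y\}$, so each pair $(t_1,s_i)$ itself meets all three split components of $t$. Applying (i) to both pairs forces $s_1,s_2$ to share the landing of $t_1$, contradicting Lemma~\ref{lem:type-2-edge-support}. The same pairing idea reduces (iii) and (iv) to the following statement: if $\{xy\}\cup B^t_{xy}$ supports two cross triangles $s_1,s_2$, then they must share a common landing component $L$.

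The main obstacle is closing (iii) and (iv) once this common landing $L$ has been forced. The naive $1$-swap that removes $t$ and inserts both $s_1$ and $s_2$ no longer produces a cactus, because $s_1,s_2$ together with cactus paths in $\cset^t_x,\cset^t_y$ and through $L$ create a large cycle sharing edges with both $s_1$ and $s_2$. To resolve this I would invoke full $2$-swap optimality in the spirit of Figure~\ref{fig:2-swap}: remove $t$ together with a carefully chosen additional cactus triangle, and reinsert $s_1,s_2$ along with a third empty triangle that stitches $\cset^t_z$ back into the structure. The combinatorial check that the resulting graph is a triangular cactus and that the triangle count strictly increases requires casing on whether each $s_i$ is supported by the cactus edge $xy$ itself or by an edge of $B^t_{xy}$, and is the most delicate part of the argument.
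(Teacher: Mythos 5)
Your handling of parts (i) and (ii) is essentially the paper's: (i) is the direct $1$-swap exchange (the landing components being distinct guarantees the new structure is a cactus), and (ii) follows by pairing each cross triangle supported by a putative type-$2$ edge with a suitable witness among $t_1,t_2$, applying (i) twice, and contradicting Lemma~\ref{lem:type-2-edge-support}. Your reduction of (iii) and (iv) to the claim that all cross triangles supported by $\{xy\}\cup B^t_{xy}$ are forced to share one landing component is also the correct first half of the paper's argument.

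The way you close (iii) and (iv) is a genuine gap, and it is not how the paper finishes. The paper never performs a further swap there: once the common landing component is forced via (i), it derives a contradiction purely from planarity using Observation~\ref{Observation:different-components-green-blue-cycle}. Concretely, each supporting edge ($e_1,e_2\in B^t_{xy}$ in (iii), or $xy$ and the unique $e'\in B^t_{xy}$ in (iv)) closes up with a cactus path into a circuit consisting only of cactus/type-$1$/type-$2$ edges; by planarity these circuits nest, so the two supported cross triangles --- and the witness $t_1$, which must lie outside the relevant circuit because it meets a third split component --- end up in different regions, and cross triangles in different regions cannot share a landing component. That contradicts the common landing already forced, with no exchange needed. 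Your proposed $2$-swap, by contrast, is not constructed and faces real obstructions: you need a second cactus triangle to delete and a third \emph{empty} triangle to stitch $\cset^t_z$ back in, and neither is guaranteed to exist in this setting (the empty stitching triangle in the paper's genuine $2$-swaps, e.g.\ in the Friend Lemma, comes from the specific hypothesis $w_1w_2\in E(G[S])$ bounding an empty region, which you do not have here). Moreover, this lemma feeds into Proposition~\ref{prop:structure-heavy} and hence into the factor-$7$ analysis, which the paper explicitly notes requires only $1$-swap optimality; an argument resting on an improving $2$-swap would therefore prove less than what is needed even if the swap could be built.
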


\begin{proof}
To prove property (\ref{itm:same-landing}), assume for contradiction that $t_1$ and $t_2$ do not share the same landing component. In this case we can increase the number of triangles in $\cset$ by removing $t$ from $\cset$ and adding $t_1$ and $t_2$ to $\cset$ in its place. As the landing components are disjoint this operation does not introduce any new cycle to $\cset$ other than the supported cross triangles and therefore the resulting structure is a cactus subgraph of $G$. This contradicts that $\cset$ is $2$-swap optimal.

Property (\ref{itm:all-type-1-blue-edges}) follows from property (\ref{itm:same-landing}). Assume for contradiction that there exists a type-$2$ edge $e \in B_{uw}^t$ (the same argument will hold for $B_{uv}^t$ and $B_{vw}^t$). Only one of $t_1$ and $t_2$ can have its two cactus vertices in the same split components of $t$ as the endpoints of $e$. We may assume that this is not the case for $t_1$. Let $t'$ and $t''$ denote the cross triangles supported by $e$. By property (\ref{itm:same-landing}) $t'$ and $t_1$ must have the same landing component, the same holds for $t''$ and $t_1$. But by Lemma~\ref{lem:type-2-edge-support} $t'$ and $t''$ can not have the same landing component, thus we reach a contradiction.

We will prove property (\ref{itm:at-most-one-blue-edge-each}) also by contradiction. Assume that $|B_{vw}^t| \geq 2$ and let $e_1, e_2 \in B_{vw}^t$ be any two type-$1$ edges (the same argument will hold for $B_{uv}^t$ and $B_{uw}^t$). 
As both endpoints of $e_1$ are cactus vertices, there exists a path in $\cset[S]$ connecting both the endpoints, thus there is a cycle $C_1$ in $G[S]$ containing $e_1$ and only cactus edges otherwise. Similarly, there exists a cycle $C_2$ in $G[S]$ that contains $e_2$ and only cactus edges otherwise. In $G$ either $e_1$ is embedded in the inside of the closed region bounded by $C_2$ or $e_2$ is embedded in the inside of the closed region bounded by $C_1$ (see Figure~\ref{fig:split-components-t}). We may assume the former case as the proof for the later case is symmetric.

Only one of $t_1$ and $t_2$ can have its two cactus vertices in the same split components of $t$ as the endpoints of $e_1$. We may assume that this is not the case for $t_1$. By property (\ref{itm:same-landing}) the cross triangle supported by $e_1$ and $t_1$ must have the same landing component. Note $t_1$ can not lay in the inside of the region bounded by $C_1$ in $G$. Therefore the landing component shared by the two cross triangles must lie on the outside of $C_1$. However, by property (\ref{itm:same-landing}) the cross triangle supported by $e_2$ and $t_1$ must have the same landing component. We reach a contradiction using Obs.~\ref{Observation:different-components-green-blue-cycle}. 

We prove property (\ref{itm:at-most-one-gray-each}) also by contradiction. Assume that the set of edges $\{uv\} \cup B_{uv}^t$ supports two cross triangles (the same argument will hold for $\{uw\} \cup B_{uw}^t$ and $\{vw\} \cup B_{vw}^t$). Property (\ref{itm:at-most-one-blue-edge-each}) implies that there is only one type-$1$ edge in $B_{uv}^t$ hence $uv$ will support the other cross triangle. Let $t'$ be the triangles supported by $uv$, $t''$ be the cross triangle supported by an edge $e' \in B_{uv}^t$. Only one of $t_1$ and $t_2$ can have its two cactus vertices in the same split components of $t$ as the endpoints of $e$. We may assume that this is not the case for $t_1$. By property (\ref{itm:same-landing}), $t'$ and $t_1$ must have the same landing component. But this is also true for $t''$ and $t_1$. In addition there is a cycle $C$ in $H[S]$ that contains $e$ and a path $P$ from $u'$ to $v'$ in $\cset[S]$ (where $u'v' = e$) containing only cactus vertices and edges such that $t'$ is embedded in its inside in $G$ and $S_w^t$ outside of it. As $t_1$ intersects $S_w^t$ it must be drawn outside of $C$ in $G$. But by Observation \ref{Observation:different-components-green-blue-cycle}, $t'$ and $t_2$ cannot have the same landing component and we reach a contradiction.
\end{proof}

Further we can show that

\begin{lemma}\label{lem:split_region}
If $t$ supports cross triangles $t_1$ and $t_2$, where $u$ denotes their common cactus vertex, then $B_{uv}^t$ and $B_{uw}^t$ are both empty. 
\end{lemma}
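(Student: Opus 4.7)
The plan is to obtain Lemma~\ref{lem:split_region} as a direct consequence of Lemma~\ref{lem:green-triangles-supporting}. First I would pin down the structure implied by the hypothesis: since $t_1,t_2$ are cross triangles supported by the cactus triangle $t$, their pairs of cactus vertices are edges of $t$; and since $u$ is their \emph{only} common cactus vertex, the supporting edge of $t_1$ must be $uv$ and that of $t_2$ must be $uw$ (after relabeling of $v$ and $w$). In particular the landing vertices of $t_1$ and $t_2$ lie outside $S$, and $V(t_1)\cup V(t_2)$ meets all three split components: it contains $u\in S_u^t$, $v\in S_v^t$, and $w\in S_w^t$.

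Once this is noted, the pair $(t_1,t_2)$ satisfies the hypothesis of Lemma~\ref{lem:green-triangles-supporting}, so I can invoke all four of its properties for $t$. The one that does the work is property~(\ref{itm:at-most-one-gray-each}): the set of edges $\{uv\}\cup B_{uv}^t$ supports at most one cross triangle in total. Since $uv$ itself already supports $t_1$, this forces every edge of $B_{uv}^t$ to support zero cross triangles. But by definition any edge in $B_{uv}^t$ is type-$1$ or type-$2$, and therefore supports at least one cross triangle, so no such edge can exist; hence $B_{uv}^t=\emptyset$. The same argument applied to the edge $uw$ (using that $uw$ supports $t_2$) yields $B_{uw}^t=\emptyset$.

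I do not expect any real obstacle: the substantive combinatorial content (use of the $2$-swap exchange, of Obs.~\ref{Observation:different-components-green-blue-cycle}, and of Lemma~\ref{lem:type-2-edge-support}) is already packaged inside Lemma~\ref{lem:green-triangles-supporting}, and our task is essentially a bookkeeping step. The only subtlety worth being explicit about in the write-up is the reading of property~(\ref{itm:at-most-one-gray-each}): it bounds the \emph{total} number of cross triangles supported by edges in $\{xy\}\cup B_{xy}^t$, which is precisely what allows us to conclude from the single cross triangle $t_1$ (respectively $t_2$) already charged to $uv$ (respectively $uw$) that $B_{uv}^t$ and $B_{uw}^t$ cannot contribute anything further and must therefore be empty.
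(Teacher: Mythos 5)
Your proof is correct, but it takes a genuinely different route from the paper's. The paper proves Lemma~\ref{lem:split_region} directly by a planar--topological argument: it first invokes only property~(\ref{itm:same-landing}) of Lemma~\ref{lem:green-triangles-supporting} to get a common landing component for $t_1$ and $t_2$, dispatches the case of a common landing vertex (there $u$ has degree three, so $S_u^t=\{u\}$ and nothing can leave it), and otherwise builds the cycle $C=uu_1\cup P\cup u_2u$ through the landing component; since $t$, $t_1$, $t_2$ are empty faces, the edges $uv,uw,uu_1,uu_2$ are consecutive around $u$, forcing $S_v^t$ and $S_w^t$ outside $C$ while everything else reachable from $S_u^t$ is inside, so no edge of $B_{uv}^t\cup B_{uw}^t$ can exist. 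You instead observe that the pair $(t_1,t_2)$ meets the hypothesis of Lemma~\ref{lem:green-triangles-supporting} (their vertices hit all three split components) and then read off the conclusion from property~(\ref{itm:at-most-one-gray-each}): since $uv$ already supports $t_1$ and every edge of $B_{uv}^t$ is type-$1$ or type-$2$ and hence supports a further, necessarily distinct, cross triangle, nonemptiness of $B_{uv}^t$ would give two cross triangles supported by $\{uv\}\cup B_{uv}^t$ — a contradiction; symmetrically for $uw$. This is a clean and strictly shorter derivation, and it is not circular, since property~(\ref{itm:at-most-one-gray-each}) is proved in the paper before and independently of Lemma~\ref{lem:split_region}. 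What the paper's longer argument buys is only self-containedness: it relies on nothing beyond property~(\ref{itm:same-landing}), whereas your argument outsources all the topological work to the (already somewhat involved) proof of property~(\ref{itm:at-most-one-gray-each}).
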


\begin{proof}
By Lemma~\ref{lem:green-triangles-supporting}, item (\ref{itm:same-landing}), $t_1$ and $t_2$ must have the same landing component. Note that if $t_1$ and $t_2$ have a common landing vertex, then the claim is trivially true, as then $u$ is incident to exactly three faces, namely $t$,$t_1$ and $t_2$ which by definition are all empty and thus $B_{uv}^t$ and $B_{uw}^t$ are empty in this case.
Thus we assume that $t_1 \cap t_2 = u$.

    \begin{figure}[H]

        \centering
        \includegraphics[width=0.35\textwidth]{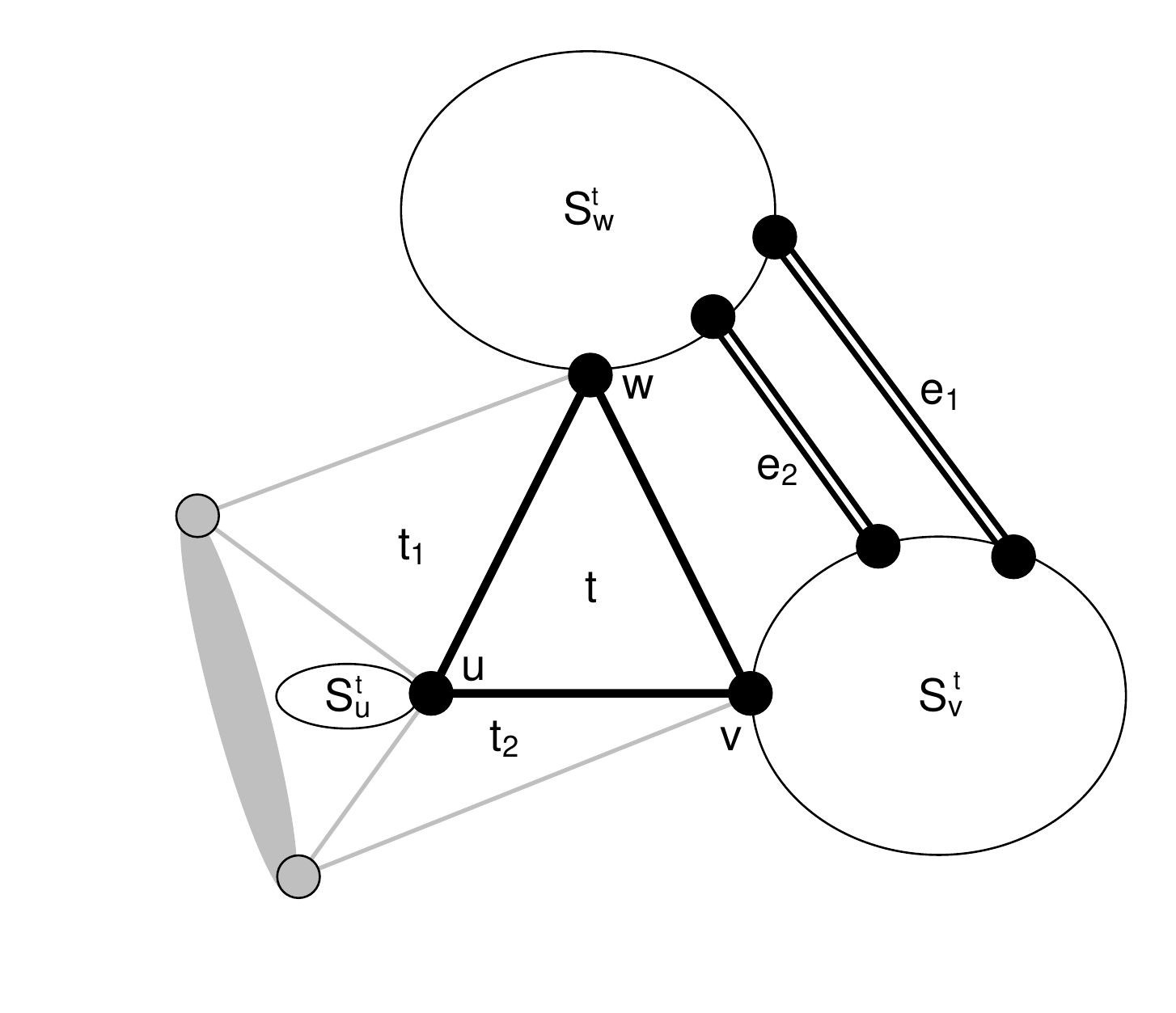}
        \caption{When $t$ supports two cross triangles with a common vertex $u$, then $B_{uv}^t = B_{uw}^t= \emptyset$ as shown by Lemma~\ref{lem:split_region}.}
        \label{fig:empty-E_ij-two-supported-trinangles-t}
    \end{figure}

Let $u_1$ and $u_2$ denote the landing vertices of $t_1$ and $t_2$ respectively. As $t_1$ and $t_2$ have the same landing component (say $S'$), $G$ must contain a path $P$ from $u_1$ to $u_2$ consisting of edges only in $\cset[S']$. Furthermore $uu_1 \cup P \cup u_2u$ forms a cycle $C$ with only one cactus vertex $u$ and cross edges and edges in $\cset[S']$.
Note that the fact that $t,t_1$ and $t_2$ are empty in $G$, implies that the two cactus edges of $t$ incident to $u$, as well as the edges $uu_1$ and $uu_2$ are consecutive in the circular edge incident list of $u$ in $G$. This observation gives us two important facts. First, as $C$ contains $uu_1$ and $uu_2$, any other edge incident to $u$ in $G$ must be drawn in the region bounded inside of $C$ in $G$. Second, any split component $S_x^t$, for $x\in \{v,w\}$, must be drawn outside of $C$ in $G$.
Assume for contradiction that there exists an edge $e$ with endpoints $u$ and $z \in S_{x}^t$, with $x \in \{v,w\}$, by the previous observation $e$ has to cross $C$ in $G$ and therefore the existence of $e$ contradicts that $G$ is a plane graph.
Similarly, there cannot exist any edge $e$ with one endpoint in $S_u^t \setminus \{u\}$ and another endpoint in $S_x^t$, with $x \in \{v,w\}$, since all these vertices are drawn {\em strictly} inside of $C$ and  $S_x^t$'s, with $x \in \{v,w\}$, are drawn strictly outside of $C$.
\end{proof}

\begin{figure}[H]
  \centering
    \includegraphics[width=0.35\textwidth]{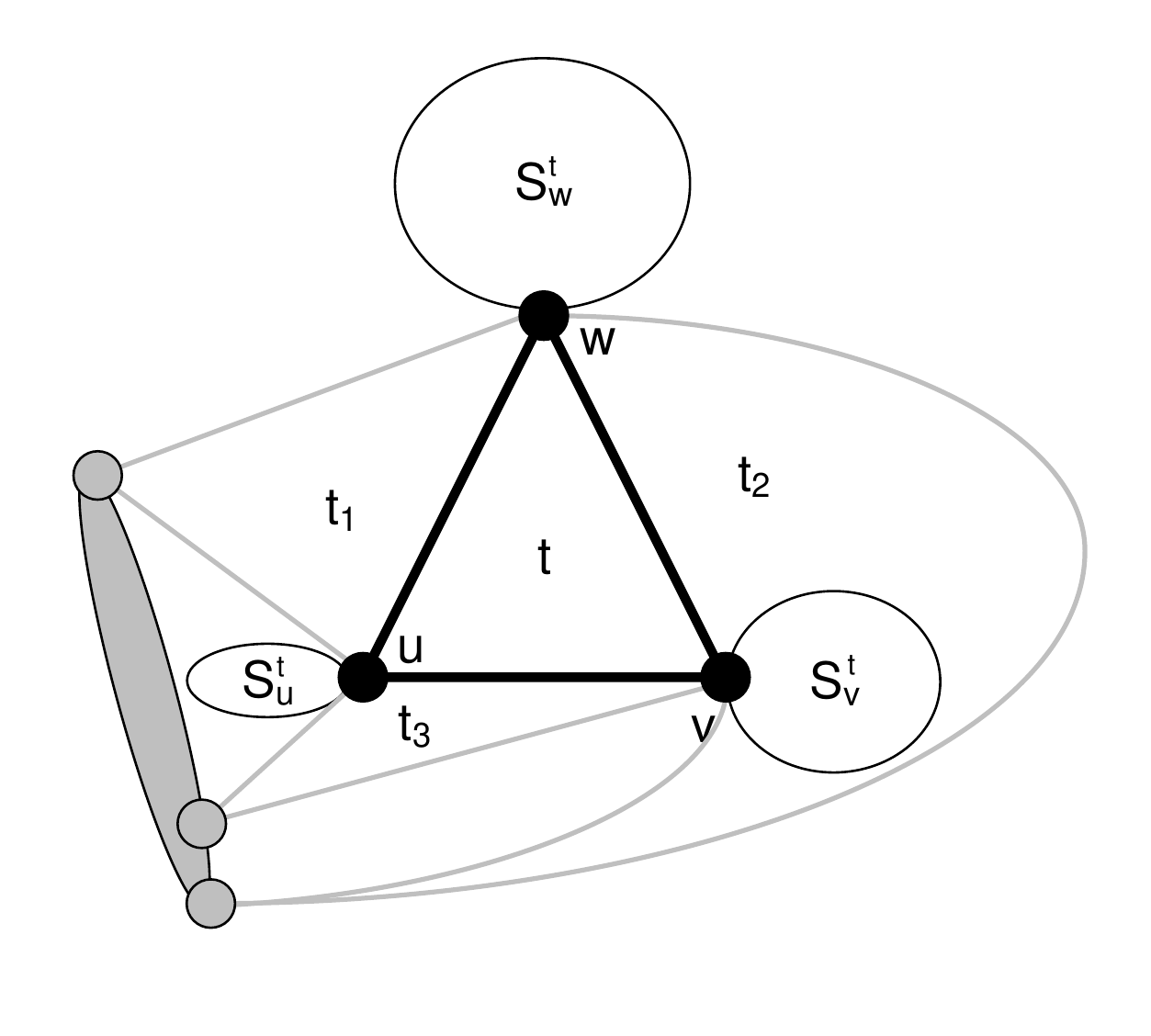}
    \caption{A type-$3$ light triangle $t$ for which the third property of Prop.~\ref{prop:structure-light} holds.}
    \label{fig:type-3-light}
\end{figure}

We are now ready to prove the different properties of heavy triangles claimed in Proposition \ref{prop:structure-heavy}. In the following we will prove one lemma for every such claim.

\begin{lemma}\label{lem:type3_light}
Any cactus type-$3$ triangle $t$ in $G[S]$ is light.
\end{lemma}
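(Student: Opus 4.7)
The plan is to show that a type-$3$ cactus triangle $t$ cannot satisfy either disjunct of the heaviness definition, hence must be light.

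First, I would observe that every cactus edge can support at most one cross triangle (since the face of $G$ on the side of $t$ adjacent to the edge is $t$ itself, so only the other side can contribute a triangular cross face). Therefore, if $t$ is type-$3$ with $V(t)=\{u,v,w\}$, then the edges of $E(t)$ support exactly three cross triangles, one per edge; call them $t_{uv}, t_{uw}, t_{vw}$, where the index indicates the supporting edge. Note that each adjacent pair among these shares a common cactus vertex of $t$ (e.g., $t_{uv}$ and $t_{uw}$ share $u$).

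Next, I would apply Lemma~\ref{lem:split_region}: since $t$ supports two cross triangles $t_{uv}$ and $t_{uw}$ that share the common cactus vertex $u$, we conclude $B_{uv}^t = B_{uw}^t = \emptyset$. Repeating the argument with the pair $t_{uv}, t_{vw}$ sharing vertex $v$ yields $B_{vw}^t = \emptyset$ as well. Hence all three sets $B_{uv}^t, B_{uw}^t, B_{vw}^t$ are empty, and the total number of cross triangles supported by $E(t) \cup \bigcup_{xy\in E(t)} B_{xy}^t$ is exactly three.

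Finally, I would check both clauses of the heavy definition fail. For the first clause, the count is $3 < 4$, so it does not apply. For the second clause, since $t$ is type-$3$, every edge $xy\in E(t)$ supports a cross triangle, so the requirement that the other two sets $\{x'y'\}\cup B_{x'y'}^t$ contribute no cross triangle is violated for any choice of distinguished edge. Therefore $t$ is not heavy, and by definition is light. The only nontrivial ingredient is Lemma~\ref{lem:split_region}, which in turn depends on local optimality of $\cset$ via Lemma~\ref{lem:green-triangles-supporting}(i); no additional obstacle is anticipated.
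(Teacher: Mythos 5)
Your proof is correct and follows essentially the same route as the paper: both apply Lemma~\ref{lem:split_region} to the pairs of supported cross triangles sharing a common vertex of $t$ to conclude that all three sets $B_{uv}^t, B_{uw}^t, B_{vw}^t$ are empty, and then note that three supported cross triangles distributed one per edge fail both clauses of the heaviness definition. Your preliminary observation that each cactus edge of $t$ supports at most one cross triangle is implicit in the paper's argument and is a harmless explicit addition.
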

\begin{proof}

For any vertex $v$ in $t$, there is a pair of cross triangles supported by $t$ such that their intersection is $v$, thus by Lemma~\ref{lem:split_region}, $B_{vv'}^t$ must be empty for any $v' \in V(t) \setminus v$. Hence the number of cross triangles supported by $E(t)$ and $\cup_{ww'\in E(t)}B_{ww'}^t$ is less than four and each edge $vv' \in E(t)$ supports one cross triangle, thus $t$ is a light triangle.
\end{proof}

\begin{figure}[H]
  \centering
    \includegraphics[width=0.35\textwidth]{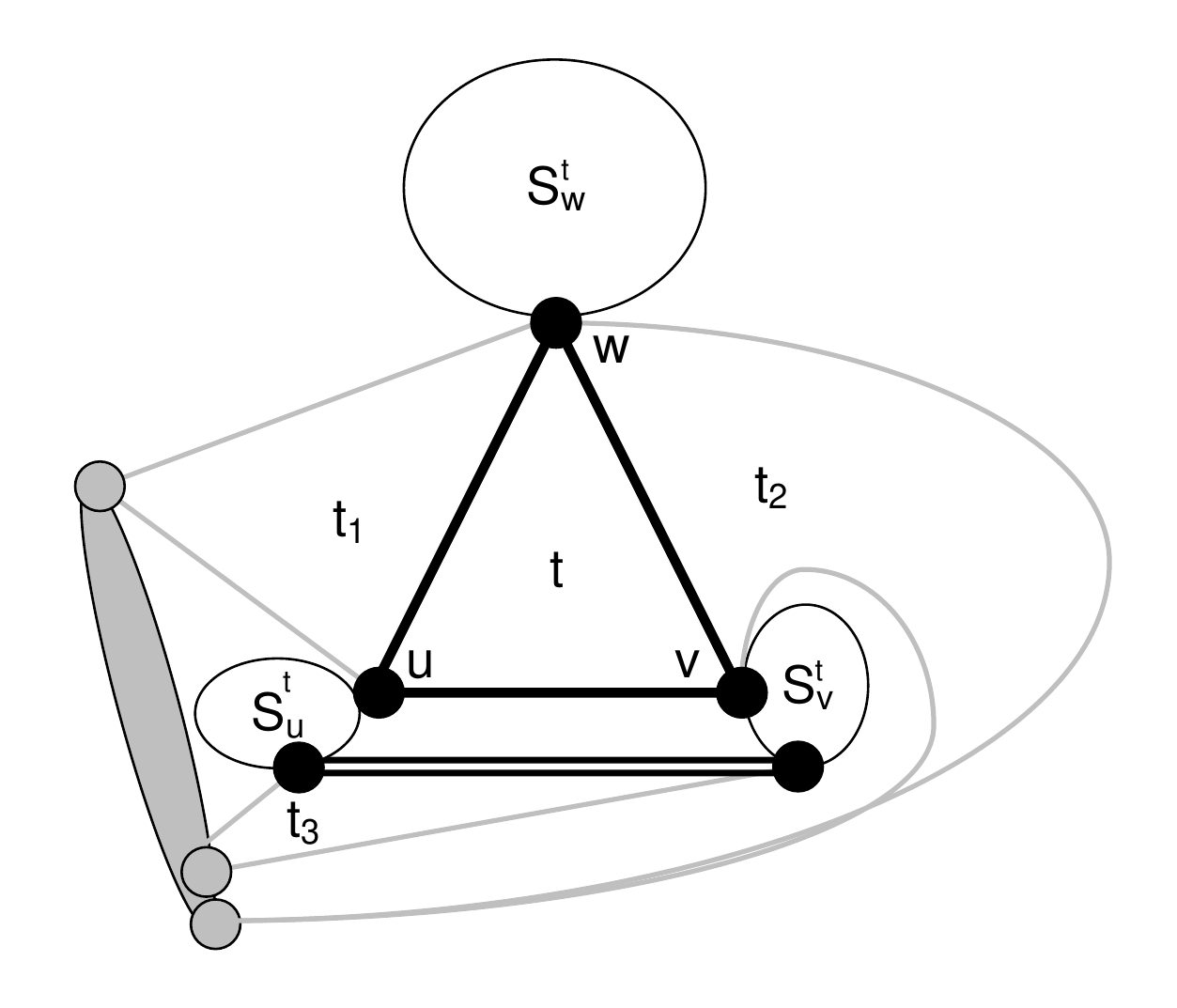}
    \caption{A type-$2$ light triangle $t$ for which the third property of Prop.~\ref{prop:structure-light} holds.}
    \label{fig:type-2-light}
\end{figure}

\begin{lemma}\label{lem:type-2_light}
Any cactus type-$2$ triangle $t$ in $G[S]$ is light.
\end{lemma}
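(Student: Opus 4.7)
Let $t$ be a type-$2$ cactus triangle with vertices $u, v, w$. By definition exactly two of its edges support cross triangles; without loss of generality say these edges are $uv$ and $vw$, with $v$ the common endpoint. Let $t_1$ be a cross triangle supported by $uv$ and $t_2$ a cross triangle supported by $vw$; note that $v \in V(t_1) \cap V(t_2)$ and $V(t_1) \cup V(t_2)$ meets every split component $S_u^t, S_v^t, S_w^t$ (it meets $S_v^t$ through $v$, $S_u^t$ through the endpoint of $t_1$ at $u$, and $S_w^t$ through the endpoint of $t_2$ at $w$).

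The plan is first to apply Lemma \ref{lem:split_region} to the pair $t_1, t_2$, whose common cactus vertex is $v$: this immediately yields $B_{uv}^t = \emptyset$ and $B_{vw}^t = \emptyset$. Next, since the hypotheses of Lemma \ref{lem:green-triangles-supporting} are satisfied (we just verified the meeting-condition for $t_1, t_2$), properties~(iii) and~(iv) give $|B_{uw}^t| \leq 1$ and that $\{uw\} \cup B_{uw}^t$ supports at most one cross triangle; in fact, since $t$ is type-$2$ the edge $uw$ itself supports none, so any such cross triangle is supported by $B_{uw}^t$.

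With these facts in hand, the conclusion is essentially bookkeeping. The total number of cross triangles supported by $E(t) \cup \bigcup_{xy \in E(t)} B_{xy}^t$ is at most $1 + 1 + 1 = 3$, so the first alternative in the definition of a heavy triangle (at least four supported cross triangles) fails. For the second alternative, note that $\{uv\} \cup B_{uv}^t = \{uv\}$ supports exactly one cross triangle and $\{vw\} \cup B_{vw}^t = \{vw\}$ supports exactly one cross triangle, so there is no edge $xy \in E(t)$ for which the other two sets $B_{ww'}^t \cup \{ww'\}$ are empty of cross triangles. Hence neither condition for heaviness holds, and $t$ is light.

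The only non-routine step is confirming that the hypothesis of Lemma \ref{lem:green-triangles-supporting} (the pair of cross triangles meeting all three split components) is indeed satisfied by $t_1, t_2$; this is the point one must check carefully, but it follows directly from the fact that $t_1, t_2$ share the vertex $v$ while their remaining cactus endpoints lie in $S_u^t$ and $S_w^t$ respectively. Everything else reduces to reading off the definition of ``heavy'' and counting.
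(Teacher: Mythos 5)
Your proposal is correct and follows essentially the same route as the paper: apply Lemma~\ref{lem:split_region} to the two cross triangles sharing the common vertex to empty the two corresponding $B$-sets, then invoke Lemma~\ref{lem:green-triangles-supporting} to bound the third set, and conclude by checking that neither heaviness condition can hold. The only cosmetic difference is that the paper cites properties~(ii) and~(iii) of Lemma~\ref{lem:green-triangles-supporting} (the remaining $B$-edge is unique and of type-$1$) while you cite (iii) and (iv); both yield the same count of at most three supported cross triangles with two distinct sets $\{xy\}\cup B_{xy}^t$ each supporting one.
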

\begin{proof}
Let $t$ be a type-$2$ triangle, such that each of the cactus edges $uw$ and $vw$ support a cross triangle $t_1, t_2$ respectively (see Figure~\ref{fig:type-2-light}).
By Lemma~\ref{lem:split_region}, $B_{uw}^t$ and $B_{vw}^t$ must be empty. By Lemma~\ref{lem:green-triangles-supporting} properties (\ref{itm:all-type-1-blue-edges}) and (\ref{itm:at-most-one-blue-edge-each}) there is at most one edge in $B_{uv}^t$ and if it exists it must be of type-$1$. Thus there are at most three cross triangles supported by $t$ and the edge in $B_{uv}^t$ and in addition at least two edges in $E(t)$ support a cross triangle, thus $t$ is a light triangle.
\end{proof}

\begin{figure}[H]
  \centering
  \begin{subfigure}[b]{0.3\textwidth}
        \includegraphics[width=\textwidth]{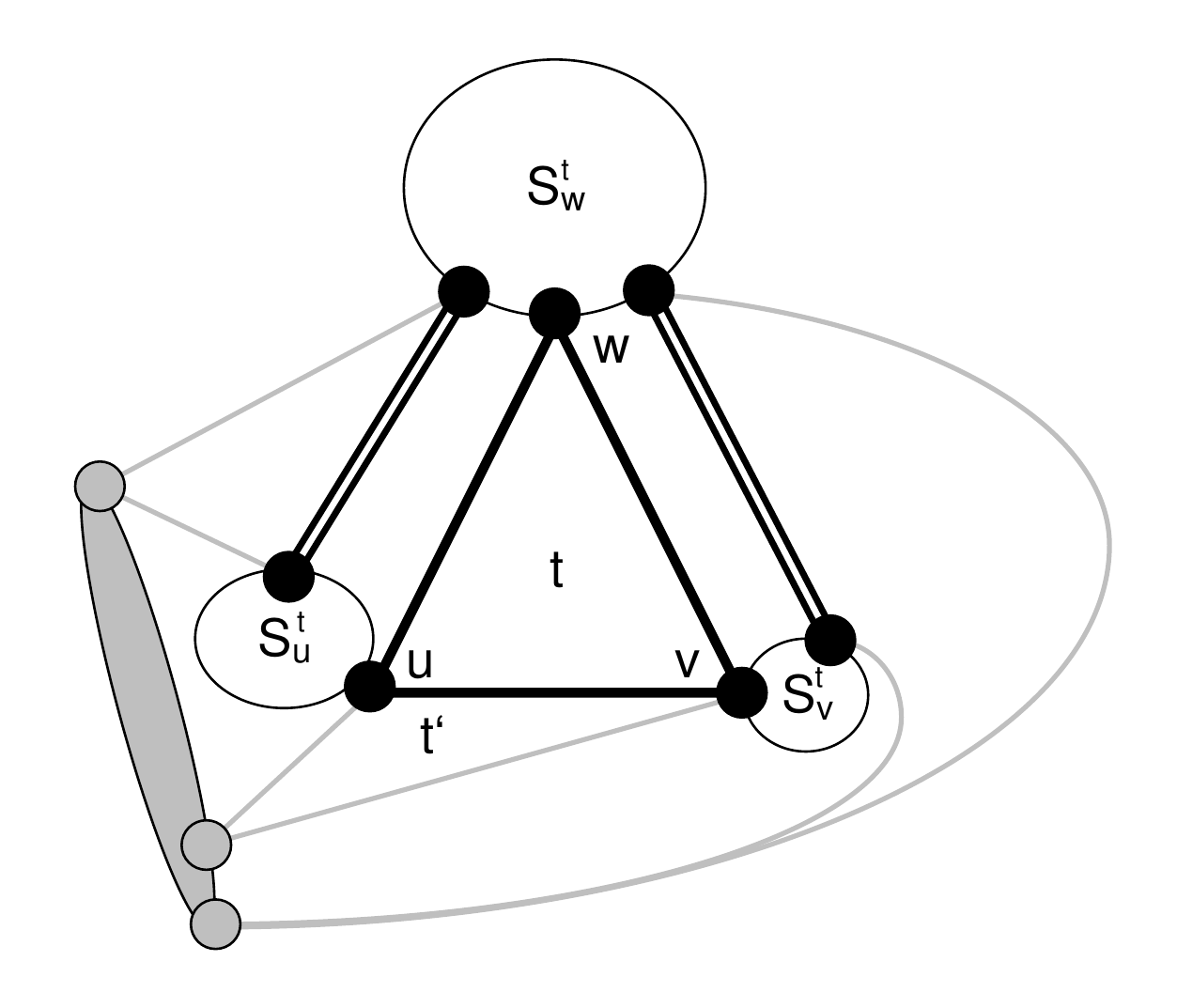}
        \caption{A type-$1$ triangle which is light for which the third property of prop.~\ref{prop:structure-light} holds.}
        \label{fig:type-1-light}
    \end{subfigure}\hspace{0.1\textwidth}%
      \begin{subfigure}[b]{0.3\textwidth}
        \includegraphics[width=\textwidth]{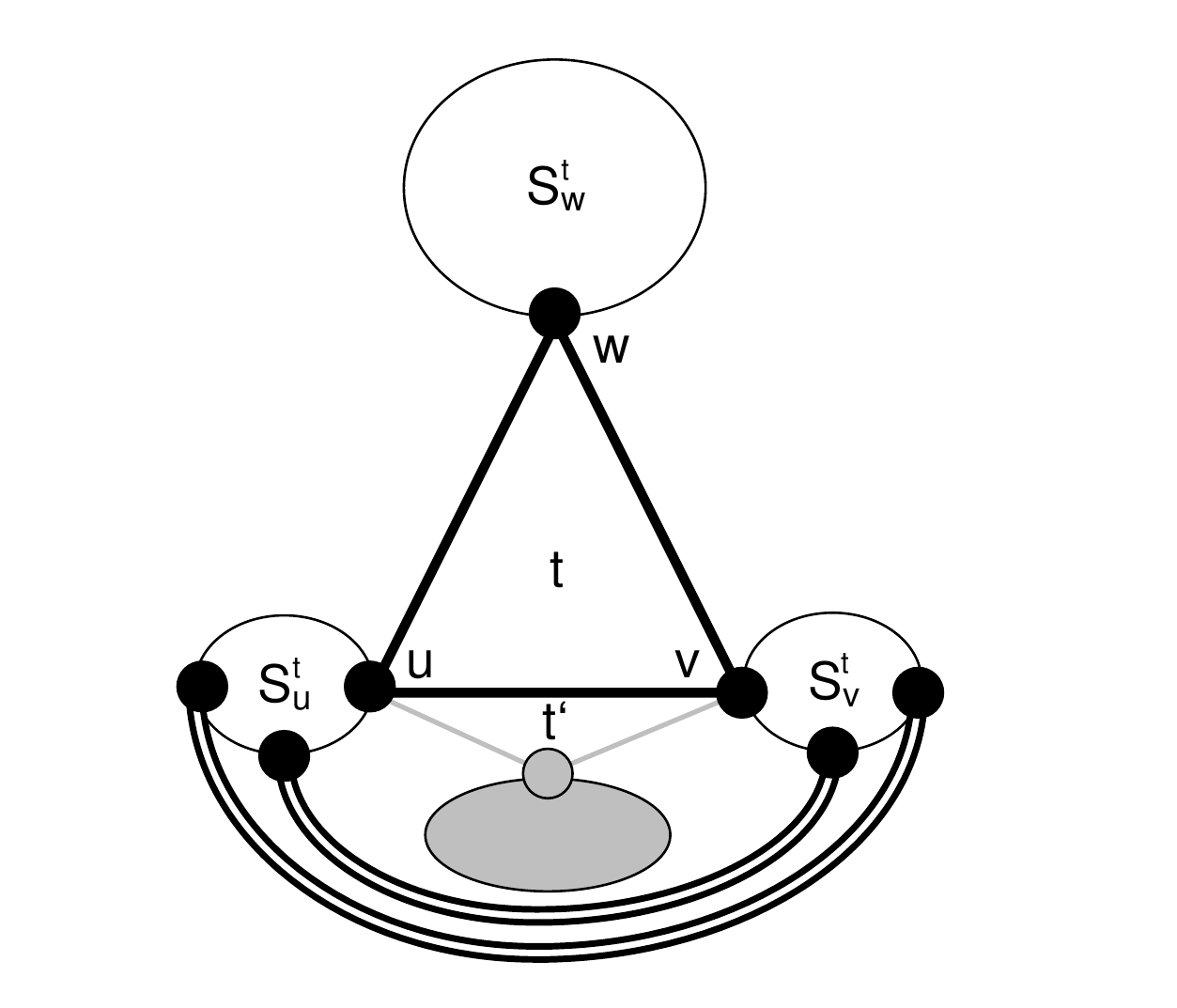}
        \caption{A type-$1$ triangle which could be either light or heavy, depending on the number of supported cross triangles supported by $B_{uv}^t$.}
        \label{fig:type-1-light-heavy}
    \end{subfigure}
    \caption{The classification of type-$1$ triangles into light and heavy.}
    \label{fig:type-1}
\end{figure}

\begin{lemma} \label{lem:type-1-heavy-light}
If $t$ is a heavy type-$1$ triangle, with $V(t)=\{u,v,w\}$, let $uv$ denote the edge in $E(t)$ that supports the cross triangle supported by $t$, then $B^t_{ww'} = \emptyset$ for all $ww' \in E(t) \setminus \{uv\}$ and the total number of cross triangles supported by edges in $B^t_{uv}$ is greater than or equal to two.
\end{lemma}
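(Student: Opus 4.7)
The plan is to exploit the dichotomy in the definition of heaviness together with Lemma~\ref{lem:green-triangles-supporting}, which already constrains the structure whenever two cross triangles jointly touch all three split components of a cactus triangle. Let $t_1$ denote the cross triangle supported by $uv$ (which exists because $t$ is of type-$1$ with supporting edge $uv$). The proof splits into two steps: first, show that $B^t_{uw}=B^t_{vw}=\emptyset$, and second, deduce from the heavy condition that the edges in $B^t_{uv}$ support at least two cross triangles.

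For the first step, I argue by contradiction. Suppose $B^t_{uw}\neq\emptyset$ (the case $B^t_{vw}\neq\emptyset$ is symmetric), and let $t_2$ be any cross triangle supported by an edge of $B^t_{uw}$. Then $V(t_1)\cup V(t_2)$ meets each of $S^t_u,S^t_v,S^t_w$: $t_1$ contributes $u\in S^t_u$ and $v\in S^t_v$, while $t_2$ contributes a vertex in $S^t_w$. Thus the hypothesis of Lemma~\ref{lem:green-triangles-supporting} holds, and item~(\ref{itm:at-most-one-gray-each}) of that lemma tells us that each of the three sets $\{e\}\cup B^t_e$, for $e\in E(t)$, supports at most one cross triangle. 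Summing, the total number of cross triangles supported by $E(t)\cup\bigcup_{e\in E(t)}B^t_e$ is at most three. This rules out both ways $t$ could be heavy: the first clause of the heavy definition demands at least four supported cross triangles in total, and the second demands a single set $\{e\}\cup B^t_e$ with at least three, both of which are impossible under the above cap. Hence $B^t_{uw}=B^t_{vw}=\emptyset$.

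For the second step, once $B^t_{uw}=B^t_{vw}=\emptyset$, the only supporting edges inside $E(t)\cup\bigcup_{e\in E(t)}B^t_e$ are $uv$ and the edges of $B^t_{uv}$ (the cactus edges $uw$ and $vw$ support nothing since $t$ is of type-$1$). Moreover, $uv$ supports exactly the single cross triangle $t_1$: the edge $uv$ bounds the triangular face $t$ of $G$ on one side, so any cross triangle on $uv$ must sit in the unique face of $G$ on the opposite side of $uv$, which pins it down. Now substitute into the two clauses of the heavy definition: clause~1 gives $1+(\text{support from }B^t_{uv})\ge 4$, and clause~2, which must be witnessed by $e=uv$ since the other two sets now support nothing, gives $1+(\text{support from }B^t_{uv})\ge 3$. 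In either case $B^t_{uv}$ supports at least two cross triangles, completing the proof.

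The main subtlety lies in the first step, which hinges on the observation that a nonempty $B^t_{uw}$ instantly triggers Lemma~\ref{lem:green-triangles-supporting} and collapses the per-set support to at most one, simultaneously defeating both heaviness clauses. A secondary point worth flagging is that the cactus edge $uv$ supports at most one cross triangle, because $t$ itself occupies one of the two faces of $G$ incident to $uv$; this planarity argument is what lets us turn a bound on total support into a bound on $B^t_{uv}$ alone in the second step.
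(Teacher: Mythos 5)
Your proposal is correct and follows essentially the same route as the paper: assume some $B^t_{ww'}$ with $ww'\neq uv$ is nonempty, pair the cross triangle on $uv$ with one supported by $B^t_{ww'}$ to trigger Lemma~\ref{lem:green-triangles-supporting}, use its last item to cap the total support at three (one per set), contradicting both clauses of heaviness, and then read off the lower bound of two on $B^t_{uv}$ from the heavy condition. Your write-up is in fact slightly more careful than the paper's in verifying the hypothesis of Lemma~\ref{lem:green-triangles-supporting} and in noting that $uv$, bounding the face $t$, supports only one cross triangle.
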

\begin{proof}
We first show that $B_{ww'}^t$ is empty for every $ww'\in E(t)\setminus uv$. Let $t'$ denote the cross triangle supported by $t$. Assume for contradiction, that there exists an edge $e$ in some $B_{ww'}^t$ for some edge $ww'\in E(t) \setminus uv$. As $t'$ and the cross triangle supported by $e$ fulfill the requirements of Lemma \ref{lem:green-triangles-supporting}, property (\ref{itm:at-most-one-gray-each}) implies that there are at most three cross triangles supported by edges in $E(t) \cup_{vv'\in E(t)} B_{vv'}^t$, which contradicts the definition of a heavy triangle.

As $B_{uw}^t$ and $B_{vw}^t$ are empty there must be at least two cross triangles in $G$ supported by edges in $B_{uv}^t$, as otherwise $t$ would be light.
\end{proof}

    \begin{figure}[H]
        \centering
        \begin{subfigure}[b]{0.3\textwidth}
            \includegraphics[width=\textwidth]{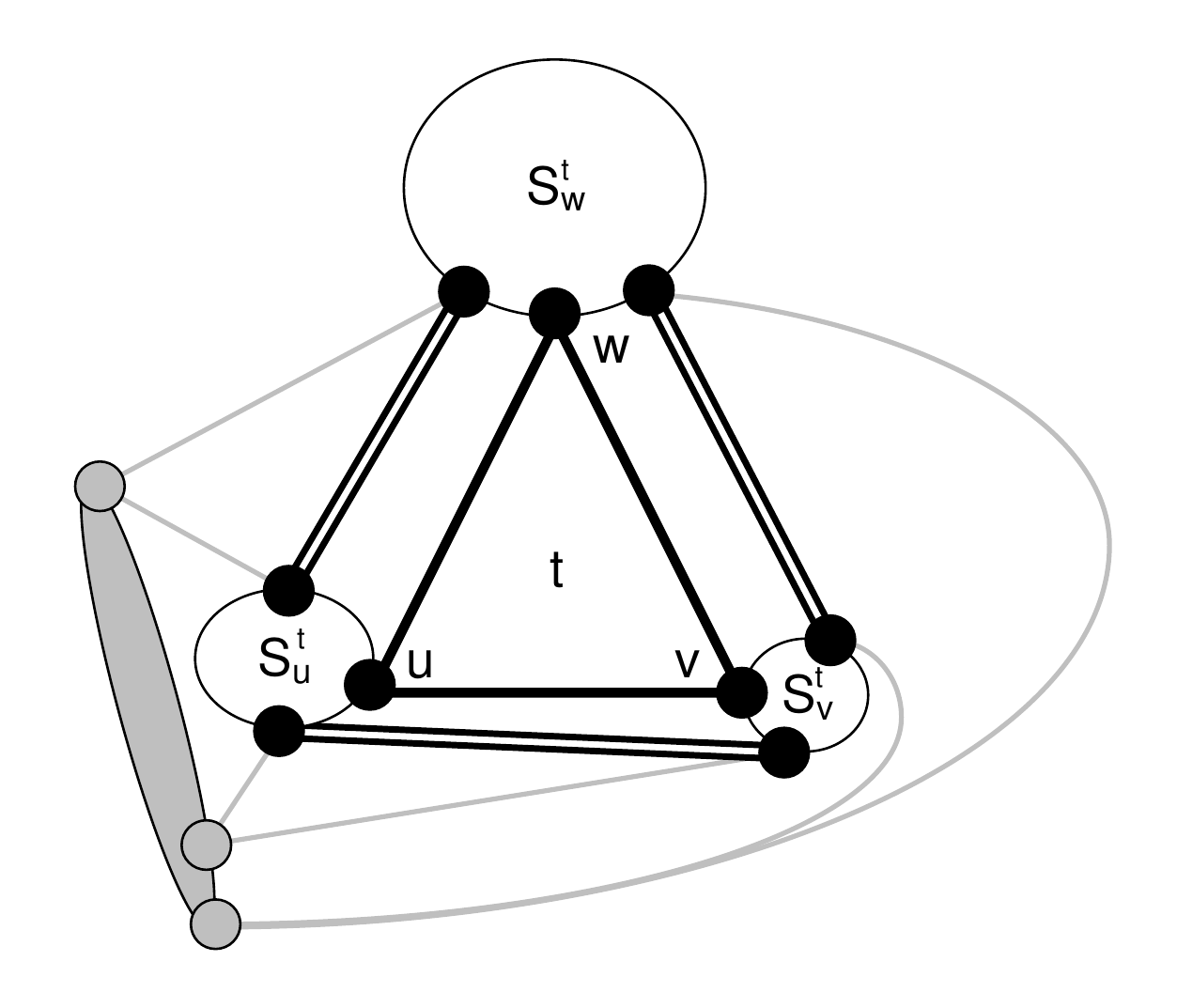}
            \caption{A type-$0$ light triangle for which the third property of Prop.~\ref{prop:structure-light} holds.}
            \label{fig:type-0-light}
        \end{subfigure}\hspace{0.1\textwidth}%
        \begin{subfigure}[b]{0.3\textwidth}
            \includegraphics[width=\textwidth]{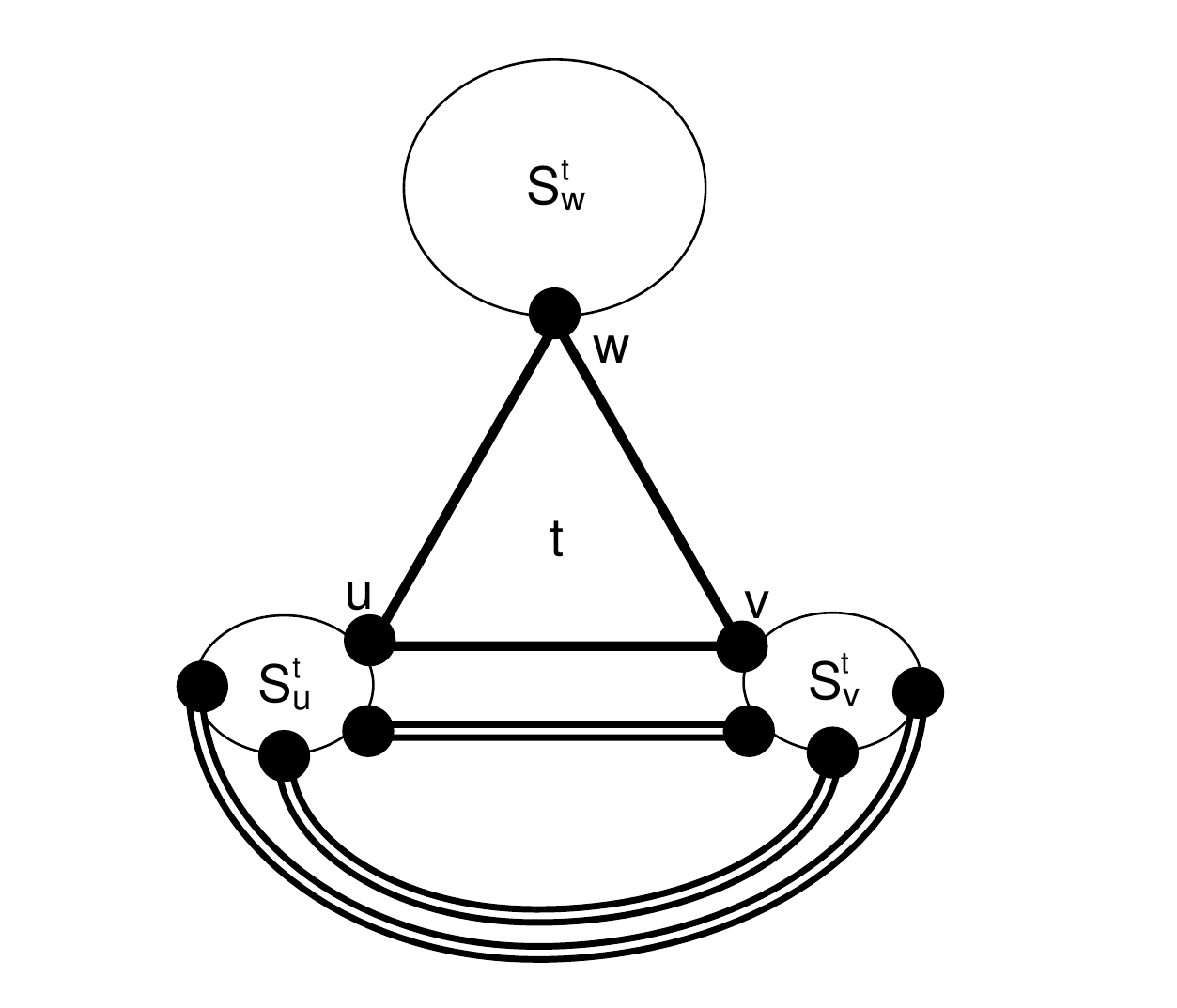}
            \caption{A type-$0$ triangle which could be either light or heavy, depending on the number of supported cross triangles supported by $B_{uv}^t$ .}
            \label{fig:type-0-light-heavy}
        \end{subfigure}
        \caption{The classification of type-$0$ triangles into light and heavy.}
        \label{fig:type-0}
    \end{figure}

\begin{lemma}\label{lem:type-0-heavy-light}
If $t$ is a heavy type-$0$ triangle, then there is an edge $uv \in E(t)$ such that $B^t_{ww'} = \emptyset$ for all $ww' \in E(t) \setminus \{uv\}$ and the total number of cross triangles supported by edges in $B^t_{uv}$ is greater than or equal to three.
\end{lemma}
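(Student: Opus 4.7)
The plan is to leverage the case structure built into the definition of a heavy triangle together with the structural properties already established in Lemma~\ref{lem:green-triangles-supporting}. Recall that $t$ being heavy means either (A) there are at least four cross triangles supported by $E(t) \cup \bigcup_{xy \in E(t)} B^t_{xy}$, or (B) some single set $\{uv\} \cup B^t_{uv}$ supports at least three cross triangles while the other two sets $\{ww'\} \cup B^t_{ww'}$ support none. Since $t$ is type-$0$, no edge of $E(t)$ supports any cross triangle, so all supported cross triangles must in fact be supported by edges in $\bigcup_{xy \in E(t)} B^t_{xy}$.

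Case (B) gives the conclusion almost immediately. By definition, every edge in $B^t_{ww'}$ is type-$1$ or type-$2$, and hence supports at least one cross triangle, so the condition that $\{ww'\} \cup B^t_{ww'}$ supports zero cross triangles forces $B^t_{ww'} = \emptyset$ for both $ww' \in E(t) \setminus \{uv\}$. Then all three (or more) cross triangles promised by (B) must be supported by $B^t_{uv}$, matching the statement of the lemma exactly.

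Case (A) is the interesting one. I would argue by contradiction that at most one of $B^t_{uv}, B^t_{uw}, B^t_{vw}$ can be non-empty. If two of them, say $B^t_{uv}$ and $B^t_{vw}$, were both non-empty, pick a cross triangle $t_1$ supported by some edge in $B^t_{uv}$ and a cross triangle $t_2$ supported by some edge in $B^t_{vw}$. Then $V(t_1)$ meets $S^t_u$ and $S^t_v$, while $V(t_2)$ meets $S^t_v$ and $S^t_w$, so $V(t_1) \cup V(t_2)$ intersects each of the three split components. This verifies the hypothesis of Lemma~\ref{lem:green-triangles-supporting}, whose property~(\ref{itm:at-most-one-gray-each}) then implies that each of the three sets $\{xy\} \cup B^t_{xy}$ supports at most one cross triangle --- a total of at most three, which contradicts the assumption of (A) that there are at least four. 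Hence only one $B^t_{uv}$ is non-empty, the other two sets $B^t_{ww'}$ are empty, and $B^t_{uv}$ alone must account for at least four (hence at least three) cross triangles.

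The only delicate point is verifying the hypothesis of Lemma~\ref{lem:green-triangles-supporting}, which is a simple bookkeeping on the endpoints of supporting edges but depends on the observation that the landing vertex of a cross triangle lies outside $S$ entirely, so the only contribution of $V(t_i)$ to the split components comes from the two endpoints of the supporting edge. Beyond that, the argument is a short case analysis.
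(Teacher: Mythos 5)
Your proof is correct and follows essentially the same route as the paper's: both arguments show that at most one of the sets $B^t_{uv}$, $B^t_{uw}$, $B^t_{vw}$ can be non-empty by invoking Lemma~\ref{lem:green-triangles-supporting} (property~\ref{itm:at-most-one-gray-each}) to cap the total number of supported cross triangles at three, contradicting heaviness. Your explicit split into the two clauses of the heaviness definition is just a slightly more verbose organization of the same argument.
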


\begin{proof}
We will first show that at most one of $B_{uu'}^t$ for $uu'\in E(t)$ can be non-empty. Assume for contradictions that there are two sets $B_{uv}^t$ and $B_{uw}^t$ which are non-empty. Then the cross triangles supported by the edges in these two sets fulfill the requirements of Lemma~\ref{lem:green-triangles-supporting}. Hence $|B_{uv}^t|,|B_{uw}^t|,|B_{vw}^t| \leq 1$ and the number of cross triangles supported by $E(t) \cup B_{uv}^t \cup B_{uw}^t \cup B_{vw}^t$ is at most three, contradicting the fact that $t$ is heavy.

Therefore we know that there is only one edge $uv\in E(t)$ such that $B_{uv}^t$ is non-empty. As $t$ is heavy $B_{uv}^t$ must contain edges which support at least three cross triangles as otherwise $t$ would be light.
\end{proof}

\subsection{Proof of Proposition \ref{prop:structure-light}}
\label{subsec:proof-prop-light}
In this section we will prove the properties stated in Proposition~\ref{prop:structure-light} about light triangles. Recall that for a light triangle the edges in $E(t) \cup_{uv\in E(t)} B_{uv}^t$ support at most three cross triangles.

\begin{lemma}
If $t$ is a light type-$0$ triangle with one edge $uv\in E(t)$ such that $B^t_{ww'} = \emptyset$ for all $ww' \in E(t) \setminus \{uv\}$, then the total number of cross triangles supported by edges in $B^t_{uv}$ is at most two. 
\end{lemma}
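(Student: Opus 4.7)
The claim will follow almost immediately by unwinding the definition of a heavy triangle, so my plan is essentially a careful bookkeeping argument with no real obstacle.

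First, I would recall the two conditions under which a cactus triangle $t$ is declared heavy: (A) the edges in $E(t) \cup \bigcup_{xy \in E(t)} B^t_{xy}$ collectively support at least four cross triangles, or (B) for some single edge $xy \in E(t)$ the set $B^t_{xy} \cup \{xy\}$ supports at least three cross triangles while the other two sets $B^t_{ww'} \cup \{ww'\}$, for $ww' \in E(t) \setminus \{xy\}$, support none. The plan is to show that under the hypotheses of the lemma, having three or more cross triangles supported by $B^t_{uv}$ forces $t$ to satisfy one of these conditions, contradicting lightness.

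Concretely, since $t$ is type-$0$, no edge of $E(t)$ itself supports a cross triangle; and since $B^t_{uw} = B^t_{vw} = \emptyset$, no cross triangle is supported by any edge outside $B^t_{uv}$. Hence the total number $N$ of cross triangles supported by $E(t) \cup \bigcup_{xy \in E(t)} B^t_{xy}$ equals the number of cross triangles supported by $B^t_{uv}$ alone. Suppose for contradiction that $N \geq 3$. If $N \geq 4$, condition (A) is satisfied and $t$ is heavy. If $N = 3$, then all three cross triangles are supported by edges in $B^t_{uv} \subseteq B^t_{uv} \cup \{uv\}$, while both $B^t_{uw} \cup \{uw\}$ and $B^t_{vw} \cup \{vw\}$ support none (the $B$-parts are empty and the edge-parts support nothing because $t$ is type-$0$); hence condition (B) is satisfied with the choice $xy = uv$, and again $t$ is heavy. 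Either outcome contradicts the hypothesis that $t$ is light, so $N \leq 2$, which is the claimed bound.

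I expect no substantive obstacle here: the entire proof is a direct case check against the definition. The only mild subtlety worth flagging explicitly in the write-up is that the type-$0$ assumption is used twice---once to conclude that $N$ counts only the contribution from $B^t_{uv}$, and once to ensure that $\{uw\}$ and $\{vw\}$ contribute nothing when we verify the ``no cross triangle supported by the rest'' clause of condition (B).
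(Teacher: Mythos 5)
Your proposal is correct and matches the paper's proof, which likewise observes that three or more cross triangles supported by $B^t_{uv}$ would make $t$ heavy by definition; you have simply spelled out the case split ($N\geq 4$ versus $N=3$) that the paper leaves implicit.
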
 

\begin{proof}
This simply follows from the definition of heavy triangles. If there where more than two cross triangle supported by the edges in $B_{uv}^t$, then $t$ would be a heavy triangle.
\end{proof}

\begin{lemma}
If $t$ is a light type-$1$ triangle where $uv$ supports the cross triangle supported by $t$ and $B^t_{ww'} = \emptyset$ for all $ww' \in E(t) \setminus \{uv\}$, then the total number of cross triangles supported by edges in $B^t_{uv}$ is at most one. 
\end{lemma}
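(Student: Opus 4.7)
The plan is to derive the claim directly from the definition of heavy vs.\ light triangles, by contradiction. Suppose for contradiction that $t$ is a light type-$1$ triangle satisfying the hypotheses (so one edge $uv \in E(t)$ supports the unique cross triangle supported by $t$ and $B^t_{ww'} = \emptyset$ for every $ww' \in E(t) \setminus \{uv\}$), yet $B^t_{uv}$ supports at least two cross triangles.

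Under this assumption, I would first account for all cross triangles supported by $E(t) \cup \bigcup_{xy \in E(t)} B^t_{xy}$. Since $t$ is type-$1$, the edges $uw, vw \in E(t) \setminus \{uv\}$ support no cross triangles, and by hypothesis $B^t_{uw} = B^t_{vw} = \emptyset$. So all support outside of $uv$ and $B^t_{uv}$ vanishes, and only the single set $\{uv\} \cup B^t_{uv}$ contributes. This set supports at least $1 + 2 = 3$ cross triangles (one from $uv$, at least two from $B^t_{uv}$), while $\{ww'\} \cup B^t_{ww'}$ supports zero cross triangles for each $ww' \in E(t) \setminus \{uv\}$.

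Next I would observe that this situation matches the second disjunct of the definition of a heavy triangle verbatim: there is an edge $uv \in E(t)$ such that $\{uv\} \cup B^t_{uv}$ supports at least three cross triangles and every other set $\{ww'\} \cup B^t_{ww'}$ supports none. Hence $t$ is heavy, contradicting the assumption that $t$ is light. Therefore $B^t_{uv}$ can support at most one cross triangle, which is exactly the statement of the lemma.

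The proof is essentially a direct unpacking of the heavy/light dichotomy given in Proposition~\ref{prop:structure-heavy}'s defining conditions, so I do not anticipate any genuine obstacle; the only care needed is to confirm that, under the hypotheses on $t$, the ``no cross triangle supported by the rest of the sets'' clause in the definition of heavy is indeed satisfied, which follows immediately from $t$ being type-$1$ (so $uw, vw$ support none) together with the emptiness of $B^t_{uw}$ and $B^t_{vw}$.
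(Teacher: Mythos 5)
Your proof is correct and is essentially the paper's own argument: the paper's proof is the one-line observation that more than one cross triangle supported by $B^t_{uv}$ would make $t$ heavy by definition, and your write-up simply unpacks that same contradiction, verifying the second disjunct of the heavy/light definition (three cross triangles on $\{uv\}\cup B^t_{uv}$, none on the other two sets).
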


\begin{proof}
This simply follows from the definition of heavy triangles. If there was more than one cross triangle supported by the edges in $B_{uv}^t$, then $t$ would be a heavy triangle.
\end{proof}

\begin{lemma}
If $t$ is a light triangle where the edges in $\bigcup_{uv \in E(t)} B_{uv}^t${$\cup E(t)$} support either two or three cross triangles such that at least two different sets of edges $\{uv\} \cup B_{uv}^t$ for $uv \in E[t]$ support a cross triangle each, then each set of edges $\{uv\} \cup B_{uv}^t$ supports at most one cross triangle and all the supported cross triangles have the same landing component.
\end{lemma}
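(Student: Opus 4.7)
The plan is to reduce everything to Lemma~\ref{lem:green-triangles-supporting}, which was already established in Appendix~\ref{subsec:proof-prop-heavy}. Let $V(t) = \{u,v,w\}$ and, by hypothesis, fix two different edges $e_1, e_2 \in E(t)$ such that each of $\{e_1\} \cup B^t_{e_1}$ and $\{e_2\} \cup B^t_{e_2}$ supports at least one cross triangle. Without loss of generality say $e_1 = uv$ and $e_2 = uw$. Pick any cross triangle $t_1$ supported by $\{uv\}\cup B^t_{uv}$ and any $t_2$ supported by $\{uw\}\cup B^t_{uw}$.

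First I would verify that $t_1$ and $t_2$ together satisfy the hypothesis of Lemma~\ref{lem:green-triangles-supporting}: the supporting edge of $t_1$ has its endpoints in $S^t_u$ and $S^t_v$, so $V(t_1)$ meets both $S^t_u$ and $S^t_v$; symmetrically $V(t_2)$ meets $S^t_u$ and $S^t_w$. Hence $V(t_1)\cup V(t_2)$ meets all three split components $S^t_u, S^t_v, S^t_w$. The lemma now gives simultaneously: (iv) each set $\{xy\}\cup B^t_{xy}$ supports at most one cross triangle, which is precisely the first conclusion of the claim; and (i) $t_1$ and $t_2$ share the same landing component.

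For the common-landing-component part, I need to extend this across \emph{all} cross triangles supported by $E(t)\cup\bigcup_{xy\in E(t)} B^t_{xy}$. Since by (iv) each of the three sets $\{xy\}\cup B^t_{xy}$ carries at most one cross triangle, the two or three supported cross triangles are carried by two or three distinct sets. For any two of them, say supported by $\{xy\}\cup B^t_{xy}$ and $\{x'y'\}\cup B^t_{x'y'}$ with $\{x,y\}\neq\{x',y'\}$, the same argument as above shows that together their vertex sets meet all three split components of $t$, so Lemma~\ref{lem:green-triangles-supporting}(i) applies and they share a landing component. Applying this pairwise yields that all supported cross triangles share one common landing component.

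I do not anticipate any real obstacle: the only mild point of care is to confirm that the hypothesis ``$(V(t_1)\cup V(t_2))\cap S^t_x\neq\emptyset$ for every $x\in\{u,v,w\}$'' of Lemma~\ref{lem:green-triangles-supporting} really is met by any two cross triangles supported by two different sets $\{xy\}\cup B^t_{xy}$, which is immediate from the fact that a cross triangle supported by $\{xy\}\cup B^t_{xy}$ has one endpoint in each of $S^t_x$ and $S^t_y$ (recall $x\in S^t_x$ and $y\in S^t_y$). Everything else is a direct invocation of the already proven Lemma~\ref{lem:green-triangles-supporting}.
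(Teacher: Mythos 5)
Your proposal is correct and follows essentially the same route as the paper: both reduce the claim to Lemma~\ref{lem:green-triangles-supporting}, using property (i) for the common landing component and property (iv) for the at-most-one-cross-triangle-per-set conclusion, after checking that two cross triangles supported by two different sets $\{xy\}\cup B^t_{xy}$ together meet all three split components. Your write-up just makes the verification of that hypothesis more explicit than the paper does.
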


\begin{proof}
For any pair of cross triangles supported by edges in two different sets in $\{uv\} \cup B_{uv}^t$ for $uv \in E[t]$, Lemma \ref{lem:green-triangles-supporting} implies that both cross triangles must have the same landing component. Since there exists at least one pair of such triangles, by Lemma~\ref{lem:green-triangles-supporting} property~\ref{itm:at-most-one-gray-each}, the claim of this lemma follows.
\end{proof}

\subsection{Proof of Lemma~\ref{lem: derive factor 6}}

\label{subsec:proof-factor-6}
Below, we analyze the contribution from non outer-faces. 

\begin{table}[H]
\centering
\begin{tabular}{||c|c|} 
\hline
Coordinates & Value \\
\hline \hline 
$\vec{\chi}[1]$ & $|\fset[1, 0, 0] \setminus \fset_{fri}[1, 0, 0]|$  \\ \hline 
$\vec{\chi}[2]$ & $|\fset_{fri}[1,0,0]|$  \\ \hline
$\vec{\chi}[3]$ & $|\fset[1,0,\geq 1]|$  \\ \hline

$\vec{\chi}[4]$ & $|\fset[1,1,0] \setminus \fset_{fri}[1,1,0]|$  \\ \hline
$\vec{\chi}[5]$ & $|\fset_{fri}[1,1,0]|$  \\ \hline
$\vec{\chi}[6]$ & $|\fset[1,1,\geq 1]|$  \\ \hline

$\vec{\chi}[7]$ & $|\fset[2,0,0] \setminus \fset_{fri}[2,0,0]|$  \\ \hline
$\vec{\chi}[8]$ & $|\fset_{fri}[2,0,0]|$  \\ \hline
$\vec{\chi}[9]$ & $|\fset[2,0,\geq 1]|$  \\ \hline

$\vec{\chi}[10]$ & $|\fset[2,1, \bdot]|$  \\ \hline
$\vec{\chi}[11]$ & $|\fset[2,2,\bdot]|$  \\ \hline
$\vec{\chi}[12]$ & $|\fset[\geq 3, \bdot, \bdot]|$  \\ \hline

\hline
\end{tabular}
\label{tab:var}
\caption{Definition of characteristic vector of $\fset$ \label{tab:char}} 
\end{table}

This is simply an algebraic manipulation.
First, we write 
\[\overrightarrow{gain} \cdot \vec{\chi} \geq  4.5 ({\mathbb 1}^T \vec{\chi}) -(0, \textcolor{blue}{2},2, 0.5, \textcolor{blue}{2.5}, 2.5, 1.5, \textcolor{blue}{2.5}, 2.5, 2, 2.5, 3)^T \vec{\chi} \] 

We will gradually decompose the vector $$ (0, \textcolor{blue}{2},2, 0.5, \textcolor{blue}{2.5}, 2.5, 1.5, \textcolor{blue}{2.5}, 2.5, 2, 2.5, 3)^T \vec{\chi}$$ 
into several meaningful terms that we could upper bound. 
First, we focus on the coordinates that correspond to the $\eta_{fri}$ (highlighted in blue):  
$$(0, \textcolor{blue}{2},2, 0.5, \textcolor{blue}{2.5}, 2.5, 1.5, \textcolor{blue}{2.5}, 2.5, 2, 2.5, 3)^T \vec{\chi}= \fbox{$2\eta_{fri}$} +(0, \textcolor{blue}{0},2, 0.5, \textcolor{blue}{0.5}, 2.5, 1.5, \textcolor{blue}{0.5}, 2.5, 2, 2.5, 3)^T \vec{\chi} $$
where we simply applied the fact that $\eta_{fri}[1,0,0] + \eta_{fri}[1,1,0] + \eta_{fri}[2,0,0] = \eta_{fri}$. 
Next, we focus on the components of $\eta[2,\bdot, \bdot]$ and $\eta[3,\bdot, \bdot]$ (shown in red). 
$$(0, \textcolor{black}{0},2, 0.5, \textcolor{black}{0.5}, 2.5, \textcolor{red}{1.5}, \textcolor{red}{0.5}, \textcolor{red}{2.5}, \textcolor{red}{2}, \textcolor{red}{2.5}, \textcolor{red}{3})^T \vec{\chi} \leq  
\fbox{$1.5(p_1 + |\fset| -2)$} +(0, \textcolor{black}{0},2, 0.5, \textcolor{black}{0.5}, \textcolor{black}{2.5}, \textcolor{red}{0}, \textcolor{red}{-1}, \textcolor{red}{1}, \textcolor{red}{0.5}, \textcolor{red}{1}, \textcolor{red}{0})^T \vec{\chi}  $$
where we applied the upper bound from Lemma~\ref{lem:various-bounds-factor-6} (first bound). 
We further extract the ``components'' of $\eta[1,1,0]$, $\eta[2,1,\bdot]$ and $\eta[2,2,\bdot]$: 
\begin{align*}
(0, 0 ,2, \textcolor{green}{0.5}, \textcolor{green}{0.5}, \textcolor{green}{2.5}, 0,-1,1, \textcolor{green}{0.5}, \textcolor{green}{1}, 0)^T \vec{\chi} & = 0.5 (\eta[1,1,0] + \eta[2,1,\bdot] + 2 \eta[2,2,\bdot])  +  (0, 0 ,2, \textcolor{green}{0}, \textcolor{green}{0}, \textcolor{green}{2}, 0,-1,1, \textcolor{green}{0}, \textcolor{green}{0}, 0)^T \vec{\chi} \\ 
& \leq \fbox{$0.5 a_1$} +  (0, 0 ,2, \textcolor{green}{0}, \textcolor{green}{0}, \textcolor{green}{2}, 0,-1,1, \textcolor{green}{0}, \textcolor{green}{0}, 0)^T \vec{\chi} 
\end{align*}
the inequality was obtained by applying Lemma~\ref{lem:various-bounds-factor-6} (second bound).
Now, we extract the components of $\eta[1,1,\geq 1]$, $\eta[2,0,\geq 1]$ and $\eta[1,0, \geq 1]$ (the 3rd, 6th, and 9th coordinates respectively.) 
\begin{align*} 
(0, 0 ,\textcolor{blue}{2}, 0,0, \textcolor{blue}{2}, 0, -1, \textcolor{blue}{1}, 0, 0, 0)^T \vec{\chi}  &= 2(\eta[1,0,\geq 1] + \eta[1,1,\geq 1] + \eta[2,0,\geq 1]) + 
(0, 0 ,\textcolor{blue}{0}, 0,0, \textcolor{blue}{0}, 0, -1, \textcolor{blue}{-1}, 0, 0, 0)^T \vec{\chi} \\
 &\leq  \textcolor{blue}{2(p_0 - \eta_{fri})} + 
(0, 0 ,\textcolor{blue}{0}, 0,0, \textcolor{blue}{0}, 0, -1, \textcolor{blue}{-1}, 0, 0, 0)^T \vec{\chi}   \\ 
& \leq \fbox{$2(p_0 - \eta_{fri})$}
\end{align*}
here we applied the third bound of Lemma~\ref{lem:various-bounds-factor-6}, and the fact that all coordinates of vector $\vec{\chi}$ are non-negative.   
Finally, by summing over all terms in the boxes, we get the upper bound of 
\[2 \eta_{fri} + 1.5(p_1 + |\fset| -2) + 0.5 a_1 + 2(p_0 - \eta_{fri}) = 2p - 0.5 p_1 +2 a_1 +1.5 a_2 - 1.5\]
Now, since ${\mathbb 1}^T \vec{\chi} = a_1+ a_2$ and $gain(f_0) \geq g(S) - 1$, we have that 
\[\sum_{f \in \fset} gain(f) \geq 4.5(a_1+a_2) - (2p - 0.5 p_1 +2 a_1 +1.5a_2- 1.5) + g(S) - 1\]  
Hence, $-(\sum_{f \in \fset} gain(f))  \leq  - g(S) + (2p - 0.5p_1 - 3 a_2 - 2.5 a_1 - 0.5)$.  

We substitute the bound from the lemma into Eq~\ref{eq:gain-highlighted}, we would get: 
\[q \leq (4p+0.5p_1 +2.5a_1 + 3a_2)  - g(S) + (2p - 0.5p_1 - 3 a_2 - 2.5 a_1 - 0.5) \] 
This would give $q \leq 6p - g(S) - 0.5$ as desired. 
In Section~\ref{sec: factor 6, rules}, we first give the classification rules, and in Section~\ref{sec: factor 6, eq}, we describe the inequalities that we use.  

\newpage
\section{Tight Examples}
\label{sec:tight-examples}

\subsection{Tight Example of Factor $7$ for a $1$-swap Optimal Solution}
\label{subsec:tight-7}
Following is the family of tight examples for the factor $7$ analysis. In these examples, $p = 2^k + 1$ for some $k \in \mathbf{Z}^+$, $q = 7p - 4$,  $p_1 = p$, $p_0 = 0$, $a_1 = 0$, $a_2 = 2p - 3$, $|E(f_0)| = 4$, $g(S) = 2$, $\eta[1, 0] = 0.5 (p - 1)$, $\eta[\geq3, \bdot] = 1.5 (p - 1) - 1$ and all other $\eta[i, j] = 0$. Notice that each inequity which we use in the above proof is asymptotically tight for these example.

\begin{figure}[H]
  \centering
    \includegraphics[width=0.9\textwidth]{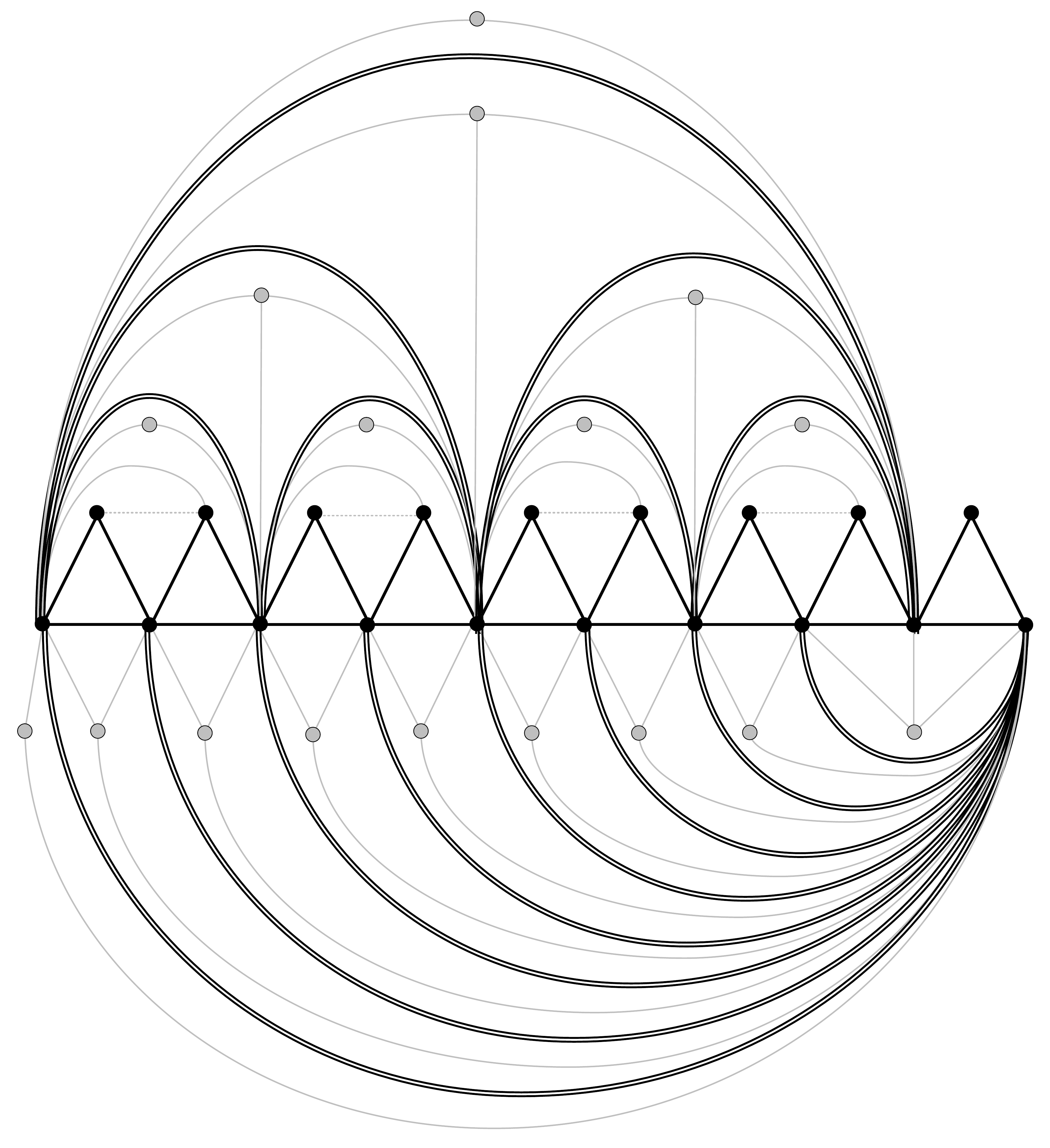}
    \caption{Tight example for factor $7$ analysis.}
    \label{fig:tight-factor-7}
\end{figure}

\subsection{Tight Example of Factor $6$ for a $2$-swap Optimal Solution}
\label{subsec:tight-6}
Following is the family of tight examples for the factor $6$ analysis. In these examples, $p = 2^k + 1$ for some $k \in \mathbf{Z}^+$, $q = 6p - 3$,  $p_1 = p$, $p_0 = 0$, $a_1 = 0$, $a_2 = 2p - 3$, $|E(f_0)| = 4$, $g(S) = 2$, $\eta[1, 0] = 0.5 (p - 1)$, $\eta[\geq3, \bdot] = 1.5 (p - 1) - 1$ and all other $\eta[i, j] = 0$. Notice that each inequity which we use in the above proof is asymptotically tight for these example.

\begin{figure}[H]
  \centering
    \includegraphics[width=0.9\textwidth]{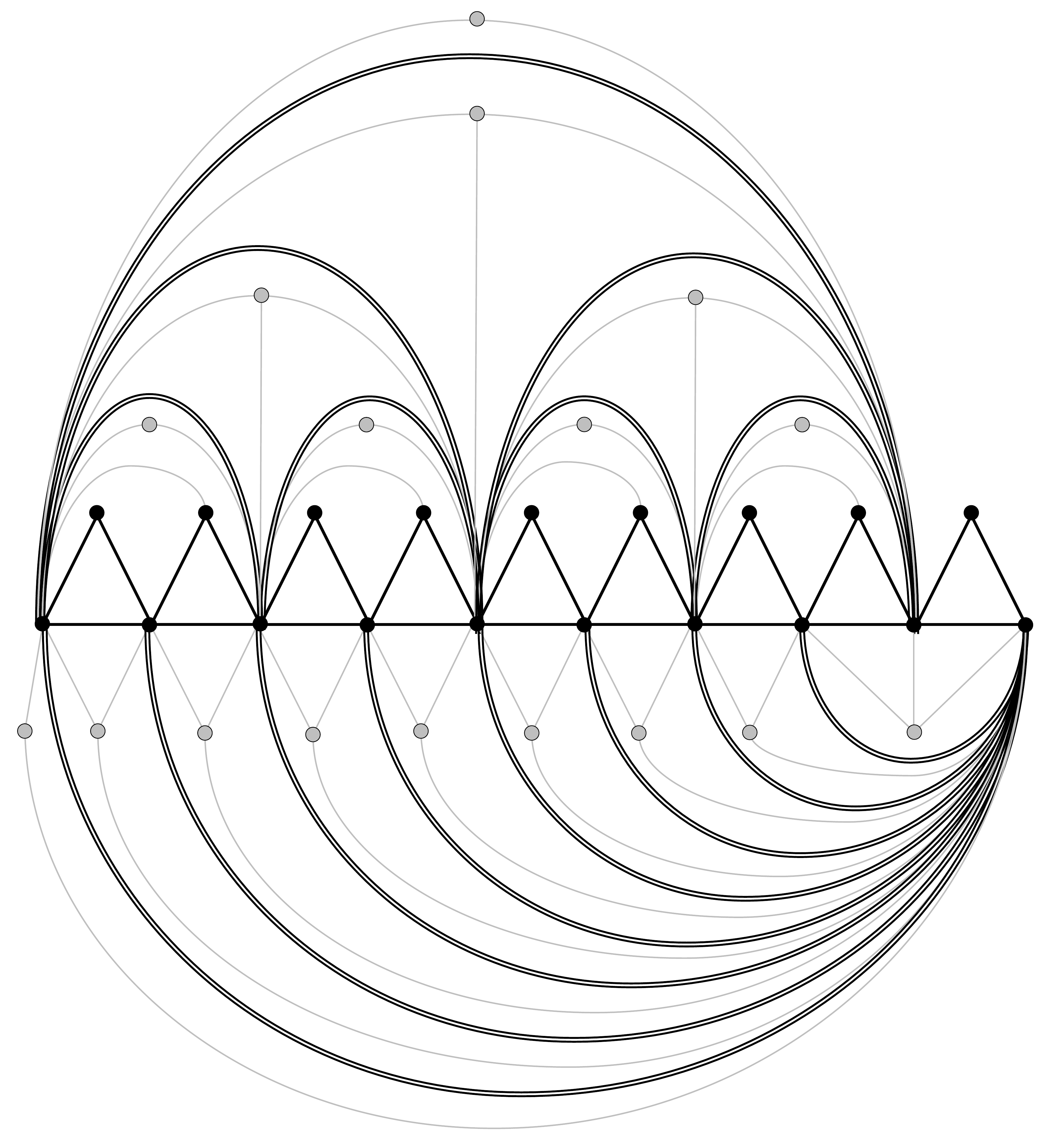}
    \caption{Tight example for factor $6$ analysis.}
    \label{fig:tight-factor-6}
\end{figure}

\section{Definition of Variables}
\begin{table}[H]
\centering
\begin{tabular}{||c|c|c|} 
\hline
Variable & Description \\
\hline \hline 
$G$ & initial plane graph with a fixed drawing \\ \hline 
$\cset$ & optimal triangular cactus which is a subgraph of $\cset$ \\ \hline
$S$ & a connected component of cactus $\cset$ \\ \hline
$p$ & \# triangular faces in the cactus $\cset[S]$ \\  \hline
$q$ & \# triangular faces of $G$ with at least two vertices in $G[S]$ \\  \hline
$a_0$ & type-$0$ edges supporting no triangles \\ \hline  
$a_1$ & type-$1$ edges supporting $1$ triangle \\  \hline 
$a_2$ & type-$2$ edges supporting $2$ triangles \\  \hline
$p_0$ & type-$0$ triangles supporting no triangles \\   \hline
$p_1$ & type-$1$ triangles supporting $1$ triangle \\   \hline
$p_2$ & type-$2$ triangles supporting $2$ triangles \\  \hline
$p_3$ & type-$3$ triangles supporting $3$ triangles \\  \hline
$H$ & subgraph of $G[S]$ after removing type-$0$ edges \\  \hline
$\cset_u^t$ &  The new components in $\cset[S] \setminus E(t)$, such that $v \in V(\cset_v^t)$ for every $v \in V(t)$\\ \hline
$S_v^t$ &  = $V(\cset_v^t)$ for every $v \in V(t)$\\ \hline
$B_{uv}^t$ & type-$1$ or type-$2$ edge with one end-point each in $S_u^t, S_v^t$ \\ \hline
$\fset$ & set of super-faces in $H$ excluding the $p$ cactus triangular faces of $\cset[S]$ \\  \hline
$a_1^{occ}(f)$ & \# of supporting side of type-$1$ edges in $f \in \fset$ \\ \hline
$a_1^{free}(f)$ & \# of non-supporting side of type-$1$ edges in $f \in \fset$ \\ \hline
$a_1(f)$ &$= a_1^{free}(f) + a_1^{occ}(f)$ \\ \hline
$a_2(f)$ & \# of type-$2$ edges in $f \in \fset$ \\ \hline
$p^{base}_0(f)$ & \# of base edges of type-$0$ triangles in $f \in \fset$ \\ \hline
$p^{free}_0(f)$ & \# of free pair of edges of type-$0$ triangles in $f \in \fset$ \\ \hline
$p^{base}_1(f)$ & \# of base edges of type-$1$ triangles in $f \in \fset$ \\ \hline
$p^{free}_1(f)$ & \# of free pair of edges of type-$1$ triangles in $f \in \fset$ \\ \hline
$p^{free}(f)$ & $=p^{free}_0(f) + p^{free}_1(f)$ \\ \hline
$p^{base}(f)$ & $=p^{base}_0(f) + p^{base}_1(f)$ \\ \hline
$|E(f)|$ &$= 2 p^{free}(f) + p^{base}(f) + a_2(f) + a_1(f)$ (length of face $f \in \fset$) \\ \hline
$|Occ(f)|$ &$=  p_1^{base}(f) + a_2(f) + a_1^{occ}(f)$ ({\em occupied} length of face $f \in \fset$) \\ \hline
$|Free(f)|$ &$= 2 p^{free}(f) + p_0^{base}(f) + a^{free}_1(f)$ ({\em free} length of face $f \in \fset$) \\ \hline
$\mu(f)$ & $= |Free(f)| + |Occ(f)|/2$ \\ \hline
$survive(f)$ & \# surviving faces in some $f \in \fset$ \\ \hline
$\ell(S)$ & length of outer-super-face of graph $G[S]$ \\ \hline
$o(S)$ & \# of $a_2$, $a^{occ}_1$ and $p^{base}_1$ side edges in outer-face of graph $G[S]$ \\ \hline
$\phi(S)$ & $= \ell(S) - o(S)$ \\ \hline
$f_0$ & Outer-face of graph $H[S]$\\ \hline
$\fset[i, j, k]$ & set of super-faces of type-$[i, j, k]$ such that for any $f \in \fset[i, j, k]$,\\ & $p_1^{base}(f) + a_2(f) + a_1(f) = i$, $a^{free}_1(f) = j$ and $p^{base}_0(f) = k$.\\ \hline
$\eta[i, j, k]$ & $=|\fset[i, j, k]|$\\ \hline
$\fset_{fri}[i, j, k]$ & $=|\fset[i, j, k]|$\\ \hline
$\eta_{fri}[i, j, k]$ & $=|\fset_{fri}[i, j, k]|$\\ \hline
$\eta_{fri}$ & $=\eta_{fri}[1, 0, 0] + \eta_{fri}[1, 1, 0] + \eta_{fri}[2, 0, 0]$\\ \hline
\hline
\end{tabular}
\caption{Definition of variables used in the proof}
\label{tab:vars}
\end{table}

\end{document}